\newtheorem{theorem}{Theorem}
\newtheorem{proposition}{Proposition}
\newtheorem{definition}{Definition}
\newtheorem{claim}{Claim}
\newtheorem{lemma}{Lemma}
\newtheorem{corollary}{Corollary}
\newtheorem{bound}{Bound}
\newtheorem{fact}{Fact}
\newtheorem*{fact*}{Fact}
\theoremstyle{definition}
\newtheorem{observation}{Observation}
\newcommand{\Ignore}[1]{}
\newcommand{\gf}[1]{\mathsf{#1}}
\newcommand{\Z}{\mathbb{Z}}
\newcommand{\ZZ}{\mathbb{Z}}
\newcommand{\R}{\mathbb{R}}
\newcommand{\RR}{\mathbb{R}}
\newcommand{\NN}{\mathbb{N}}
\newcommand{\hdepth}{\mathbf{d}}
\DeclareMathOperator{\length}{length}
\DeclareMathOperator{\depth}{depth}
\DeclareMathOperator{\height}{height}
\DeclareMathOperator{\gap}{gap}
\DeclareMathOperator{\reach}{reach}
\DeclareMathOperator{\reserve}{reserve}
\newcommand{\Slot}{\textsl{sl}}
\newcommand{\slot}{\Slot}
\newcommand{\fprefix}{\sqsubseteq}
\newcommand{\Adversary}{\mathcal{A}}
\newcommand{\Challenger}{\mathcal{C}}
\newcommand{\Distribution}{\mathcal{D}}
\newcommand{\dominatedby}{\preceq}
\newcommand{\SuchThat}{:}
\newcommand{\Given}{\mid}
\newcommand{\Union}{\cup}
\newcommand{\Fork}{\vdash}
\newcommand{\ForkPrefix}{\sqsubseteq}
\newcommand{\PrefixEq}{\preceq}
\newcommand{\Prefix}{\prec}
\newcommand{\DominatedBy}{\preceq}
\newcommand{\Chain}{\mathcal{C}}
\newcommand{\Intersect}{\cap}
\newcommand{\SlotCP}{\mathrm{CP}^{\mathsf{slot}}}
\newcommand{\kSlotCP}[1][k]{{#1}\text{-}\SlotCP}
\newcommand{\CP}{\mathrm{CP}}
\newcommand{\kCP}[1][k]{{#1}\text{-}\CP}
\newcommand{\defeq}{\triangleq}
\newcommand{\TrimSlot}[1]{^{\lfloor {#1}}}
\newcommand{\SlotDivergence}{\mathrm{div}_{\mathsf{slot}}}
\newcommand{\pinch}[2]{{#2}^{\vartriangleright {#1} \vartriangleleft} }
\newcommand{\cut}[2]{{#2}^{{#1} \vartriangleleft} }
\newcommand{\h}{\mathtt{h}}
\renewcommand{\H}{\mathtt{H}}
\newcommand{\A}{\mathtt{A}}
\newcommand{\Hheavy}{\h\H\text{-heavy}}
\newcommand{\Aheavy}{\A\text{-heavy}}
\newcommand{\Reduce}{\rho_\Delta}
\newcommand{\DeltaFork}{\Fork_\Delta}
\renewcommand\AB@affilsepx{\quad\protect\Affilfont} % put affiliations into one line
\author[1,3]{Aggelos Kiayias}
\author[2]{Saad Quader}
\author[2,3]{Alexander Russell}
\affil[1]{University of Edinburgh}
\affil[2]{University of Connecticut}
\affil[3]{IOHK}
\begin{document}

\title{
Consistency of 
Proof-of-Stake Blockchains
with Concurrent Honest Slot Leaders%*
% \thanks{\color{red}Identify applicable funding agency here. If none, delete this.}
}

\maketitle

\begin{abstract}
  We improve the fundamental security threshold of 
  eventual consensus Proof-of-Stake
  (PoS) blockchain protocols under longest-chain rule, 
  reflecting for the first time the
  positive effect of rounds with concurrent honest leaders.  Current
  analyses of these protocols 
  reduce consistency to the dynamics of an abstract, round-based block
  creation process that is determined by three probabilities:
  \begin{itemize}
  \item $p_\A$, the probability that a round has at least one adversarial leader;
  \item $p_\h$, the probability that a round has a single honest leader; and
  \item $p_\H$, the probability that a round has multiple, but honest, leaders.
  \end{itemize}
  We present a 
  consistency analysis that achieves the optimal threshold $p_\h + p_\H > p_\A$. 
  This is a first in the literature and can be applied
  to both the simple synchronous setting and the setting with bounded
  delays. 
  Moreover, we achieve the optimal consistency error $e^{-\Theta(k)}$ where $k$ is the confirmation time. 
  We also provide an efficient algorithm to explicitly calculate these error probabilities in the synchronous setting.

  All existing consistency analyses either incur a penalty
  for rounds with concurrent honest leaders, 
  or treat them neutrally. 
  Specifically, the consistency analyses in Ouroboros Praos (Eurocrypt 2018) and
  Genesis (CCS 2018) assume that the probability of a uniquely honest
  round exceeds that of the other two events combined 
  (i.e., $p_\h - p_\H > p_\A$); the analyses in
  Sleepy Consensus (Asiacrypt 2017) and Snow White (Fin. Crypto 2019)
  assume that a uniquely honest round is more likely than an
  adversarial round 
  (i.e., $p_\h > p_\A$). 
  In addition,
  previous analyses completely break down when uniquely honest rounds
  become less frequent, i.e., $p_\h < p_\A$. 
  These
  thresholds determine the critical trade-off between the honest majority,
  network delays, and consistency error.

  Our new results can be directly applied to improve the 
  consistency guarantees of the existing protocols. 
  We complement these results with a consistency analysis in the setting
  where uniquely honest slots are rare, even letting $p_\h = 0$, under
  the added assumption that honest players adopt a consistent chain
  selection rule.

\end{abstract}

% Proof-of-Stake blockchains, security, consistency, concurrent honest slot leaders

\section{Introduction}
% \cite{Praos}
Proof-of-Stake (PoS) blockchain protocols have emerged as a viable
alternative to resource-intensive Proof-of-Work (PoW) blockchain
protocols such as Bitcoin and Ethereum. These PoS protocols are
organized in rounds (which we call \emph{slots} in this paper); their
most critical algorithmic component is a leader election procedure
which determines---for each slot---a subset of participants with the
authority to add a block to the blockchain. Existing security analyses
of these protocols are logically divided into two components: the
first reasons about the properties of the leader election process, the
second reasons about the combinatorial properties of the blockchains
that can be produced by an \emph{idealized} leader schedule in the
face of adaptive adversarial control of some participants. An
attractive side effect of this structure is that the combinatorial
considerations can be treated independently of other aspects of the
protocol. A recent article of Blum et al.~\cite{LinearConsistency}
gave an axiomatic treatment of this combinatorial portion of the
analysis which we extend in this paper.

These common combinatorial arguments can be formulated with very
little information about the leader election process.  Specifically,
current analyses focus on three parameters:
\begin{itemize}
\item $p_\h$, the probability that a slot is \emph{uniquely honest}, having a single honest leader;
\item $p_\H$, the probability that a slot is \emph{multiply honest}, having multiple, but honest,
  leaders; and
\item $p_\A$, the probability that a slot has at least one adversarial leader.
\end{itemize}
Our major contribution is a generic, rigorous guarantee of consistency
under the most desirable assumption\footnote{ 
  Consistency is unachievable in the case $p_\h + p_\H < p_\A$.
  %leaves little room for consistency
%since in this case  
%the adversary can create a private protocol execution identically
%distributed to that of the honest parties and by selectively disclosing it, it
%can break consistency.
  See \cite{GK18} for a detailed discussion of the honest majority
  assumption. } $p_\h + p_\H > p_\A$ that achieves optimal consistency
error $\exp(-\Theta(k))$ as a function of confirmation time $k$. Our
analysis can be directly applied to existing protocols to improve
their consistency guarantees.

To contrast this with existing literature, the analysis of Ouroboros
Praos~\cite{Praos} and Ouroboros Genesis~\cite{Genesis}
%\footnote{The chain selection
%  rule in Genesis is a combination of the usual longest-chain rule and
%  a chain-density based rule.}
require the threshold assumption $p_\h - p_\H > p_\A$ to achieve the
optimal consistency error of $e^{-\Theta(k)}$. Note how multiply
honest slots actually \emph{detract} from security, appearing
negatively in the basic security threshold. The consistency analyses
in Snow White~\cite{SnowWhite} and Sleepy Consensus~\cite{Sleepy}
assume an improved threshold $p_\h > p_\A$; however, they only
establish a consistency error bound of $e^{-\Theta(\sqrt{k})}$. Note
here that multiply honest slots appear neutrally. All existing
analyses break down if $p_\h < p_\A$, i.e., when the uniquely
honest slots are less probable than the adversarial slots.

% {\color{red} XXXX I got to here.}

Multiply honest slots may arise by design, e.g., when each player
checks privately whether he is a leader.  They may also occur
naturally in the non-synchronous setting when the time between the
broadcast of two blocks is exceeded by network delay---in this case
the party issuing the later block may not be aware of the earlier
block which can result the two blocks sharing the same chain history,
a de facto incidence of multiple honest leaders. The role of these
slots is rather delicate: while it is good for the system to have many
honest blocks, \emph{concurrent} blocks can help the adversary in
creating two long, diverging blockchains that might jeopardize the
consistency property. Our new analysis shows that this second effect
can be mitigated, achieving consistency error bound of
$e^{-\Theta(k)}$ under the (tight) assumption $p_\h + p_\H > p_\A$.

\paragraph{Our results and contributions.} 
As described above, we show for the first time that PoS blockchain
protocols using the longest-chain rule can achieve a consistency error
of $e^{-\Theta(k)}$ under the desirable condition
$p_\h + p_\H > p_\A$.  This improves the security guarantee of all
``longest chain rule''  PoS protocols such as Praos~\cite{Praos},
Genesis~\cite{Genesis}, and Snow White~\cite{SnowWhite}
(we remark that other PoS protocols such as Algorand~\cite{DBLP:journals/corr/Micali16} 
operate in a different setting where explicit participation bounds are assumed
and forks can be prevented).
We discuss
our results in more detail before turning to the model and proofs.

% \begin{itemize}
% \item
Our analysis in the simple synchronous model achieves the same
asymptotic error bound as in~\cite{LinearConsistencySODA}---the
tightest result in the literature---under a much weaker assumption,
namely $p_\h + p_\H > p_\A$.  Thus PoS protocols can in fact achieve
consistency with $p_\h < p_\A$, a regime beyond reach of all previous analyses. 
When uniquely honest slots are rare 
(i.e., when $p_\h$ is very small), 
our bound has the desired dependence on $p_\h$. 
Moreover, when $p_\H = 0$ (i.e., all honest slots are in fact
uniquely honest), we exactly recover the bound
in~\cite{LinearConsistencySODA}. 
We also give an algorithm to explicitly compute the probability 
that a given slot encounters a consistency violation 
under the idealized leader election mechanism. 
The time and space required by this algorithm is cubic 
in the length of the protocol execution.

Next, we consider a variant model where the honest players use a
consistent tie-breaking rule when selecting the longest chain.  (I.e.,
when a fixed set of blockchains of equal length are presented to a
collection of honest players, they all select the same chain.
In previous models, the adversary had the right to break such ties by influencing
network delivery.)
Assuming $p_\h + p_\H > p_\A$, we prove that the consistency error
bound in this model is identical to the $e^{-\Theta(k)}$ bound
in~\cite{LinearConsistencySODA} \emph{even when $p_\h =
  0$}. No existing analysis survives in this regime.

% Thus not only is our analysis a first in this parameter regime, but
% our bound is also the tightest in any regime.

  % \item 
\paragraph{$\Delta$-synchronous setting.}
In the $\Delta$-synchronous
communication setting, all messages are delivered with at most
a $\Delta$ delay. Our results mentioned above can be transferred to
this setting using the \emph{$\Delta$-synchronous to synchronous reduction
approach} used in the Ouroboros Praos analysis~\cite{Praos}. Thus, we
can achieve a consistency error probability of $e^{-\Theta(k)}$ in this
setting as well. 
This analysis is presented in 
Section~\ref{sec:async}.

  % \item 
\paragraph{A technical overview.}
We initially work in the synchronous communication model and extend
the synchronous combinatorial framework
of~\cite{LinearConsistency} to accommodate multiply honest
slots. 
% Many of the important constructs and proofs from their
% development break down
% and some critical combinatorial notions 
% % in their framework 
% % of ``relative margin'' and ``balanced forks'' 
% do not retain their direct significance in the multi-leader setting. 
% It appears that we need new tools with the right expressive properties.

First, our analysis focuses on a combinatorial event called a ``Catalan
slot.''\footnote{The name is a nod to the \emph{Catalan number} in
  combinatorics: The $n$th Catalan number $C_n$ is the number of
  strings $w \in \{0, 1\}^{2n}$ so that every prefix $x$ of $w$
  satisfies $\#_0(x) \geq \#_1(x)$.} Catalan slots are honest slots
$c$ with the property that any interval containing $c$ possesses
strictly more honest slots---with any number of honest leaders---than
adversarial ones. The analysis of~\cite{SnowWhite} and ~\cite{Sleepy}
introduced this basic concept, though they counted only uniquely
honest slots. In comparison with their analysis, then, our treatment
has two important advantages: first of all, we let multiply honest
slots count in the analysis and, additionally, we achieve strikingly
stronger error bounds: specifically, we achieve optimal settlement
error of $\exp(-\theta(k))$ rather than $\exp(-\theta(\sqrt{k}))$.

A Catalan slot $c$ acts as a barrier for the adversary in that if an
honest blockchain from a slot $h < c$ is padded with adversarial
blocks and presented to an honest observer at slot $c + 1$, the
observer will never adopt this blockchain.  As a result, the chains
adopted by this honest observer must contain \emph{some} block from
slot $c$.  Note that this is true \emph{even if $c$ is
  multiply honest}.  A critical observation is that \emph{a slot is
  Catalan if and only if all competitive blockchains in future slots
  contain at least one block from this slot}.  Thus, if a Catalan slot
$c$ is uniquely honest, all blockchains that are eligible to be
adopted by future honest players must contain the (only) honest block
issued from slot $c$.  We call this the ``Unique Vertex Property''
(UVP).  Note how the UVP is reminiscent of the ``Common Prefix
Property'' (CP) in the literature. Thus, together, the UVP and 
Catalan slots act as a conduit between consistency
violations and the underlying stochastic process. 

Our major technical challenge is to bound the probability that Catalan
slots are infrequent. Here we break away entirely from the analysis
of~\cite{SnowWhite} and approach the question using the theory of
generating functions and stochastic dominance. We find an exact
generating function for a related event and use this, by dominance, to
control the undesirable event that a long window of slots is devoid of
Catalan slots. This yields
% permits us to achieve
asymptotically optimal settlement bounds.

Finally, it follows from the discussion above that if two consecutive
slots are Catalan then any subsequent honest block must contain, in
its prefix, a block from each of these slots.  In a setting where all
honest players use a consistent longest-chain selection rule,
% Theorem~\ref{thm:multiple-honest} further states
we show that both slots have UVP as well.  Since Catalan slots can be
multiply honest, PoS protocols can achieve a consistency error bound
of $e^{-\Theta(k)}$ in this model even if $p_\h = 0$.

In a separate line of reasoning, in Section~\ref{sec:recursion}, 
we generalize the fork-theoretic framework of~\citet{LinearConsistency} for the multi-leader setting. 
Here, we characterize the UVP 
in terms of the so-called ``relative margin,'' 
a combinatorial property of a given slot. 
We describe an adversary who optimally attacks the UVP 
of all slots, simultaneously. 
Next, we prove a recurrence relation for relative margin. 
Suppose each slot is 
independently and identically chosen 
(by the leader election mechanism) 
to be either uniquely honest, multiply honest, or adversarial. 
The recurrence relation mentioned above then 
leads to an algorithm to explicitly compute 
the probability that 
a given slot encounters a consistency violation; 
see Section~\ref{sec:exact-prob}. 
In contrast, the Catalan slot-centric characterization of the UVP 
gives us only an asymptotic bound on this probability. 
It can be concluded that the fork-framework, after all, 
is expressive enough to capture consistency violations 
in the multi-leader setting.
%No existing
%analysis can handle this parameter regime.

\paragraph{Outline.}
We specify our model in Section~\ref{sec:model} and focus on a
specific consistency property called ``$k$-settlement.''  This section
also contains our main theorems; the proofs are deferred to
Section~\ref{sec:bounds-main-proofs}.  In
Section~\ref{sec:definitions}, we describe amplifications
% further necessary elements of
to the fork framework of~\cite{LinearConsistency} in order to
%so that we can
explore the relationship between Catalan slots and the UVP. 
In Section~\ref{sec:bounds-main-proofs},
we present two bounds on the stochastic events of interest, e.g., the
rarity of a Catalan slot; these bounds lead to short proofs of the
main theorems.  The proofs of these bounds are presented next in
Section~\ref{sec:estimates} which contains all of our probabilistic
arguments.  

Section~\ref{sec:recursion} contains an alternative treatment 
of the UVP via fork-theoretic notions of~\cite{LinearConsistency}. 
Along the way, it describes an optimal adversary who simultaneously attacks the consistency of all slots. 
It also describes an algorithm to compute explicit values 
for the probability of consistency violations. 
The proofs of two important theorems from this section 
are presented subsequently in Section~\ref{sec:margin-proof}.

Our treatment of the $\Delta$-synchronous setting is
presented in Section~\ref{sec:async}.  In Section~\ref{sec:cp}, we
treat the traditional Common Prefix (CP) violations using our bounds
on the UVP.  

In Appendix~\ref{sec:cp-forks}, 
we characterize common prefix violations 
in the presence of multiply honest slots 
using ``balanced forks'' from~\cite{LinearConsistency} (and, importantly, 
without using Catalan slots).

%%% Local Variables:
%%% mode: latex
%%% TeX-master: "main"
%%% End:

\section{The model and our main theorems}\label{sec:model}
We study the behavior of the elementary \emph{longest-chain rule}
algorithm, carried out by a collection of participants:
\begin{itemize}
  \item In each round,
each participant collects all valid blockchains from the network; if a
participant is a leader in the round, he adds a block to the longest
chain and broadcasts the result.
\end{itemize}
Here, ``valid'' indicates that any block appearing in the
chain was indeed issued by a leader from the associated slot; in the
PoS setting, this property is guaranteed with digital
signatures.

We begin by studying this algorithm in the simple, synchronous model
posited by Blum et. al~\cite{LinearConsistency}. The model adopts a
synchronous communication network in the presence of a \emph{rushing}
adversary: in particular,
\begin{enumerate}[label={\textbf{A\arabic*}}., ref={\textbf{A\arabic*}}, series=axiom, start = 0]
  \item\label{axiom:message-delivery} 
  Any message broadcast by an honest participant at the beginning of a
  particular slot is received by the adversary first, who may decide
  strategically and individually for each recipient in the network
  whether to inject additional messages and in which order all messages
  are to be delivered prior to the conclusion of the slot. 
\end{enumerate}
See the comments prior to Section~\ref{sec:model-settlement} for
further discussion of this network assumption.  A variant of this
adversarial message-ordering is presented in
Section~\ref{sec:lcr-model}.  The $\Delta$-synchronous communication
model is handled in Section~\ref{sec:async}.
% the full version~\cite{MultiHonestFullVersion}.

Given this, it is easy to describe the behavior of the longest-chain
rule when carried out by a group of honest participants with the extra
guarantee that exactly one is elected as leader in a slot: Assuming
that the system is initialized with a common ``genesis block''
corresponding to $\slot_0$, the players observe a common, linearly
growing blockchain:
\begin{center}
  \begin{tikzpicture}[>=stealth', auto, semithick,
    flat/.style={circle,draw=black,thick,text=black,font=\small}]
    \node[flat]    at (0,0)  (base) {$0$};
    \node[flat]    at (1,0)  (n1) {$1$};
    \node[flat]    at (2,0)  (n2) {$2$};
    \node[flat,white]    at (3,0)  (n3) {$\ \ \  $};
    \node at (3,0) {$\ldots$};
    \draw[thick,->] (base) to (n1);
    \draw[thick,->] (n1) to (n2);
    \draw[thick,->] (n2) to (n3);
  \end{tikzpicture}
\end{center}
\noindent
Here node $i$ represents the block broadcast by the leader of slot $i$
and the arrows represent the direction of increasing time.
%(Note that
%the requirement of a single leader per slot is important in this
%simple picture; it is possible for a network adversary to induce
%divergent views between the players by taking advantage of slots where
%more than a single honest participant is elected a leader.)

\paragraph{The blockchain axioms: Informal discussion.}
The introduction of adversarial participants or multiple slot leaders
complicates the family of possible blockchains that could emerge from
this process. To explore this in the context of our protocols, we work
with an abstract notion of a blockchain which
% (as informally suggested above)
ignores all internal structure. We consider a fixed assignment of
leaders to time slots, and assume that the blockchain uses a proof
mechanism to ensure that any block labeled with slot $\slot_t$ was
indeed produced by a leader of slot $\slot_t$; this is guaranteed in
practice by appropriate use of a secure digital signature scheme.

Specifically, we treat a \emph{blockchain} as
a sequence of abstract blocks, each labeled with a slot number, so
that:
\begin{enumerate}[label={\textbf{A\arabic*}}., ref={\textbf{A\arabic*}}, resume=axiom]
  \item\label{axiom:root} 
  The blockchain begins with a fixed ``genesis'' block, assigned to slot $\slot_0$.
  
  \item\label{axiom:labels} 
  The (slot) labels of the blocks are in strictly increasing order.
\end{enumerate}
It is further convenient to introduce the structure of a directed
graph on our presentation, where each block is treated as a vertex; in
light of the first two axioms above, a blockchain is a path beginning
with a special ``genesis'' vertex, labeled $0$, followed by vertices
with strictly increasing labels that indicate which slot is associated
with the block. %(See the example below.)
\begin{center}
  \begin{tikzpicture}[>=stealth', auto, semithick,
    flat/.style={circle,draw=black,thick,text=black,font=\small}]
    \node[flat]    at (0,0)  (base) {$0$};
    \node[flat]    at (1,0)  (n1) {$2$};
    \node[flat] at (2,0)  (n2) {$4$};
    \node[flat] at (3,0)  (n3) {$5$};
    \node[flat] at (4,0)  (n4) {$7$};
    \node[flat] at (5,0)  (n5) {$9$};
    \draw[thick,->] (base) to (n1);
    \draw[thick,->] (n1) to (n2);
    \draw[thick,->] (n2) to (n3);
    \draw[thick,->] (n3) -- (n4);
    \draw[thick,->] (n4) -- (n5);
  \end{tikzpicture}
\end{center}
The protocols of interest call for honest players to add a
\emph{single} block %(to a single previous chain in its local state)
during any slot. In particular:
% \begin{enumerate}[label={\textbf{A\arabic*}}., resume=axiom]
% \item If a slot $\slot_t$ was assigned to a single honest player, 
% then a single block is created---during the entire protocol---with the label $\slot_t$.
% \end{enumerate}
\begin{enumerate}[label={\textbf{A\arabic*}}., ref={\textbf{A\arabic*}}, resume=axiom]
  \item\label{axiom:honest}
   Let $k \geq 1$ be an integer. 
  If a slot $\slot_t$ was assigned to $k$ honest players but no adversarial players, 
  then $k$ blocks are created---during the entire protocol---each having the label $\slot_t$.
\end{enumerate}
Recall that blockchains are \emph{immutable} in the sense that any
block in the chain commits to the entire previous history of the
chain; this is achieved in practice by including with each block a
collision-free hash of the previous block. These properties imply that
% if a specific slot $\slot_t$ was assigned to a unique honest player,
% then any chain that includes the unique block from $\slot_t$ 
% if a specific slot $\slot_t$ was assigned to a unique honest player,
any chain that includes a block issued by an honest player 
must also include that block's associated prefix in its entirety.

As we analyze the dynamics of blockchain algorithms, it is convenient
to maintain an entire family of blockchains at once. As a matter of
bookkeeping, when two blockchains agree on a common prefix, we can
glue together the associated paths to indicate this, as shown
below.
\begin{center}
  \begin{tikzpicture}[>=stealth', auto, semithick,
    flat/.style={circle,draw=black,thick,text=black,font=\small}]
    \node[flat]    at (0,0)  (base) {$0$};
    \node[flat]    at (1,0)  (n1) {$2$};
    \node[flat] at (2,0)  (n2) {$4$};
    \node[flat] at (3,0)  (n3) {$5$};
    \node[flat] at (4,.5)  (n4a) {$7$};
    \node[flat] at (5,.5)  (n5a) {$9$};
    \node[flat] at (4,-.5)  (n4b) {$8$};
    \node[flat] at (5,-.5)  (n5b) {$9$};
    \draw[thick,->] (base) to (n1);
    \draw[thick,->] (n1) to (n2);
    \draw[thick,->] (n2) to (n3);
    \draw[thick,->] (n3) to (n4a);
    \draw[thick,->] (n4a) to (n5a);
    \draw[thick,->] (n3) to (n4b);
    \draw[thick,->] (n4b) to (n5b);
  \end{tikzpicture}
  \end{center}
  When we glue together many chains to form such a diagram, we call it
  a ``fork''---the precise definition appears below. Observe that
  while these two blockchains agree through the vertex (block) labeled
  5, they contain (distinct) vertices labeled 9; this reflects two
  distinct blocks associated with slot 9 which, in light of the axiom
  above, 
  % must have been produced by an adversarial participant.
  may be produced by either an adversarial participant assigned to slot 9 or 
  two honest participants, both assigned to slot 9.
  
  Finally, as we assume that messages from honest players are
  delivered before the next slot begins, we note a direct consequence of the longest
  chain rule:
\begin{enumerate}[label={\textbf{A\arabic*}}., ref={\textbf{A\arabic*}}, resume=axiom]
  \item\label{axiom:honest-depth} 
  If two honestly generated blocks $B_1$ and $B_2$ are labeled
  with slots $\slot_1$ and $\slot_2$ for which $\slot_1 < \slot_2$,
  then the length of the unique blockchain terminating at $B_1$ is
  strictly less than the length of the unique blockchain terminating at $B_2$.
\end{enumerate}
Recall that the honest participant(s) assigned to slot
$\slot_2$ will be aware of the blockchain terminating at $B_1$ that
was broadcast by an honest player in slot $\slot_1$ as a result of
synchronicity; according to the longest-chain rule, 
$B_2$ must have been placed on a chain that was at least this long. In contrast, not
all participants are necessarily aware of all blocks generated by
dishonest players, and indeed dishonest players may often want to
delay the delivery of an adversarial block to a participant or show
one block to some participants and show a completely different block
to others.

\paragraph{Characteristic strings, forks, and the formal axioms.}
Note that with the axioms we have discussed above, whether or not a
particular fork diagram (such as the one just above) corresponds to a valid
execution of the protocol depends on how the slots have been awarded to the parties by the
leader election mechanism. We introduce the notion of a ``characteristic'' string as a convenient
means of representing information about slot leaders in a given execution.
% \begin{definition}[Characteristic string]
%   Let $\slot_1, \ldots, \slot_{n}$ be a sequence of slots. A \emph{characteristic string} $w$ is an element of $\{0,1\}^n$ defined for a particular execution of a blockchain protocol so that
%   \[
%     w_t =   \begin{cases}
%     0 & \text{if $\slot_{t}$ was assigned to a single honest participant},\\
% %    1 & \text{if $\slot_{t}$ was assigned to an adversarial participant},
%     1 & \text{otherwise.}
%   \end{cases}
% \]
% \end{definition}
\begin{definition}[Characteristic string]\label{def:trivalent-char-string}
  Let $\slot_1, \ldots, \slot_{n}$ be a sequence of slots.  A
  \emph{characteristic string} $w$ is an element of
  $\{\h,\H,\A\}^n$. The string $w$ is consistent with a particular
  execution of a blockchain protocol on these slots if for each
  $t \in [n]$, (i) if $w_t = \A$, the slot $\slot_t$ is assigned to
  at least one adversarial participant, (ii) if $w_t = \h$, the slot
  $\slot_{t}$ is assigned to a unique, honest participant, and (iii)
  if $w_t = \H$, the slot $\slot_{t}$ is assigned to at least one
  honest participant and no adversarial participants.

  Observe that when an execution corresponds to a characteristic
  string $w$, it also corresponds to any string obtained from $w$ by
  replacing $\h$ symbols with $\H$ symbols.
% 
  % \[
  %   w_t =   \begin{cases}
  %   \h & \text{if $\slot_{t}$ was assigned to a single participant who is honest},\\
  %   \H & \text{if $\slot_{t}$ was assigned to two or more participants, all honest},\\
  %   \A & \text{otherwise.}
  %   \end{cases}
  % \]
\end{definition}
For two strings $x$ and $w$ on the same alphabet, 
we write $x \Prefix w$ if and only if $x$ is a strict prefix of $w$. 
Similarly, 
we write $x \PrefixEq w$ if and only if either $x = w$ or $x \Prefix w$. 
The empty string $\varepsilon$ is a prefix to any string. 
If $w_t \in \{\h, \H\}$, we say that ``$\Slot_t$ is honest'' and 
otherwise, we say that ``$\Slot_t$ is adversarial.'' 
With this discussion behind us, we set down the formal object we use
to reflect the various blockchains adopted by honest players during
the execution of a blockchain protocol. This definition formalizes the blockchains axioms discussed above.

%%%%Forks

\begin{definition}[Fork]\label{def:fork}
  Let $w\in \{\h, \H, \A\}^n$, $P = \{ i : w_i = \h\}$, and $Q = \{ j : w_j = \H\}$. 
  A \emph{fork} for the string $w$ consists of a directed and rooted
  tree $F=(V,E)$ with a labeling $\ell:V\to\{0,1,\dots,n\}$. We insist
  that each edge of $F$ is directed away from the root vertex and
  further require that
  \begin{enumerate}[label=(F{\arabic*})]
    \item\label{fork:root} the root vertex $r$ has label $\ell(r)=0$;

    \item\label{fork:monotone} the labels of vertices along any directed path are strictly increasing;

    \item\label{fork:unique-honest}\label{fork:multiply-honest}
    each index $i\in P$ 
    is the label of exactly one vertex of $F$ 
    and 
    % \item
    each index $j\in Q$ 
    is the label of at least one vertex of $F$; and 

    \item\label{fork:honest-depth} 
    for any indices $i,j\in P \Union Q$, 
    if $i<j$ then 
    the depth of a vertex with label $i$ 
    is strictly less than 
    the depth of a vertex with label $j$.
  \end{enumerate}
\end{definition}

If $F$ is a fork for the characteristic string $w$, we write
$F\vdash w$.  The conditions~\ref{fork:root}--\ref{fork:honest-depth}
are analogues of the axioms~\ref{axiom:root}--\ref{axiom:honest-depth}
above. The formal reflection of axiom~\ref{axiom:honest} by
condition~\ref{fork:multiply-honest} deserves further comment: We have
chosen a definition of characteristic string that does not indicate
the number of honest victories in cases where there may be many; in
particular, the symbol $\H$ may be associated with any positive number
of (honest) vertices in the fork. Indeed, we even permit a fork to
have a \emph{single} honest vertex associated with such a symbol,
which enlarges the class of forks under consideration for a particular
characteristic string. This strengthens our results by effectively
giving the adversary the option to treat $\H$ symbols as $\h$
symbols. See Fig.~\ref{fig:fork}
% in Section~\ref{sec:figures} 
for an example fork. 

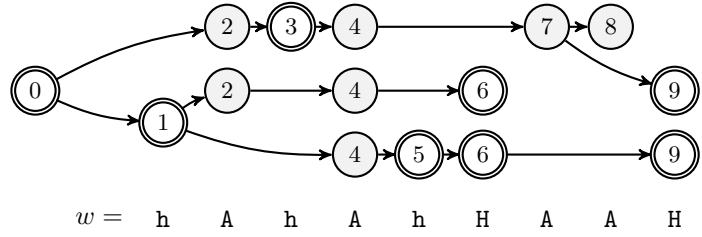
\begin{figure}[t]
  \centering
  \begin{tikzpicture}[scale=0.85,>=stealth', auto, semithick,
    honest/.style={circle,draw=black,thick,text=black,double,font=\small},
    malicious/.style={fill=gray!10,circle,draw=black,thick,text=black,font=\small}]
    \node at (0,-2) {$w =$};
    \node at (1,-2) {$\h$};
    \node[honest]    at (1,-.5)  (ab1) {$1$};
    \node at (2,-2) {$\A$};
    \node[malicious] at (2,0)  (b2) {$2$}; \node[malicious] at (2,1) (c1) {$2$};
    \node at (3,-2) {$\h$};    \node[honest]    at (3,1)  (c2) {$3$};
    \node at (4,-2) {$\A$};    \node[malicious] at (4,0)  (b3) {$4$}; \node[malicious] at (4,-1) (a2) {$4$};
    \node[malicious] at (4,1) (c3) {$4$};
    \node at (5,-2) {$\h$};    \node[honest]    at (5,-1) (a3) {$5$};
    \node at (6,-2) {$\H$};    \node[honest]    at (6,0)  (b4) {$6$}; \node[honest]    at (6,-1)  (a4) {$6$};
    \node at (7,-2) {$\A$};    \node[malicious] at (7,1)  (c4) {$7$};
    \node at (8,-2) {$\A$};    \node[malicious] at (8,1)  (c5) {$8$};
    \node at (9,-2) {$\H$};    \node[honest]    at (9,0)  (b5) {$9$}; \node[honest]    at (9,-1)  (a5) {$9$};
    \node[honest] at (-1,0) (base) {$0$};
    % \node[state,honest] at (3,-1) (bottom) {};
    % \node[state,honest] at (7,1) (top) {$H$};
    \draw[thick,->] (base) to[bend left=10] (c1);
    \draw[thick,->] (base) to[bend right=10] (ab1);
    \draw[thick,->] (ab1) to[bend right=10] (a2);
    \draw[thick,->] (a2) -- (a3);
    \draw[thick,->] (a3) -- (a4);
    \draw[thick,->] (ab1) to[bend left=10] (b2);
    \draw[thick,->] (b2) -- (b3);
    \draw[thick,->] (b3) -- (b4);
      % \draw[thick,->] (b4) -- (b5);
    \draw[thick,->] (c4) to[bend right=10] (b5);
    \draw[thick,->] (a4) -- (a5);
    \draw[thick,->] (c1) -- (c2);
    \draw[thick,->] (c2) -- (c3);
    \draw[thick,->] (c3) -- (c4);
    \draw[thick,->] (c4) -- (c5);
    % \draw[thick,<->] (3,0) -- (7,0) node[pos=.5] {$\gap(f)$};
  \end{tikzpicture}
  \caption{A fork $F$ for the characteristic string $w = \h\A\h\A\h\H\A\A\H$;
    vertices appear with their labels and honest vertices are
    highlighted with double borders. Note that the depths of the
    (honest) vertices associated with the honest indices of $w$ are
    strictly increasing. Note, also, that this fork has three disjoint
    paths of maximum depth. 
    In addition, two honest vertices have label 6 and two more have label 9, 
    indicating the fact that two honest leaders are associated with each of the (honest) slots 6 and 9. 
    Honest vertices with the same label are concurrent and, therefore, cannot extend each other.
    Note that the two honest vertices with label 6 extend different vertices with the same depth. 
    This is allowed since any tie in the longest-chain rule is broken by the adversary. 
    }
  \label{fig:fork}
\end{figure}

A final notational convention: If $F \vdash x$ and
$\hat{F} \vdash w$, we say that $F$ is a \emph{prefix} of $\hat{F}$,
written $F \fprefix \hat{F}$, if 
% the string $x \in \{0,1\}^\ell$ is a prefix of the string $w \in \{0,1\}^{\ell + m}$ 
$x \PrefixEq w$
and $F$ appears as a
consistently-labeled subgraph of $\hat{F}$. 
(Specifically, each path of $F$ appears, with identical labels, in $\hat{F}$.)

% Observe that any string of the form $0^k$ has a unique
% fork consisting of a single path:
% \begin{center}
%   \begin{tikzpicture}[scale=1,>=stealth', auto, semithick,
%     flat/.style={circle,draw=black,thick,text=black,font=\small}]
%     \node[flat] at (0,0)  (base) {$0$};
%     \node[flat] at (1,0)  (n1) {$1$};
%     \node[flat] at (2,0)  (n2) {$2$};
%     \node[flat] at (5,0)  (n5) {$k$};
%     \draw[thick,->] (base) to (n1);
%     \draw[thick,->] (n1) to (n2);
%     \draw[thick,->,dashed] (n2) to (n5);
%   \end{tikzpicture}
% \end{center}
% On the other hand, there are in fact an infinite number of forks for
% any string with at least one ``1,'' as slots for which $w_i = 1$ can
% be associated with any number of vertices. Axiom~\ref{fork:monotone}
% reflects that any legal chain must consist of blocks with increasing
% slot labels. Axiom~\ref{fork:unique-honest} reflects the fact that honest plays produce a single block. Axiom~\ref{fork:honest-depth} reflects that each new
% honest vertex is always placed at a depth strictly greater than all
% previous honest vertices, because honest users always choose to add
% their block to the longest visible chain, and we assume honest blocks
% can be seen by all users. (By contrast, adversaries may play on
% shorter tines, or may ``hide'' dishonest blocks from other users until
% later slots.)

  Let $w$ be a characteristic string.  The directed paths in the fork
  $F \Fork w$ originating from the root are called \emph{tines}; these
  are abstract representations of blockchains. (Note that a tine may 
  not terminate at a leaf of the fork.)  We naturally extend the label
  function $\ell$ for tines: i.e., $\ell(t) \triangleq \ell(v)$ where
  the tine $t$ terminates at vertex $v$. The length of a tine $t$ is
  denoted by $\length(t)$.

 \paragraph{Viable tines.}
 The longest-chain rule dictates that honest players build on chains
 that are at least as long as all previously broadcast honest
 chains. It is convenient to distinguish such tines in the analysis:
 specifically, a tine $t$ of $F$ is called \emph{viable} if its length
 is no smaller than the depth of any honest vertex $v$ for which
 $\ell(v) \leq \ell(t)$. A tine $t$ is \emph{viable at slot $s$} if
 the length of the portion of $t$ appearing over slots $0,\ldots, s$ 
 is no smaller than the depths of any honest vertices labeled from these slots. (As noted,
 the properties~\ref{fork:multiply-honest} and~\ref{fork:honest-depth}
 together imply that an honest observer at slot $s$ will only adopt a
 viable tine.)  
 % The \emph{honest depth} function
 % $\hdepth : H \rightarrow [n]$ gives the depth of the (unique) vertex
 % associated with an honest slot; by~\ref{fork:honest-depth},
 % $\hdepth(\cdot)$ is strictly increasing.
 The \emph{honest depth} function
 $\hdepth : P \Union Q \rightarrow [n]$, 
 defined as $\hdepth(i) = \max_{t \in F} \left\{ \length(t) \SuchThat \ell(t) = i \right\}$, 
 gives the largest depth of the (honest) vertices 
 associated with an honest slot; by~\ref{fork:honest-depth},
 $\hdepth(\cdot)$ is strictly increasing.

% \subsection{Comments on the model}\label{sec:model-comments}

    % \begin{definition}[Conservative fork]\label{def:conservative-fork}
    %   The term \emph{conservative fork} is defined inductively, as follows.
    %   The trivial fork $F_0$ for the empty string $\varepsilon$ 
    %   is conservative. 
    %   A fork $F_{T} \Fork w0$ is conservative if 
    %   the fork prefix $F_{T_1} \ForkPrefix F_T, F_{T-1} \Fork w$ 
    %   is conservative 
    %   and every honest tine $t \in F_T$ at slot $T$ 
    %   is a conservative extension with respect to $F_{T-1}$.
    %   Likewise, 
    %   a fork $F_{T} \Fork w1$ is conservative if 
    %   the fork prefix $F_{T_1} \ForkPrefix F_T, F_{T-1} \Fork w$ 
    %   is conservative and $F_T = F_{T-1}$. 
    % \end{definition}
    % Observe that a conservative fork is closed. 

    % We record the following lemma from [LinearConsistencyPaper].
    % \begin{lemma}[Optimal online adversary]\label{lemma:online-adversary}
    %   For every characteristic string $w = xy$ 
    %   there is a closed fork $F \Fork xy$ 
    %   so that 
    %   $\reach(F) = \reach(xy)$ and $\mu_x(F) = \mu_x(y)$. 
    % \end{lemma}
    % % Given a charactersistic string $w = xy$, 
    % % the ``online adversary'' given in the LinearConsistencyPaper 
    % % creates a conservative fork 
    % % which achieves the reach and relative margin of the string $xy$. 

\subsection{Slot settlement and the Unique Vertex Property}\label{sec:model-settlement}
  
  We are now ready to explore the power of an adversary in this
  setting who has corrupted a (perhaps evolving) coalition of the
  players. We focus on the possibility that such an adversary can
  violate the consistency of the honest players'
  blockchains. In particular, we consider the possibility that, at
  some time $t$, the adversary conspires to produce two maximum-length blockchains 
  that diverge prior to a previous slot $s \leq t$; in
  this case honest players adopting the longest-chain rule may clearly
  disagree about the history of the blockchain after slot $s$. We call
  such a circumstance a \emph{settlement violation}.

  To express this in our abstract language, let $F \Fork w$ be a fork
  corresponding to an execution with characteristic string $w$. Such a
  settlement violation induces two viable tines $t_1, t_2$ with the
  same length that diverge prior to a particular slot of interest. We
  record this below.
    
  \begin{definition}[Settlement with parameters $s,k \in \NN$]\label{def:settlement}
    Let $n \in \NN$ and let $w$ be a characteristic string of length $n$. 
    Let $t \in [s + k, n]$ be an integer, $\hat{w} \PrefixEq w, |\hat{w}| = t$, and 
    let $F$ be any fork for $\hat{w}$. 
    We say that a slot $s$ is \emph{not $k$-settled} in $F$ if 
    $F$ contains two maximum-length tines $\Chain_1, \Chain_2$ 
    that ``diverge prior to $s$,'' i.e., they either
    contain different vertices labeled with $s$, or one contains a
    vertex labeled with $s$ while the other does not. 
    Otherwise, we say that \emph{slot $s$ is $k$-settled in $F$}. 
    We say that \emph{slot $s$ is $k$-settled in $w$} if, 
    for each $t \geq s+k$, 
    it is $k$-settled in every fork $F \Fork \hat{w}$ where $\hat{w} \PrefixEq w, |\hat{w}| = t$.
  \end{definition}

  \begin{figure*}[t]
    \begin{center}
      \fbox{
        \begin{minipage}{\textwidth}
          \begin{center}
            \textbf{The $(\Distribution,T;s,k)$-settlement game}
          \end{center}
          \begin{enumerate}

          \item A characteristic string $w \in \{\h,\H,\A\}^T$ is drawn from
            $\mathcal{D}$. (This reflects the results of the leader
            election mechanism.)

          \item Let $A_0 \vdash \varepsilon$ denote the initial fork for
            the empty string $\varepsilon$ consisting of a single node
            corresponding to the genesis block.

          \item For each slot $\Slot_t, t = 1, \ldots, T$ in increasing order:
            \begin{enumerate}
            \item\label{game:honest} (Honest slot.)  This case
              pertains to $w_t \in \{\h, \H\}$.  If $w_t = \h$ then
              $\Adversary$ sets $k = 1$.  If $w_t = \H$ then
              $\Adversary$ chooses an arbitrary integer $k \geq 1$.
              The challenger is then given $k$ and the fork
              $A_{t-1} \vdash w_1 \ldots w_{t-1}$.  He must determine
              a new fork $F_{t} \vdash w_1 \ldots w_t$ by adding $k$
              new vertices (all labeled with $t$) to $A_{t-1}$.  Each
              new vertex is added at the end of a maximum-length path
              in $A_{t-1}$.  If there are multiple
              candidates\footnote{ It is possible that all maximum-length 
                tines are honest. In the settlement game
                considered in~\cite{LinearConsistencySODA}, at least
                one of these tines was adversarial.} 
              for this path,
              $\Adversary$ may break the tie.  If $k \geq 2$, multiple
              vertices (all with label $t$) may be added at the end of
              the same path.

            \item 
            (Adversarial slot.)
            If $w_t = \A$, this is an adversarial slot. $\Adversary$
              may set $F_t \vdash w_1\ldots w_t$ to be an arbitrary fork
              for which $A_{t-1} \fprefix F_t$.
              
            \item (Adversarial augmentation.) $\Adversary$ determines an
              arbitrary fork $A_t \vdash w_1 \ldots, w_{t}$ for
              which $F_{t} \fprefix A_{t}$.
            \end{enumerate}
             Recall that $F \fprefix F'$ indicates that $F'$
              contains, as a consistently-labeled subgraph, the fork $F$.
          \end{enumerate}
          $\Adversary$ \emph{wins the settlement game} if slot $s$ is not
          $k$-settled in some fork $A_t, t \geq s+k$.
        \end{minipage}
      }
    \end{center}
  \end{figure*}

  \begin{definition}[Bottleneck Property (BP) and Unique Vertex Property (UVP)]\label{def:bottleneck-property}\label{def:unique-vertex-property}
    Let $w \in \{\h, \H, \A\}^T$ be a characteristic string.  
    A slot $s \in [T]$ is said to have the 
    \emph{bottleneck property in $w$} 
    if, 
    for any fork $F \Fork w$ 
    and any $k \geq s + 1$, 
    every tine viable at the onset of slot $k$ 
    contains, as its prefix, some vertex with label $s$.       
    Slot $s$ is said to have the \emph{Unique Vertex Property} 
    if, 
    for any fork $F \Fork w$, 
    there is a unique vertex $u \in F$ with label $s$ 
    so that 
    for any $k \geq s + 1$, 
    all tines viable at the onset of slot $k$ 
    contain, as their common prefix, 
    the vertex $u$.
  \end{definition}
  Thus 
  if a uniquely honest slot in $w$ has the bottleneck property, 
  it has the UVP as well.
  As a consistency property, UVP has several advantages over slot settlement. 
  First, it easily implies the slot settlement property:
  let $w \in \{\h, \H, \A\}^T, s \in [T]$, and $k \in [T - s]$. 
  \begin{equation}\label{eq:settlement-uvp}
    \text{If a slot $t \in [s, s + k]$ has UVP in $w$ 
    then $s$ is $k$-settled in $w$.}     
  \end{equation}  
  In addition, UVP has a straightforward characterization 
  using ``Catalan slots'' 
  (see Theorem~\ref{thm:unique-honest}) 
  and ``relative margin'' 
  (see Lemma~\ref{lemma:uvp-margin}); 
  these characterizations are amenable to stochastic analysis. 
  Finally, since UVP is structurally reminiscent of the traditional common prefix (CP) violations, 
  % (see Section~\ref{sec:cp-model}), 
  UVP easily implies CP. 
  The analogous statement ``settlement implies CP,'' however, 
  requires a lengthy proof both in~\cite{LinearConsistency} and in 
  our framework. See 
  Appendix~\ref{sec:cp-forks}
  % the full version~\cite{MultiHonestFullVersion}
  for details.

\subsection{Adversarial attacks on settlement time; the settlement game}\label{sec:game}

  To clarify the relationship between forks and the chains at play in a
  canonical blockchain protocol, we define a game-based model below that
  explicitly describes the relationship between forks and executions.
  By design, the probability that the adversary wins this game is at
  most the probability that a slot $s$ is not $k$-settled. 
  % We remark
  % that while we focus on settlement violations for clarity, one could
  % equally well have designed the game around common prefix violations.

  Consider the \emph{$(\Distribution,T;s,k)$-settlement game} 
  (presented in the box), played
  between an adversary $\Adversary$ and a challenger $\Challenger$ with
  a leader election mechanism modeled by an ideal distribution
  $\Distribution$. Intuitively, the game should reflect the ability of
  the adversary to achieve a settlement violation; that is, to present
  two maximum-length viable blockchains to a future honest observer,
  thus forcing them to choose between two alternate histories which
  disagree on slot $s$.
  The challenger plays the role(s) of the honest players during the
  protocol. 

  It is important to note that the game bestows the player $\Adversary$ 
  with the power to choose the number of honest vertices in 
  a multiply honest slot. 
  Note that this setting makes the player strictly more powerful and, 
  importantly, implies that 
  the game is completely determined 
  by the choices made by $\Adversary$ 
  (i.e., the actions of the challenger are deterministic). 
  Consequently, in Definition~\ref{def:settlement-insecurity}, 
  we can use a single, implicit universal quantifier 
  over all strategies $\Adversary$; no choices of the challenger are actually necessary to fully describe the game.

  \begin{definition}[Settlement insecurity]\label{def:settlement-insecurity}
    Let $\Distribution$ be a distribution on $\{\h, \H, \A\}^T$. 
    Let $w \sim \Distribution$ be the string used in the 
    first step of a $(\Distribution, T; s, k)$-settlement game $G$. 
    The \emph{$(s,k)$-settlement insecurity} of $\Distribution$ 
    is defined as 
    \[
      \mathbf{S}^{s,k}[\Distribution] \triangleq 
        \max_{\substack{\hat{w} \PrefixEq w \\ |\hat{w}| \geq s + k}}\,
        \max_{F \Fork \hat{w}}\, 
        \Pr\left[\parbox{50mm}{\centering $F$ has two maximum-length tines that diverge prior to slot $s$}\right]
      \,.
    \]
    % this maximum taken over all adversaries $\Adversary$.
    Note that the probability in the right-hand side is the same as 
    the probability that $\Adversary$ wins $G$.
  \end{definition}

  Note that in typical PoS settings the distribution $\Distribution$
  is determined by the combined stake held by the adversarial players,
  the leader election mechanism, and the dynamics of the protocol. The
  most common case (as seen in Snow White~\cite{SnowWhite},
  Ouroboros~\cite{Ouroboros}, and Ouroboros Praos~\cite{Praos})
  guarantees that the characteristic string $w = w_1 \ldots w_T$ is
  drawn from an i.i.d.\ distribution for which
  $\Pr[w_i = \A] \leq (1 - \epsilon)/2$ for some $\epsilon \in (0, 1)$;
  here the constant $(1-\epsilon)/2$ is directly related to the stake
  held by the adversary. Some settings involving adaptive adversaries
  (e.g., Ouroboros Praos~\cite{Praos}) yield a weaker martingale-type
  guarantee that
  $\Pr[w_i = \A \mid w_1, \ldots, w_{i-1}] \leq (1 - \epsilon)/2$.  We
  can easily handle both types of distributions in our analysis since
  the former distribution ``stochastically dominates'' the latter.
  % Let us define below the notion of stochastic dominance. 
  As a rule, we denote the
  probability distribution associated with a random variable using
  uppercase script letters. 
  \begin{definition}[Stochastic dominance]\label{def:dominance} 
    Let $X$ and $Y$ be random variables taking values in some set $\Omega$ 
    endowed with a partial order $\leq$. 
    We say that $X$ \emph{stochastically dominates} $Y$, 
    written $Y \dominatedby X$, if 
    $
      \mathcal{X}(A) \geq \mathcal{Y}(A)
      % \,.
    $ 
    for all \emph{monotone sets} $A \subseteq \Omega$, 
    where a set $A \subseteq \Omega$ is called 
    monotone if $a \in A$ implies $a' \in A$ for all $a \leq a'$.
    As a special case, when $\Omega = \R$,  $Y \dominatedby X$ if
    $\Pr[X \geq \Lambda] \geq \Pr[Y \geq \Lambda]$
    for every $\Lambda \in \R$.  
    We extend this notion to probability
    distributions in the natural way.
  \end{definition}

  Throughout the paper, we adopt the following partial order on
  $\{\h, \H, \A\}^T$: If $T = 1$, define $\h < \H < \A$.  Otherwise,
  for two strings $xa, yb \in \{\h, \H, \A\}^T, |a| = |b| = 1$,
  $xa \leq yb$ if and only if $x \leq y$ and $a \leq b$. When
  $x \leq y$, one might say that $y$ is ``more adversarial'' than $x$:
  indeed, if $F \vdash x$ and $x \leq y$ then $F \vdash y$ so that any
  settlement violation for $x$ induces a settlement violation for $y$.

  \begin{definition}[$(\epsilon, p_\h)$-Bernoulli
    condition]\label{def:bernoulli-cond}
    Let $T \in \NN, \epsilon \in (0,1)$, and
    $p_h \in [0,(1+\epsilon)/2]$. Define $p_\A = (1-\epsilon)/2$ and
    $p_\H = 1- p_\A - p_\h$.  A random variable $w = w_1 \ldots w_T$
    taking values in $\{\h, \H, \A\}^T$ is said to satisfy the
    \emph{$(\epsilon, p_\h)$-Bernoulli condition} if each
    $w_i, i \in [T]$, is independent and identically distributed as
    follows: $\Pr[w_i = \sigma] = p_\sigma$ for
    $\sigma \in \{\h, \H, \A\}$.  The distribution of $w$ is also said
    to satisfy the $(\epsilon, p_\h)$-Bernoulli condition.

    We frequently use the notation $p_\H$ and $p_\A$ in the context of
    such a random variable when $\epsilon$ and $p_\h$ can be inferred
    from context.
  \end{definition}
  
  \begin{theorem}[Main theorem]\label{thm:main}
    Let $\epsilon, p_\h \in (0, 1)$ and $s, k, T \in \NN$.  
    Let $\mathcal{B}$ be a distribution 
    on length-$T$ characteristic strings satisfying 
    the $(\epsilon, p_\h)$-Bernoulli condition.
    Then 
    $
      \mathbf{S}^{s,k}[\mathcal{B}] 
        \leq 
        \exp\left(-k\cdot \Omega( 
          \min(\epsilon^3, \epsilon^2 p_\h) 
        \right)
        % \,.
    $.
    Furthermore, 
    let 
    $\mathcal{W}$ be a distribution on
    $\{\h, \H, \A\}^T$ so that 
    $\mathcal{W} \DominatedBy \mathcal{B}$. 
    Then $\mathbf{S}^{s,k}[\mathcal{W}] 
        \leq \mathbf{S}^{s,k}[\mathcal{B}]$.   
    (Here, the asymptotic notation hides constants that do not depend on $\epsilon$ or $k$.)
  \end{theorem}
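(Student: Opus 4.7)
The plan is to reduce the settlement insecurity to a purely combinatorial event about characteristic strings and then estimate it with a generating-function / random-walk argument. By the implication~(\ref{eq:settlement-uvp}) it suffices to show that the window $[s, s+k]$ contains some slot with the UVP, except with probability $\exp(-\Omega(k \cdot \min(\epsilon^3, \epsilon^2 p_\h)))$. The cleanest route is through the forthcoming Theorem~\ref{thm:unique-honest} (the Catalan characterization announced in the introduction): any \emph{uniquely honest Catalan slot} has the UVP. Recall that a Catalan slot is an honest slot $c$ such that every interval containing $c$ has strictly more honest slots (of either type) than adversarial ones; this local majority is exactly what prevents any adversarial tine padded on top of an earlier honest prefix from competing with a fresh honest chain built at $c$. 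Thus the task reduces to bounding the probability $q_k$ that $[s, s+k]$ contains no uniquely honest Catalan slot.

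For this estimate I would encode $w$ as the lattice path $S_t$ with increment $+1$ on honest symbols and $-1$ on adversarial ones; the Bernoulli condition gives drift $\epsilon$. A slot $c$ fails to be Catalan iff the walk either falls below $S_c$ at some time after $c$ or rises above $S_c$ at some time before $c$ — a two-sided record event. A first-return decomposition then yields a generating function for the gap between consecutive Catalan slots, which admits a clean closed form and can be bounded by a stochastic-dominance comparison to a simpler one-sided biased walk. Combining this gap estimate with the independent thinning given by asking a Catalan slot to be uniquely honest — which happens with conditional probability $p_\h/(1-p_\A) = \Theta(p_\h)$ — produces $q_k \leq \exp(-\Omega(k \cdot \min(\epsilon^3, \epsilon^2 p_\h)))$. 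The bifurcation of the exponent reflects two regimes: when $p_\h \gtrsim \epsilon$ the Catalan analysis can be run directly on uniquely honest slots and the $\epsilon^3$ rate dominates; when $p_\h \ll \epsilon$ one pays an additional factor of $p_\h/\epsilon$ for the thinning from Catalan to uniquely-honest-Catalan slots.

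The stochastic-dominance clause is immediate by monotonicity: the partial order on $\{\h, \H, \A\}^T$ is arranged so that if $F \Fork x$ and $x \leq y$ componentwise (with $\h \leq \H \leq \A$), then $F \Fork y$, so ``some fork of $w$ has two maximum-length tines that diverge prior to $s$'' is monotone in $w$. Hence $\mathcal{W} \DominatedBy \mathcal{B}$ directly yields $\mathbf{S}^{s,k}[\mathcal{W}] \leq \mathbf{S}^{s,k}[\mathcal{B}]$. The principal obstacle I expect is the sharpness of the tail bound: a naive Chernoff estimate on honest-minus-adversarial slots in a window yields only $\exp(-\Omega(\sqrt{k}))$ (as in the Sleepy and Snow White analyses) because it ignores the Catalan condition at each individual candidate. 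The explicit generating-function computation, together with the stochastic-dominance step to a tractable comparison walk, is what pulls the exponent down to linear in $k$; additional care is required in the thin-$p_\h$ regime to argue that uniquely honest Catalan slots remain dense enough within all Catalan slots to saturate the window with high probability.
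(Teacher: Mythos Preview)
Your proposal is essentially correct and follows the paper's route: settlement violation $\Rightarrow$ no UVP slot in $[s,s+k]$ via~\eqref{eq:settlement-uvp}, UVP $\Leftrightarrow$ uniquely honest Catalan slot via Theorem~\ref{thm:unique-honest}, and then a generating-function tail bound on the biased walk (this is exactly Bound~\ref{bound:unique-honest-catalan}); the dominance clause is handled identically by monotonicity of the settlement event.

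One point where your sketch diverges from the paper and would need reworking: you propose to first control the gap between \emph{consecutive Catalan slots} and then apply an independent $p_\h/(p_\h+p_\H)$ thinning. The conditional-independence observation is valid (the $\h$/$\H$ labels are independent of the $\pm 1$ walk), but ``gap between consecutive Catalan slots'' does not admit a clean first-return decomposition, because right-Catalan-ness depends on the entire future of the walk. The paper sidesteps this by writing a single recursive generating function $\gfC(Z)$ for the \emph{first uniquely honest Catalan slot} directly, with a four-way case split on the next symbol ($\A$; $\h$ and the walk diverges below; $\h$ and the walk returns; $\H$), dominating it by a closed-form $\gfChat(Z)$, and reading off the exponent from the radius of convergence. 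The two regimes you identify then emerge not from a separate thinning step but from comparing two singularities of $\gfChat$: the branch-point singularity of $\gfA(Z\gfD(Z))$ at $1+\Theta(\epsilon^3)$, and the pole where the denominator $1-\gfF(z)$ vanishes at $1+\Theta(\epsilon^2 p_\h)$. Your thinning heuristic predicts the right answer, but to make it rigorous you would end up rebuilding essentially the same recursion.
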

  Note that the quantity $p_\h$ above cannot be zero.
  We present the proof in Section~\ref{sec:bounds-main-proofs}. 
  In Section~\ref{sec:recursion}, 
  we give a characterization of the UVP which 
  allows us to explicitly compute $\mathbf{S}^{s,k}[\mathcal{B}]$; 
  see 
  % Lemma~\ref{lemma:uvp-margin}, 
  Theorem~\ref{thm:relative-margin} and 
  Section~\ref{sec:exact-prob}.

  \paragraph{Analysis in the $\Delta$-synchronous setting.} The security
   game above most naturally models a blockchain protocol over a
   synchronous network with immediate delivery (because each ``honest''
   play of the challenger always builds on a fork that contains the fork
   generated by previous honest plays). However, the model can be easily
   adapted to protocols in the $\Delta$-synchronous model by applying
   the $\Delta$-reduction mapping of~\cite{Praos} (which is specifically
   designed to lift the synchronous analysis to the $\Delta$-synchronous
   setting).
   These details appear in Section~\ref{sec:async}.

  \paragraph{Public leader schedules.} One attractive feature of this
  model is that it gives the adversary full information about the future
  schedule of leaders. The analysis of some protocols indeed demand this
  (e.g., Ouroboros, Snow White). Other protocols---especially those
  designed to offer security against adaptive adversaries (Praos,
  Genesis)---in fact contrive to keep the leader schedule private. Of
  course, as our analysis is in the more difficult ``full information''
  model, it applies to all of these systems.

  \paragraph{Bootstrapping multi-phase algorithms; stake shift.} We remark that
  several existing proof-of-stake blockchain protocols proceed in
  phases, each of which is obligated to generate the randomness (for
  leader election, say) for the next phase based on the current stake
  distribution. The blockchain security properties of each phase are
  then individually analyzed---assuming clean randomness---which yields
  a recursive security argument; in this context the game outlined above
  precisely reflects the single phase analysis.

\subsection{A consistent longest-chain selection rule}\label{sec:lcr-model} 
  % Let $\leq_\pi$ be the lexicographical ordering of the tines in a fork where 
  % each tine is represented as the list of vertex labels appearing in the tine's root-to-leaf path.
  % If two tines have the same vertex labels, 
  % $\leq_\pi$ must break tie in an arbitrary but consistent way. 
  %
  % \begin{definition}[Conservative fork]\label{def:conservative-fork}
  %   The term \emph{conservative fork} is defined inductively, as follows.
  %   The trivial fork $F_0$ for the empty string $\varepsilon$ 
  %   is conservative. 
  %   A fork $F_{T} \Fork w0$ is conservative if 
  %   the fork prefix $F_{T_1} \ForkPrefix F_T, F_{T-1} \Fork w$ 
  %   is conservative 
  %   and every honest tine $t \in F_T$ at slot $T$ 
  %   is a conservative extension with respect to $F_{T-1}$.
  %   Likewise, 
  %   a fork $F_{T} \Fork w1$ is conservative if 
  %   the fork prefix $F_{T_1} \ForkPrefix F_T, F_{T-1} \Fork w$ 
  %   is conservative and $F_T = F_{T-1}$. 
  % \end{definition}
  % Observe that a conservative fork is closed. 

  % The rushing adversary 
  % described at the outset of Section~\ref{sec:model} 
  % can always reorder messages before delivering to a recipient. 
  Let us 
  modify axiom~\ref{axiom:message-delivery} as follows:
    % the adversary uses an arbitrary (but fixed) 
    % ordering for each subset of honest blockchains. 
    % When he delivers honest blockchains 
    % to an honest player, 
    % he delivers them in this pre-specified order. 

  \begin{enumerate}[label={\textbf{A\arabic*}\ensuremath{^\prime}}., ref={\textbf{A\arabic*}\ensuremath{^\prime}}, start=0]
  \item\label{axiom:tie-breaking} 
%     if all maximally long blockchains to be received by an honest party 
%     are honest, 
%     the adversary delivers these blockchains
    % in an arbitrary but consistent order.
    % Suppose two honest recipients receive 
    % the same set $L$ of maximally long blockchains. 
    % If all chains in $L$ end in honest blocks then 
    % the adversary delivers the elements of $L$ to these honest recipients 
    % in an arbitrary but consistent order.
    In addition to axiom~\ref{axiom:message-delivery},  
    an arbitrary but consistent 
    longest-chain tie-breaking rule 
    is used by all honest participants.
  \end{enumerate}
  % Note that 
  % the adversary is free to deliver adversarial blockchains 
  % in any order, interleaving the honest chains if he so wishes. 
  As a consequence, 
  if two honest participants observe 
  the same set of blockchains of maximum length, 
  they will extend the same blockchain.
  % there is no competitive adversarial blockchain 
  % when delivering to an honest recipient, 
  % he relinquishes his right to 
  % reorder the maximally long honest blockchains. 

  \begin{definition}[Bivalent characteristic
    string]\label{def:bivalent-char-string}
    Let $\slot_1, \ldots, \slot_{n}$ be a sequence of slots. 
    A \emph{bivalent characteristic string} $w$ 
    is an element of $\{\H,\A\}^n$ 
    defined for a particular execution of a blockchain protocol on these slots so that 
    for $t \in [n]$, 
    $w_t = \A$ if $\slot_{t}$ is assigned to an adversarial participant, 
    and $w_t = \H$ otherwise.
    % \[
    %   w_t =   \begin{cases}
    %   \H & \text{if $\slot_{t}$ was assigned to one or more participants, all honest},\\
    %   \A & \text{otherwise.}
    %   \end{cases}
    % \]
  \end{definition}
  The definition of a fork for a bivalent characteristic string is
  identical to Definition~\ref{def:fork} (somewhat simplified as a
  bivalent string does not contain any $\h$ symbol).
  Also note that the $(\epsilon, 0)$-condition 
  from Definition~\ref{def:bernoulli-cond} 
  is well-defined for bivalent characteristic strings.
  %  except the following: 
  % \begin{enumerate}[label=(F{\arabic*}.),start=3]
  %   \item\label{fork:multiply-honest-bivalent}
  %     % \label{fork:unique-honest}\label{fork:multiply-honest} 
  %     each index in $\{j : w_j = \H\}$ 
  %     is the label of at least one vertex of $F$.
  % \end{enumerate}
  % However, condition~\ref{fork:multiply-honest-bivalent} above 
  % is identical to the condition~\ref{fork:multiply-honest} 
  % in Definition~\ref{def:fork} 
  % since $w$ does not contain the symbol $\h$.

  Let $w$ be a bivalent characteristic string, $F$ a fork for $w$, and
  $F'$ a fork for $w\H$ so that $F \ForkPrefix F'$ and any honest
  vertex in $F' \setminus F$ has label $|w| + 1$.  If $F$ contains a
  maximum-length adversarial tine, there is no guarantee that two
  honest observers at slot $|w| + 1$ will agree on the longest chain:
  the adversary may chose to expose the adversarial chain to one and
  not the other.
  In this case, we say that \emph{$F$ has a tie for the longest-chain
  rule}---or, in short, that \emph{$F$ has an LCR tie}.  
  % For example, consider the two honest vertices 
  % with label 6 in Figure~\ref{fig:fork}. 
  % Although both observe 
  % the (honest) vertex with label 5, 
  % only the top one observes 
  % the middle (adversarial) vertex with label 4.
  When there
  is no LCR tie (that is, no maximum-length adversarial tine), all
  honest slot leaders at slot $|w| + 1$ necessarily extend the same
  honest tine determined by the consistent longest-chain tie-breaking
  rule.

  \begin{theorem}[Main theorem; consistent tie-breaking]\label{thm:main-bivalent}
    % Let $\epsilon \in (0, 1), s, k, T \in \NN$.  Let
    % $\mathcal{W}$ and $\tilde{\mathcal{B}}_\epsilon$ be two distributions on
    % $\{\H, \A\}^T$ where 
    % $\tilde{\mathcal{B}}_\epsilon$ is defined above and
    Let $\epsilon \in (0, 1)$ and $s, k, T \in \NN$.  
    Let $\mathcal{B}$ be a distribution 
    on length-$T$ bivalent characteristic strings 
    satisfying the $(\epsilon, 0)$-Bernoulli condition. 
    Let 
    $\mathcal{W}$ be a distribution on
    $\{\H, \A\}^T$ so that 
    $\mathcal{W} \DominatedBy \mathcal{B}$. 
    Then 
    $
      \mathbf{S}^{s,k}[\mathcal{W}] 
        \leq \mathbf{S}^{s,k}[\mathcal{B}] 
        \leq 
        \exp\bigl(-k \cdot \Omega(\epsilon^3 (1 + O(\epsilon) ) )\bigr)
      % \,.
    $. 
    % for large $k$. 
    (Here, the asymptotic notation hides constants that do not depend on $\epsilon$ or $k$.)
  \end{theorem}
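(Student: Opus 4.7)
The plan is to reduce $k$-settlement to the Unique Vertex Property via~\eqref{eq:settlement-uvp}, and then to establish UVP through a two-consecutive-Catalan-slots argument that exploits the consistent tie-breaking axiom~\ref{axiom:tie-breaking}. The dominance inequality $\mathbf{S}^{s,k}[\mathcal{W}] \leq \mathbf{S}^{s,k}[\mathcal{B}]$ is immediate: the set of strings that admit a fork witnessing a settlement violation is monotone under the order $\H<\A$, since any such fork for a string $x$ is also a fork for every $y \geq x$; monotonicity together with $\mathcal{W} \DominatedBy \mathcal{B}$ yields the inequality.

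For the main exponential bound, I would work with a bivalent notion of a Catalan slot: an honest slot $c$ (necessarily with $w_c = \H$) such that every interval containing $c$ has strictly more $\H$-slots than $\A$-slots. A single Catalan slot already acts as a barrier, in the sense that every tine viable at a slot $>c$ contains some vertex labeled $c$. When $p_\h = 0$ this does not by itself yield UVP, because several concurrent honest vertices at $c$ can each support a distinct viable tine. The key structural step is that \emph{two consecutive Catalan slots $c,c+1$ yield UVP for slot $c$ under axiom~\ref{axiom:tie-breaking}}: the consistent longest-chain rule forces all honest leaders at $c+1$ to extend the same maximum-length tine, and since $c$ is Catalan that tine passes through a specific vertex $u_c$ labeled $c$, so every honest vertex at $c+1$ has $u_c$ as a prefix. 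Any tine viable at a slot $\geq c+2$ must contain a vertex labeled $c+1$ by the Catalan property of $c+1$; since $w_{c+1} = \H$ forbids adversarial vertices at that label, that vertex is honest, extends $u_c$, and therefore places $u_c$ inside the tine. Thus $u_c$ is a common prefix of every future viable tine, which via~\eqref{eq:settlement-uvp} suffices to settle slot $s$ whenever such a pair $c,c+1$ lies inside $[s, s+k-2]$.

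It remains to estimate, with the desired exponential rate, the probability that the window $[s, s+k]$ contains a pair of consecutive Catalan slots. I would adapt the generating-function and stochastic-dominance machinery used for Theorem~\ref{thm:main}: the probability that a single slot is Catalan is controlled by first-passage estimates for the biased walk with drift $\epsilon$ (up-steps for $\H$, down-steps for $\A$), and the joint event ``$c,c+1$ both Catalan'' decouples via past/future independence of the Bernoulli sequence into two first-passage events meeting at the boundary between slots $c$ and $c+1$, each contributing an $\Omega(\epsilon)$ factor, together with the $\Theta(1)$ probability that $w_c=w_{c+1}=\H$, for an $\Omega(\epsilon^3)$ per-slot rate. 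A Chernoff bound on indicator variables of consecutive-Catalan pairs over disjoint length-$O(1)$ sub-windows of $[s, s+k]$ then yields the claimed $\exp(-\Omega(\epsilon^3 k))$ tail, with the $(1+O(\epsilon))$ correction in the rate arising from the radius of convergence of the relevant generating function. The principal obstacle is precisely this joint analysis: two adjacent Catalan events couple two first-passage conditions at a shared boundary, so independence is not immediate. I expect to resolve this by conditioning on the walk's value at the interface between slots $c$ and $c+1$ and applying the strong Markov property to factor the joint probability, absorbing the resulting sum over interface values into the dominant geometric tail.
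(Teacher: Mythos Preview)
Your structural reduction is correct and matches the paper: the stochastic-dominance inequality follows from monotonicity of the settlement-violation event, and the implication ``two consecutive Catalan slots $c,c{+}1$ under axiom~\ref{axiom:tie-breaking} give UVP at $c$'' is exactly Theorem~\ref{thm:multiple-honest}. Your sketch of that implication (Fact~\ref{fact:catalan-unique-longest} forces all viable tines at the onset of $c{+}1$ to be honest tines from slot $c$; no LCR tie, so consistent tie-breaking picks a unique $u_c$; every honest vertex at $c{+}1$ extends $u_c$; the bottleneck property of $c{+}1$ then propagates $u_c$ into all later viable tines) is sound.

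The gap is in the probability estimate. The Chernoff-on-disjoint-sub-windows approach cannot work: whether slot $c$ is Catalan is a \emph{global} property of the entire walk (every interval containing $c$ must be $\Hheavy$), so there is no meaningful ``consecutive-Catalan indicator restricted to a length-$O(1)$ window,'' and even if you localize somehow the indicators are far from independent. Your heuristic for the $\epsilon^3$ rate is also off---two $\Omega(\epsilon)$ first-passage factors times a $\Theta(1)$ coin gives $\epsilon^2$, not $\epsilon^3$---and in fact the obstacle you flag (coupling the Catalan conditions at $c$ and $c{+}1$) is benign: if $c$ is left-Catalan, $c{+}1$ is right-Catalan, and $w_c=w_{c{+}1}=\H$, then both slots are automatically full Catalan, so no interface conditioning is needed. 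The genuine difficulty is the \emph{tail of the waiting time} for the first such pair. The paper (Bound~\ref{bound:two-catalans}) handles this via an epoch decomposition: find the next left-Catalan slot (generating function $\gfD$), test whether the walk diverges below on the next step (success with probability $\epsilon$), and on failure wait for the walk to return to its running minimum before restarting. The failure-epoch length is stochastically dominated by a variable with generating function $pZ\gfD(Z)+qZ\gfA(Z\gfD(Z))/\gfA(1)$, and it is the radius of convergence $1+\epsilon^3/2+O(\epsilon^4)$ of $\gfA(Z\gfD(Z))$ that produces the $\exp(-\Omega(\epsilon^3)k)$ tail. Your first instinct---``adapt the generating-function and stochastic-dominance machinery''---was the right one; the Chernoff detour does not lead there.
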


  The proof is deferred to Section~\ref{sec:bounds-main-proofs}.
  Note that the theorem above states that a 
  PoS protocol can achieve optimal consistency error 
  even with a leader election scheme 
  that produces no uniquely honest slots. 
  In contrast, Theorem~\ref{thm:main} requires 
  a non-zero probability for uniquely honest slots.

%%% Local Variables:
%%% mode: latex
%%% TeX-master: "main"
%%% End:

\section{Unique Vertex Property via Catalan slots}\label{sec:definitions}
As we have outlined before, 
if slot $t$ in a characteristic string $w$ 
has the Unique Vertex Property (UVP) 
then the slots $s = 1, \ldots, t$ 
are settled in every fork for $w$. 
The goal of this section is to 
characterize when a slot has the UVP. 
(In Section~\ref{sec:recursion}, we show an alternative way 
to characterize the UVP; see Lemma~\ref{lemma:uvp-margin}.) 
% When the symbols in $w$ are independent and identically distributed, 
% this alternate characterization allows us to explicitly compute 
% the probability that a given slot has the UVP in $w$.)

We start with laying down some structural properties of forks. 
Next, we define the so-called Catalan slots 
and show that if a slot is Catalan then \emph{in every fork}, all sufficiently long blockchains must contain a block from that slot. 
Next, we show that this implication is actually an equivalence. 
Finally, 
we revisit the above implication assuming 
that the honest players use a consistent longest-chain tie-breaking rule.

%%% Local Variables:
%%% mode: latex
%%% TeX-master: "main"
%%% End:

% \paragraph{Structural properties of a fork.}

  \subsection{Viable blockchains}
  % Let us lay down the elements from the fork framework of~\cite{LinearConsistencySODA}. 
  % For completeness, we restate and briefly discuss the pertinent definitions below.
  A vertex of a fork is said to be \emph{honest} 
  if it is labeled with an index $i$ such that $w_i \in \{\h, \H\}$; 
  otherwise, it is said to be \emph{adversarial}.

  \begin{definition}[Tines, length, and height]
    Let $F \vdash w$ be a fork for a characteristic string.  A
    \emph{tine} of $F$ is a directed path starting from the root. For
    any tine $t$ we define its \emph{length} to be the number of edges
    in the path, and for any vertex $v$ we define its \emph{depth} to be
    the length of the unique tine that ends at $v$. 
    If a tine $t_1$ is a strict prefix of another tine $t_2$, we write $t_1 \Prefix t_2$. 
    Similarly, if $t_1$ is a non-strict prefix of $t_2$, we write $t_1 \PrefixEq t_2$.
    The longest common prefix of two tines $t_1, t_2$ is denoted by $t_1 \Intersect t_2$. 
    That is, $\ell(t_1 \Intersect t_2) = \max\{\ell(u) \SuchThat \text{$u \PrefixEq t_1$ and $u \PrefixEq t_2$} \}$. 
    The \emph{height} of
    a fork (as is usual for a tree) 
    is the length of the longest tine,
    denoted by $\height(F)$. 
  \end{definition}
  Let $F \Fork xy$ and 
  two tines $t_1, t_2 \in F$ are disjoint over $y$. 
  We say that these tines are \emph{$y$-disjoint}; 
  equivalently, we also say that \emph{$t_1$ is $y$-disjoint with $t_2$}.

  When an adversary builds a fork, it is natural to imagine that 
  he ``grows'' an existing fork by adding new vertices and edges. 
  % We can study this notion via fork prefixes and extensions.
  \begin{definition}[Fork prefixes]
    Let $w, x \in \{\h, \H, \A\}^*$ so that $x\PrefixEq w$. 
    Let $F, F'$ be two forks for $x$ and $w$, respectively. 
    We say that $F$ is a \emph{prefix} of $F'$ if 
    $F$ is a consistently labeled subgraph of $F'$. 
    That is, all vertices and edges of $F$ also appear in $F'$ and 
    the label of any vertex appearing in both $F$ and $F'$ is identical. 
    We denote this relationship by $F\fprefix F'$.
  \end{definition}

  \noindent
  When speaking about a tine that appears in both $F$ and $F'$, 
  we place the fork in the subscript of relevant properties.
  % , e.g., writing $\reach_F$, etc.

  % \subsection{Viable tines, reach, and $\Aheavy$ intervals}

    For any string $x$ (on any alphabet) and a symbol $\sigma$ in that alphabet, 
    define $\#_\sigma(x)$ 
    as the number of appearances of $\sigma$ in $x$. 
    % For example, for a characteristic string $w \in \{\h, \H, \A\}^*$, 
    % $\#_\h(w)$ is the number of $\h$ appearing in $w$. 
    When a characteristic string $w \in \{\h, \H, \A\}^T$ is fixed from the context, 
    we extend this notation to sub-intervals of $[T]$ in a natural way: 
    For integers $i, j \in [T], i \leq j$, 
    let $I = [i, j] \subset [T]$ be a closed interval 
    and define $\#_\sigma(I) = \#_\sigma(w_i \ldots w_j)$ for $\sigma \in \{\h, \H, \A\}$. 
    A characteristic string $w$ is called $\Hheavy$\ if $\#_\h(w) + \#_\H(w) > \#_1(x)$; 
    otherwise, it is called $\Aheavy$. 
    For a given characteristic string $w$ of length $T$, 
    an interval $I = [i,j] \subseteq [T]$ is called $\Aheavy$\ 
    if the substring $w_i \ldots w_j$ is $\Aheavy$.

    \paragraph{Adversarial extensions.} 
    Let $x,y$ be two characteristic strings and $|y| \geq 0$.
    Let $F$ be a fork for $x$ and let $B$ be an honest tine in $F$. 
    We say that \emph{$B$ has an adversarial extension} 
    if there is a fork $F' \Fork xy, F \ForkPrefix F'$ and 
    an adversarial tine $t \in F'$ 
    so that $B \Prefix t$ and  
    $B$ is the last honest vertex on $t$. 
    Note that $t$ can be made disjoint with any $F$-tine 
    over the interval $[\ell(B) + 1, \ell(t)]$.

    \paragraph{Viable adversarial extensions and $\A$-heaviness.} 
    Let $w \in \{\h, \H, \A\}^T$,   
    $s \in [T + 1]$, and 
    $F \Fork w_1 \ldots w_{s - 1}$ an arbitrary fork. 
    Let $B \in F$ be an honest vertex 
    and $t$ a maximum-length \emph{honest tine} in $F$.
    Consider the following statements: 
    \begin{enumerate}[label=(\alph*)]
      \item \label{fact-part:viable-adv-ext} $B$ has an adversarial extension viable at the onset of slot $s$.
      \item \label{fact-part:Aheavy} The interval $I = [\ell(B) + 1, s - 1]$ is $\Aheavy$. 
      \item \label{fact-part:conservative} $\length(t) = \#_\h(I) + \#_\H(I) + \length(B)$.     
    \end{enumerate}

    \begin{fact}[]\label{fact:fork-structure}
    ~\ref{fact-part:viable-adv-ext} $\Longrightarrow$
    \ref{fact-part:Aheavy}.
    In addition, if we assume~\ref{fact-part:conservative}, then 
    ~\ref{fact-part:Aheavy} $\Longrightarrow$ 
    ~\ref{fact-part:viable-adv-ext}.
    \end{fact}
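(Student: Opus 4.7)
The plan is to handle the two implications separately; both hinge on axiom~\ref{fork:honest-depth} of Definition~\ref{def:fork}, which forces the depths of honest vertices to strictly increase with their slot labels.

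For~\ref{fact-part:viable-adv-ext}~$\Rightarrow$~\ref{fact-part:Aheavy}, I would begin by bounding the length of any adversarial extension $t$ of $B$ from above. By definition every vertex of $t$ strictly after $B$ is adversarial and carries a label in $I = [\ell(B) + 1, s - 1]$ with $w_i = \A$, and these labels are strictly increasing; hence $\length(t) \leq \length(B) + \#_\A(I)$. For the lower bound, I would locate the largest honest slot $j^\star \in [1, s - 1]$ and iterate~\ref{fork:honest-depth} over the honest slots of $I$ lying above $\ell(B)$ to obtain $\hdepth(j^\star) \geq \length(B) + \#_\h(I) + \#_\H(I)$. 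Viability of $t$ at the onset of slot $s$ forces $\length(t) \geq \hdepth(j^\star)$. Combining the two inequalities yields $\#_\A(I) \geq \#_\h(I) + \#_\H(I)$, which is precisely the statement that $I$ is $\Aheavy$.

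For~\ref{fact-part:Aheavy}~$\Rightarrow$~\ref{fact-part:viable-adv-ext} under assumption~\ref{fact-part:conservative}, I would construct the desired extension explicitly. I would augment $F$ by attaching, for each index $i \in I$ with $w_i = \A$ in increasing order of $i$, a fresh adversarial vertex labeled $i$ along a single new path emanating from $B$. The resulting graph is a fork $F' \Fork w_1 \cdots w_{s - 1}$ with $F \ForkPrefix F'$ (the fork axioms~\ref{fork:root}--\ref{fork:honest-depth} are preserved since no honest vertex or depth is altered), and the new tine $t'$ has length $\length(B) + \#_\A(I)$. By $\Aheavy$-ness this is at least $\length(B) + \#_\h(I) + \#_\H(I)$, which by~\ref{fact-part:conservative} equals the length of a maximum-length honest tine in $F$, and thus---via~\ref{fork:honest-depth} one more time---dominates $\hdepth(j)$ for every honest $j \leq s - 1$. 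Hence $t'$ is viable at the onset of slot $s$, and $B$ is its last honest vertex by construction.

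The main obstacle, such as it is, is the cumulative book-keeping in the forward direction: axiom~\ref{fork:honest-depth} must be invoked across all honest slots of $I$ rather than merely between consecutive ones, and the measurement point $j^\star$ must be carefully identified as the latest honest slot before $s$ (the degenerate case in which $I$ contains no honest slot makes the inequality $\#_\A(I) \geq 0$ trivial). Assumption~\ref{fact-part:conservative} in the converse plays exactly the role of upgrading the lower bound on the maximum honest depth to an equality, which is what makes the constructed adversarial extension just long enough to qualify as viable.
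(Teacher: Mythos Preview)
Your proposal is correct and follows essentially the same approach as the paper's proof: both directions rest on bounding the adversarial extension's length above by $\length(B)+\#_\A(I)$ and bounding the required viability threshold below by $\length(B)+\#_\h(I)+\#_\H(I)$ via the strict growth of honest depths, with the converse handled by an explicit construction placing one adversarial vertex per $\A$-slot of $I$. Your write-up is simply more explicit about how axiom~\ref{fork:honest-depth} is iterated and about the degenerate case where $I$ contains no honest slot, but the underlying argument is the same.
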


    \begin{proof}~
      \begin{description}[font=\normalfont\itshape\space]
        \item[\ref{fact-part:viable-adv-ext} implies~\ref{fact-part:Aheavy}.]
        Let $F' \Fork w_1 \ldots w_{s-1}$ be 
        a fork so that $F \ForkPrefix F'$ 
        and $B$ has an adversarial extension $t' \in F'$ 
        viable at the onset of slot $s$. 
        Considering the interval $I$, 
        the longest honest tine in $F'$ 
        grows by at least $\#_\h(I) + \#_\H(I)$ vertices. 
        Since the viable tine $t'$ 
        contains only adversarial vertices from the interval $I$, 
        it follows that $\#_\A(I)$ must be at least $\#_\h(I) + \#_\H(I)$. 
        Hence, $I$ is $\Aheavy$.

        \item[\ref{fact-part:conservative} and \ref{fact-part:Aheavy} implies~\ref{fact-part:viable-adv-ext}.]
        Since $I$ is $\Aheavy$,  
        $I$ contains at least $\#_\h(I) + \#_\H(I) = \length(t) - \length(B)$ 
        adversarial slots. 
        Thus, we can augment $B$ by adding 
        $\length(t) - \length(B)$ adversarial vertices 
        from these slots 
        so that 
        the resulting adversarial extension is viable 
        at the onset of slot $s$.
      \end{description}
    \end{proof}

    \begin{corollary}\label{coro:interval-honest-vertices}
      Let $w$ be a characteristic string, 
      $F$ be any fork for $w$, 
      and let $t$ be any tine in $F$.
      Let $B_1$ and $B_2$ be two honest vertices on $t$ such that 
      (i) $\ell(B_1) < \ell(B_2)$, 
      (ii) $t$ contains only adversarial vertices from $I = [\ell(B_1) + 1, \ell(B_2) - 1]$, and 
      (iii) $t$ contains at least one vertex from $I$.
      Then $I$ is $\Aheavy$. 
      % In addition, let $\phi \Fork w_1\ldots w_{\ell(B_2)}$ 
      % be a prefix of $F$ so that 
      % $\phi$ contains all honest vertices of $F$ with labels at most $\ell(B_2)$. 
      % Then $\reach_{\phi}(B_1) \geq 0$.
    \end{corollary}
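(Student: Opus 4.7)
The plan is to give a direct counting argument. Both $\#_\h(I) + \#_\H(I)$ and $\#_\A(I)$ will be controlled by sandwiching them around the quantity $\depth(B_2) - \depth(B_1) - 1$, after which the $\Aheavy$ conclusion falls out immediately.

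First, I would invoke axiom~\ref{fork:honest-depth}. Since $B_1$ and $B_2$ are honest vertices with $\ell(B_1) < \ell(B_2)$, for every honest index $j \in I$ any vertex of $F$ with label $j$ has depth strictly between $\depth(B_1)$ and $\depth(B_2)$; picking one such vertex per honest $j$ and applying~\ref{fork:honest-depth} to pairs of distinct honest labels, these depths are pairwise distinct integers in the open interval $(\depth(B_1), \depth(B_2))$. This yields the first estimate $\#_\h(I) + \#_\H(I) \leq \depth(B_2) - \depth(B_1) - 1$.

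Second, I would examine the sub-path of $t$ starting just after $B_1$ and ending at $B_2$. This sub-path contains exactly $\depth(B_2) - \depth(B_1)$ vertices, whose labels strictly increase along the path by axiom~\ref{fork:monotone} and therefore lie in $(\ell(B_1), \ell(B_2)] = I \cup \{\ell(B_2)\}$. Removing $B_2$ itself, the remaining $\depth(B_2) - \depth(B_1) - 1$ vertices carry pairwise distinct labels in $I$, and by hypothesis~(ii) they are all adversarial. Each therefore witnesses a distinct adversarial slot in $I$, which gives the second estimate $\#_\A(I) \geq \depth(B_2) - \depth(B_1) - 1$.

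Chaining these two bounds yields $\#_\h(I) + \#_\H(I) \leq \#_\A(I)$, which is precisely the definition of $I$ being $\Aheavy$. I do not anticipate a genuine obstacle in executing this plan, since everything reduces to bookkeeping on depths and labels; hypothesis~(iii) is used only in a supporting role, guaranteeing $\depth(B_2) - \depth(B_1) \geq 2$ and thereby ensuring that the argument carries content rather than being vacuous.
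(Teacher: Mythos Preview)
Your proposal is correct but takes a different route from the paper. The paper's proof is a two-line application of Fact~\ref{fact:fork-structure}: since $B_2$ is honest, the portion of $t$ strictly between $B_1$ and $B_2$ is an adversarial extension of $B_1$ that is viable at the onset of slot $\ell(B_2)$, and the implication \ref{fact-part:viable-adv-ext}$\Rightarrow$\ref{fact-part:Aheavy} of Fact~\ref{fact:fork-structure} immediately yields that $I$ is $\Aheavy$. Your approach instead unpacks the counting directly, sandwiching both $\#_\h(I)+\#_\H(I)$ and $\#_\A(I)$ against the depth gap $\depth(B_2)-\depth(B_1)-1$ using only the fork axioms~\ref{fork:monotone} and~\ref{fork:honest-depth}. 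Your argument is more elementary and self-contained, bypassing the viability and adversarial-extension machinery entirely; the paper's version is terser but leans on that machinery having already been set up. Both ultimately rest on the same structural principle (honest depths strictly increase with honest labels), so the difference is one of packaging rather than substance.
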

    \begin{proof}
      By assumption, 
      the honest vertex $B_2$ builds on some adversarial tine $t^\prime$ 
      that is viable at the onset of slot $\ell(B_2)$ and, importantly, 
      contains $B_1$ as its last honest vertex. 
      By Fact~\ref{fact:fork-structure}, 
      the interval $I$ is $\Aheavy$.
      % and, in addition, $\reach_{\phi}(B_1) \geq 0$.
    \end{proof}

  % \begin{corollary}[Multiple honest leaders in a slot]\label{obs:multi-honest}
  %   Let $w \in \{0,1\}^T$ be a charactersitic string 
  %   and 
  %   let $s$ be a Catalan slot. 
  %   All honest vertices from slot $s$ builds on the same honest tine.
  % \end{corollary}
  % \begin{proof}
  %   Let $F$ be any fork for $w$.
  %   Since $s$ is a Catalan slot, 
  %   $F$ admints no LCR tie at the onset of slot $s + 1$. 
  %   Thus the set of all viable longest tines 
  %   at the onset of slot $s$
  %   Hence all honest vertices from slot $s$ builds upon 
  %   the unique honest tine $\tau = \LCR(C)$ 
  % \end{proof}

\subsection{Catalan slots and the UVP}\label{sec:catalan}

% \subsection{UVP}

% Next, we characterize the \emph{Catalan slots}. 
% Their importace lies in the following: 
% A Catalan slot $c$ acts as a barrier for adversarial tine extensions 
% in that in any fork, every tine viable at the onset of slot $c+1$ must be honest. 
% In addition, if an honest slot has the bottleneck property then it must be Catalan. 

Below, we define the so-called Catalan slots and show, 
in Theorems~\ref{thm:unique-honest} and~\ref{thm:multiple-honest}, 
that certain Catalan slots have the UVP.

\begin{definition}[Catalan slot]
  Let $w \in \{\h, \H, \A\}^T$ be a characteristic string and 
  let $s \in [T]$ be an integer. 
  $s$ is called a \emph{left-Catalan slot in $w$} 
  if, for any integer $\ell \in [s]$, the interval $[\ell, s]$ is $\Hheavy$ in $w$.
  $s$ is called a \emph{right-Catalan slot in $w$} 
  if, for any integer $r \in [s, T]$, the interval $[s, r]$ is $\Hheavy$ in $w$.
  Finally, $s$ is called a \emph{Catalan slot in $w$} if 
  it is both left- and right-Catalan in $w$. 
\end{definition}
Observe that a left- or right-Catalan slot must be honest. 
In addition, the slot before a left-Catalan 
(resp., after a right-Catalan) slot must be honest as well.
Thus the slots adjacent to a Catalan slot must be honest. 
A Catalan slot $c$ acts as a barrier for adversarial tine extensions 
in that in any fork, every tine viable at the onset of slot $c+1$ must be honest.

\begin{fact}\label{fact:catalan-unique-longest}
  Let $w \in \{\h, \H, \A\}^T$ be a characteristic string 
  and $s$ a left-Catalan slot in $w$. 
  In any fork for $w$, 
  every viable tine at the onset of slot $s + 1$ is 
  an honest tine from slot $s$. 
\end{fact}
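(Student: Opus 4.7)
The plan is to perform a case analysis on the terminal vertex of a candidate viable tine, invoking Fact~\ref{fact:fork-structure} to eliminate the adversarial case. First I would observe that the left-Catalan assumption forces the singleton interval $[s,s]$ to be $\Hheavy$, so $w_s \in \{\h,\H\}$; consequently every fork vertex labeled $s$ is honest by definition, and property (F3) guarantees that at least one such vertex exists in any $F \Fork w$.

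Next, I would fix any tine $t$ viable at the onset of slot $s+1$ and let $v$ denote its terminal vertex, with $\ell(v) = s' \le s$. If $v$ is honest with $s' < s$, then property (F4) supplies an honest vertex at slot $s$ of depth strictly greater than $\depth(v) = \length(t)$, immediately contradicting viability.

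Otherwise $v$ is adversarial, which forces $s' \le s-1$ since any vertex labeled $s$ is honest. I would let $B$ denote the last honest vertex on $t$ (taking $B$ to be the root with $\ell(B) = 0$ if no honest vertex appears on $t$), and set $b := \ell(B)$. Every vertex between $B$ and $v$ on $t$ is then adversarial, so the suffix of $t$ past $B$ witnesses an adversarial extension of $B$ viable at the onset of slot $s+1$. Fact~\ref{fact:fork-structure} therefore certifies that the interval $[b+1, s]$ is $\Aheavy$. However, $b \in \{0, 1, \ldots, s-1\}$ implies $[b+1, s] \subseteq [1, s]$, and the left-Catalan hypothesis says this interval is $\Hheavy$ --- a contradiction. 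Hence $v$ must be honest and labeled $s$, as required.

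The main obstacle is the degenerate subcase in which $t$ contains no honest vertex at all, since Fact~\ref{fact:fork-structure} is formally stated for an honest $B$. The plan there is to either verify that its proof extends verbatim when $B$ is taken as the (depth-$0$) root, or to argue directly: viability forces $\length(t) \ge 1$, every vertex of $t$ lies at an adversarial slot in $[1,s]$, and so $\#_\A([1,s]) \ge \length(t) \ge \#_\h([1,s]) + \#_\H([1,s])$, again contradicting that $s$ is left-Catalan.
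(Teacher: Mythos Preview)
Your proposal is correct and follows essentially the same approach as the paper: both arguments take the last honest vertex $B$ on a putative viable adversarial tine, invoke Fact~\ref{fact:fork-structure} to conclude that $[\ell(B)+1,s]$ is $\Aheavy$, and derive a contradiction with the left-Catalan hypothesis. Your write-up is in fact slightly more thorough---you make explicit the honest-terminal-vertex case via (F4) and you address the degenerate ``no honest vertex'' subcase, both of which the paper leaves implicit (the root being the honest fallback)---but the structure of the argument is the same.
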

\begin{proof}
  % Note that a left-Catalan slot is honest and 
  Let $\tau$ be the longest tine with label $s$. 
  ($\tau$ is an honest tine. 
  If $s$ is a uniquely honest slot, 
  $\tau$ is unique. Otherwise, 
  $\tau$ is unique up to tie-breaking among equally-long tines.)
  We claim that 
  all adversarial tines $t \in F, \ell(t) \leq s - 1$ 
  are strictly shorter than $\tau$.
  Suppose, towards a contradiction, that 
  $t$ is a viable adversarial tine at the onset of slot $s + 1$, i.e., 
  $\ell(t) \leq s - 1$ and $\length(t) \geq \length(\tau)$. 
  Let $B$ be the last honest vertex on $t$; necessarily, $\ell(B) < s$. 
  According to 
  Fact~\ref{fact:fork-structure},
  the interval $[\ell(B) + 1, s]$ is $\Aheavy$. 
  But this contradicts the assumption that $s$ is a left-Catalan slot. 
  Hence the adversarial tine $t$ cannot be viable.
\end{proof}

\begin{observation}\label{obs:multi-honest}
  If $s$ is a Catalan slot for $w$, Fact~\ref{fact:catalan-unique-longest} implies that 
  in every fork for $w$, 
  an honest slot leader at slot $s + 1$ always 
  builds on top of an honest tine with label $s$; 
  this tine, in fact, will have the maximum length among all tines with label $s$.
\end{observation}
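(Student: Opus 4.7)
The plan is to chain together three ingredients: the fact that the slot immediately following a Catalan slot must itself be honest, the longest-chain rule as encoded by axiom~\ref{axiom:honest-depth}, and Fact~\ref{fact:catalan-unique-longest}. First I would note that since $s$ is a Catalan slot in $w$, the slot $s+1$ must be honest (as remarked just after the definition of Catalan slots), so at least one honest leader is indeed assigned to slot $s+1$ and the claim is non-vacuous.

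Next I would unpack what the longest-chain rule dictates about this honest leader. By axiom~\ref{axiom:honest-depth}---and, equivalently, by the challenger's behavior in the settlement game, where new honest vertices are always appended to the end of a maximum-length tine of the current fork---any vertex added by an honest leader at slot $s+1$ is placed at the end of a tine $\tau$ whose length equals the maximum depth of any vertex with label at most $s$. This is precisely the condition that $\tau$ is viable at the onset of slot $s+1$.

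Applying Fact~\ref{fact:catalan-unique-longest} to $\tau$ now yields the first claim at once: every tine viable at the onset of slot $s+1$ is an honest tine terminating at a vertex with label $s$, so the honest leader at slot $s+1$ necessarily builds on such a tine. The second claim is then essentially free: because $\tau$ has maximum length among all tines of the fork restricted to labels in $[0,s]$, it is in particular at least as long as every tine with label $s$, and since $\tau$ itself has label $s$, it attains the maximum length among all such tines.

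There is no real obstacle here; the observation is a direct corollary of Fact~\ref{fact:catalan-unique-longest}, and the only delicate point is reading ``viable at the onset of slot $s+1$'' correctly as the operational constraint that the longest-chain rule imposes on honest leaders of slot $s+1$.
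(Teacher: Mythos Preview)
Your proposal is correct and matches the paper's intended reasoning. The paper states this as an observation without a separate proof, treating it as immediate from Fact~\ref{fact:catalan-unique-longest}; you have simply spelled out the chain of implications---honest leaders extend a maximum-length (hence viable) tine, Fact~\ref{fact:catalan-unique-longest} forces that tine to be an honest tine labeled $s$, and maximality among all tines with label at most $s$ entails maximality among tines with label exactly $s$---which is precisely what the paper leaves implicit.
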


\begin{fact}\label{fact:almost-cp-implies-catalan}
  Let $w \in \{\h, \H, \A\}^T$ be a characteristic string. 
  If an honest slot in $w$ has the bottleneck property then 
  it is a Catalan slot.
\end{fact}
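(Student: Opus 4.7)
I would argue the contrapositive: if an honest slot $s$ in $w$ is not Catalan, then $s$ does not have the bottleneck property. Not-Catalan splits into two cases: (i) $[\ell, s]$ is $\Aheavy$ for some $\ell \leq s$, or (ii) $[s, r]$ is $\Aheavy$ for some $r \geq s$. Because $s$ itself is honest the singleton $\{s\}$ is $\Hheavy$, so $\ell < s$ in case (i) and $r > s$ in case (ii). Moreover, stripping the honest slot $s$ from such an interval tightens the $\Aheavy$ inequality to $\#_\A \geq \#_\h + \#_\H + 1$ on $[\ell, s-1]$ in case (i) and on $[s+1, r]$ in case (ii), because $w_s$ contributes one unit to $\#_\h + \#_\H$ but none to $\#_\A$.

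\textbf{Constructing the offending fork.} I would build $F \Fork w$ by placing honest vertices canonically on a single honest backbone; explicitly, give each honest slot $t$ a single honest vertex at depth $\#_\h([1,t]) + \#_\H([1,t])$, extending a common parent from the previous honest slot (or from the root). Let $B$ be the honest vertex at the last honest slot strictly before $s$, taking $B$ to be the root if no such slot exists; by maximality, every slot in $[\ell(B)+1, s-1]$ is adversarial. Set $k = s+1$ in case (i) and $k = r+1$ in case (ii), and put $I = [\ell(B)+1, k-1]$. A direct count, combining the strict $\Aheaviness$ noted above with the observation that $[\ell(B)+1, \ell-1]$ in case (i) or $[\ell(B)+1, s-1]$ in case (ii) consists entirely of adversarial slots, shows that $I$ is $\Aheavy$. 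Meanwhile the canonical placement makes the maximum-length honest tine in the induced fork for $w_1\ldots w_{k-1}$ have length exactly $\length(B) + \#_\h(I) + \#_\H(I)$, which is precisely hypothesis~\ref{fact-part:conservative} of Fact~\ref{fact:fork-structure}. The fact then delivers an adversarial extension of $B$ viable at the onset of slot $k \geq s+1$. Every vertex of this extension lies in an adversarial slot of $I$ and therefore cannot carry the honest label $s$ (since $w_s \in \{\h, \H\}$). Extending the fork canonically over the remaining slots $(k-1, T]$ yields a fork $F \Fork w$ in which a viable tine at the onset of slot $k$ omits the label $s$, violating the bottleneck property at $s$.

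\textbf{Main obstacle.} The only delicate step is the passage from a (non-strict) $\Aheavy$ condition on $[\ell,s]$ or $[s,r]$ to $\Aheaviness$ of the \emph{shifted} interval $I$ required to invoke Fact~\ref{fact:fork-structure} at slot $k = s+1$ or $r+1$. The honest vertex at slot $s$ raises the required length-bound at slot $k$ by one, so the adversarial extension must contribute one extra adversarial vertex compared to what the bare $\Aheavy$ hypothesis would have guaranteed. That extra unit is supplied by the strict $\Aheavy$ excess released upon removing the honest slot $s$, together with the purely adversarial gap separating $B$ from the bad interval---exactly what the canonical choice of $B$ and the bookkeeping above produce.
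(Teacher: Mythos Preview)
Your case~(ii) argument is fine, but case~(i) has a genuine gap. You anchor the extension at $B$, the \emph{last honest slot strictly before $s$}, and then assert that $I=[\ell(B)+1,s]$ is $\Aheavy$ by decomposing it as $[\ell(B)+1,\ell-1]\cup[\ell,s]$. That decomposition silently assumes $\ell(B)<\ell$, which you never justify---and it can fail. Take $w=\A\A\h\h$ with $s=4$: the interval $[1,4]$ is $\Aheavy$ (two of each, ties go to $\A$), so $s$ is not left-Catalan with $\ell=1$. But the last honest slot before $s$ is $\ell(B)=3$, giving $I=[4,4]=\{s\}$, which is $\Hheavy$, not $\Aheavy$. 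Fact~\ref{fact:fork-structure} therefore cannot be invoked, and indeed no adversarial extension of the slot-$3$ vertex is viable at the onset of slot $5$ in your canonical fork. The witnessing tine (root $\to$ slot~$1$ $\to$ slot~$2$) exists, but it extends the root, not your $B$.

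The paper avoids this by not splitting into left/right cases at all: it takes the \emph{maximal} $\Aheavy$ interval $[a,b]$ containing $s$, so that $a-1$ is forced to be honest (or $a=1$), and places the anchor there. Then $I=[a,b]$ itself is $\Aheavy$ by hypothesis, the canonical fork on $w_1\ldots w_b$ satisfies condition~\ref{fact-part:conservative}, and the resulting adversarial extension is viable at the onset of slot $b+1\geq s+1$ while skipping the honest slot $s$. Your argument is easily repaired along the same lines---in case~(i), anchor $B$ at the last honest slot strictly before $\ell$ rather than before $s$---but as written the construction does not go through.
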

\begin{proof}
  Let $s \in [T]$ be an honest slot in $w$.  
  We will prove the contrapositive: namely, 
  that if $s$ is not Catalan then $s$ does not have the bottleneck property. 
  
  Suppose $s$ is not a Catalan slot.
  Then there must be some $a, b \in [T]$ so that 
  $I = [a, b]$ is the largest $\Aheavy$ interval 
  which includes $s$. 
  Necessarily, either $b = T$, or $b + 1$ must be an honest slot. 
  Likewise, either $a = 1$, or $a - 1$ must be an honest slot. 
  
  Let $F$ be a fork for $w_1 \ldots w_b$ 
  and let $u \in F, \ell(u) = a - 1$ be an honest tine. 
  (If $a = 1$, we can take $u$ as the root vertex.)  
  Let $t$ be a maximum-length honest tine in $F$ 
  and assume that $\length(t) = \length(u) + \#_\h(I) + \#_\H(I)$.
  Since $I$ is $\Aheavy$, 
  Fact~\ref{fact:fork-structure} 
  states that 
  it is possible to augment $u$ into 
  an adversarial extension $t'$ 
  viable at the onset of slot $b + 1$. 
  As $t'$ will not contain any vertex from the honest slot $s$,  
  $s$ does not have the bottleneck property in $w$.
\end{proof}

The following theorem shows that a uniquely honest Catalan slot has the UVP.
% \subsection{UVP from a uniquely honest Catalan slot}

  % Given a Boolean characteristic string $w$, written as $w = xy$,
  % an \emph{$x$-balanced fork} is a fork where there are 
  % two maximally long (viable) tines that are disjoint over the suffix $y$. 
  % Let $w$ be a characteristic string for which 
  % there is an $x$-balanced fork for some decomposition $w = xy$. 
  % Then, in fact, there must be a shortest prefix $x' \Prefix w$ so that 
  % $w = x'y'$ admits an $x'$-balanced fork; 
  % we call this fork the \emph{deepest balanced fork for $w$}. 
  % If there is an $x$-balanced fork with $|x| = 0$, we say that \emph{$w$ is forkable.}

  \begin{theorem}\label{thm:unique-honest}
    Let $w \in \{h,H,A\}^T$ be a characteristic string. 
    Let $s \in [T]$ be a uniquely honest slot in $w$. 
    Slot $s$ is Catalan in $w$ 
    if and only if 
    it has the UVP in $w$. 
  \end{theorem}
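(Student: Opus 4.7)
The plan is to route the equivalence through the Bottleneck Property: I will prove Catalan $\Longleftrightarrow$ BP for the honest slot $s$, then invoke the already-stated implication (right after Definition~\ref{def:unique-vertex-property}) that BP $\Longrightarrow$ UVP when $s$ is uniquely honest. The reverse implication UVP $\Longrightarrow$ Catalan is immediate: UVP is by definition stronger than BP (every viable tine contains the distinguished vertex $u$, and hence \emph{some} vertex with label $s$), and Fact~\ref{fact:almost-cp-implies-catalan} then closes the loop. So the real content is Catalan $\Longrightarrow$ BP, and then the uniqueness of the label-$s$ vertex guaranteed by~\ref{fork:multiply-honest} upgrades BP to UVP automatically.

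For the forward direction I fix a fork $F \vdash w$, a slot $k \geq s+1$, and a tine $t$ viable at the onset of slot $k$, and assume for contradiction that $t$ contains no vertex labeled $s$. Let $B_1$ denote the last honest vertex of $t$ with $\ell(B_1) \leq s$ (falling back to the root if none exists); by the contradiction hypothesis $\ell(B_1) < s$. I then split on whether $t$ has a further honest vertex with label in $(s, k-1]$. In the ``no'' case, the suffix of $t$ after $B_1$ is an adversarial extension of $B_1$ viable at the onset of slot $k$, and Fact~\ref{fact:fork-structure} forces $I = [\ell(B_1)+1, k-1]$ to be $\Aheavy$. In the ``yes'' case, let $B_2$ be the first such honest vertex; the portion of $t$ strictly between $B_1$ and $B_2$ is then entirely adversarial and, by~\ref{fork:honest-depth}, contains at least one vertex (since $\depth(B_1) < \depth(u) < \depth(B_2)$, where $u$ is the unique honest vertex at slot $s$), so Corollary~\ref{coro:interval-honest-vertices} makes $I = [\ell(B_1)+1, \ell(B_2)-1]$ $\Aheavy$.

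The crux is now to derive a contradiction from the $\Aheavy$ness of $I$ using that $s \in I$ is Catalan. Split $I$ at $s$ into a left piece $[\ell(B_1)+1, s]$ and a right piece $[s+1, r]$, where $r = k-1$ or $r = \ell(B_2)-1$. Being left-Catalan, $s$ forces the left piece to be strictly $\Hheavy$; being right-Catalan, $s$ forces $[s, r]$ to be strictly $\Hheavy$, and subtracting the contribution of the single honest slot at $s$ yields that the right piece is weakly $\Hheavy$. Summing a strict and a weak inequality makes $I$ strictly $\Hheavy$, contradicting the $\Aheavy$ness derived above; hence $t$ must contain a vertex labeled $s$, establishing BP. The main obstacle is this strict-plus-weak bookkeeping around the split, which is precisely where the hypothesis $w_s \in \{\h, \H\}$ is crucial (it provides the ``+1'' that converts the strict right-Catalan inequality on $[s,r]$ into a weak one on $[s+1,r]$); the rest is rearrangement of the previously established facts.
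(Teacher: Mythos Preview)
Your proof is correct and follows essentially the same route as the paper's: the reverse direction via Fact~\ref{fact:almost-cp-implies-catalan}, and the forward direction by locating $B_1$ (and possibly $B_2$) on a putative violating tine, extracting an $\Aheavy$ interval containing $s$ from Fact~\ref{fact:fork-structure} or Corollary~\ref{coro:interval-honest-vertices}, and contradicting Catalan-ness.

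One genuine improvement in your version deserves mention. In the ``$B_2$ exists'' case, the paper splits further into three sub-cases ($\ell(B_2)=s+1$ handled via Observation~\ref{obs:multi-honest}; an adversarial vertex between $B_1,B_2$ handled via Corollary~\ref{coro:interval-honest-vertices}; $B_2$ building directly on $B_1$ handled by a separate viability contradiction). Your depth argument---$\depth(B_1)<\depth(u)<\depth(B_2)$ forces at least one intermediate vertex on $t$, necessarily adversarial---collapses all three into a single invocation of Corollary~\ref{coro:interval-honest-vertices}, and your explicit split-at-$s$ bookkeeping makes transparent what the paper leaves implicit (that any interval containing a Catalan slot is $\Hheavy$). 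The only thing you leave tacit is that in the ``no'' case the tine $t$ must strictly extend $B_1$ (else $t=B_1$ is an honest tine with label $<s$, which cannot be viable at the onset of slot $k\geq s+1$ by~\ref{fork:honest-depth}); this is a one-line patch.
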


  \begin{proof}
    (The reverse implication.) 
    % This is easy. 
    Since $s$ has the UVP  
    it satisfies the (weaker) bottleneck property. 
    By Fact~\ref{fact:almost-cp-implies-catalan}, 
    the honest slot $s$ must be Catalan. 

    (The forward implication.) 
    By assumption, slot $s$ has a unique honest leader. 
    Let $\tau$ be the unique honest tine at slot $s$.
    By Fact~\ref{fact:catalan-unique-longest}, 
    the honest tine $\tau$ is the only viable tine 
    at the onset of slot $s + 1$.    
    If $s = T$ then $\tau$ is the only viable tine 
    at the onset of slot $T + 1$.
    Now suppose $s \leq T - 1$.
    As $s$ is a Catalan slot, slots $s$ and $s + 1$ must be honest. 
    Let $t$ be a viable tine at the onset of some slot $k, k \geq s + 2$. 
    We claim that $\tau$ must be a prefix of $t$. 

    Suppose, for a contradiction, that $t$ does not contain $\tau$ as
    its prefix.  Let $B_1$ be the last honest vertex on $t$ such that
    $\ell(B_1) \leq s - 1$.  (If $s = 1$ or no such vertex can be
    found, take $B_1$ as the root vertex.)  Likewise, let $B_2$ be the
    first honest vertex, if it exists, on $t$ such that
    $\ell(B_2) \in [s + 1, k - 1]$.

    Suppose $B_2$ exists.  If $\ell(B_2) = s + 1$ then, by
    Observation~\ref{obs:multi-honest}, $B_2$ builds on $\tau$,
    contradicting our assumption that $\tau$ is not a prefix of $t$.
    Otherwise, suppose $\ell(B_2) \in [s + 2, k-1]$.  Let $I$ be the
    interval $[\ell(B_1) + 1, \ell(B_2) - 1]$.  Clearly, $I$ contains
    $s$.  If $t$ contains any adversarial vertex between $B_1$ and
    $B_2$ then, by Corollary~\ref{coro:interval-honest-vertices}, $I$
    must be $\Aheavy$; but this contradicts the assumption that $s$ is
    a Catalan slot.  Otherwise, $B_2$ builds on top of $B_1$ and, in
    particular, $B_1$ must be viable at the onset of slot
    $\ell(B_2) \geq s + 1$.  Since $\ell(\tau) = s$, this means
    $\length(B_1) \geq \length(\tau)$.  However, since
    $\ell(B_1) < s$, by the monotonicity of the honest-depth function
    $\hdepth(\cdot)$, $\length(\tau) \geq 1 + \length(B_1)$.  This
    contradicts the inequality above.

    Now suppose $B_2$ does not exist.  We claim that $t$ is an
    adversarial tine.  To see why, note that if $t$ were honest and
    $\ell(t) \geq s + 1$ then there would have been a $B_2$.  Since
    $s$ is a uniquely honest slot and $\tau$ is not a prefix of $t$ by
    assumption, $\ell(t) \neq s$ if $t$ is honest.

    Finally, if $t$ is honest and $\ell(t) \leq s - 1$ then, 
    by Fact~\ref{fact:catalan-unique-longest}, 
    $t$ cannot be viable at the onset of slot $s + 1$ 
    since $s$ is Catalan. 
    Since $s + 1$ is an honest slot, 
    honest tines with label $s + 1$ will be strictly longer than $t$ 
    and, therefore, 
    $t$ cannot be viable at the onset of slot $k \geq s + 2$ either. 
    We conclude that $t$ must be an adversarial tine viable at the onset of slot $k$. 
    By Fact~\ref{fact:fork-structure},       
    the interval $I = [\ell(B_1) + 1, k - 1]$ must be $\Aheavy$. 
    However, since $I$ contains $s$, it contradicts the fact that $s$ is a Catalan slot. 

    It follows that every viable tine 
    $t \in F, \ell(t) \geq s + 1$ must contain $\tau$ as its prefix. 
  \end{proof}

The following theorem shows that under axiom~\ref{axiom:tie-breaking}, 
two consecutive Catalan slots 
imply that the first slot has the UVP.

% \section{Implications of a consistent longest-chain tie breaking rule}
% \subsection{UVP from consecutive Catalan slots and \texorpdfstring{axiom~\ref{axiom:tie-breaking}}{consistent chain selection} }
  \begin{theorem}\label{thm:multiple-honest}
    Let $w \in \{\H, \A\}^T$ be a bivalent characteristic string 
    and axiom~\ref{axiom:tie-breaking} is satisfied.
    % where 
    % all honest slot leaders use the consistent longest-chain selection rule $\LCR(\cdot)$.
    Let $s \in[2, T]$ be an integer such that 
    $s$ and $s - 1$ are two honest slots in $w$. 
    The following statements are equivalent:
    \begin{enumerate*}[label=(\textit{\roman*})]
      \item\label{thm:part-catalan-multiple-honest} 
      Slots $s, s - 1$ are Catalan.

      \item\label{thm:part-cp-multiple-honest} 
      If $s \leq T - 1$, both $s$ and $s - 1$ have the UVP. 
      Otherwise, slot $T - 1$ has the UVP but 
      slot $T$ has the bottleneck property.

      % \item\label{thm:part-margin-multiple-honest} 
      % Slots $s, s - 1$ are margin-critical.
    \end{enumerate*}
  \end{theorem}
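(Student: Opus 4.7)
The plan is to handle the two directions separately, with nearly all of the work going into the forward implication. The reverse direction is immediate: in either subcase of (\textit{ii}), both $s$ and $s-1$ are honest and satisfy either the UVP or the bottleneck property; since UVP implies the bottleneck property, Fact~\ref{fact:almost-cp-implies-catalan} shows that both slots must be Catalan.

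For the forward direction, I would fix an arbitrary fork $F \Fork w$ and leverage the structural consequence of axiom~\ref{axiom:tie-breaking}: all honest vertices of $F$ carrying the same label share a common parent, since every honest leader at that slot extends the tine singled out by the common tie-breaking rule. Applying Fact~\ref{fact:catalan-unique-longest} to the Catalan slot $s-1$, every maximum-length viable tine at the onset of slot $s$ ends at an honest vertex labeled $s-1$; tie-breaking then selects a specific such vertex $u^*$ as the common parent of all honest vertices labeled $s$. When $s \leq T-1$, I would define $u^{**}$ analogously as the common parent of honest vertices at slot $s+1$, which necessarily carries label $s$. These are the candidate UVP vertices for slots $s-1$ and $s$, respectively.

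The heart of the argument is to show that $u^*$ lies on every tine $t$ viable at the onset of any slot $k \geq s + 1$. Let $B_1$ be the last honest vertex on $t$ with $\ell(B_1) \leq s-1$ (or the root if no such vertex exists), and let $B_2$ be the first honest vertex on $t$, if one exists, with $\ell(B_2) \geq s$. A case analysis rules out every scenario except $\ell(B_2) = s$. If $B_2$ does not exist, the suffix of $t$ past $B_1$ is an adversarial extension viable at the onset of slot $k$, so Fact~\ref{fact:fork-structure} forces $[\ell(B_1)+1, k-1]$ to be $\Aheavy$, contradicting the Catalan property of $s$, which lies in this interval. If $\ell(B_2) > s$ and $t$ carries adversarial vertices strictly between $B_1$ and $B_2$, Corollary~\ref{coro:interval-honest-vertices} analogously produces an $\Aheavy$ interval $[\ell(B_1)+1, \ell(B_2)-1]$ containing $s$. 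If $\ell(B_2) > s$ with no vertex of $t$ strictly between $B_1$ and $B_2$, then $B_2$ extends $B_1$ directly, which requires $\length(B_1)$ to equal the maximum honest depth at slot $\ell(B_2) - 1$; that maximum is at least $d + 2$, where $d$ is the depth of the common parent of the honest vertices at slot $s-1$, whereas strict monotonicity of the honest-depth function forces $\length(B_1) \leq d + 1$. Hence $\ell(B_2) = s$, and consistent tie-breaking at slot $s$ forces the parent of $B_2$ in $F$ to be $u^*$; since $t$ is a path, this parent coincides with $B_2$'s predecessor on $t$, placing $u^*$ on $t$. Uniqueness of the label-$(s-1)$ vertex on any path is automatic. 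The UVP for slot $s$ (when $s \leq T-1$) follows from the identical argument with $(s-1, s)$ replaced by $(s, s+1)$; for the $s = T$ case, the bottleneck property of slot $T$ is immediate from Fact~\ref{fact:catalan-unique-longest}, since every viable tine at the onset of slot $T+1$ must end at an honest vertex labeled $T$.

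I anticipate the main obstacle to be the depth-counting step in the final case of the analysis: ruling out a direct honest extension $B_2$ that skips past slot $s$. This step relies delicately on axiom~\ref{axiom:tie-breaking}, since without consistent tie-breaking the multiply honest slot $s-1$ could host honest vertices at differing depths, breaking the bound $\length(B_1) \leq d + 1$ that drives the contradiction.
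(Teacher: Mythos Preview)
Your approach is essentially the paper's: both directions are handled the same way, and for the forward implication you identify the tie-breaking vertices $u^{*}$ and $u^{**}$ (the paper's $u_s$ and $u_{s+1}$) via Fact~\ref{fact:catalan-unique-longest} applied to the Catalan slots $s-1$ and $s$, then run the $B_1/B_2$ case analysis with Fact~\ref{fact:fork-structure} and Corollary~\ref{coro:interval-honest-vertices}. The organization differs only in that you argue UVP for slot $s-1$ first (allowing $\ell(B_2)=s$) and then shift indices, whereas the paper shows $u_{s+1}\preceq t$ directly and reads off $u_s\prec t$; the paper therefore handles one extra sub-case (honest $t$ with $\ell(t)=s,\ t\neq\tau$) that your framing absorbs into $\ell(B_2)=s$.

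One caution on a step you flag as delicate. Your opening structural claim---that axiom~\ref{axiom:tie-breaking} forces \emph{all} honest vertices with a given label to share a common parent---is stated too broadly. The adversary may reveal a maximum-length adversarial tine to some honest leaders and withhold it from others, splitting them across parents; consistent tie-breaking only pins down the outcome when every honest leader faces the same candidate set, which is guaranteed precisely when no maximum-length adversarial tine exists. By Fact~\ref{fact:catalan-unique-longest} this holds at the onset of slots $s$ and $s+1$ (since $s-1$ and $s$ are Catalan), so your $u^{*}$ and $u^{**}$ are well-defined. But you also invoke the common-parent claim at slot $s-1$ to define $d$, and nothing prevents an LCR tie at the onset of $s-1$. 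The repair is immediate and matches the paper: argue with $\hdepth$ directly. In the direct-extension case, $B_1$ viable at the onset of slot $\ell(B_2)>s$ forces $\length(B_1)\geq\hdepth(s)$, while $\ell(B_1)\leq s-1$ gives $\length(B_1)\leq\hdepth(\ell(B_1))\leq\hdepth(s-1)<\hdepth(s)$---the contradiction you want, without any common-parent assumption at $s-1$.
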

  \begin{proof}
    Since the slots $s, s - 1$ satisfy the (weaker) bottleneck property, 
    Fact~\ref{fact:almost-cp-implies-catalan} implies that 
    they must be Catalan slots. 
    This proves 
    ~\ref{thm:part-cp-multiple-honest} implies~\ref{thm:part-catalan-multiple-honest}.
    
    Now let us prove that ~\ref{thm:part-catalan-multiple-honest} implies~\ref{thm:part-cp-multiple-honest}. 
    Slots $s, s - 1$ are Catalan. 
    Let $V_{s}$ (resp. $V_{s+1}$) be the set of all viable tines 
    at the onset of slot $s$ (resp. slot $s + 1$). 
    Since $s - 1$ (resp. $s$) is a Catalan slots, we use 
    Fact~\ref{fact:catalan-unique-longest} and conclude that 
    $V_s$ (resp. $V_{s+1}$) can contain 
    only maximum-length honest tines $t, \ell(t) = s - 1$ (resp. $\ell(t) = s$). 
    Let $u_s \in V_s$ be the unique vertex determined 
    by the consistent tie-breaking rule when applied to the set $V_s$. 
    Define $u_{s+1} \in V_{s+1}$ in an analogous way for the set $V_{s+1}$.

    Let $k \in [s + 1, T+1]$ be an integer.
    We wish tho show that for every tine $t$ viable at the onset of slot $k$, 
    the following holds: 
    (i) if  $s \leq T - 1$ then $u_s \Prefix u_{s+1} \PrefixEq t$, and 
    (ii) if $s = T$ then 
    $u_{T-1} \Prefix t$ where $\ell(t) = T$. 

    % Since $V_s$ contains only honest tines, 
    All tines at the honest slot $s$ build upon $u_s$. 
    If $s = T$, we are done. 
    Otherwise, i.e., if $s \leq T - 1$, 
    let $\tau = u_{s + 1}$ and note that $u_s \Prefix u_{s + 1} = \tau$. 
    If $k = s + 1$, we are done since by Fact~\ref{fact:catalan-unique-longest}, 
    every tine at the honest slot $k$ will build upon $\tau$.  

    It remains to reason about the case $s \leq T - 2$ and $k \geq s + 2$. 
    Consider a tine $t$ which is viable at the onset of slot $k$. 
    (All we know about $t$'s label is that $\ell(t) \leq k - 1$.) 
    We claim that $\tau \Prefix t$. 
    Suppose, towards a contradiction, that 
    $\tau$ is not a prefix of $t$. 
    Let $B_1$ be the last honest vertex on $t$ such that $\ell(B_1) \leq s - 1$.
    (If no such vertex can be found, take $B_1$ as the root vertex.) 
    Likewise, let $B_2$ be the first honest vertex on $t$ such that $\ell(B_2) \in [s + 1, k - 1]$. 
    
    Below, we show that every choice for $B_1, B_2$ leads to a contradiction 
    and, therefore, $\tau$ must be a prefix of $t$. 
    If $B_2$ exists then, 
    by construction, $\ell(B_1) < s < \ell(B_2) \leq k - 1$.
    If $\ell(B_2) = s + 1$ then, as we have argued earler, 
    $B_2$ must have built on $\tau$. 
    This contradicts our assumption that 
    $\tau$ is not a prefix of $t$. 
    Otherwise, suppose $\ell(B_2) \geq s + 2$.  
    Let $I$ be the interval $[\ell(B_1) + 1, \ell(B_2) - 1]$ and note that  
    $I$ contains $s$. 
    There can be two scenarios.
    If $t$ contains an adversarial vertex between $B_1$ and $B_2$ then,  
    by Corollary~\ref{coro:interval-honest-vertices}, 
    $I$ must be $\Aheavy$; but 
    this contradicts the assumption that $s$ is a Catatan slot. 
    Otherwise,
    $B_2$ builds on top of $B_1$ and, 
    in particular, 
    $B_1$ must be viable at the onset of slot $\ell(B_2) \geq s + 1$. 
    Since $\ell(\tau) = s$, this means $\length(B_1) \geq \length(\tau)$. 
    However, since $\ell(B_1) < s$, 
    by the monotonicity of the honest-depth function $\hdepth(\cdot)$, 
    $\length(\tau) \geq 1 + \length(B_1)$. 
    This contradicts the inequality above.           

    If $B_2$ does not exist then 
    we claim that $t$ is an adversarial tine. 
    To see why, note that if $t$ were honest and $\ell(t) \geq s + 1$ 
    then there would have been a $B_2$. 
    If $t$ were honest with $\ell(t) = s, t \neq \tau$ 
    then $t$ would not be viable at the onset of slot $s + 2$. 
    This is because $s$ is a Catalan slot and as such, 
    each vertex from slot $s + 1$ builds on $\tau, \length(\tau) \geq \length(t)$. 
    Hence tines viable at the onset of slot $s + 2$ must have length at least $1 + \length(\tau) > \length(t)$. 
    Finally, if $t$ is honest and $\ell(t) \leq s - 1$ then, 
    by Fact~\ref{fact:catalan-unique-longest}, 
    $t$ cannot be viable at the onset of slot $s + 1$ 
    since $s$ is Catalan.  
    Since $s + 1$ is an honest slot, 
    honest tines with label $s + 1$ will be strictly longer than $t$ 
    and, therefore, 
    $t$ cannot be viable at the onset of slot $k \geq s + 2$ either. 
    We conclude that $t$ must be an adversarial tine viable at the onset of slot $k$. 
    By Fact~\ref{fact:fork-structure},       
    the interval $I = [\ell(B_1) + 1, k - 1]$ must be $\Aheavy$. 
    However, since $I$ contains $s$, it contradicts the fact that $s$ is a Catalan slot.     
  \end{proof}

\section{Main theorems via tail bounds for Catalan slots}\label{sec:bounds-main-proofs}

In the previous section, 
we explored the structural connection 
between the UVP and Catalan slots. 
In this section, we present two bounds 
on the stochastic event ``Catalan slots are rare.'' 
Specifically,
Bound~\ref{bound:unique-honest-catalan} 
concerns uniquely honest Catalan slots and complements Theorem~\ref{thm:unique-honest}; 
Bound~\ref{bound:two-catalans} concerns 
two consecutive Catalan slots and complements Theorem~\ref{thm:multiple-honest}. 
We defer the proofs till the next section 
and prove the main theorems below.

% As we shall see, 
% the main theorems follow easily from these connections. 

Recall the $(\epsilon, p_\h)$-Bernoulli condition 
from~\ref{def:bernoulli-cond}.
\begin{bound}\label{bound:unique-honest-catalan}
  Let $T, s, k \in \NN, T \geq s + k$ and  $\epsilon, q_\h \in (0, 1)$. 
  Let $w$ be a characteristic string satisfying 
  the $(\epsilon, q_\h)$-Bernoulli condition 
  and let $y = w_s \ldots w_{s+k-1}$.
  Then 
  \[
    \Pr_w[\text{$w$ does not contain a uniquely honest Catalan slot in $y$}]  
      \leq 
      \exp\left(
        -k\cdot \Omega(\min(\epsilon^3, \epsilon^2 q_\h)) 
      \right)
      \,.
  \]
\end{bound}
In particular, 
% when $q_\H = 0$, 
when $q_\h = (1+\epsilon)/2$, 
the bound above coincides with the 
bound in~\cite{LinearConsistency}; 
it follows that the current analysis 
subsumes their result.

\begin{bound}\label{bound:two-catalans}
  Let $T, s, k \in \NN, T \geq s + k$ and  $\epsilon \in (0, 1)$. 
  Let $w$ be a bivalent characteristic string satisfying 
  the $(\epsilon, 0)$-Bernoulli condition 
  and let $y = w_s \ldots w_{s+k-1}$.
  Then 
  \[
    \Pr_w[\text{$w$ does not contain two consecutive Catalan slots in $y$}]  
      \leq 
      \exp\left(
        - k\cdot \Omega(\epsilon^3(1 + O(\epsilon))) 
      \right)
      \,.
  \]
\end{bound}

% \noindent 
% For now, let us defer the proofs of these bounds in Section~\ref{sec:estimates} 
% and prove the main theorems below. 

\paragraph{Proof of Theorem~\ref{thm:main}.}

  We consider the distribution $\mathcal{B}$ first. 
  Write $w = xyz, |x| = s - 1$.
  Recall that  
  $
    \mathbf{S}^{s,k}[\mathcal{B}] 
    = \Pr_{w \sim \mathcal{B}}[\text{$s$ is not $k$-settled in $w$}]
  $. 
  Theorem~\ref{thm:unique-honest} and Equation~\eqref{eq:settlement-uvp} implies that 
  if $w$ contains a uniquely honest Catalan slot $c \in [s, s + k]$ 
  then slot $s$ must be $k$-settled in $w$. 
  In fact, by virtue of Fact~\ref{fact:catalan-unique-longest}, 
  it suffices to take $c \in [s, s + k - 1]$, 
  i.e., $|x| \leq c \leq |xy|$. 
  Thus the probability above is bounded by 
  Bound~\ref{bound:unique-honest-catalan} 
  which renames $p_\h = q_\h$.
   % and $p_\H = q_\H$. 
  This proves the first inequality. 
  
  Now let us prove the second inequality. 
  % Let $a, b \in \{\h,\H,\A\}^*, |a| = |b|$.
  % Define the partial order $\leq$ on equal-length characteristic strings as follows: 
  % $a \leq b$ if and only if
  % for all $i = 1, \ldots, |a|$, $a_i = 1$ implies $b_i = 1$. 
  For any player playing the settlement game, 
  let $C$ be the set of strings on which the player wins. 
  Clearly, $C$ is monotone 
  with respect to the partial order $\leq$ 
  defined on $\{\h, \H, \A\}^T$ 
  (see below Definition~\ref{def:dominance}).  
  To see why, note that if the player wins 
  on a specific string $w$, 
  he can certainly win on any string $w^\prime$ so that $w \leq w^\prime$. 
  By assumption, 
  $\mathcal{W} \DominatedBy \mathcal{B}$. 
  % by the definition of $\epsilon$-martingale condition, 
  It follows from Definition~\ref{def:dominance} that 
  $\Pr_{\mathcal{W}}[w] \leq \Pr_{\mathcal{B}}[w]$ 
  for any $w$ in the monotone set $C$. 
  By referring to the definition of settlement insecurity 
  (see Definition~\ref{def:settlement-insecurity}), 
  we conclude that 
  $
    \mathbf{S}^{s,k}[\mathcal{W}] \leq \mathbf{S}^{s,k}[\mathcal{B}]
  $.
  \hfill $\qed$  

\paragraph{Proof of Theorem~\ref{thm:main-bivalent}.}
  This proof is identical to the proof of Theorem~\ref{thm:main} 
  except that 
  we need to refer to Theorem~\ref{thm:multiple-honest} in lieu of Theorem~\ref{thm:unique-honest}
  and Bound~\ref{bound:two-catalans} in lieu of Bound~\ref{bound:unique-honest-catalan}.
  \hfill $\qed$

\section{Proofs of Bounds~\ref{bound:unique-honest-catalan} and~\ref{bound:two-catalans}}\label{sec:estimates}

\newcommand{\gfA}{\gf{A}}
\newcommand{\gfD}{\gf{D}}
\newcommand{\gfL}{\gf{L}}
\newcommand{\gfLhat}{\gf{\hat{L}}}
\newcommand{\gfLtilde}{\gf{\tilde{L}}}
\newcommand{\gfE}{\gf{E}}
\newcommand{\gfEhat}{\gf{\hat{E}}}
\newcommand{\gfEcheck}{\gf{\check{E}}}
\newcommand{\gfF}{\gf{F}}
\newcommand{\gfFhat}{\gf{\hat{F}}}
\newcommand{\gfFcheck}{\gf{\check{F}}}
\newcommand{\gfEtilde}{\gf{\tilde{E}}}
\newcommand{\gfR}{\gf{R}}
\newcommand{\gfRhat}{\gf{\hat{R}}}
\newcommand{\gfRtilde}{\gf{\tilde{R}}}

\newcommand{\gfC}{\gf{C}}
\newcommand{\gfChat}{\gf{\hat{C}}}
\newcommand{\gfCcheck}{\gf{\check{C}}}
\newcommand{\gfCtilde}{\gf{\tilde{C}}}

\newcommand{\gfM}{\gf{M}}
\newcommand{\gfMhat}{\gf{\hat{M}}}
\newcommand{\gfMtilde}{\gf{\tilde{M}}}

\newcommand{\gfX}{\gf{X}}
\newcommand{\gfXinf}{\gf{X}_\infty}

\newcommand{\EventRCat}{E_\mathsf{right-cat}}
\newcommand{\EventReset}{E_\mathsf{reset}}

\newcommand{\SeqGF}{\longleftrightarrow}

% In this section, 
% we present the proofs of Bounds~\ref{bound:unique-honest-catalan} 
% and~\ref{bound:two-catalans} 
% stated in Section~\ref{sec:bounds-main-proofs}. 
% Let us recall stochastic dominance from Definition~\ref{def:dominance} 
% and that,
As a rule, we denote the
probability distribution associated with a random variable using
uppercase script letters. 
Observe that if $Y \dominatedby X$ and 
$Z$ is independent of both $X$ and $Y$, 
then $Z + Y \dominatedby Z + X$. 
In addition, for any non-decreasing function $u$ defined on $\Omega$, 
$Y \dominatedby X$ implies $u(Y) \leq u(X)$.

  \paragraph{Generating functions.}
    We reserve the term
    \emph{generating function} to refer to an ``ordinary'' generating
    function which represents a sequence $a_0, a_1, \ldots$ of
    non-negative real numbers by the formal power series
    $\gf{A}(Z) = \sum_{t = 0}^\infty a_t Z^t$. 
    We denote the above correspondence as $\{a_t\} \SeqGF \gfA(Z)$. 
    When
    $\gf{A}(1) = \sum_t a_t = 1$ we say that the generating function is
    a \emph{probability generating function}; in this case, the
    generating function $\gf{A}$ can naturally be associated with the
    integer-valued random variable $A$ for which $\Pr[A = k] = a_k$. If
    the probability generating functions $\gf{A}$ and $\gf{B}$ are
    associated with the random variables $A$ and $B$, it is easy to
    check that $\gf{A} \cdot \gf{B}$ is the generating function
    associated with the convolution $A + B$ (where $A$ and $B$ are
    assumed to be independent).  Translating the notion of stochastic
    dominance to the setting with generating functions, we say that the
    generating function $\gf{A}$ \emph{stochastically dominates}
    $\gf{B}$ if $\sum_{t \leq T} a_t \leq \sum_{t \leq T} b_t$ for all
    $T \geq 0$; we write $\gf{B} \dominatedby \gf{A}$ to denote this state of
    affairs. If $\gf{B}_1 \dominatedby \gf{A}_1$ and
    $\gf{B}_2 \dominatedby \gf{A}_2$ then
    $\gf{B}_1 \cdot \gf{B}_2 \dominatedby \gf{A}_1 \cdot \gf{A}_2$ and
    $\alpha \gf{B}_1 + \beta \gf{B}_2 \dominatedby \alpha \gf{A}_1 + \beta
    \gf{A}_2$ (for any $\alpha, \beta \geq 0$).  Moreover, if
    $\gf{B} \dominatedby \gf{A}$ then it can be checked that
    $\gf{B}(\gf{C}) \dominatedby \gf{A}(\gf{C})$ for any probability
    generating function $\gf{C}(Z)$, where we write $\gf{A}(\gf{C})$ to
    denote the composition $\gf{A}(\gf{C}(Z))$.

    Finally, we remark that
    if $\gf{A}(Z)$ is a generating function which converges as a
    function of a complex $Z$ for $|Z| < R$ for some non-negative $R$, 
    $R$ is called the \emph{radius of convergence} of $\gf{A}$.  
    It follows from Theorem 2.19 in~\cite{WilfGF} that 
    $\lim_{k \rightarrow \infty} {|a_k|}R^k = 0$ and that $|a_k| = O(R^{-k})$. 
    In addition, if $\gf{A}$ is a probability generating function associated with the
    random variable $A$ then it follows that
    $\Pr[A \geq T] = O(R^{-T})$.

\subsection{Proof of Bound~\ref{bound:unique-honest-catalan}}\label{sec:catalan-estimates}
  % \begin{proof}
  Let $p = (1 - \epsilon)/2$ and $q = (1 + \epsilon)/2$ 
  so that $q - p = \epsilon$. 
  Let $q_\H = q - q_\h$. 
  Let $B$ denote the event that 
  $w$ does not contain a uniquely honest Catalan slot in $y$. 
  We would like to bound $\Pr_w[B]$ from above.

  Define the process $W = (W_t : t \in \NN), W_t \in \{\pm 1\}$ as $W_t = 1$ if and only if $w_t = \A$. 
  Let $S = (S_t : t \in \NN), S_t = \sum_{i \leq t} W_i$ be the position of the particle at time $t$. 
  Thus $S$ is a random walk on $\ZZ$ with $\epsilon$ negative (i.e., downward) bias. 
  By convention, set $W_0 = S_0 = 0$. 

  \paragraph{Case 1: $x$ is an empty string.} 
  % \paragraph{Case: $x = \varepsilon$.} 
  % \subsubsection*{If $x = \varepsilon$} 
  % First, let us consider the case when $x = \varepsilon$, i.e., $|x| = 0$. 
  In this case, we write $w = yz$ so that $|y| = k$. 
  Let $c_t$ be the probability that $t$ is the first uniquely honest Catalan slot in $w$ 
  with $c_0 = 0$, and consider the probability generating function 
  $\{c_t\} \SeqGF \gfC(Z) = \sum_{t = 0}^\infty c_t Z^t$. 
  Controlling the decay of the coefficients $c_t$ suffices
  to give a bound on $\Pr[B]$, i.e., 
  the probability that 
  $y$ \emph{does not} contain a Catalan slot, 
  because this probability is at most 
  $
    1 - \sum_{t =0}^{k-1} c_t 
      = \sum_{t = k}^\infty c_t
      % \,.
  $. 
  % It seems challenging to give a closed-form algebraic expression for
  % the generating function $\gfC$; 
  To this end, we develop a
  closed-form expression for a related probability generating function
  $\gfChat(Z) = \sum_t \hat{c}_t Z^t$ which stochastically
  dominates $\gfC(Z)$. 
  Recall that this means that for any $k, \sum_{t \geq k} c_k \leq \sum_{t \geq k} \hat{c}_k$. 
  Finally, bound the latter sum  
  by using the analytic properties of $\gfChat(Z)$. 

  % \paragraph{Generating functions for the first ascent and descent.}
  % Before proceeding further, we set down two
  % elementary generating functions for the ``first descent'' and the ``first ascent''
  % stopping times. 
  Treating the random variables $W_1, \ldots$ as
  defining a (negatively) biased random walk, define $\gfD$ (resp. $\gfA$) to be
  the generating function for the \emph{descent stopping time} 
  (resp. the \emph{ascent stopping time}) 
  of the walk; this is the first time the random walk, starting at 0, visits
  $-1$ (resp. $+1$). 
  The natural recursive formulation of these descent time yield 
  simple algebraic equations for the descent generating function,
  $\gfD(Z) = qZ + pZ \gfD(Z)^2$ and $\gfA(Z) = pZ + qZ \gfA(Z)^2$, 
  and from this we may conclude
  \begin{align*}
    \gfD(Z) &= (1 - \sqrt{1 - 4pqZ^2})/2pZ\,, \\
    \gfA(Z) &= (1 - \sqrt{1 - 4pqZ^2})/2qZ\,.
  \end{align*}
  % Here, we discard the other candidate solution 
  % $(1 + \sqrt{1 - 4pqZ^2})/(2pZ)$ since in that case, for any real $\delta \in (0,1)$ we would get 
  % $\gfD(1 - \delta) > \gfD(1)$; 
  % but this conflicts with the fact that the coefficients of $\gfD(Z)$ are non-negative. 
  Note that while $\gfD$ is a probability generating function, 
  $\gfA$ is not: according to the classical
  ``gambler's ruin'' analysis,
  % ~\cite{Grinstead:1997ng}, 
  the probability
  that a negatively-biased random walk starting at 0 ever rises to 1
  is exactly $p/q$; thus $\gfA(1) = p/q$.

  % \paragraph{Generating function for the first uniquely honest Catalan slot.}
  Recall that a slot is Catalan in $w$ if and only if 
  it is both left-Catalan and right-Catalan. 
  % Thus we wish to devise the generating function $\gfC(Z)$ for 
  % the first occurrence of a left-Catalan slot 
  % which is a right-Catalan slot as well. 
  A slot is left-Catalan if the walk $S$ descends to a new low at that slot. 
  In addition, the same slot (say $s$) is right-Catalan 
  if the walk never reaches to that level in future, 
  i.e., $S_s \geq S_{i}, i \geq s + 1$. 
  The probability of this event is $1 - \gfA(1) = 1 - p/q = \epsilon/q$, 
  conditioned on the fact that $W_s = -1$.
  
  Assume that the walk is now at its historical minimum. 
  (It may or may not be a new minimum.)
  We can think of the generating function $\gfC(Z)$ as a search procedure 
  for finding the first uniquely honest Catalan slot. 
  Let $v$ be the first symbol we observe. 
  Let $\gfE(Z)$ be the generating function for a walk which 
  makes an ascent with certainty 
  and then descends again to its historical minimum.
  We claim that 
  % % $\gfC(Z) \DominatedBy \gfChat(Z)$
  % \[
  %    \gfC(Z) = pZ \gfD(Z) \gfC(Z) + q_\h Z\cdot \epsilon/q + q_\h Z \cdot p/q \cdot \gfE(Z) \gfC(Z) + q_\H Z \gfC(Z)
  %     \,.
  % \]
  \begin{align}
    \gfC(Z) 
      &= pZ \gfD(Z) \gfC(Z) + q_\h Z\cdot \epsilon/q + q_\h Z \cdot p/q \cdot \gfE(Z) \gfC(Z) + q_\H Z \gfC(Z) \nonumber
      \\
      % \,.
      % &= (q_\h \epsilon/q) Z/\bigl(1 - \left( pZ \gfD(Z) + (q_\h p/q) Z \gfE(Z) + q_\H Z\right) \bigr)      % \\
      &= \frac{(q_\h \epsilon/q) Z}{1 - \bigl( pZ \gfD(Z) + (q_\h p/q) Z \gfE(Z) + q_\H Z\bigr)}      % \\
    \label{eq:gfC}
    \,.
  \end{align}
  Here is the explanation. 
  Regarding the value of $v$, there can be four alternatives 
  for the walk which is currently at its historical minimum: 
  \begin{enumerate}[label=(\textit{\roman*})]
    \item With probability $p$, we have $v = \A$ and the walk moves up. 
    Then we wait till the walk makes a first descent and restart.

    \item With probability $q_\h \cdot  \epsilon/q$, we have 
    $v = \h$ and the walk diverges below. 
    Hence our search has succeeded and we stop.

    \item With probability $q_\h \cdot  (1 - \epsilon/q) = q_\h p/q$, we have 
    $v = \h$ and the walk returns to the origin from below. 
    Then we wait for the walk to match its minimum again 
    before we can restart. 
    Note that $\gfE(Z)$ is 
    the generating function for this ``guaranteed ascent then match minimum'' walk.

    \item With probability $q_\H$, we have $v = \H$ and the walk moves down. 
    Since we will reach a new minimum, we restart.
  \end{enumerate}
  % After rearranging, we get 
  % \begin{align}
  %     \gfC(Z) 
  %     % &= (q_\h \epsilon/q) Z/\bigl(1 - \left( pZ \gfD(Z) + (q_\h p/q) Z \gfE(Z) + q_\H Z\right) \bigr)      % \\
  %     &= \frac{(q_\h \epsilon/q) Z}{1 - \bigl( pZ \gfD(Z) + (q_\h p/q) Z \gfE(Z) + q_\H Z\bigr)}      % \\
  %     \label{eq:gfC}
  %     \,.
  % \end{align}
  Since $\gfE(1) = 1$ by assumption, 
  $p  + (q_\h p/q) + q_\H = 1 - q_\h(1 - p/q) = 1 - q_\h\epsilon/q$. 
  It follows that 
  $\gfC(1) = (q_\h \epsilon/q) / (1 -(1 - q_\h\epsilon/q) ) = 1$; 
  hence $\gfC(Z)$ is a probability generating function.    

  % A generating function of a stopping time of a random walk 
  % is ill suited to ``remember'' its historical minimum/maximum. 
  % However, it can remember the length of the walk for free. 
  Instead of working directly with $\gfE(Z)$, 
  we can work with a generating function $\gfEhat(Z)$ 
  which is identical to $\gfE(Z)$ for the initial ascending part 
  but differs in the descending part. 
  Specifically, in the descending part, 
  the walk represented by $\gfEhat(Z)$ descends as many levels 
  as the number of steps it took to return to the origin. 
  Clearly, $
      \gfE(Z) \DominatedBy \gfEhat(Z) \triangleq \gfA(Z \gfD(Z) )/\gfA(1)
      % \,.
  $. 
  Here, an individual term in $\gfA(Z \gfD(Z)) = \sum_i a_i Z^i \gfD(Z)^i$ 
  has the interpretation 
  ``if the first ascent took $i$ steps then immediately descend $i$ levels.''
  Since $\gfA(Z)$ is not a probability generating function, 
  we have to normalize it by $\gfA(1)$ to make sure that 
  the ascent happens with certainty. 
  Writing 
  \[
    \gfF(Z) \triangleq pZ \gfD(Z) + q_\h  Z \gfA(Z \gfD(Z) ) + q_\H Z
    \,,
  \]
  note that 
  \begin{align}
    % &\gfF(Z) \triangleq pZ \gfD(Z) + q_\h  Z \gfA(Z \gfD(Z) ) + q_\H Z
    % \,,\quad\text{note that}\label{eq:gfF}\\
    &\gfC(Z) 
        \DominatedBy \gfChat(Z) 
        \triangleq 
        % \frac{(q_\h \epsilon/q) Z}{1 - \gfF(Z) }
        (q_\h \epsilon/q) Z/(1 - \gfF(Z) )
    \,. \label{eq:gfChat}
  \end{align}
  Since $\gfF(1) = p + q_\h p/q + q_\H = 1 - q_\h(1 - p/q) = 1 - q_\h \epsilon/q$, 
  we have $\gfChat(1) = 1$, i.e.,  
  $\gfChat(Z)$ is a probability generating function. 
  % \paragraph{Condition for the convergence of $\gfC(Z)$.}
  It remains to establish a bound on the radius of convergence of
  $\gfChat$. 
  A sufficient condition for the convergence of
  $\gfChat(z)$ for some $z \in \RR$ is 
  that all generating functions appearing in the definition of
  $\gfChat(z)$ converge at $z$ and 
  that $\gfF(z) \neq 1$. 
  % The bulk of the remaining exposition revolves around 
  % proving this inequality.

  % \paragraph{Convergence of $\gfA(Z), \gfD(Z)$, and $\gfA(Z \gfD(Z))$.}
  The generating functions $\gfD(z)$ and $\gfA(z)$ converge when
  the discriminant $1 - 4pqz^2$ is positive; equivalently
  $|z| < 1/\sqrt{1 - \epsilon^2}
  = 1 + \epsilon^2/2 + O(\epsilon^4)$. 
  In addition, conditioned on the convergence of $\gfA(z)$ and $\gfD(z)$, 
  we can check that 
  \begin{equation}\label{eq:conv-value-A-and-D}
    \gfA(z) < 1/2qz\, \quad \text{and}\quad \gfD(z) < 1/2pz
    \,.
  \end{equation}
  On the other hand, the convergence of $\gfF(z)$ 
  depends on the convergence of $\gfD(z)$ and $\gfA(z \gfD(z))$. 
  The convergence of $\gfA(z \gfD(z))$ is likewise determined by the 
  positivity of its discriminant, i.e.,
  $$
    1 - (1 - \epsilon^2)\, \left(z \cdot \frac{1 - \sqrt{1 - (1 - \epsilon^2) z^2}}{(1 - \epsilon) z}\right)^2  > 0
    % 1 - (1 - \epsilon^2)\, \left(z \cdot (1 - \sqrt{1 - (1 - \epsilon^2) z^2})/(1 - \epsilon) z\right)^2  > 0
    \,.
  $$
  The inequality above implies that 
  if $\gfA(z \gfD(z))$ converges when 
  $$
    |z| 
    < R_1 
    \triangleq \left(\left(2/\sqrt{1 - \epsilon^2} - 1/(1+\epsilon)\right)/(1 + \epsilon)\right)^{1/2} 
    \,,
  $$ where 
  \begin{align}\label{eq:roc-AZDZ}
    R_1 = 1 + \epsilon^3/2 + O(\epsilon^4) 
    \approx \exp\left(\epsilon^3 (1 + O(\epsilon))/2 \right)
    \,.
  \end{align}
  \noindent
  Note that the radius of convergence of $\gfA(Z \gfD(Z))$ 
  is smaller than that of $\gfA(Z)$ or $\gfD(Z)$.

  % \paragraph{An upper bound on $\gfF(z)$.}
  We can check that when $\gfF(z)$ converges, 
  it satisfies $$\gfF(z) \leq \gfF(|z|)\,.$$ 
  The claim is trivial for $z = 0$. 
  Otherwise,
  note that $\gfD(z)$ is an odd function and hence, 
  $z \gfD(z) = |z|\, \gfD(|z|)$. 
  Thus, for the claim to hold, we need only show that 
  $
    z\,(q_\h \gfA(z\gfD(z)) + q_\H) 
    \leq |z|\,(q_\h \gfA(|z|\gfD(|z|)) + q_\H) 
  $. 
  But the right-hand side equals $|z|\,(q_\h \gfA(z\gfD(z)) + q_\H)$ 
  and $\gfA(x) > 0$ for real $x > 0$, 
  we can divide both sides by $q_\h \gfA(z\gfD(z)) + q_\H$. 
  The reduced inequality becomes 
  $z/|z| \leq 1$. 
  However, $z/|z| = \pm 1$ for any non-zero real $z$.
  Therefore, it suffices for us to require that $F(z) \neq 1$ for $z > 0$. 
  
  % \paragraph{$\gfF(z)$ is convex and increasing for $z \in [0, R_1]$.}
  We can also check that 
  \begin{equation}\label{eq:Fz-convex-increasing}
    \text{$F(z)$ is convex and increasing for $z \in [0, R_1)$}
    \,.
  \end{equation} 
  To see why, note that since $z^2$ is convex in $z$, 
  $(1 - 4pq z^2)$ is concave. 
  Since square root is non-decreasing and convex for positive $z$, 
  $\sqrt{1 - 4pqz^2}$ is concave and consequently, 
  $-\sqrt{1 - 4pqz^2}$ is convex. 
  Since $1/z^2$ is convex, 
  it follows that $\gfD(z)$ and, by a similar reasoning, $\gfA(z)$ are convex.
  Next, observe that $\gfA(z \gfD(z))$ converges for $z \in [0, R_1)$ 
  and hence it is also convex in $z$. 
  Thus $\gfF(z)$ turns out to be a convex combination of convex functions; 
  it follows that $F(z)$ is convex for $z \in (0, R_1)$. 
  In addition, 
  since $\gfF(0) = 0$ and $\gfF(1) > 0$, 
  $\gfF(z)$ must be increasing as well.

  % \paragraph{A linear lower bound $f(z) \leq \gfF(z), z \geq 1$.}
  Let $$\text{ $R_2$ be the solution to the equation $\gfF(z) = 1, z > 0$}\,. $$ 
  Then $\gfChat(z)$ would converge for $|z| < R \triangleq \min(R_1, R_2)$. 
  It remains to characterize $R_2$ in terms of $\epsilon$ and $q_\h$. 
  Note that $R_1 < 2$ as long as $\epsilon \leq 0.97$.
  Since the final bounds will be only asymptotic in $\epsilon$, 
  it suffices for us to consider small $\epsilon$. 
  That is to say, we consider the case where $0 < z < R_1 < 2$, 
  i.e., $z - 1 < 1$.

  If we express $\gfF(z)$ as its power series around $z = 1$, we can check that 
  % $\gfF(1) = 1 - \epsilon q_\h/q$, 
  % $\gfF''(1) = \frac{1-\epsilon}{\epsilon^5}\left( q_\h (1+3\epsilon) + q_\H \epsilon^2 \right)$, and 
  % $\gfF'(1) = p(1+1/\epsilon) + q_\h(p/q)\bigl( 1+(1+1/\epsilon)/\epsilon \bigr) + q_\H
  % $.
  \begin{align*}
    \gfF(1) &= 1 - \epsilon q_\h/q\,,\\
    \gfF''(1) &= 
    % \frac{2p}{\epsilon^5}\left( 2q_\h (q+\epsilon) + q_\H \epsilon^2 \right) 
      \frac{1-\epsilon}{\epsilon^5}\left( q_\h (1+3\epsilon) + q_\H \epsilon^2 \right)\,, \quad\text{and}\\
    \gfF'(1) &= p(1+1/\epsilon) + q_\h(p/q)\bigl( 1+(1+1/\epsilon)/\epsilon \bigr) + q_\H
    \,.
  \end{align*}
  Since $\gfF''(1) > 0$ and $\gfF(z)$ is convex and increasing, 
  the first-order approximation 
  \begin{equation}
    f(z) = (1 - \epsilon q_\h/q) + \gfF'(1)(z-1) 
  \end{equation}
  is a lower bound for $\gfF(z)$ when $1 \leq z < R_1$. 
  The approximation error at any $z \in (1, 2)$ is 
  $\gfF(z) - f(z) = O(h(z))$ 
  where we define $$h(z) \triangleq \gfF''(1) (z-1)^2\,.$$
  Since the bounds we develop will have 
  either $O(\cdot)$ or $\Omega(\cdot)$ in the exponent, 
  it suffices to ensure that $R_2 = \Theta(R_2^*)$.
  In the exposition below, 
  we will only develop approximations $R_2^*$ satisfying 
  $R_2 = (1 - \theta) R_2^*$ 
  for a small positive constant $\theta \in (0, 1)$. 

  % \paragraph{The special case $q_\h = q$.}
  In the special case $q_\H = 0$, 
  $\gfF(Z)$ simplifies as $\gfF(Z) = pZ \gfD(Z) + q  Z \gfA(Z \gfD(Z) )$. 
  Note that $\gfF(Z)$ converges when $\gfA(Z \gfD(Z) )$ does 
  and it is not hard to check that $\gfF(z) < 1$. 
  Specifically, 
  we know that $\gfF(z)$ converges when $z \in[0, R_1)$ 
  and when it does, we claim that $\gfF(z) < 1$. 
  Specifically, when $z \in [0, 1]$, 
  $\gfF(z) \leq \gfF(1) = 1 - \epsilon q_\h/q = 1 - \epsilon < 1$ since $\epsilon < 1$. 
  On the other hand, 
  % when $z \in [1, R_1)$, 
  % we know that $\gfF(Z)$ is dominated by $pZ \gfD(Z) + q  Z \gfA(Z \gfD(Z) ) \cdot \gfD(Z)$ 
  % and, therefore, 
  we can check that $\gfD(z)$ is convex for $z \geq 0$ and, 
  in particular, the first order approximation $1 + (z-1)/\epsilon$ around $z = 1$ 
  is a lower bound for $\gfD(z), z \geq 1$.
  It follows that $\gfD(z) \geq 1$ for $z \in [1, R_1)$. 
  Consequently, 
  $\gfF(z) 
  \leq pZ \gfD(Z) + q  z \gfA(z \gfD(z) )\cdot \gfD(z) 
  = pz \gfD(z) + q  x \gfA(x) 
  < 1/2 + 1/2 = 1$ 
  where we write $x = z \gfD(z)$ and use~\eqref{eq:conv-value-A-and-D}. 
  Thus 
  the radius of convergence of $\gfChat$ is $R_1$ if $q_\H = 0$.

  The remainder of the exposition considers 
  the general case $0 < q_\h < q$. 
  % \paragraph{Approximating the $z$ that satisfies $f(z) = 1$.}
  Let the solution to the equation $f(z) = 1$ be denoted by 
  $$
    R_2^* \triangleq 1 + \epsilon (q_\h/q)/\gfF'(1)
    \,.
  $$ 
  % {\color{red}Therefore, as $\epsilon \rightarrow 0$, $R_2^* \rightarrow R_2$.} 
  % Let us focus on $h(R_2^*)$, i.e., 
  % the approximation error at $z = R_2^*$. 
  % Writing $b_\epsilon = 1 + 1/\epsilon$, 
  % this quantity, denoted by $\delta$, is  
  % \[
  %     \frac{
  %       (1-\epsilon)\left( q_\h (1+3\epsilon) + q_\H \epsilon^2 \right)\,(\epsilon q_\h/q)^2
  %     }{
  %       \epsilon^5\, (p b_\epsilon + q_\h(p/q)\bigl( 1+b_\epsilon/\epsilon \bigr) + q_\H)^2
  %     }
  %     % \frac{1-\epsilon}{\epsilon^5}\left( q_\h (1+3\epsilon) + q_\H \epsilon^2 \right)
  %     % \cdot
  %     % \left( \frac{\epsilon q_\h/q}{p(1+1/\epsilon) + q_\h(p/q)\bigl( 1+(1+1/\epsilon)/\epsilon \bigr) + q_\H} \right)^2
  %     \,.
  % % \end{align*}
  % \]
  If $q_\h$ is small, $q = (1+\epsilon)/2, p+\epsilon = q$ and $p/q^3 \in [1,4]$, 
  we can check that 
  % $\delta = O(\epsilon q_\h^2)$. 
  \[
    h(R_2^*)
    = O\left(\frac{pq}{\epsilon^3} \cdot \left( \frac{\epsilon^2 q_h/q}{p(1+\epsilon) + \epsilon q} \right)^2 \right) 
    =O\left(\frac{ \epsilon q_\h^2 \cdot pq}{q^2\,(p + \epsilon )^2}\right)
    =O\left(\frac{ \epsilon q_\h^2 \cdot p}{q^3}\right)
    = O(\epsilon q_\h^2)
    \,,
  \]
  i.e., it vanishes.
  % Therefore, $\lim_{q_\h \rightarrow 0} \delta = 0$ 
  Thus $f(z)$ is a good approximation for $\gfF(z)$.
  It follows that 
  $\gfF'(1) \approx p(1+1/\epsilon) + q = q/\epsilon$ 
  and, therefore, 
  $$R_2^* 
    \approx 1 + (\epsilon q_\h/q)/(q/\epsilon) 
    = 1 + q_\h(\epsilon/q)^2 
    \approx \exp(\epsilon^2 q_\h/q^2)
    = e^{O(\epsilon^2 q_\h)}
  $$ 
  since $q \in (1/2, 1)$. 
  (Although we have an asymptotic notation, 
  it is important that we have the right exponent on $q_\h$.)
  % As any lower bound on th right-hand side suffices, 
  % we take $R_2^* \approx \exp(\epsilon^2 q_\h)$.
  % However, $\epsilon^2/q^2 = 4 \epsilon^2(1 + \epsilon)^{-2} = 4 \epsilon^2(1 - 2 \epsilon + O(\epsilon^2))$. 
  % Hence $R_2^* \approx \exp(4\epsilon^2 q_\h(1 - 2\epsilon))=\Theta(\epsilon^2 q_\h)$.    
  
  If, on the contrary, $q_\h = O(1)$ but $\epsilon$ vanishes then 
  $\gfF'(1)$ will be dominated by its second term; 
  that is to say, 
  $\gfF'(1) 
  \approx q_\h(p/q) \left(1+(1+1/\epsilon)/\epsilon \right) 
  = O(q_\h/\epsilon^2)
  $
  and, therefore, 
  $$R_2^* 
  \approx 1 + O\left( (\epsilon q_\h/q)/( q_\h/\epsilon^2) \right) 
  = 1 + O(\epsilon^3) 
  = e^{O(\epsilon^3)}
  $$ since $q \approx 1/2$.
  % On the other hand, 
  % if $q_\H \rightarrow q$, 
  % i.e., if $q_\h$ is smaller than $\epsilon$, 
  % then 
  % $\gfF'(1) = \Theta(q/\epsilon^2)$ 
  % and consequently, 
  % $R_2^* 
  % \approx 1 + \epsilon/\gfF'(1) 
  % = 1 + \Theta(\epsilon^3/q) 
  % \approx \exp(\Theta(\epsilon^3))$ since $q \in (1/2, 1]$.

  % \paragraph{Putting it together for the case $|x| = 0$.}
  Recall that $R_1 = \exp\left(O(\epsilon^3 (1 + O(\epsilon)))\right)$. 
  It follows that $\gfChat(z)$ converges for 
  $|z|$ less than 
  \begin{align}\label{eq:RoC-unique-honest}
    R &= \exp\left(O(\min(\epsilon^3, \epsilon^2 q_\h))\right)
    \,.
  \end{align}

  % \begin{enumerate*}[label=(\textit{\roman*})]
  %   \item $R= e^{\epsilon^3 (1 + O(\epsilon))/2}$ if $q_\h = q$; 
  %   \item $R= e^{\Theta(\epsilon^3)}$ if $q_\h \rightarrow q$; and 
  %   \item $R= e^{\Theta( \min(\epsilon^3, \epsilon q_\h/q_\H)}$ otherwise. 
  % \end{enumerate*}
  % \begin{align}\label{eq:RoC-unique-honest}
  %   \ln R &= 
  %     \begin{cases}
  %       \epsilon^3 (1 + O(\epsilon))/2 & \text{if $q_\h = q$}\,,\\
  %       \Theta(\epsilon^3) & \text{if $q_\h \rightarrow q$}\,,\\
  %       % \Theta(\min(\epsilon^3, \epsilon^2 q_\h/q_\H^2)) & \text{if $q_\h \rightarrow 0$}\,,\\
  %       \Theta(\min(\epsilon^3, \epsilon^2 q_\h))  & \text{if $q_\h \rightarrow 0$}
  %       % \,\\
  %       % \Theta( \min(\epsilon^3, \epsilon q_\h/q_\H) & \text{otherwise}
  %       \,.
  %     \end{cases}
  % \end{align}
  Recall that if the radius of convergence of
  $\gfChat$ is $\exp(\delta)$ then 
  $\hat{c}_k = O(e^{-\delta k})$. 
  Hence, $\Pr[B]$ is a geometric sum and it is 
  at most $O(e^{-\delta k})$ as well. 
  We conclude that 
  % for large $k$,
  $$
    \Pr_w[B] 
      \leq O\left(e^{-k \ln R }\right)
      = \exp\left(-k\cdot \Omega(\min(\epsilon^3, \epsilon^2 q_\h))\right)
      \,.
  $$
  % \begin{align}
  %   \Pr[H] &\leq O(1)\cdot e^{-k \ln R }
  %     \,.
  % \label{eq:prob_catalan_gf}
  % \end{align}

  \paragraph{Case 2: $x$ is non-empty.}
  % \paragraph{Case: $|x| \geq 1$.}
  % \subsubsection*{If $|x| \geq 1$}
  Next, let us consider the case when $x \neq \varepsilon$, i.e., $|x| \geq 1$. 
  Let $m = |x|$ and write $w = xyz$ where $|y| = k$. 
  Recall the processes $(W_t)$ and $(S_t)$ defined on $w$
  and, in addition, define $M = (M_t : t \in \NN), M_t = \min_{0 \leq i \leq t } S_i$ 
  and $X = (X_t : t \in \NN), X_t = S_t - M_t$. 
  By convention, set $M_0 = X_0 = 0$. 
  Thus $X_t$ denotes the height of the walk $S$, at time $t$, 
  with respect to its minimum $M_t$.

  % For any slot $m + s$ of $w$, it is left-Catalan 
  % if and only if the walk $S$, 
  % starting at level $S_m = M_m + X_m$, 
  % descends to level $M_m - 1 = S_m - (X_m + 1)$ 
  % for the first time at slot $m + s$. 
  % In other words, 
  For a fixed value $h = X_m$, the relevant generating function 
  would be $\gfD(Z)^{h}\gfChat$. 
  Hence the final generating function we seek is
  $$
    \gfCtilde(Z) \defeq \sum_{h = 0}^\infty \Pr[X_m = h] \cdot \gfD(Z)^h  \gfChat(Z)
    % \,.
  $$
  whose $t$th coefficient is the probability that 
  $t$ is a Catalan slot in $y$.

  Note that $X = (X_t)$ is an $\epsilon$-downward biased random walk 
  on non-negative integers with a reflective barrier at $-1$. 
  Specifically, 
  for any $h \in \NN, \Pr[X_t = h \Given X_{t-1} = h -1] = p$ and 
  $\Pr[X_t = h - 1 \Given X_{t-1} = h] = \Pr[X_t = 0 \Given X_{t-1} = 0] = q$. 
  In~\cite[Lemma 6.1]{LinearConsistencySODA}, it is proved that 
  the distribution of $X_m$ is stochastically dominated by 
  the distribution of $X_\infty$, written $\mathcal{X}_\infty$ and 
  defined, for $k = 0, 1, 2, \ldots$, as 
  % $
  %     \mathcal{X}_\infty(k) 
  %     = \Pr[X_\infty = k] 
  %     = (1 - \beta) \beta^k
  % $
  % where $\beta \defeq (1-\epsilon)/(1+\epsilon)$. 
  \begin{equation}
    \label{eq:stationary}
      \mathcal{X}_\infty(k) = \Pr[X_\infty = k] 
      \defeq 
        \left(\frac{2\epsilon}{1+\epsilon}\right)\cdot \left(\frac{1-\epsilon}{1 + \epsilon}\right)^k
        = 
      (1 - \beta) \beta^k
  \end{equation}
  where $\beta \defeq (1-\epsilon)/(1+\epsilon)$. 
  Let 
  $$
    \{\mathcal{X}_\infty(k)\} \SeqGF \gfXinf(Z) = 
    % (1 - \beta)/(1 - \beta Z)
    \frac{1 - \beta}{1 - \beta Z}
    \,.
  $$ 
  It follows that $\gfCtilde(Z)$ is dominated by 
  $$
      \sum_{h = 0}^\infty \mathcal{X}_\infty(h) \gfD(Z)^h \gfChat(Z)
    = \gfXinf(\gfD(Z)) \gfChat(Z)
    = \frac{(1 - \beta)\gfChat(Z)}{1 - \beta \gfD(Z)}
    % = (1 - \beta)\gfChat(Z)/(1 - \beta \gfD(Z))
    \,.
  $$
  % \end{align*}

  Let $\star$ denote the quantity above. 
  For it to converge, 
  we need to check that $\gf{D}(Z)$
  should never converge to $1/\beta$.  
  Since the radius of convergence of $\gf{D}(Z)$---which is
  $(1-\epsilon^2)^{-1/2}$---is strictly less than 
  $(1+\epsilon)/(1-\epsilon)$ for $\epsilon > 0$, 
  we conclude that $\star$ converges if
  both $\gf{D}(Z)$ and $\gfChat(Z)$ converge.  The radius of
  convergence of $\star$ would be the smaller of the radii
  of convergence of $\gfD(Z)$ and $\gfChat(Z)$.  We already
  know from the previous analysis that $\gfChat(Z)$ has the
  smaller radius of convergence of these two; 
  therefore, the bound
  % in~\eqref{eq:prob_catalan_gf} 
  on $\Pr_w[B]$ from the previous case holds for $|x| \geq 0$. 
  % \end{proof}
  \hfill$\qed$

\subsection{Proof of Bound~\ref{bound:two-catalans}}
  % (The full proof is presented in 
  % Section~\ref{sec:two-catalan-estimates}.
  % % ~\cite{MultiHonestFullVersion}.
  % )
  % The random walk of interest is the same as in the previous proof. 
  % However, we are interested in a slightly different stopping time. 
  % Its generating function is  
  % $\gfM(Z) = \frac{\epsilon \gfD(Z)}{1 - (1 - \epsilon) \gfE(Z) }$ 
  % where $\gfE(Z)$, the ``epoch generating function,'' is dominated by 
  % $\gfEhat(Z) = p Z \gfD(Z) + q Z \gfA(Z \gfD(Z) )/\gfA(1)$. 
  % % That is, the epochs in this case can have two shapes: 
  % % either the walk goes up and then comes back down, 
  % % or it goes down, comes back up with certainty (say, after $t$ steps), 
  % % and then descends $t$ levels.
  % It can be easily checked that $\gfM(Z)$ converges as long as 
  % $\gfA(Z), \gfD(Z)$, and $\gfA(Z \gfD(Z))$ converge 
  % and $(1 - \epsilon)\gfEhat(Z) < 1$.
  % Thus the radius of convergence of $\gfM(Z)$ is determined by~\eqref{eq:roc-AZDZ}. 
  % \hfill$\qed$
    % \input{estimates-two}
  % \begin{proof}
  Let $p = (1 - \epsilon)/2$ and $q = 1 - p$; 
  thus $q - p = \epsilon$. 
  Let $B$ denote the event that 
  $w$ does not contain two consecutive Catalan slots in $y$. 
  We would like to bound $\Pr_w[B]$ from above.

  Define the process $W = (W_t : t \in \NN), W_t \in \{\pm 1\}$ as $W_t = 1$ if and only if $w_t = \A$. 
  Let $S = (S_t : t \in \NN), S_t = \sum_{i \leq t} W_i$ be the position of the particle at time $t$. 
  Thus $S$ is a random walk on $\ZZ$ with $\epsilon$ negative (i.e., downward) bias. 
  By convention, set $W_0 = S_0 = 0$. 

  \paragraph{Case 1: $x$ is an empty string.} 
  In this case, we write $w = yz$ so that $|y| = k$. 
  Let $m_t$ denote the probability that 
  $t$ is the first index so that both $t$ and $t+1$ are Catalan slots in $w$, 
  with $m_0 = 0$, and consider the probability generating function 
  $\{m_t\} \SeqGF \gfM(Z) = \sum_{t = 0}^\infty m_t Z^t$. 
  Controlling the decay of the coefficients $m_t$ suffices
  to give a bound on $\Pr[B]$, i.e., 
  the probability that 
  $y$ \emph{does not} contain two consecutive Catalan slots, 
  because this probability is at most 
  $
    1 - \sum_{t =0}^{k-1} m_t 
      = \sum_{t = k}^\infty m_t
      % \,.
  $. 
  % It seems challenging to give a closed-form algebraic expression for
  % the generating function $\gfM$; 
  To this end, we develop a
  closed-form expression for a related probability generating function
  $\gfMhat(Z) = \sum_t \hat{m}_t Z^t$ which stochastically
  dominates $\gfM(Z)$. 
  Recall that this means that for any $k, \sum_{t \geq k} m_k \leq \sum_{t \geq k} \hat{m}_k$. 
  Finally, bound the latter sum  
  by using the analytic properties of $\gfMhat(Z)$.

  % \paragraph{Generating function for the first instance of consecutive Catalan slots.}
  Recall the ``first ascent'' and ``first descent'' 
  generating functions $\gfA(Z)$ and $\gfD(Z)$ 
  from the proof of Bound~\ref{bound:unique-honest-catalan}. 
  We wish to devise the generating function for 
  the first occurrence of a left-Catalan slot 
  immediately followed by a right-Catalan slot. 
  To that end, 
  note that $\gfD(Z)$ is the generating function for 
  the first left-Catalan slot. 
  The generating function for the first right-Catalan slot 
  can be devised as follows. 
  Consider the walk $S$ starting at the origin. 
  With probability $q(1 - p/q) = \epsilon$, the walk will 
  immediately descend a step and never return to the origin. 
  But this means $S_1 \leq S_t, t \geq 2$ and hence 
  the first slot is a right-Catalan slot and we are done. 
  Otherwise, i.e., with probability $1 - \epsilon$, 
  the walk makes a (guaranteed) return to the origin in future. 
  In this case, we will have to restart our search 
  for the next consecutive Catalan slots but, 
  before that, 
  we will have to ensure that we are in a ``safe position.'' 
  In particular, we can safely restart our search if 
  Specifically, if the current position (i.e., level) of the walk is at its historical minimum, 
  we can restart our search by applying $\gfD(Z)$ to find the next left-Catalan slot.
  Thus an ``epoch'' begins with a guaranteed return and 
  ends when the walk descends to a new level for the first time. 
  Let $\gfE(Z)$ be the generating function of an epoch. 
  Thus we can write 
  \begin{align}\label{eq:gfM}
    \gfM(Z) 
    &= \gfD(Z) \cdot \{\epsilon + (1-\epsilon)\gfE(Z)\gfM(Z) \} \nonumber \\
    &= \frac{\epsilon \gfD(Z)}{1 - (1 - \epsilon) \gfE(Z) }
    \,.
  \end{align}

  An epoch can have two shapes. 
  If an epoch starts with an up-step (i.e., an ``up'' shape), 
  it is easy to see that the epoch ends as soon as the walk 
  returns to the origin from above and, importantly, 
  that the walk will (eventually) return to the origin with probability one. 
  However, if the epoch starts with a down-step (i.e., a ``down'' shape), 
  we have to ``remember'' the lowest level $\ell$ touched 
  by the walk in its way to its (sure) ascent to the origin 
  and then descend $\ell$ levels to end the epoch. 
  In particular, we have to ensure that we return to the origin with probability one. 
  
  A generating function of a stopping time of a random walk 
  is ill suited to ``remember'' its historical minimum/maximum. 
  However, it can remember the length of the walk for free. 
  Thus, instead of working directly with $\gfE(Z)$, 
  we work with a generating function $\gfEhat(Z)$ 
  which is identical to $\gfE(Z)$ for the up shape 
  but differs in the down shape. 
  Specifically, in the down shape, 
  the walk represented by $\gfEhat(Z)$ descends as many levels 
  as the number of steps it took to return to the origin. 
  Clearly, $\gfE \DominatedBy \gfEhat$ where 
  \[
      \gfEhat(Z) \triangleq p Z \gfD(Z) + q Z \gfA(Z \gfD(Z) )/\gfA(1)
      \,.
  \] 
  Here, the first term denotes the ``return to origin from above'' shape. 
  An individual term in $\gfA(Z \gfD(Z)) = \sum_t a_t Z^t \gfD(Z)^t$ 
  has the interpretation 
  ``if the first ascent took $t$ steps then follow it by descending $t$ levels.''
  Since $\gfA(Z)$ is not a probability generating function, 
  we have to normalize it by $\gfA(1)$ to denote that 
  the ascent happens with certainty. 
  This implies, 
  \[
      \gfM(Z) 
          \DominatedBy \gfMhat(Z) 
          \triangleq \frac{\epsilon \gfD(Z)}{1 - (1 - \epsilon) \gfEhat(Z) }
  \]

  % \paragraph{Convergence of $\gfM(Z)$.}
  It remains to establish a bound on the radius of convergence of
  $\gfMhat$. 
  A sufficient condition for the convergence of
  $\gfMhat(z)$ for some $z \in \RR$ is 
  that all generating functions appearing in the definition of
  $\gfMhat$ converge at $z$ and 
  that $(1-\epsilon) \gfEhat(Z) \neq 1$. 

  By retracing our footsteps as in the proof of Bound~\ref{bound:unique-honest-catalan}, 
  we can see that $\gfD(z), \gfA(z)$, and $\gfA(z \gfD(z))$ converge 
  when $|z|$ satisfies~\eqref{eq:roc-AZDZ}. 
  Moreover, since $\gfD(Z)$ is a probability generating function, 
  it follows that $\gfEhat(Z)$ is stochastically dominated by 
  $p Z \gfD(Z) + q Z \gfA(Z \gfD(Z) )/\gfA(1) \cdot \gfD(Z)$.
  Therefore, when $\gfEhat(z)$ converges for some $z$, it satisfies 
  \begin{align*}
      \gfEhat(z)
      &\leq pz\gfD(z) + (q/p) (q z\gfD(z))\gfA(z\gfD(z)) \\
      &< 1/2 + (q/p)/2
  \end{align*}
  since $\gfA(1) = p/q, pz\gfD(z) < 1/2$, 
  and $qx\gfA(x) < 1/2$ for any $z, x$ so that $\gfA(x)$ and $\gfD(z)$ converge, respectively. 
  Therefore, $(1-\epsilon)\gfEhat(z) = 2p \gfEhat(z) < p + q = 1$. 
  It follows that 
  $\gfMhat(z)$ converges for
  $|z| < 1 + \epsilon^3/2 + O(\epsilon^4) \leq \exp(\epsilon^3/2 + O(\epsilon^4))$. 
  Recall that if the radius of convergence of
  $\gfMhat$ is $\exp(\delta)$ then 
  $\Pr[B]$ is 
  $O(1) \cdot e^{-\delta k}$. 
  We conclude that
  \begin{align}
    \Pr_w[B] 
      &\leq O(1) \cdot e^{-\epsilon^3(1 + O(\epsilon))k/2} \,.
  \label{eq:prob_two_catalan_gf}
  \end{align}

  \paragraph{Case 2: $x$ is non-empty.}
  This part of the proof is the same as the $|x| \geq 1$ case 
  in the proof of Bound~\ref{bound:unique-honest-catalan}. 
  The only difference is that 
  $\gfChat(Z)$ and $\gfCtilde(Z)$ would be replaced by 
  $\gfMhat(Z)$ and $\gfMtilde(Z)$, respectively, where 
  \[
    \gfMtilde(Z)\DominatedBy \sum_{h = 0}^\infty \mathcal{X}_\infty(h) \gfD(Z)^h \gfMhat(Z)
    \,.
  \]
  We conclude that the bound
  in~\eqref{eq:prob_two_catalan_gf} holds when $|x| \geq 0$. 
  \hfill$\qed$
 % \end{proof}

\section{An optimal online adversary against slot settlement}
\label{sec:recursion}
In this section, we introduce additional elements of the fork framework from~\citet{LinearConsistency}, 
most notably the notions of ``reach'' and ``relative margin.'' 
We show that relative margin is just as expressive 
as the Catalan slots 
for characterizing slot settlement. 
Next, we prove a recurrence relation for relative margin; 
it can be used to compute 
the probability that a given slot is $k$-settled, 
when the symbols of the characteristic string are i.i.d\ .
Finally, we present an adversary who, 
given a characteristic string one symbol at a time, 
optimally attacks the settlement of all slots at once.

\subsection{Closed forks, reach, and extensions}
\begin{definition}[Closed fork]
  A fork $F$ is \emph{closed} if every leaf is honest. For convenience, we say the trivial fork is closed.
\end{definition}
Closed forks have two nice properties that make them especially useful in reasoning about the view of honest parties.
First, 
all honest observers will select a unique longest tine from this fork 
(since all longest tines in a closed fork are honest, 
honest parties are aware of all previous honest blocks, 
they observe the longest chain rule, and they employ the same consistent tie-breaking rule).  
Second, 
% recalling our description of the
% settlement game, 
closed forks intuitively capture decision points for the adversary.
The adversary can potentially show many tines to many honest parties, 
but once an honest node has been placed on top of 
a tine, any adversarial blocks beneath it are part of the public record and are visible to all honest parties. 
For these
reasons, we will often find it easier to reason about closed forks than arbitrary forks. % (without loss of generality).

The next few definitions are the start of a general toolkit for reasoning about an adversary's capacity to build highly diverging paths in forks, based on the underlying characteristic string.
%current state of a fork.

%%%%Reach and margin
\begin{definition}[Gap, reserve, and reach]\label{def:gap-reserve-reach}
For a closed fork $F \vdash w$ and its unique longest tine $\hat{t}$, we define the \emph{gap} of a tine $t$ to be $\gap(t)=\length(\hat{t})-\length(t)$.
Furthermore, we define the \emph{reserve} of $t$, denoted $\reserve(t)$, to be the number of adversarial indices in $w$ that appear after the terminating vertex of $t$. More precisely, if $v$ is the last vertex of $t$, then
\[
  \reserve(t)=|\{\ i \mid w_i=1 \ and \ i > \ell(v)\}|\,.
  \]
These quantities together define the \emph{reach} of a tine: $
\reach(t)=\reserve(t)-\gap(t)$.
\end{definition}

The notion of reach can be intuitively understood as a measure of
the resources available to our adversary in the settlement
game. Reserve tracks the number of slots in which the adversary has
the right to issue new blocks.  When reserve exceeds gap (or
equivalently, when reach is nonnegative), such a tine could be
extended---using a sequence of dishonest blocks---until it is as long
as the longest tine. Such a tine could be offered to an honest player
who would prefer it over, e.g., the current longest tine in the
fork. In contrast, a tine with negative reach is too far behind to be
directly useful to the adversary at that time.

\begin{definition}[Maximum reach]
For a closed fork $F\vdash w$, we define $\rho(F)$ to be the largest reach attained by any tine of $F$, i.e., 
\[
\rho(F)=\underset{t}\max \ \reach(t)\,.
\]
Note that $\rho(F)$ is never negative (as the longest tine of any fork always has reach at least 0). We overload this notation to denote the maximum reach over all forks for a given characteristic string: 
\[
\rho(w)=\underset{\substack{F\vdash w\\\text{$F$ closed}}}\max\big[\underset{t}\max \ \reach(t)\big]\,.
\]
\end{definition}

Reach of vertices is always non-increasing as we move down a tine. 
That is, if $B_1, B_2, \ldots$ are vertices on the same tine in the root-to-leaf order, then 
$\reach(B_i) \leq \reach(B_{i+1})$. 
The inequality is strict if $B_{i + 1}$ is honest. 
Consequently, the reach of an adversarial tine is no more than 
the reach of the last honest vertex in that tine. 
In any fork, the reach of a maximum-length tine is always non-negative. 
Hence, an honest tine with the maximum length over all honest tines 
will always have a non-negative reach. 
Thanks to the monotonicity of the honest-depth function $\hdepth(\cdot)$, 
if there are multiple honest tines 
having the (same) maximum length among all honest tines, 
they must have the same label. 
Therefore, if $h$ is the last honest slot in $w$ and 
$t$ a maximum-length honest tine with label $h$,  
then $\reach(t) \geq 0$.

\paragraph{Non-negative reach, $\A$-heaviness, and viable adversarial extensions.}
Let $w \in \{\h, \H, \A\}^T$,   
$s \in [T + 1]$, and 
$F \Fork w_1 \ldots w_{s - 1}$ an arbitrary fork. 
Let $B \in F$ be an honest vertex 
and $t$ a maximum-length tine in $F$.
Consider the following statements: 
\begin{enumerate}[label=(\alph*)]
  \item \label{fact-reach-part:viable-adv-ext} $B$ has an adversarial extension viable at the onset of slot $s$.
  \item \label{fact-reach-part:nonneg-reach} $\reach_{F}(B)$ is non-negative.
  \item \label{fact-reach-part:Aheavy} The interval $I = [\ell(B) + 1, s - 1]$ is $\Aheavy$. 
  \item \label{fact-reach-part:conservative} $\length(t) = \#_\h(I) + \#_\H(I) + \length(B)$.     
\end{enumerate}

\begin{fact}\label{fact:fork-structure-reach}
    ~\ref{fact-reach-part:viable-adv-ext} $\Longrightarrow$
    \ref{fact-reach-part:nonneg-reach} $\Longrightarrow$
    \ref{fact-reach-part:Aheavy}.
    In addition, if we assume~\ref{fact-reach-part:conservative}, then 
    ~\ref{fact-reach-part:Aheavy} $\Longrightarrow$ 
    ~\ref{fact-reach-part:nonneg-reach} $\Longrightarrow$
    ~\ref{fact-reach-part:viable-adv-ext}.
\end{fact}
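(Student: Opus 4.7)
The plan is to reuse the structure of the proof of Fact~\ref{fact:fork-structure} (which already establishes (a)~$\Rightarrow$~(c), and under~(d) the converse) and insert (b) as the natural waypoint. The guiding observation is that, interpreting $\reserve_F(B) = \#_\A(I)$ and $\gap_F(B) = \length(\hat{t}) - \length(B)$, where $\hat{t}$ is a longest tine of $F$ and $I = [\ell(B)+1, s-1]$, the quantity $\reach_F(B) = \reserve_F(B) - \gap_F(B)$ is essentially the signed difference $\#_\A(I) - (\#_\h(I) + \#_\H(I))$, up to a slack controlled by $\length(\hat{t}) - (\length(B) + \#_\h(I) + \#_\H(I))$, which vanishes precisely under hypothesis~(d).

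For the forward chain (a)~$\Rightarrow$~(b)~$\Rightarrow$~(c): if $B$ admits a viable adversarial extension $t'$ at slot $s$, then $\length(t') \geq \length(\hat{t})$, and since every vertex of $t'$ beyond $B$ must be adversarial with label in $I$, we have $\length(t') - \length(B) \leq \#_\A(I)$; combining these inequalities yields $\gap_F(B) \leq \reserve_F(B)$, i.e.\ $\reach_F(B) \geq 0$. For (b)~$\Rightarrow$~(c), the strict monotonicity of $\hdepth(\cdot)$ (condition~\ref{fork:honest-depth}) forces $\length(\hat{t}) \geq \length(B) + \#_\h(I) + \#_\H(I)$, so $\reach_F(B) \geq 0$ implies $\#_\A(I) \geq \#_\h(I) + \#_\H(I)$; that is, $I$ is $\Aheavy$.

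For the reverse chain under~(d), the hypothesis $\length(t) = \length(B) + \#_\h(I) + \#_\H(I)$ saturates the monotonicity inequality above, collapsing $\gap_F(B)$ to $\#_\h(I) + \#_\H(I)$. Hence (c)~$\Rightarrow$~(b) becomes the direct reading that $\#_\A(I) \geq \#_\h(I) + \#_\H(I)$ is the same as $\reserve_F(B) \geq \gap_F(B)$; and (b)~$\Rightarrow$~(a) follows by grafting $\gap_F(B)$ adversarial vertices onto $B$ whose labels are drawn from any $\gap_F(B) \leq \reserve_F(B) = \#_\A(I)$ adversarial slots of $I$, yielding a tine of length exactly $\length(\hat{t})$ and therefore viable at the onset of slot $s$. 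The only point that requires care, and the main bookkeeping obstacle, is to pin down the convention $\reserve_F(B) = \#_\A(I)$ when $F$ is not necessarily closed, so that the reach calculations line up cleanly with the combinatorial notion of $\Aheavy$; once this convention is adopted the three new implications are essentially algebraic rearrangements of the arguments already given for Fact~\ref{fact:fork-structure}.
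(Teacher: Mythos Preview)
Your proposal is correct and follows essentially the same approach as the paper: both arguments hinge on the identities $\reserve_F(B) = \#_\A(I)$ and $\gap_F(B) \geq \#_\h(I) + \#_\H(I)$ (with equality under~(d)), and then each implication reduces to the same short algebraic step. Your lingering concern about closedness is resolved in the paper by the remark that this fact is intended as a refinement of Fact~\ref{fact:fork-structure} when $F$ is closed, so that the longest tine is honest and viability at slot~$s$ indeed forces length at least $\height(F)$.
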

Fact~\ref{fact:fork-structure-reach} can be seen as 
a refinement of Fact~\ref{fact:fork-structure} 
when $F$ is a closed fork. 
\begin{proof}~
  \begin{description}[font=\normalfont\itshape\space]
    \item[\ref{fact-reach-part:viable-adv-ext} implies~\ref{fact-reach-part:nonneg-reach}.]
      An adversarial extension of $B$ 
      contains only adversarial vertices from $I$. 
      If this extension is viable at the onset of slot $s$, 
      $\#_\A(I)$ must be at least $\gap_{F}(B)$.
      Since $\reserve_{F}(B) = \#_\A(I)$, we have 
      $\reach_{F}(B) = \reserve_{F}(B) - \gap_{F}(B) \geq 0$.  
      
    \item[\ref{fact-reach-part:nonneg-reach} implies~\ref{fact-reach-part:Aheavy}.]
      By assumption, $\reach_{F}(B)  = \reserve_{F}(B) - \gap_{F}(B) \geq 0$.               
      $t$ contains at least $\#_\h(I) + \#_\H(I)$ vertices from 
      the interval $I$; 
      hence, $\gap_{F}(B) \geq \#_\h(I) + \#_\H(I)$. 
      Since $\reserve_{F}(B) = \#_\A(I)$, 
      it follows that $\#_\A(I) \geq \#_\h(I) + \#_\H(I)$. 

    \item[\ref{fact-reach-part:conservative} and~\ref{fact-reach-part:Aheavy} implies~\ref{fact-reach-part:nonneg-reach}.]
      Since $I$ is $\Aheavy$, 
      $\reserve_F(B) = \#_\A(I) \geq \#_\h(I) + \#_\H(I)$. 
      However, since~\ref{fact-reach-part:conservative} holds, 
      the latter quantity equals $\length(t) - \length(B) = \gap_F(B)$. 
      It follows that $\reach_F(B) = \reserve_F(B) - \gap_F(B) \geq 0$. 

    \item[\ref{fact-reach-part:conservative} and~\ref{fact-reach-part:nonneg-reach} implies~\ref{fact-reach-part:viable-adv-ext}.]
      $I$ contains at least $\gap_F(B)$ adversarial slots. 
      We can use these slots augment $B$ 
      into an adversarial tine $t'$ 
      of length at least $\length(t)$. 
      Thus $t'$ will be viable at the onset of slot $s$.
  \end{description}  
\end{proof}

Observe that for any characteristic string $x$, 
one can \emph{extend} (i.e., augment) a closed fork prefix $F \Fork x$ 
into a larger closed fork $F' \Fork x0$ so that $F \ForkPrefix F'$. 
A \emph{conservative extension} is a minimal extension in that 
it consumes the least amount of reserve (cf. Definition~\ref{def:gap-reserve-reach}), 
leaving the remaining reserve to be used in future.
Extensions and, in particular, conservative extensions 
play a critical role in the exposition that follows. 

\begin{definition}[Extensions]\label{def:extension}  
  Let $w \in \{\h, \H, \A\}^*$ be a characteristic string 
  and $F$ a closed fork for $w$. 
  Let $F'$ be a closed fork for $wb, b \in \{\h, \H\}$ 
  so that $F \ForkPrefix F'$. 
  We say that \emph{$F'$ is an extension of $F$} if 
  every honest vertex in $F'$ either belongs to $F$ or has label $|w| + 1$. 
  Let $\sigma \in F'$ be an honest vertex with $\ell(\sigma) = |w| + 1$ 
  and let $s$ be the longest honest prefix of $\sigma$. 
  (Necessarily, $s \in F$.)
  We say that \emph{$\sigma$ is an extension of $s$}. 
  The new tine $\sigma$ is a \emph{conservative extension} if 
  % $\height(F) + 1 = \max_{t \in S} \length(t)$.  
  $\height(F') = \height(F) + 1$.  
\end{definition} 
Since $F'$ is closed, all longest tines in $F'$ are honest and they have label $|w| + 1$.
% , i.e., they belong to $S$ in the above definition.
Let $\hat{t}$ be the unique longest honest tine in $F'$ 
under the consistent longest-chain selection rule 
in Axiom~\ref{axiom:tie-breaking}.
Now consider a tine $\sigma \in S$. 
Since $\sigma$ is honest, 
it follows that 
$\length(\sigma) 
\geq 1 + \height(F) 
= 1 + \length(s) + \gap_F(s)$ 
where $s \in F$ is the longest honst prefix of $\sigma$.
The root-to-leaf path in $F^\prime$ 
that ends at $\sigma$ 
contains at least $\gap_F(s)$ adversarial vertices $u \in F'$ 
so that $\ell(u) \in [\ell(s) + 1, |w|]$ and 
$u \not \in F$. 
If $\sigma$ is a conservative extension, 
the number of such vertices is exactly $\gap_F(s)$. 
% Finally, if $F'$ is a conservative extension, 
% the height of $F'$ is exactly one more than the height of $F$.

\begin{fact}[Extensions and reach]\label{fact:reach-fork-ext}
  Let $b \in \{\h, \H\}$. 
  Let $F \Fork w$ and $F^\prime \Fork wb$ be closed forks so that 
  $F \ForkPrefix F^\prime$ and 
  $F^\prime$ is obtained from $F$ via one or more extensions 
  $\sigma \in F^\prime, \ell(\sigma) = |w| + 1$.
  Then $\reach_{F^\prime}(t) \leq \reach_F(t) - 1$ for every $t \in F$. 
  If all these extensions are conservative, then 
  $\reach_{F^\prime}(t) = \reach_F(t) - 1$ for every $t \in F$. 
  Furthermore, a conservative extension $\sigma$ satisfies 
  $\reach_{F^\prime}(\sigma) = 0$.
\end{fact}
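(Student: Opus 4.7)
The plan is to decompose the reach according to its definition $\reach = \reserve - \gap$ and track how each piece changes when we pass from $F \Fork w$ to $F' \Fork wb$. The two key observations are that appending an honest symbol $b$ cannot create any new adversarial reserve, while an honest extension must, by the honest-depth axiom, increase the overall height by at least one.

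First I will show that for every $t \in F$, $\reserve_{F'}(t) = \reserve_F(t)$. By Definition~\ref{def:gap-reserve-reach} the reserve of a tine depends only on the characteristic string and the label of its terminating vertex. Since $b \in \{\h, \H\}$, passing from $w$ to $wb$ adds exactly one index, $|w| + 1$, which is honest and therefore contributes nothing to the reserve of any tine.

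Next I will bound the change in gap. Any extension $\sigma \in F'$ is honest with $\ell(\sigma) = |w| + 1$. Since $F$ is closed, its longest tine $\hat{t}$ is honest with label at most $|w|$ and $\length(\hat{t}) = \height(F)$. The monotonicity of the honest-depth function $\hdepth(\cdot)$, enforced by condition~\ref{fork:honest-depth}, forces $\length(\sigma) \geq \height(F) + 1$ for every extension $\sigma$. Consequently $\height(F') \geq \height(F) + 1$, and for every $t \in F$
\[
  \gap_{F'}(t) \;=\; \height(F') - \length(t) \;\geq\; \gap_F(t) + 1.
\]
Combining this with the reserve equality yields $\reach_{F'}(t) \leq \reach_F(t) - 1$, proving the first claim. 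If \emph{all} extensions are conservative, Definition~\ref{def:extension} gives $\height(F') = \height(F) + 1$ exactly, so the inequality above is tight and $\reach_{F'}(t) = \reach_F(t) - 1$.

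Finally I will verify the claim about a conservative extension $\sigma$ itself. In this case $\length(\sigma) \geq \height(F) + 1 = \height(F')$, so in fact $\length(\sigma) = \height(F')$ and hence $\gap_{F'}(\sigma) = 0$. As $\ell(\sigma) = |w| + 1 = |wb|$, there are no indices of $wb$ strictly after $\ell(\sigma)$, so $\reserve_{F'}(\sigma) = 0$ and therefore $\reach_{F'}(\sigma) = 0$. No step is genuinely hard; the only point requiring care is making sure that the honest-depth axiom is invoked correctly to guarantee $\length(\sigma) \geq \height(F) + 1$ for every extension, regardless of where in $F$ the honest prefix $s$ of $\sigma$ sits.
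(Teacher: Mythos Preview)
Your proposal is correct and follows essentially the same approach as the paper: decompose $\reach = \reserve - \gap$, note that the honest symbol $b$ leaves reserve unchanged while the new honest vertex forces $\height(F') \geq \height(F)+1$ (with equality precisely in the conservative case), and observe that a conservative extension is a maximum-length tine with no adversarial indices after it. The paper organizes this into two auxiliary claims but the logical content is identical; your version is slightly more explicit in invoking condition~\ref{fork:honest-depth} to justify the height increase.
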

The above fact follows from the claims below.
\begin{claim}\label{claim:nex}
  Let $b \in \{\h, \H\}$. 
  Consider a closed fork $F\vdash w$ and some closed fork $F'\vdash wb$ such that $F\fprefix F'$. 
  If $t \in F$ then 
  $\reach_{F'}(t)\leq \reach_{F}(t) - 1$. 
  The inequality becomes and equality 
  if $F'$ is obtained via 
  conservative extensions from $F$.
\end{claim}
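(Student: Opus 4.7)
The plan is to analyze separately the two summands in $\reach = \reserve - \gap$ and track how each is affected by the passage from $F$ to $F'$. The inequality will follow because the reserve of a fixed tine can only remain constant (since $b \in \{\h,\H\}$ contributes no adversarial symbol), while the height of the fork must strictly increase by at least one (forcing the gap to grow by at least one).

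First I would observe that for any tine $t \in F$, the quantity $\reserve_F(t)$ only depends on the adversarial symbols in $w$ strictly after $\ell(t)$. Since $F'$ is a fork for $wb$ with $b \in \{\h,\H\}$, the new index $|w|+1$ is an honest index, so it contributes nothing to the reserve of $t$. Hence $\reserve_{F'}(t) = \reserve_F(t)$.

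Next I would show that $\height(F') \geq \height(F) + 1$. Since $F$ is closed, its longest tine $\hat{t}_F$ terminates at an honest vertex $v$ whose label is some honest index $i \leq |w|$. Because $F' \Fork wb$ is closed and $b \in \{\h,\H\}$, the fork $F'$ contains at least one honest vertex $u$ with $\ell(u) = |w|+1 > i$. By axiom~\ref{fork:honest-depth} (the monotonicity of the honest-depth function $\hdepth$), the depth of $u$ in $F'$ strictly exceeds the depth of $v$, i.e., at least $\length(\hat{t}_F) + 1$. Any tine of $F'$ through $u$ therefore has length at least $\length(\hat{t}_F) + 1$, giving $\height(F') \geq \height(F) + 1$, so $\gap_{F'}(t) \geq \gap_F(t) + 1$. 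Combining the two bounds yields $\reach_{F'}(t) \leq \reach_F(t) - 1$.

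For the equality case, I would invoke the definition of conservative extension directly: it stipulates $\height(F') = \height(F) + 1$, hence $\gap_{F'}(t) = \gap_F(t) + 1$ exactly, and combined with the reserve equality this gives $\reach_{F'}(t) = \reach_F(t) - 1$. No step here is particularly difficult; the only subtlety worth stating carefully is that axiom~\ref{fork:honest-depth} is what forces the strict increase in height, and that this is the reason $b$ being honest (as opposed to adversarial) matters for the conclusion.
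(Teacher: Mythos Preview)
Your proposal is correct and follows essentially the same approach as the paper: decompose $\reach$ into $\reserve - \gap$, observe that the reserve is unchanged since $b$ is honest, and show the height (hence gap) grows by at least one, with equality for conservative extensions by definition. Your version is in fact slightly more explicit than the paper's, which simply asserts that ``the length of the longest tine increases by at least one'' without spelling out the appeal to~\ref{fork:honest-depth}.
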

\begin{proof}
  We know that $\reach_{F'}(t)=\reserve_{F'}(t)-\gap_{F'}(t).$ From $F$ to $F'$, the length of the longest tine increases by at least one, and the length of $t$ does not change. 
  It follows that $\gap_{F'}(t) \geq \gap_{F}(t) + 1$. 
  The inequality becomes an equality 
  if $F'$ is obtained from $F$ via only conservative extensions. 
  The reserve of $t$ does not change, because there are no new $\A$s in the characteristic string. Therefore, 
  $
    \reach_{F'}(t)
    =\reserve_{F'}(t)-\gap_{F'}(t)
    \leq \reserve_{F}(t)-\gap_{F}(t) - 1
    =\reach_{F}(t) - 1
  $. 
\end{proof}
\begin{claim}\label{claim:ex}
  Conservative extensions have reach zero.
  % Let $b \in \{\h, \H\}$. 
  % Consider closed forks $F\vdash w, F'\vdash wb$ 
  % such that $F\fprefix F'$. 
  % If a tine $t$ of $F'$ is a conservative extension 
  % then $\reach_{F'}(t)=0$.
\end{claim}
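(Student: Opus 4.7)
The plan is to unpack $\reach_{F'}(\sigma) = \reserve_{F'}(\sigma) - \gap_{F'}(\sigma)$ for the conservative extension $\sigma$ and to argue that each of the two summands vanishes. Both arguments are immediate from the definitions, so I expect no real obstacle; the only subtle point is extracting the length of $\sigma$ from the conservativeness hypothesis $\height(F') = \height(F) + 1$.

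For the reserve, I would note that $\sigma$ terminates at the newly added vertex whose label is $|w| + 1$, which is also the length of the underlying characteristic string $wb$. Since reserve counts adversarial indices strictly beyond the terminal vertex's label, and there are none, $\reserve_{F'}(\sigma) = 0$.

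For the gap, I would combine two observations. First, $\sigma$ is honest with label $|w|+1$, so by the honest-depth monotonicity axiom~\ref{fork:honest-depth}, its depth strictly exceeds the depth of every honest vertex of $F$; in particular $\length(\sigma) \geq \height(F) + 1$, since $F$ is closed and hence its height is attained by an honest vertex. Second, $\length(\sigma) \leq \height(F')$ trivially, and the conservativeness hypothesis gives $\height(F') = \height(F) + 1$. These two inequalities squeeze $\length(\sigma) = \height(F')$, whence $\gap_{F'}(\sigma) = \height(F') - \length(\sigma) = 0$.

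Putting the two pieces together yields $\reach_{F'}(\sigma) = 0 - 0 = 0$, as desired. There is no real difficulty here: the claim is essentially a bookkeeping consequence of the definitions of conservative extension, reserve, and gap, together with axiom~\ref{fork:honest-depth}. If anything, the one place to be careful is to explicitly invoke closedness of $F$ when asserting that $\height(F)$ equals the depth of some honest vertex of $F$, so that the honest-depth monotonicity can be applied to conclude $\length(\sigma) \geq \height(F) + 1$.
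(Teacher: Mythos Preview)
Your proposal is correct and follows essentially the same approach as the paper: show $\reserve_{F'}(\sigma) = 0$ because $\ell(\sigma) = |wb|$, and show $\gap_{F'}(\sigma) = 0$ because $\sigma$ is a longest tine in $F'$. The paper's proof is terser---it simply asserts that a conservative extension is a longest tine in $F'$ (relying on the discussion immediately preceding Definition~\ref{def:extension}, where it was noted that any extension $\sigma$ satisfies $\length(\sigma) \geq 1 + \height(F)$)---whereas you spell out the squeeze argument $\height(F)+1 \leq \length(\sigma) \leq \height(F') = \height(F)+1$ explicitly, correctly invoking closedness of $F$ and axiom~\ref{fork:honest-depth} along the way.
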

\begin{proof}
  Let $b \in \{\h, \H\}$. 
  Consider closed forks $F\vdash w, F'\vdash wb$ 
  such that $F\fprefix F'$. 
  Let $t \in F'$ be a conservative extension. 
  This means $t$ is honest, $\ell(t) = |w| + 1$, 
  and 
  $t$ is a longest tine in $F'$. 
  The last statement implies $\gap_{F'}(t)=0$. 
  Since $\reserve_{F'}(t)=0$, it follows that 
  $\reach_{F'}(t)=\reserve_{F'}(t)-\gap_{F'}(t) = 0$. 
\end{proof}

\subsection{Relative margin}
\begin{definition}[The $\sim_x$ relations]
  For two tines $t_1$ and $t_2$ of a fork $F$, we write $t_1 \sim t_2$
  when $t_1$ and $t_2$ share an edge; otherwise we write
  $t_1 \nsim t_2$. We generalize this equivalence relation to reflect
  whether tines share an edge over a particular suffix of $w$: for
  $w = xy$ we define $t_1 \sim_x t_2$ if $t_1$ and $t_2$ share an edge
  that terminates at some node labeled with an index in $y$;
  otherwise, we write $t_1 \nsim_x t_2$ (observe that in this case the
  paths share no vertex labeled by a slot associated with $y$).  We
  sometimes call such pairs of tines \emph{disjoint} (or, if
  $t_1 \nsim_x t_2$ for a string $w = xy$, \emph{disjoint over
    $y$}). Note that $\sim$ and $\sim_\varepsilon$ are the same
  relation.
\end{definition}

\begin{definition}[Margin]\label{def:margin}
The \emph{margin} of a fork $F\vdash w$, denoted $\mu(F)$, is defined as 
\begin{equation}\label{eq:margin-absolute}
\mu(F)=\underset{t_1\nsim t_2}\max \bigl(\min\{\reach(t_1),\reach(t_2)\}\bigr)\,,
\end{equation}
where this maximum is extended over all pairs of disjoint tines of
$F$; thus margin reflects the ``second best'' reach obtained over all
disjoint tines. In order to study splits in the chain over particular portions of a
string, we generalize this to define a ``relative'' notion of margin:
If $w = xy$ for two strings $x$ and $y$ and, as above, $F \vdash w$,
we define
\[
  \mu_x(F)=\underset{t_1\nsim_x t_2}\max \bigl(\min\{\reach(t_1),\reach(t_2)\}\bigr)\,.
\]
Note that $\mu_\varepsilon(F) = \mu(F)$.

For convenience, we once again overload this notation to denote the
margin of a string. $\mu(w)$ refers to the maximum value of $\mu(F)$
over all possible closed forks $F$ for a characteristic string $w$:
\[
\mu(w)=\underset{\substack{F\vdash w,\\ \text{$F$ closed}}}\max \, \mu(F)\,.
\]
Likewise, if $w = xy$ for two strings $x$ and $y$ we define
\[
\mu_x(y)=\underset{\substack{F\vdash w,\\ \text{$F$ closed}}} \max \, \mu_x(F)\,.
\]
%(Cf.~\cite{KRDO17}, which defined and studied the ``absolute'' version
%$\mu(\cdot)$ of this quantity of~\eqref{eq:margin-absolute}.)
\end{definition}
Note that, at least informally, 
disjoint tines with large reach are of natural
interest to an adversary who wants to build an $x$-balanced fork, 
since such a fork contains two (partially disjoint) long tines.
It is easy to see that 
if $w = xx'y$ and 
$\mu_{xx'}(y)$ is negative then $\mu_x(x'y)$ is negative as well.

% \begin{fact}\label{fact:neg-margin}
%   Let $w$ be a characteristic string 
%   and let $w = xx'y$ be an arbitrary decomposition. 
%   If $\mu_{xx'}(y) < 0$ then $\mu_x(x'y) < 0$.
% \end{fact}

The theorem below shows how to recursively compute $\mu_x(y)$ 
for a given decomposition $w = xy$.

% \subsection{The relative margin recurrence}
\begin{theorem}\label{thm:relative-margin}
  Let $\varepsilon$ be the empty string 
  and $b \in \{\h, \H\}$. 
  Then $\rho(\varepsilon) = 0$ 
  and, for all nonempty strings $w \in \{\h, \H, \A\}^*$ 
  \begin{equation}
    \rho(w\A) = \rho(w) + 1\,, \qquad\text{and}\qquad
    \rho(wb) = \begin{cases} 0 & \text{if $\rho(w) = 0$,}\\
      \rho(w)-1 & \text{otherwise.}
    \end{cases}
    \label{eq:rho-recursive}
  \end{equation}

  Furthermore, for any strings $x, y \in\{\h, \H, \A\}\text{\emph{*}}$,
  $\mu_x(\varepsilon) =\rho(x)$, 
  \begin{equation}
    \mu_x(y\A)= \mu_x(y)+1\,,\qquad\text{and}\qquad
    \mu_x(yb)= \begin{cases}
      0 & \text{if $\rho(xy) > \mu_x(y)=0$}\,, \\
      0 & \text{if $\rho(xy) = \mu_x(y) = 0$ and $b = \H$}\,, \\
      \mu_x(y)-1 & \text{otherwise.}
    \end{cases}
    \label{eq:mu-relative-recursive}
  \end{equation}

  % Additionally, for every $w \in \{\h, \H, \A\}^*$ 
  % there exists a canonical fork for $w$. 
\end{theorem}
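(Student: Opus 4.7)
The plan is to establish both recurrences by induction on string length, proving in each case a matching upper bound (no closed fork can exceed the claimed value) and a lower bound (an explicit fork construction attains it). The two recurrences will be proved separately, with the $\rho$ recurrence coming first since $\mu_x$ depends on $\rho$ in the honest-extension case.

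For the $\rho$ recurrences, the base case $\rho(\varepsilon)=0$ is immediate from the trivial fork. For $\rho(w\A)$, I would observe that every closed fork $F \vdash w$ is also a closed fork for $w\A$ (no vertices need be added, and all leaves remain honest), and that every tine's reserve increases by exactly $1$ while its gap is unchanged; conversely, any closed fork $F' \vdash w\A$ can contain no vertex labeled $|w|+1$ (such a vertex would be adversarial and would have to be an internal node with an honest descendant labeled later than $|w|+1$, of which there are none), so $F'$ is a closed fork for $w$, giving the matching upper bound. For $\rho(wb)$ with $b\in\{\h,\H\}$, any closed extension must append a conservative extension $\sigma$: by Axiom~\ref{fork:honest-depth} the new honest vertex has depth exceeding that of every previous honest vertex, which in a closed fork equals $\height(F)$. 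Claim~\ref{claim:nex} then gives $\reach_{F'}(t) \leq \reach_F(t)-1$ for every prior tine $t$, and Claim~\ref{claim:ex} gives $\reach_{F'}(\sigma)=0$; taking the max yields the upper bound $\max(0,\rho(w)-1)$, while placing $\sigma$ atop the optimal tine yields the matching lower bound.

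For the $\mu_x$ recurrence, the base $\mu_x(\varepsilon)=\rho(x)$ follows by noting that in any closed fork for $x$ the longest tine has reach $\geq 0$ and is disjoint from itself trivially (or more carefully, disjoint from the root tine over $\varepsilon$, since $\sim_x$ with empty suffix is the empty relation). The adversarial case $\mu_x(y\A) = \mu_x(y)+1$ mirrors the $\rho$ argument: the fork need not change, and reserves of both tines in any disjoint pair increase by $1$ simultaneously, so the min-of-reaches grows by exactly $1$. The heart of the proof is the honest case $\mu_x(yb)$, which I would split into three sub-cases exactly matching~\eqref{eq:mu-relative-recursive}. In the generic ``otherwise'' case, Claim~\ref{claim:nex} forces each tine in any disjoint pair to lose $\geq 1$ reach, and one checks the matching lower bound by adding a conservative extension along one of the two optimal disjoint tines so that the other retains reach $\mu_x(y)-1$. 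For $\rho(xy) > \mu_x(y) = 0$, a high-reach tine $t^\star$ exists and, by adding a conservative extension $\sigma$ on a longest tine that is $y$-disjoint from $t^\star$, one produces a disjoint pair $\{t^\star,\sigma\}$ with $\reach$ values $\rho(xy)-1\geq 0$ and $0$, attaining $\mu_x(yb)=0$; the upper bound uses that no pair can have min-reach $\geq 1$, since that would require $\mu_x(y)\geq 2$. For $\rho(xy)=\mu_x(y)=0$ and $b=\H$, the key is that $\H$ lets the adversary insert two distinct honest vertices at slot $|xy|+1$, each a conservative extension, placed atop two $y$-disjoint longest tines, both of reach $0$; when $b=\h$ this maneuver is unavailable and the value drops to $-1$.

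The main obstacle will be the case analysis for $\mu_x(yb)$, specifically the lower-bound constructions: one must verify that the required pair of $y$-disjoint tines with the claimed reaches can actually be realized in a single closed fork for $xyb$, which requires showing that the disjointness structure already present in an optimal fork for $xy$ can be preserved (and in the $b=\H$ subcase, augmented by a second conservative extension) while simultaneously respecting the closedness condition and the strictly-increasing honest-depth axiom~\ref{fork:honest-depth}. Matching the constructions against the upper bound from Claim~\ref{claim:nex} requires separately arguing that after the extension no new pair of $y$-disjoint tines could have emerged with larger min-reach, which is straightforward since the only new vertices lie at slot $|xyb|$ and any tine through them is an extension of an $F$-tine.
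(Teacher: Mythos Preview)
Your approach is essentially the paper's: induction on length, with matching upper and lower bounds for each symbol, and the same three-way split of the honest case for $\mu_x$. The upper bounds via Claim~\ref{claim:nex}/\ref{claim:ex} and the lower bounds via conservative extensions are exactly what the paper does in Propositions~\ref{prop:mu-upperbound} and~\ref{prop:mu-lowerbound}.

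There is one subtlety you have flagged as ``the main obstacle'' but not fully resolved. In the lower-bound construction for the case $\rho(xy)>\mu_x(y)=0$, you write ``a high-reach tine $t^\star$ exists'' and then extend a $y$-disjoint zero-reach tine. But $\rho(xy)$ and $\mu_x(y)$ are maxima over \emph{all} closed forks, not necessarily achieved by the same one; a fork that witnesses $\mu_x(y)=0$ need not contain any tine of reach $\rho(xy)$. The induction hypothesis you actually need is the existence of a single closed fork $F\vdash xy$ with $\rho(F)=\rho(xy)$ \emph{and} $\mu_x(F)=\mu_x(y)$ simultaneously. The paper secures this coupled invariant by building the fork via the explicit strategy $\Adversary^*$ (Figure~\ref{fig:adv-opt}) and proving it is canonical (Theorem~\ref{thm:opt-adversary-canonical}); it also remarks that, for a fixed $x$, a simpler prefix-aware adversary suffices provided one carries the coupled invariant through the induction. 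Your sketch is compatible with the latter route, but you should make the strengthened hypothesis explicit and verify it is preserved by each of your extension steps.
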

The proof of Theorem~\ref{thm:relative-margin} is given in Section~\ref{sec:margin-proof}. 
Let $w$ be a characteristic string and 
let $m, k \in \NN$ so that $m + k \leq |w|$. 
Let $x \Prefix w, |x| = m-1$ and $xy \PrefixEq w, |xy| \geq m + k$.
If the symbols in $w$ are independent and identically distributed, 
the recursive formulation in~\eqref{eq:mu-relative-recursive} implies an algorithm --- which takes time and space $O(|w|^3)$ --- 
for computing the probability that $\mu_x(y) \geq 0$. 
But this is exactly the probability that slot $m$ is not $k$-settled, 
according to~\eqref{eq:settlement-uvp} 
and Lemma~\ref{lemma:uvp-margin} below. 
In Section~\ref{sec:exact-prob}, 
we describe this algorithm in more detail and 
compile some explicit values for this probability.

\subsection{Balanced forks, settlement violations, and relative margin}
%Informally, $t_1\sim_x t_2$ indicates that when we restrict our view of history to only blocks \emph{after}
%the prefix $x$, $t_1$ and $t_2$ share an edge (and thus agree on at least one block after that point).

A natural structure we can use to reason about settlement times 
(see Definition~\ref{def:settlement}) 
is that of a ``balanced fork.''

\begin{definition}[Balanced fork]\label{def:balanced-fork} A
  fork $F$ is \emph{balanced} if it contains a pair of tines $t_1$ and
  $t_2$ for which $t_1\nsim t_2$ and
  $\length(t_1)=\length(t_2)=\height(F)$. We define a relative notion
  of balance as follows: a fork $F \vdash xy$ is \emph{$x$-balanced}
  if it contains a pair of tines $t_1$ and $t_2$ for which
  $t_1 \not\sim_x t_2$ and $\length(t_1) = \length(t_2) = \height(F)$.
\end{definition}

Thus, balanced forks contain two completely disjoint, maximum-length
tines, while $x$-balanced forks contain two maximum-length tines that
may share edges in $x$ but must be disjoint over the rest of the
string. 
See Figures~\ref{fig:balanced} and~\ref{fig:x-balanced} 
for examples of balanced forks.
\begin{figure}[ht]
  \centering
  \begin{minipage}{0.45\textwidth}\centering

    \begin{tikzpicture}[>=stealth', auto, semithick,
      honest/.style={circle,draw=black,thick,text=black,double,font=\small},
     malicious/.style={fill=gray!10,circle,draw=black,thick,text=black,font=\small}]
      \node at (0,-2) {$w =$};
    \node at (1,-2) {$\h$}; \node[honest] at (1,-.7) (b1) {$1$};
    \node at (2,-2) {$\A$}; \node[malicious] at (2,.7) (a1) {$2$};
    \node at (3,-2) {$\h$}; \node[honest] at (3,.7) (a2) {$3$};
    \node at (4,-2) {$\A$}; \node[malicious] at (4,-.7) (b2) {$4$};
    \node at (5,-2) {$\h$}; \node[honest] at (5,-.7) (b3) {$5$};
    \node at (6,-2) {$\A$}; \node[malicious] at (6,.7) (a3) {$6$};
      \node[honest] at (-1,0) (base) {$0$};
    \draw[thick,->] (base) to[bend left=10] (a1);
        \draw[thick,->] (a1) -- (a2);
        \draw[thick,->] (a2) -- (a3);
    \draw[thick,->] (base) to[bend right=10] (b1);
        \draw[thick,->] (b1) -- (b2);
        \draw[thick,->] (b2) -- (b3);
      \end{tikzpicture} 
    \caption{A balanced fork}
    \label{fig:balanced}
  % \end{figure}
  \end{minipage}
  \hfill
  \begin{minipage}{0.45\textwidth}\centering

  % \begin{figure}[ht]
  %   \centering
    \begin{tikzpicture}[>=stealth', auto, semithick,
      honest/.style={circle,draw=black,thick,text=black,double,font=\small},
     malicious/.style={fill=gray!10,circle,draw=black,thick,text=black,font=\small}]
      \node at (0,-2) {$w =$};
    \node at (1,-2) {$\h$}; \node[honest] at (1,0) (ab1) {$1$};
    \node at (2,-2) {$\h$}; \node[honest] at (2,0) (ab2) {$2$};
    \node at (3,-2) {$\h$}; \node[honest] at (3,.7) (a3) {$3$};
    \node at (4,-2) {$\A$}; \node[malicious] at (4,-.7) (b3) {$4$};
    \node at (5,-2) {$\h$}; \node[honest] at (5,-.7) (b4) {$5$};
    \node at (6,-2) {$\A$}; \node[malicious] at (6,.7) (a4) {$6$};
      \node[honest] at (-1,0) (base) {$0$};
    \draw[thick,->] (base) -- (ab1);
    \draw[thick,->] (ab1) -- (ab2);
    \draw[thick,->] (ab2) to[bend left=10] (a3);
      \draw[thick,->] (a3) -- (a4);
    \draw[thick,->] (ab2) to[bend right=10] (b3);
      \draw[thick,->] (b3) -- (b4);
      \end{tikzpicture} 
    \caption{An $x$-balanced fork, where $x=\h\h$}
    \label{fig:x-balanced}

  \end{minipage}
\end{figure}

% \paragraph{Balanced forks and settlement time.}
A fundamental question arising in typical blockchain settings is how
to determine \emph{settlement time}, the delay after which the
contents of a particular block of a blockchain can be considered
stable. The existence of a balanced fork is a precise indicator for
``settlement violations'' in this sense. Specifically, consider a
characteristic string $xy$ and a transaction appearing in a block
associated with the first slot of $y$ (that is, slot $|x| + 1$). One
clear violation of settlement at this point of the execution is the
existence of two chains---each of maximum length---which diverge
\emph{prior to $y$}; in particular, this indicates that there is an
$x$-balanced fork $F$ for $xy$. Let us record this observation below.\footnote{
  A balanced fork in~\cite{LinearConsistency} 
  had the property that 
  at least one maximum-length tine was adversarial. 
  But this is not true in our setting since we allow multiply honest slots.
}

\begin{observation}\label{obs:settlement-balanced-fork}
  Let $s, k \in \NN$ be given and 
  let $w$ be a characteristic string. 
  Slot $s$ is not $k$-settled for the characteristic string $w$ 
  if 
  there exist a decomposition $w = xyz$, 
  where $|x| = s - 1$ and $|y| \geq k+1$, 
  and an $x$-balanced fork for $xy$. 
\end{observation}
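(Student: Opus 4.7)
The strategy is to take the $x$-balanced fork $F$ itself as the witness for the failure of $k$-settlement. Setting $\hat{w} = xy$, we have $|\hat{w}| = |x| + |y| \geq (s-1) + (k+1) = s+k$, which meets the length requirement in Definition~\ref{def:settlement}. I would then use the two disjoint maximum-length tines $t_1, t_2$ guaranteed by $x$-balance (Definition~\ref{def:balanced-fork}) as the two witnessing tines $\Chain_1, \Chain_2$.

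The core step is to translate $t_1 \not\sim_x t_2$ into the divergence condition of Definition~\ref{def:settlement}. By the definition of $\sim_x$, no edge shared by $t_1$ and $t_2$ terminates at a vertex whose label lies in $y$; equivalently, every vertex common to $t_1$ and $t_2$ carries a label from $x$, hence at most $s-1$. Letting $v$ be the last common vertex of the two root-to-leaf paths, this gives $\ell(v) \leq s-1$, and after $v$ the paths share no vertex at all (the underlying structure is a tree).

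Next comes the case analysis verifying that $t_1$ and $t_2$ ``diverge prior to $s$.'' If both $t_1$ and $t_2$ contain a vertex labeled $s$, these vertices lie strictly below $v$ on their respective tines (they have label $s > \ell(v)$), and since the two tines share no vertex below $v$, the two slot-$s$ vertices must be distinct, hitting the first disjunct in Definition~\ref{def:settlement}. If exactly one of $t_1, t_2$ carries a vertex labeled $s$, the second disjunct applies directly. Combined with $\length(t_1) = \length(t_2) = \height(F)$, this shows that $F$ witnesses the failure of $k$-settlement at slot $s$ in $w$.

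The main (and only) obstacle is the degenerate situation in which neither $t_1$ nor $t_2$ carries a vertex labeled $s$; here the two tines agree that slot $s$ has no associated block, and the two disjuncts of Definition~\ref{def:settlement} do not literally cover the case. I would handle this by exploiting the freedom in the existential quantifier over $\hat{w}$: one can augment $F$ to a fork for a strictly longer prefix of $w$, using adversarial augmentation steps as in the settlement game, to place a vertex at slot $s$ beneath the divergence point of one of the two tines (or, when $w_s = \h$ or $\H$, simply note that the fork axioms~\ref{fork:unique-honest} force such a vertex to exist in the fork and the adversary can route a tine through it). Since this observation only plays an indicative role---the substantive settlement bounds in the paper are obtained via the Catalan-slot/UVP machinery of Sections~\ref{sec:definitions}--\ref{sec:estimates} and the relative-margin characterization of Section~\ref{sec:recursion}---I would treat this edge case briefly and leave the principal emphasis on Cases A and B above.
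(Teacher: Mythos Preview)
The paper does not prove this observation formally; it is stated as a direct consequence of the preceding informal discussion. Your main argument---taking $\hat{w}=xy$, checking $|\hat{w}|\geq s+k$, and using the two $y$-disjoint maximum-length tines $t_1,t_2$ from Definition~\ref{def:balanced-fork} as $\Chain_1,\Chain_2$---is exactly the intended reading, and your derivation that the last common vertex has label at most $s-1$ is correct. So in the main cases your proof is fine and actually supplies more detail than the paper does.

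You are also right that there is a degenerate case, and it is worth naming precisely: the ``i.e.'' clause in Definition~\ref{def:settlement} enumerates only two alternatives (different slot-$s$ vertices, or exactly one slot-$s$ vertex), and literally omits the situation where \emph{neither} maximum-length tine carries a slot-$s$ vertex even though $\ell(t_1\cap t_2)<s$. That is a small imprecision in Definition~\ref{def:settlement}, not in your argument. However, your proposed repairs do not quite work. Augmenting $F$ to a fork for a longer prefix of $w$ does not help you manufacture a slot-$s$ vertex, because slot $s$ already lies inside $xy$; the settlement game's augmentation step only adds structure at labels $\leq |\hat w|$, and inserting a new slot-$s$ vertex \emph{on} one of $t_1,t_2$ would increase that tine's length and destroy the equal-length condition. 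Likewise, when $w_s\in\{\h,\H\}$, axiom~\ref{fork:unique-honest} guarantees a slot-$s$ vertex exists \emph{somewhere} in $F$, but ``routing a tine through it'' while preserving both maximality of length and $y$-disjointness is not automatic.

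The clean resolution is simply to read ``diverge prior to $s$'' in Definition~\ref{def:settlement} as $\ell(\Chain_1\cap\Chain_2)<s$ (which is visibly the intended meaning, and is how the paper uses the phrase in the discussion preceding the observation and again in Definition~\ref{def:settlement-delta}). Under that reading the observation is immediate from $t_1\not\sim_x t_2$ with $|x|=s-1$, and no case analysis is needed. Your instinct that this edge case is not load-bearing is correct: the actual settlement bounds go through Theorems~\ref{thm:unique-honest} and~\ref{thm:multiple-honest} and Lemma~\ref{lemma:uvp-margin}, not through this observation.
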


In particular, to provide a rigorous $k$-slot settlement
guarantee---which is to say that the transaction can be considered
settled once $k$ slots have gone by---it suffices to show that with
overwhelming probability in choice of the characteristic string
determined by the leader election process (of a full execution of the
protocol), no such forks are possible. Specifically, if the protocol
runs for a total of $T$ time steps yielding the characteristics string
$w = xy$ (where $w \in \{0,1\}^T$ and the transaction of interest
appears in slot $|x| + 1$ as above) then it suffices to ensure that
there is no $x$-balanced fork for $x\hat{y}$, where $\hat{y}$ is an
arbitrary prefix of $y$ of length at least $k + 1$. 
% see Corollary~\ref{cor:main} in Section~\ref{sec:estimates}.  
Note that
for systems adopting the longest chain rule, this condition must
necessarily involve the \emph{entire future dynamics} of the
blockchain. We remark that our analysis below will in fact let us take
$T = \infty$.

% \paragraph{An alternative proof of Lemma~\ref{lemma:uvp-margin} via balanced forks.}
Let $w$ be a characteristic string. 
Writing $w = xy$, 
consider any tine-pair $(t_x, t_\rho)$ in a fork $F \Fork w$ so that 
$\reach_F(t_\rho) = \rho(F)$ and $t_x$ is $y$-disjoint with $t_\rho$.
Observe that if $\mu_x(y) < 0$ then $\reach_F(t_x) < 0$. 

% In a similar Fact 1 in~\cite{LinearConsistencySODA}, 
\begin{fact}\label{fact:margin-balance}
  Let $xy \in \{\h, \H, \A\}^*$ be a characteristic string. 
  There is no 
  $x$-balanced fork for $xy$ 
  if and only if 
  $\mu_x(y) < 0$.
\end{fact}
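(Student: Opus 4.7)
The plan is to prove the equivalence by establishing each direction separately, viewing the statement as ``$\mu_x(y) \geq 0$ if and only if an $x$-balanced fork for $xy$ exists.''

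For the direction $\mu_x(y) \geq 0 \Rightarrow$ $x$-balanced fork, the approach is to fix a closed fork $F \Fork xy$ achieving the maximum in $\mu_x(y)$, together with $y$-disjoint tines $t_1, t_2 \in F$ of non-negative reach. For each $i$, since $\reserve_F(t_i) \geq \gap_F(t_i)$, one can select $\gap_F(t_i)$ adversarial positions of $w = xy$ occurring after $\ell(t_i)$ and append fresh adversarial vertices labeled by those positions (in increasing order) to the terminus of $t_i$; this produces an extended tine $t_i^+$ of length exactly $\height(F)$. Because adversarial indices are unconstrained by axiom~F3, the two extensions can be carried out independently without conflict. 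The resulting graph $F^+ \Fork xy$ introduces no honest vertices, so the honest-depth function is unchanged and $\height(F^+) = \height(F)$; the tines $t_1^+, t_2^+$ are $y$-disjoint and both attain the maximum length, so $F^+$ is $x$-balanced.

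For the converse, the plan is to begin with an $x$-balanced fork $F'$ for $xy$ whose $y$-disjoint maximum-length tines are $t_1', t_2'$, and to construct a closed fork $F \Fork xy$ by iteratively deleting adversarial leaves from $F'$. No honest vertex is ever removed, so for each $i$ the longest honest prefix $t_i$ of $t_i'$ remains a tine of $F$, and the pair $(t_1, t_2)$ inherits $y$-disjointness from $(t_1', t_2')$. The adversarial tail of $t_i'$ (from $t_i$'s terminus down to $t_i'$'s leaf) consists of $\length(t_i') - \length(t_i)$ vertices whose labels, by axiom~F2, are distinct adversarial slots strictly greater than $\ell(t_i)$; hence $\reserve_F(t_i) \geq \length(t_i') - \length(t_i)$. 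Because $F$ is a subgraph of $F'$, we have $\height(F) \leq \height(F')$, so $\gap_F(t_i) \leq \height(F') - \length(t_i)$. Subtracting yields $\reach_F(t_i) \geq \length(t_i') - \height(F') = 0$ for both $i$, witnessing $\mu_x(F) \geq 0$ and hence $\mu_x(y) \geq 0$.

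The main obstacle will be the structural bookkeeping in each direction: in the forward direction, confirming that the two extensions can really be carried out simultaneously while preserving axioms~F1--F4 and leaving the honest-depth structure intact; in the reverse direction, confirming that iterative pruning yields a valid closed fork (especially that F3 survives) and that the adversarial tail of each $t_i'$ genuinely supplies the asserted lower bound on $\reserve_F(t_i)$. These steps follow from unpacking Definition~\ref{def:gap-reserve-reach}, but care is needed because the tempting shortcut---directly equating $\reach$ of $t_i$ in the closed fork with properties of $t_i'$ in $F'$---conflates two different forks and is the place where a careless argument would fail.
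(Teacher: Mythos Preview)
Your proposal is correct and, for the forward direction, essentially identical to the paper's argument: both extend the two $y$-disjoint non-negative-reach tines with fresh adversarial vertices to produce an $x$-balanced fork.

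For the reverse direction you take a genuinely more careful route than the paper's sketch. The paper simply observes that in an $x$-balanced fork $F$ the two maximum-length tines have gap zero, hence $\reach \geq 0$, and concludes $\mu_x(y) \geq \mu_x(F) \geq 0$. This glosses over the fact that $\mu_x(y)$ is defined as a maximum over \emph{closed} forks, while the balanced $F$ need not be closed (and gap/reach are formally defined only for closed forks). Your pruning construction---iteratively removing adversarial leaves to obtain a closed fork $F$, then tracking the longest honest prefixes $t_i$ of the original witness tines and bounding their reach via the adversarial tail---supplies exactly the missing bookkeeping. The cost is a slightly longer argument; the benefit is that you never compute reach in a non-closed fork, so the inequality $\mu_x(y) \geq \mu_x(F)$ is legitimate by definition.
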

\begin{proof}[Proof sketch.]  
  If a fork $F \Fork xy$ 
  satisfies $\mu_x(F) \geq 0$, 
  it contains two $y$-disjoint tines $t_1, t_2$, 
  each with a non-negative reach, 
  so that $\min(\reach(t_1), \reach(t_2)) = \mu_x(F)$. 
  As $\reserve(t_i) \geq \gap(t_i)$ for $i \in \{1,2\}$, 
  we can extend these tines using only new adversarial vertices 
  so that both these extensions 
  have the maximum length 
  in the augmented fork. 
  Thus the augmented fork is $x$-balanced.

  On the other hand, if a fork $F \Fork xy$ is $x$-balanced, 
  there must be two $y$-disjoint 
  maximum-length tines $t_1, t_2 \in F$. 
  As the gap of a maximum-length tine is zero, 
  we must have $\reach(t_i) = \reserve(t_i) \geq 0$ 
  for $i \in \{1, 2\}$. 
  It follows that $\mu_x(y) \geq \mu_x(F) \geq \min_i \reach(t_i) \geq 0$.
\end{proof}

% Let us focus on proving Lemma~\ref{lemma:uvp-margin} 
% using the above fact. 
% Let $x\Prefix w$. 
% It is easy to see that 
% if the slot $s = |x| + 1$ has the UVP 
% then there can be no $x$-balanced fork for $xx'$ 
% where $xx' \PrefixEq w$;
% consequently, $\mu_x(x') < 0$. 
% On the other hand, if $s$ does not have the UVP 
% then there must be a prefix $xx' \PrefixEq w$ 
% and a fork $F \Fork xx'$ 
% containing two maximum-length (and hence viable) 
% tines $t_1, t_2$ 
% diverging prior to slot $s$. 
% Hence $F$ is $x$-balanced.
% \hfill\qed

\subsection{Relative margin 
to characterize 
the UVP}
Let $w$ be a characteristic string. 
Recall that in Theorem~\ref{thm:unique-honest}, 
we showed that whether a slot has the UVP in $w$ --- a 
structural property of the forks for $w$ --- is 
characterized by the ``Catalan-ness'' of the said slot. 
Below, we show that relative margin 
has the same expressive power 
as the Catalan slots 
in terms of characterizing the UVP.

% \begin{definition}[Margin-critical slot]\label{def:margin-critical-slot}
%     Let $w \in \{\h,\H,\A\}^T$ be a characteristic string, 
%     $s \in [T]$ be a slot in $w$, 
%     and $x$ be a prefix of $w$ so that $|x| = s - 1$. 
%     Slot $s$ is called \emph{margin-critical} 
%     if, 
%     for all decompositions 
%     $w = xyz$ so that $|y| \geq 1$ and $|z| \geq 0$, 
%     we have $\mu_{x}(y) < 0$.
%     % for all decompositions $w = xyz$ so that $|x| < s, |xy| \geq s$.    
% \end{definition}
% Since $\mu_x(\varepsilon) \geq 0$ and $|y| \geq 1$ in the above definition, 
% it follows that a margin-critical slot (i.e., the first slot in $y$) must be honest.

% Below, we show that a margin-critical honest slot has the bottleneck property as well. 
\begin{lemma}\label{lemma:uvp-margin}
  Let $T \in \NN, w \in \{\h, \H, \A\}^T$, and 
  $s \in [T]$ so that $w_s = \h$. 
  Let $x = w_1 \ldots w_{s-1}$.
  Slot $s$ has the UVP in $w$ 
  if and only if 
  for every prefix $xy \PrefixEq w$, 
  $\mu_x(y) < 0$. 
  % A a uniquely honest slot is margin-critical 
  % if and only if 
  % it has the UVP. 
\end{lemma}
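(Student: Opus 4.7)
The plan is to argue both directions by contrapositive: the forward direction via Fact~\ref{fact:margin-balance} (equating $\mu_x(y) \geq 0$ with the existence of an $x$-balanced fork for $xy$) and the backward direction via reduction to Theorem~\ref{thm:unique-honest} (uniquely honest $s$ has UVP iff $s$ is Catalan).

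For the forward direction, I would assume that $\mu_x(y) \geq 0$ for some prefix $xy \PrefixEq w$ (with $|y| \geq 1$) and exhibit a UVP violation at $s$. Fact~\ref{fact:margin-balance} supplies an $x$-balanced fork $F \vdash xy$ with two $y$-disjoint maximum-length tines $t_1, t_2$. Because $w_s = \h$, any fork for $xy$ has a unique vertex with label $s$, namely $u$; since $s = |x| + 1$ lies in $y$, and $t_1 \nsim_x t_2$ forces the two tines to share no vertex labeled in $y$, at most one of them passes through $u$. Letting $t^\ast$ be the one that avoids $u$ and extending $F$ to an arbitrary fork $F^\ast \vdash w$, the tine $t^\ast$ persists and remains of maximum length among tines with label $\leq |xy|$, so it is viable at the onset of slot $|xy| + 1 \geq s + 1$. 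This violates UVP at $s$.

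For the backward direction, I would assume UVP fails at $s$ and produce a prefix $xy^\prime$ with $\mu_x(y^\prime) \geq 0$. By Theorem~\ref{thm:unique-honest}, UVP failure at the uniquely honest slot $s$ is equivalent to $s$ not being Catalan, so some $\A$-heavy interval $[a, b]$ contains $s$; set $y^\prime = w_s \ldots w_b$. Mirroring the proof of Fact~\ref{fact:almost-cp-implies-catalan}, I would build a closed fork $G \vdash xy^\prime$ in which two branches emerge from the honest vertex (or root) $u^\ast$ at label $a - 1$: one, $\tau$, is the honest spine that passes through $u$ and continues to the deepest honest vertex in $[1, b]$; the other, $t^\prime$, is an adversarial extension of $u^\ast$ formed from the $\A$-symbols in $[a, b]$, padded to match $\tau$'s length. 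Since $a - 1 < s$, the two tines share only vertices in $x$, so $\tau \nsim_x t^\prime$; and since both have length equal to $\height(G)$, both have non-negative reach. Hence $\mu_x(G) \geq 0$, so $\mu_x(y^\prime) \geq 0$, contradicting the hypothesis.

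The main obstacle will be the backward direction: verifying that the closed fork $G$ can simultaneously accommodate a tine $\tau$ through $u$ of maximum length and a $y^\prime$-disjoint tine $t^\prime$ also of maximum length. The key technical ingredient is Fact~\ref{fact:fork-structure-reach}, which converts the $\A$-heaviness of $[a, b]$ into precisely the adversarial budget needed for the non-$u$ branch $t^\prime$ to climb to $\height(G)$. The remaining bookkeeping---closing the fork, respecting honest-depth monotonicity, and checking that $\tau$ and $t^\prime$ do not intersect inside $y^\prime$---is handled exactly as in the constructions already used in the proofs of Theorems~\ref{thm:unique-honest} and~\ref{thm:multiple-honest}.
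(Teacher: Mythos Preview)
Your forward direction (the contrapositive of ``UVP $\Rightarrow$ all margins negative'') is essentially the paper's own argument for that implication: both pass through Fact~\ref{fact:margin-balance} to equate $\mu_x(y) \geq 0$ with the existence of an $x$-balanced fork, and then observe that two $y$-disjoint maximum-length tines cannot both contain the unique vertex at slot $s$. One small imprecision: when you extend $F \vdash xy$ to $F^\ast \vdash w$, you should specify that the extension adds only vertices with labels greater than $|xy|$, so that $t^\ast$ remains viable at the onset of slot $|xy|+1$ in $F^\ast$; ``an arbitrary fork $F^\ast$'' is too loose, since an extension that inserts a deeper honest vertex with label $\leq |xy|$ could spoil viability.

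Your backward direction is correct but takes a genuinely different route from the paper. The paper argues directly: assuming $\mu_x(y) < 0$ for every prefix $xy$ with $|y| \geq 1$, any honest tine $t$ with $\ell(t) \leq s-1$ in any fork for $xy$ is $y$-disjoint from every tine, hence has negative reach, hence by Fact~\ref{fact:fork-structure-reach} admits no viable adversarial extension; it follows that every viable tine at the onset of slot $|xy|+1$ must pass through the unique honest vertex at $s$. You instead invoke Theorem~\ref{thm:unique-honest} to convert a UVP failure into ``$s$ is not Catalan,'' extract an $\A$-heavy interval $[a,b] \ni s$, and build an $x$-balanced fork for $w_1 \ldots w_b$ by the construction of Fact~\ref{fact:almost-cp-implies-catalan}. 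This is valid, with one caveat: your fork $G$ is not actually closed, since the adversarial extension $t'$ is a leaf. Drop the word ``closed'' and conclude $\mu_x(y') \geq 0$ directly from $x$-balancedness via Fact~\ref{fact:margin-balance}; the reach language is unnecessary here. The paper's route is more self-contained and keeps Lemma~\ref{lemma:uvp-margin} logically independent of the Catalan-slot characterization; your route is more modular, reusing Theorem~\ref{thm:unique-honest} as a black box, at the cost of making the lemma downstream of that theorem rather than a parallel, standalone characterization of UVP.
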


% \begin{corollary}\label{coro:catalan-margin-critical}
%   Let $w \in \{\h, \H, \A\}^T$ be a characteristic string 
%   and let $x \PrefixEq$ 
%   A uniquely honest slot in $w$ 
%   is Catalan 
%   if and only if 
%   % it is margin-critical. 
%   for every prefix $xy \PrefixEq w$, 
%   $\mu_x(y) < 0$. 
% \end{corollary}

\begin{proof}~
  % Recall from Theorem~\ref{thm:unique-honest} 
  % that a uniquely honest slot is Catalan in $w$ 
  % if and only if 
  % it has the UVP in $w$. 

  \begin{description}[font=\normalfont\itshape\space]
    \item[The $\Longleftarrow$ direction.]
      Suppose that 
      for every prefix $xy \PrefixEq w$ where $|y| \geq 1$, 
      we have $\mu_x(y) < 0$. 
      We wish to show that $s$ has the UVP in $w$.

      Let $F$ be any fork for $xy$ 
      and let 
      $t \in F, \ell(t) \leq s - 1$ be an honest tine. 
      Since it is disjoint with any tine in $F$ over the suffix $y$, 
      $\reach(t) < 0$ and, by Fact~\ref{fact:fork-structure-reach}, 
      $t$ does not have an adversarial extension $t' \in F, t \Prefix t'$ that is 
      viable at the onset of slot $|xy| + 1$. 
      Therefore, if a tine in $F$ 
      is viable at the onset of slot $|xy| + 1$, 
      it must contain an honest vertex with label at least $s$. 
      However, since an honest vertex builds only on top of a viable tine, 
      it follows that any viable tine must contain 
      the unique honest vertex with label $s$.

    \item[The $\Longrightarrow$ direction.]
      Suppose $s$ has the UVP in $w$.
      Let $k \in [s, T]$ be an integer and 
      write $w = xyz$ with $|xy| = k$. 
      (Note that $y_1 = w_s$.)
      We wish to show that $\mu_x(y) < 0$.

      Let $F$ be any fork for $xy$.
      % all tines in $F$ viable at the onset of 
      % % slot $|xy| + 1$ 
      % slot $|xy| + 1$ 
      % have, as their common prefix, 
      % {\color{blue}the unique honest tine} $\tau, \ell(\tau) = s$. 
      Since slot $s$ belongs to $y$, 
      $F$ cannot contain two tines 
      such that 
      \begin{enumerate*}[label=(\roman*)]
        \item both tines are viable at the onset of slot $|xy| + 1$ 
        and, at the same time, 
        \item disjoint over the length of $y$ 
        since they must contain the unique vertex with label $s$. 
      \end{enumerate*}
      % As any $x$-balanced fork for $xy$ requires two maximally long tines 
      % that are disjoint over $y$, 
      In particular, 
      $F$ cannot be $x$-balanced. 
      As $F$ was an arbitrary fork for $xy$, 
      no fork for $xy$ can be $x$-balanced for our choice of $k$.
      We use Fact~\ref{fact:margin-balance} 
      to conclude that 
      $\mu_x(y)$ must be negative.
      % Since our $k \in [s, T]$ is arbitrary, 
      % the above conclusion applies to 
      % all decompositions $w = xyz$ where $|x| = s - 1$ and $|xy| \geq s$.
      % Therefore, slot $s$ is margin-critical in $w$.

  \end{description}
\end{proof}

\subsection{An optimal online adversary against slot settlement}
\label{sec:opt-adversary}
Let $w$ be a characteristic string. 
For a fixed decomposition $w = xy$, 
there is an adversary\footnote{
  Specifically, 
  let $w' = xyb$ 
  where $b \in \{\h, \H, \A\}$. 
  This strategy recursively builds a closed fork $F \Fork xy$. 
  Then, upon encountering $b$, 
  it augments $F$ 
  by making zero, one, or two conservative extensions, as follows: 
  If $b = \A$, it does nothing. 
  If $b = \h$, it extends a zero-reach tine if possible; 
  otherwise,it extends a maximum-reach tine. 
  If $b = \H$, it extends a pair of tines that 
  witness $\mu_x(F)$. 
  By following the arguments in~\cite{LinearConsistencySODA}, 
  one can show that 
  if $\mu_x(F) = \mu_x(y)$ then 
  $\mu_x(F')$ is 
  at least as large as 
  the right-hand side in~\eqref{eq:mu-relative-recursive}.   
} 
who builds a fork $F \Fork xy$ 
so that the $\mu_x(F)$ is 
at least as large as 
the right-hand side of~\eqref{eq:mu-relative-recursive}. 
However, 
in light of Lemma~\ref{lemma:uvp-margin}, 
if an adversary wants to violate the settlement 
\emph{of all possible slots of $w$ at once}, 
% to show that the lowerbound is satisfied 
% \emph{simultaneously} 
% for all decompositions $w = xy$, 
% one has to do more work.
he needs to produce a fork $F$ for $w$ 
so that $\mu_x(F) \geq 0$ 
for every prefix $x \PrefixEq w$. 
In Figure~\ref{fig:adv-opt}, 
we describe a strategy $\Adversary^*$ 
which does even better: 
it produces a fork $F$ so that $\mu_x(F) = \mu_x(y)$ 
for every prefix $x \PrefixEq w$.

$\Adversary^*$ builds a fork for $w = w_1 \ldots w_{n+1}$ 
in an online fashion, i.e., 
it scans $w$ once, from left to right, 
maintains a fork $F_n$ after scanning 
the first $n$ symbols, 
and augments $F_n$ by conservatively extending 
zero-reach tine(s) using label $n + 1$.
Specifically, if $w_{n+1} = \A$, $\Adversary^*$ does nothing. 
If $w_{n+1} = \h$, it (obviously) makes a single extension. 
Now suppose $w_{n+1} = \H$. 
It still makes a single extension
if either $F_n$ contains exactly one zero-reach tine 
or $F_n$'s reach is positive. 
Otherwise, 
if $\rho(F_n) = 0$ 
and there are at least two zero-reach tines in $F_n$, 
$\Adversary^*$ extends two zero-reach tines 
that diverge earliest in $F_n$.

\begin{figure}[!h]
  \begin{center}
    \fbox{
      \begin{minipage}{.9 \textwidth}
        \begin{center}
          \textbf{The strategy $\Adversary^*$}
        \end{center}
        Let $n$ be a non-negative integer, 
        $w \in \{\h, \H, \A\}^n$, 
        and $w_{n + 1} \in \{\h, \H, \A\}$. 
        If $n = 0$, set $F_0 \Fork \varepsilon$ as 
        the trivial fork comprising a single vertex. 
        Otherwise, 
        let $F_n$ be the closed fork 
        built recursively by $\Adversary^*$ for the string $w$. 
        If $w_{n + 1} = \A$, 
        output $F_n$ (as a fork for $w w_{n+1}$). 
        Otherwise, 
        let $Z$ and $R$ be the set of zero-reach tines 
        and maximum-reach tines in $F_n$, respectively.

        \begin{enumerate}
          \item 
          Identify a set $S$ as follows: 
          If $|Z| = 1$ then set $S = Z$. 
          Otherwise, 
          let $r_1 \in R, z_1 \in Z$ be two tines so that 
          $\ell(r_1 \Intersect z_1) = 
          \min\{ \ell(r \Intersect z) :  r \in R, z \in Z \}$ 
          and set  
          \[
          S = \begin{cases}
            \{z_1\} & \text{
              if $w_{n + 1} = \h$ 
              or $\rho(F_n) \geq 1$ 
              % or for all prefix $x \Prefix w$, $\mu_x(F_n) < 0$}
              % or $|Z| = 1$
              }\,, \\
            \{z_1, r_1\} & \text{otherwise}\,.
          \end{cases}
          \]

          \item
          Conservatively extend 
          each tine in $S$ 
          using label $n + 1$. 
          Let $F_{n + 1} \Fork w w_{n+1}$ 
          be the new closed fork. 
          Output $F_{n+1}$.
        \end{enumerate}
      \end{minipage}
    }
  \end{center}
  \caption{Optimal online adversary $\Adversary^*$}
  \label{fig:adv-opt}
\end{figure}

\begin{definition}[Canonical fork]
  A \emph{canonical fork} for $w \in \{\h, \H, \A\}^*$ 
  is a closed fork $F \Fork w$ so that 
  $\rho(F) = \rho(w)$ 
  and, for all prefixes $x \Prefix w$, $\mu_x(F) = \mu_x(y)$. 
  If $|w| = 0$, $F$ is 
  the unique fork with a single (honest) vertex and no edge. 
  % Let $w_1 \ldots w_T \in \{0,1\}^T$. 
  % For $n = 0, 1, \ldots, T$, a \emph{canonical fork $F_n$ for $w = w_1\ldots w_n$} 
  % is inductively defined as follows. 
  % If $n = 0$ then $F_0$ is the trivial fork for the empty string; 
  % it consists of a single (honest) vertex and no edge. 
  % If $n \geq 1$, the following holds: 
  % $F_n$ is a closed fork so that $F_{n-1} \ForkPrefix F_n$. 
  % $F_n$ contains an honest tine $\tau_\rho$ so that 
  % $\reach(\tau_\rho) = \rho(F_n) = \rho(w)$. 
  % For every decomposition $w = xy, x \Prefix w$, 
  % % $\mu_x(F_n) = \mu_x(y)$ and, in addition, 
  % $F_n$ contains two honest tines $\tau_x, \tau_{\rho x}$ 
  % so that the tine-pair $(\tau_{\rho x}, \tau_x)$ witnesses $\mu_x(F_n) = \mu_x(y)$. 
  % The (possibly non-distinct) designated tines $\tau_\rho, \tau_{\rho x}, \tau_x, x \Prefix w$ 
  % are called the \emph{witness tines}. 
  % For the sake of completeness, define $\tau_w = \tau_{\rho w} = \tau_\rho$.
\end{definition}

It is not obvious whether a canonical fork always exists 
or whether it can be found algorithmically. 
The theorem below gives us the assurance:

\begin{theorem}\label{thm:opt-adversary-canonical}
  Let $w \in \{\h, \H, \A\}^*$. 
  The strategy $\Adversary^*$ in Figure~\ref{fig:adv-opt}
  outputs a canonical fork for $w$.  
\end{theorem}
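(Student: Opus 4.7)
I would prove the theorem by induction on $n = |w|$, maintaining the invariant that after $\Adversary^*$ processes the first $n$ symbols, the resulting fork $F_n$ is canonical for $w_{1\ldots n}$. The base case $n = 0$ is immediate from the trivial fork. For the inductive step, Fact~\ref{fact:reach-fork-ext} is the workhorse: conservative extensions with label $n + 1$ shift every old reach by $-1$ and give each new extension reach exactly $0$, allowing me to match the recurrences of Theorem~\ref{thm:relative-margin} step by step against the moves of $\Adversary^*$.

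The cases $w_{n+1} = \A$, $w_{n+1} = \h$, and the sub-case $w_{n+1} = \H$ with $|S| = 1$ are direct. For $w_{n+1} = \A$ no extensions are made, so $\rho$ and every $\mu_x$ grow by one together with all reserves, matching the first clauses of~\eqref{eq:rho-recursive} and~\eqref{eq:mu-relative-recursive}. For a single conservative extension, Fact~\ref{fact:reach-fork-ext} yields $\rho(F_{n+1}) = \max(\rho(F_n) - 1, 0)$; a routine case split on whether $\mu_x(F_n) = 0$ and whether $\rho(F_n) > 0$ then verifies the $\mu_x$-recurrence using the inductive hypothesis.

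The delicate case is $w_{n+1} = \H$ with $|S| = 2$, which arises precisely when $\rho(F_n) = 0$ and $|Z| \geq 2$ (so that $R = Z$). Here $\Adversary^*$ picks $(z_1, r_1) \in Z \times Z$ with $\ell(z_1 \Intersect r_1)$ minimal and conservatively extends both; the resulting tines $\sigma_1, \sigma_2$ each have reach $0$. The crux is to show that for every prefix $x$ with $\mu_x(F_n) = 0$ the pair $(\sigma_1, \sigma_2)$ remains $x$-disjoint, so that $\mu_x(F_{n+1}) \geq 0$, matching the reset clause of~\eqref{eq:mu-relative-recursive}. Indeed, $\mu_x(F_n) = 0$ supplies an $x$-disjoint pair $(t_1, t_2) \in Z \times Z$; since $t_1 \nsim_x t_2$ is equivalent to $\ell(t_1 \Intersect t_2) \leq |x|$, the minimality of $\ell(z_1 \Intersect r_1)$ gives $\ell(\sigma_1 \Intersect \sigma_2) = \ell(z_1 \Intersect r_1) \leq |x|$. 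Conversely, when $\mu_x(F_n) < 0$, no two tines in $Z$ can be $x$-disjoint (else $\mu_x(F_n) \geq 0$), so $\sigma_1 \sim_x \sigma_2$ and only old-tine pairs contribute, producing $\mu_x(F_{n+1}) = \mu_x(F_n) - 1$.

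The main obstacle will be establishing the matching upper bounds on $\mu_x(F_{n+1})$ in every sub-case, i.e.\ showing that no $x$-disjoint pair in $F_{n+1}$ exceeds the value prescribed by~\eqref{eq:mu-relative-recursive}. Pairs contained entirely in $F_n$ contribute at most $\mu_x(F_n) - k$, where $k$ is the number of new extensions, by Fact~\ref{fact:reach-fork-ext} and the inductive hypothesis. Mixed pairs consisting of a new extension $\sigma_i$ and an old tine $t$ require careful tracking of when $\sigma_i \nsim_x t$, which reduces to whether $t$ is $x$-disjoint with the parent of $\sigma_i$ in $F_n$; the resulting $\min(\reach,\reach)$ must then be checked against the target value. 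With these routine bookkeeping arguments carried through in each of the three $w_{n+1}$ cases, the induction closes and the theorem follows.
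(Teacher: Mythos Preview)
Your approach is correct and essentially the same as the paper's: both argue by induction on $|w|$, use Fact~\ref{fact:reach-fork-ext} to track reaches under conservative extensions, and handle the delicate $|S|=2$ case via the minimality of $\ell(z_1 \Intersect r_1)$ to ensure the two new extensions are $x$-disjoint whenever $\mu_x(F_n)=0$. Two minor remarks: the decrement for old-tine pairs is always $-1$ (not $-k$), since all conservative extensions land at the same height and hence raise every gap by exactly one; and your final paragraph on upper-bounding $\mu_x(F_{n+1})$ is unnecessary once you invoke Theorem~\ref{thm:relative-margin}, because $\mu_x(F_{n+1}) \leq \mu_x(yw_{n+1})$ holds by definition and the latter already equals the recurrence value.
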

That is, for every characteristic string $w$ 
there is a fork $F \Fork w$ so that 
for every prefix $x \PrefixEq w$, $\mu_x(F) = \mu_x(y)$. 
Note that if one's objective is to create a fork 
which contains many early-diverging tine-pairs (that witness large relative margins), 
a canonical fork is the best one can hope for. 
This is why $\Adversary^*$ is called an \emph{optimal} online adversary. 
The proof of Theorem~\ref{thm:opt-adversary-canonical} 
is given in Section~\ref{sec:margin-proof}.

\subsection{An algorithm to compute exact settlement probabilities}
\label{sec:exact-prob}
Let $m, k \in \NN$, $\epsilon \in (0,1]$, and $p_\h \in (0, (1+\epsilon)/2]$.
Let $T = m + k, \alpha = (1 - \epsilon)/2$, and $p_\H = 1 - \alpha - p_\h$. 
Let $w \in \{\h, \H, \A\}^T$ such that 
the symbols $w_i, i \in [T]$ are i.i.d.\, with 
$\Pr[w_i = \A] = \alpha, \Pr[w_i = \h] = p_\h$, and $\Pr[w_i = \H] = p_\H$. 
Write $w$ as $w = xy$ where $|x| = m, |y| = k$.
% where 
% each bit is independent with $\Pr[w_i = 1] = \alpha$.
The recursive definition of relative margin 
(cf. Theorem~\ref{thm:relative-margin}) 
implies an algorithm for computing the probability
$\Pr[\mu_x(y) \geq 0]$ 
in $O(T^3)$ time and space. 
% The algorithm is analogous to a similar algorithm in~\cite{LinearConsistencySODA}. 

In typical circumstances, however, it is more interesting to establish an
explicit upper bound on $\Pr[\mu_x(y) \geq 0]$ where
$|x| \rightarrow \infty$; this corresponds to the case where the
distribution of the initial reach $\rho(x)$ is the dominant distribution
$\mathcal{X}_\infty$ in 
~\eqref{eq:stationary}. 
% ~\cite[Lemma 6.1]{LinearConsistencySODA}.
Due to dominance, $\mathcal{X}_\infty(m)$ serves as an
upper bound on $\rho(x)$ for any finite $m = |x|$. 
For this purpose, one can implicitly
maintain a sequence of matrices $M_t, t = 0, 1, 2, \ldots, k$
such that $M_0(r, r) = \mathcal{X}_\infty(r)$ for all $0 \leq r \leq 2k$ and
the invariant
\[
  M_t(r, s) = \Pr_{y: |y| = t}[\rho(xy) = r \text{ and }
  \mu_x(y) = s ]
\]
is satisfied for every integer $t \in [1, k]$,
$r \in [0, 2k]$, and $s \in [-2k, 2k]$. 
Here, $M(i,j)$ denotes the entry at the $i$th row and $j$th column of a matrix $M$.
%The matrix $M_0$ corresponds to the reach and margin after any prefix $x$. 
Observe that $M_t(r,s)$ can be computed solely 
from the relevant neighboring cells of $M_{t-1}$, that is, 
from the values $M_{t-1}(r\pm 1, s \pm 1)$. 
Of course, only the transitions approved by~\eqref{eq:mu-relative-recursive} should be considered.

Finally, one can compute $\Pr[\mu_x(y) \geq 0]$ by summing $M_k(r,s)$ for
$r, s \geq 0$. 
This is precisely the probability that, 
given a characteristic string $xy$ where $|x| \rightarrow \infty$, 
the slot $|x| + 1$ incurs a $|y|$-settlement violation. 
Table~\ref{table:exact-probs} (on page~\pageref{table:exact-probs}) contains these probabilities for various values of $\alpha, |y|$, and $p_\h$. 

% In addition, Figure~\ref{fig:exact-probs} shows the base-$10$ logarithm of
% these probabilities. The
% points corresponding to a fixed $\alpha$ appear to form a straight
% line. This means the probability decays exponentially in $k$, or equivalently, that the exponent 
% depends linearly on $k$, 
% as stipulated by Bound~\ref{bound:analytic}. 

A \texttt{C++} implementation of the above algorithm is publicly available 
at~\href{https://github.com/saad0105050/multihonest-code}{https://github.com/saad0105050/multihonest-code}~\cite{PrForkableMultihonestCode}.

% \break
% \newpage
% \newcommand{\RotateText}[2]{\multicolumn{1}{|c||}{\parbox[t]{2mm}{\multirow{#1}{*}{\rotatebox[origin=c]{90}{#2}}} }}
\newcommand{\EndRow}{\cline{2-8} \multicolumn{1}{|c||}{} &}

\begin{table}[t]
	\centering
	\caption{
    Exact probabilities of $k$-settlement violations 
    where the symbols $\h, \H, \A$ are independent and identically distributed as $\Pr[\A] = \alpha \in (0, 0.5)$ and $\Pr[\H] = 1 - \alpha - \Pr[\h]$.    
    % Exact probabilities $\Pr[\mu_x(y) \geq 0]$ where 
    % the symbols of the characteristic string $xy$ are i.i.d.\,, 
    % $|x| = \infty$, and $k = |y|$.
    % Each row of the table corresponds to a different $k = |y|$.
	} 
	\label{table:exact-probs}

	\begin{tabular}{|c||l||l|l|l|l|l|l|}
  	% \hhline{|=======|}
    % \hline
    % \multicolumn{7}{|c|}{$\Pr[\h] = 1- \alpha$.} \\
    \hline
    \multirow{2}{*}{$\dfrac{\Pr[\h]}{1-\alpha}$} & 
  	\multicolumn{1}{|c||}{\multirow{2}{*}{$k$}} & 
    \multicolumn{6}{c|}{$\alpha$} \\ 
    \cline{3-8} 
    \multicolumn{1}{|c||}{} &
  	\multicolumn{1}{|c||}{} &
    0.01 & 0.10 & 0.20 & 0.30 & 0.40 & 0.49\\ 
  	\hhline{|=#=#=|=|=|=|=|=|}
    \multicolumn{1}{|c||}{\multirow{5}{*}{$1.0$}}&
    100 & 5.70E-054 & 5.10E-018 & 2.28E-008 & 8.00E-004 & 1.37E-001 & 9.05E-001 \\ \EndRow
    200 & 1.64E-106 & 9.82E-035 & 1.61E-015 & 1.60E-006 & 3.36E-002 & 8.73E-001 \\ \EndRow
    300 & 4.70E-159 & 1.89E-051 & 1.14E-022 & 3.25E-009 & 8.52E-003 & 8.50E-001 \\ \EndRow
    400 & 1.35E-211 & 3.64E-068 & 8.02E-030 & 6.59E-012 & 2.18E-003 & 8.29E-001 \\ \EndRow
    500 & 1.02E-264 & 3.90E-085 & 4.00E-037 & 1.10E-014 & 5.16E-004 & 8.05E-001 \\%\hline

    \hhline{|=#=#=|=|=|=|=|=|}
    \multicolumn{1}{|c||}{\multirow{5}{*}{$0.9$}}&
    100 & 9.75E-052 & 1.24E-017 & 3.24E-008 & 9.27E-004 & 1.44E-001 & 9.08E-001   \\ \EndRow
    200 & 3.04E-102 & 4.95E-034 & 2.96E-015 & 2.03E-006 & 3.60E-002 & 8.77E-001   \\ \EndRow
    300 & 9.46E-153 & 1.98E-050 & 2.71E-022 & 4.50E-009 & 9.30E-003 & 8.53E-001   \\ \EndRow
    400 & 2.95E-203 & 7.91E-067 & 2.48E-029 & 9.96E-012 & 2.43E-003 & 8.33E-001   \\ \EndRow
    500 & 1.83E-254 & 1.63E-083 & 1.54E-036 & 1.78E-014 & 5.80E-004 & 8.08E-001   \\%\\ \hline

    \hhline{|=#=#=|=|=|=|=|=|}
    \multicolumn{1}{|c||}{\multirow{5}{*}{$0.8$}}&
    100 & 6.16E-048 & 4.13E-017 & 5.10E-008 & 1.11E-003 & 1.53E-001 & 9.11E-001 \\ \EndRow
    200 & 7.58E-095 & 4.61E-033 & 6.58E-015 & 2.73E-006 & 3.91E-002 & 8.81E-001 \\ \EndRow
    300 & 9.32E-142 & 5.14E-049 & 8.48E-022 & 6.78E-009 & 1.04E-002 & 8.57E-001 \\ \EndRow
    400 & 1.15E-188 & 5.74E-065 & 1.09E-028 & 1.68E-011 & 2.77E-003 & 8.38E-001 \\ \EndRow
    500 & 1.94E-236 & 3.02E-081 & 9.16E-036 & 3.28E-014 & 6.70E-004 & 8.12E-001 \\

    \hhline{|=#=#=|=|=|=|=|=|}
    \multicolumn{1}{|c||}{\multirow{5}{*}{$0.5$}}&
    100 & 4.80E-028 & 6.53E-014 & 6.21E-007 & 2.80E-003 & 1.99E-001 & 9.26E-001 \\ \EndRow
    200 & 2.46E-055 & 6.31E-027 & 6.40E-013 & 1.31E-005 & 5.86E-002 & 8.98E-001 \\ \EndRow
    300 & 1.26E-082 & 6.10E-040 & 6.60E-019 & 6.19E-008 & 1.76E-002 & 8.77E-001 \\ \EndRow
    400 & 6.46E-110 & 5.90E-053 & 6.81E-025 & 2.92E-010 & 5.33E-003 & 8.59E-001 \\ \EndRow
    500 & 1.28E-138 & 1.75E-066 & 3.65E-031 & 9.61E-013 & 1.39E-003 & 8.31E-001 \\

    \hhline{|=#=#=|=|=|=|=|=|}
    \multicolumn{1}{|c||}{\multirow{5}{*}{$0.25$}}&
    100 & 1.22E-012 & 3.13E-008 & 8.94E-005 & 1.65E-002 & 3.17E-001 & 9.48E-001 \\ \EndRow
    200 & 1.51E-024 & 1.06E-015 & 9.36E-009 & 3.36E-004 & 1.25E-001 & 9.27E-001 \\ \EndRow
    300 & 1.86E-036 & 3.62E-023 & 9.80E-013 & 6.86E-006 & 4.94E-002 & 9.10E-001 \\ \EndRow
    400 & 2.30E-048 & 1.23E-030 & 1.03E-016 & 1.40E-007 & 1.96E-002 & 8.96E-001 \\ \EndRow
    500 & 5.06E-062 & 7.72E-039 & 4.06E-021 & 1.66E-009 & 6.20E-003 & 8.65E-001 \\

    \hhline{|=#=#=|=|=|=|=|=|}
    \multicolumn{1}{|c||}{\multirow{5}{*}{$0.01$}}&
    100 & 3.77E-001 & 4.91E-001 & 6.38E-001 & 7.95E-001 & 9.31E-001 & 9.97E-001 \\ \EndRow
    200 & 1.42E-001 & 2.41E-001 & 4.08E-001 & 6.34E-001 & 8.72E-001 & 9.95E-001 \\ \EndRow
    300 & 5.37E-002 & 1.18E-001 & 2.61E-001 & 5.06E-001 & 8.17E-001 & 9.94E-001 \\ \EndRow
    400 & 2.03E-002 & 5.81E-002 & 1.67E-001 & 4.04E-001 & 7.66E-001 & 9.92E-001 \\ \EndRow
    500 & 7.89E-005 & 3.23E-003 & 2.71E-002 & 1.40E-001 & 4.83E-001 & 9.54E-001 \\

    \hline
  \end{tabular}

  % \begin{center}
  %     $\Pr[\A] = \alpha, \Pr[\h] = 0.01*(1- \alpha),\Pr[\H] = 1 - \Pr[\A] - \Pr[\h]$, and $k = |y|$. 
  % \end{center}
  % \begin{tabular}{|l||l|l|l|l|l|l|}
  %   \hline
  %   \multicolumn{1}{|c||}{\multirow{2}{*}{$k$}} & \multicolumn{6}{c|}{$\alpha$}                                             \\ \cline{2-7} 
  %   \multicolumn{1}{|c||}{}                   & 0.01     & 0.10      & 0.20     & 0.30      & 0.40     & 0.49      \\ 
  %   \hhline{|=#=|=|=|=|=|=|}
  %   100 & 3.77E-001 & 4.91E-001 & 6.38E-001 & 7.95E-001 & 9.31E-001 & 9.97E-001 \\ \hline
  %   200 & 1.42E-001 & 2.41E-001 & 4.08E-001 & 6.34E-001 & 8.72E-001 & 9.95E-001 \\ \hline
  %   300 & 5.37E-002 & 1.18E-001 & 2.61E-001 & 5.06E-001 & 8.17E-001 & 9.94E-001 \\ \hline
  %   400 & 2.03E-002 & 5.81E-002 & 1.67E-001 & 4.04E-001 & 7.66E-001 & 9.92E-001 \\ \hline
  %   500 & 7.89E-005 & 3.23E-003 & 2.71E-002 & 1.40E-001 & 4.83E-001 & 9.54E-001 \\ \hline
   
  % \end{tabular}

\end{table}

\section{Proofs of Theorem~\ref{thm:relative-margin} and Theorem~\ref{thm:opt-adversary-canonical}}
\label{sec:margin-proof}

The proof of Theorem~\ref{thm:relative-margin} is presented in two parts. 
Let $w \in \{\h, \H, \A\}^*$.
First, for a given decomposition $w = xy$, 
 % and $b \in \{\h, \H\}$, 
we prove an upper bound on $\mu_x(y)$. 
Next, 
considering the fork $F \Fork w$ built by the strategy $Adversary^*$ 
(see Figure~\ref{fig:adv-opt}), 
we show that 
for every decomposition $w = xy$, 
$\mu_x(F)$ is at least as large as the upper bound proven in the first part; 
thus $F$ is canonical. 

As a warm-up, we start with the following claim.

\begin{claim}\label{claim:rho-mu-A}
  $\rho(\varepsilon) = 0$. 
  For any $x, y \in \{\h, \H, \A\}^*$, 
  $\mu_x(\varepsilon) = \rho(x)$, 
  $\rho(xy\A) = \rho(xy) + 1$, 
  and $\mu_x(y\A) = \mu_x(y) + 1$.
\end{claim}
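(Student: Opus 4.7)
The plan is to dispatch the four assertions separately, each reducing to a small structural observation about closed forks.

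For $\rho(\varepsilon) = 0$, I would appeal directly to the definition: the only fork for the empty string is the trivial one, consisting of the root with no edges, whose unique tine has length, gap, and reserve all equal to zero.

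For $\mu_x(\varepsilon) = \rho(x)$, I would unwind the relation $\sim_x$ for the decomposition $w = x\varepsilon$. Since $\varepsilon$ indexes no slot, no edge of any fork can terminate at a vertex labeled by an index of $\varepsilon$, so $t_1 \nsim_x t_2$ holds vacuously for every pair of tines, including $t_1 = t_2$. Consequently, for any closed $F \Fork x$,
\[
\mu_x(F) = \max_{t_1, t_2}\, \min(\reach(t_1), \reach(t_2)) = \max_t \reach(t) = \rho(F),
\]
where the second equality uses the choice $t_1 = t_2 = t^*$ with $t^*$ a maximum-reach tine (for the lower bound) and the bound $\min \le \max$ (for the upper bound). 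Maximizing over closed forks for $x$ then gives the identity.

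For the two ``$+1$'' identities, the crucial step is to exhibit a bijection between closed forks for $xy\A$ and closed forks for $xy$. The key observation is that any vertex of a fork with label $|xy|+1$ must be a leaf, since labels along any tine are strictly increasing; but the new slot is adversarial, so closedness (every leaf honest) forbids such a vertex. Hence no closed fork for $xy\A$ uses the new index at all, and the identity map on underlying labelled graphs furnishes the desired bijection. Under this bijection, every tine's reserve increases by exactly one (one additional adversarial index lies past its terminal vertex) while its gap is unchanged, so every tine's reach increases by exactly one. Applying this uniform shift inside the definition of $\rho$ gives $\rho(xy\A) = \rho(xy) + 1$, and applying it inside the definition of $\mu_x$ gives $\mu_x(y\A) = \mu_x(y) + 1$; here one also uses that the predicate $t_1 \nsim_x t_2$ is preserved by the bijection, because no new edge terminates in the appended $\A$-slot, so the set of edges that terminate in the suffix is unchanged.

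The main (mild) obstacle is the verification that closedness really forbids a vertex with the new label in the bijection step; once that is in hand, both additive identities follow by the same uniform shift of reach across the correspondence, and no further computation is required.
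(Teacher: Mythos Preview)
Your proposal is correct and follows essentially the same route as the paper's own proof. The paper dispatches $\rho(\varepsilon)=0$ and $\mu_x(\varepsilon)=\rho(x)$ exactly as you do, and for the two additive identities it observes (in one line) that a closed fork for $xy\A$ must equal a closed fork for $xy$ and that every tine's reach shifts by $+1$; your bijection argument is just a more explicit rendering of that observation.
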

\begin{proof}
  The only possible fork for the empty string $\varepsilon$ 
  contains a single honest vertex 
  with reserve and gap both zero; 
  hence $\rho(\varepsilon) = 0$.

  Let $F$ be a closed fork for the characteristic string $xy$. 
  Let $t_\rho, t_x \in F$ be the two tines that witness $\mu_x(F)$, 
  i.e., $\reach(t_\rho) = \rho(F), \reach_F(t_x) = \mu_x(F)$, 
  and $t_\rho, t_x$ are disjoint over $y$. 
  % Let $\hat{t}$ be the longest tine in $F$.

  In the base case, where $y=\varepsilon$, 
  observe that any two tines of $F$ 
  are disjoint over $y$. 
  Moreover, a single tine $t \in F$ 
  is disjoint with itself over the empty suffix $\varepsilon$. 
  Therefore, the relative margin $\mu_x(\varepsilon)$ must be at least $\rho(x)$. 
  As $\mu_x(F)$ can be no more than $\rho(x)$, it follows that 
  $\mu_x(\varepsilon) = \rho(x)$.

  Now consider a pair of closed forks $F\vdash xy$ and $F'\vdash xy\A$ 
  such that $F \fprefix F'$ and $x,y\in\{\h, \H, \A\}^*$. 
  We must have $F' = F$ since $F'$ is closed. 
  In addition, for any tine $t \in F$, 
  $\reach_{F'}(t) = \reach_F(t) + 1$ 
  since the reserve has increased by one 
  but the gap is unchanged 
  (as no new tine is added). Therefore, 
  $\rho(xy\A) = \rho(xy) + 1$ and 
  $\mu_x(y\A) = \mu_x(y)+1$.
\end{proof}

\subsection{An upper bound on relative margin}

\begin{proposition}\label{prop:mu-upperbound}
  Let $w,x, y \in \{\h, \H, \A\}^*$ 
  and $b \in \{\h, \H\}$, 
  Then 
  \begin{equation}
    \rho(xyb) \leq \begin{cases}
      0 & \text{if $\rho(xy) = 0$}\,, \\
      \rho(xy) - 1 & \text{otherwise.}
    \end{cases}
    \label{eq:rho-upperbound}
  \end{equation}
  Furthermore,
  \begin{equation}
    \mu_x(yb) \leq \begin{cases}
      0 & \text{if $\rho(xy) > \mu_x(y)=0$}\,, \\
      0 & \text{if $\rho(xy) = \mu_x(y) = 0$ and $b = \H$}\,, \\
      \mu_x(y)-1 & \text{otherwise.}
    \end{cases}
    \label{eq:mu-upperbound}
  \end{equation}
\end{proposition}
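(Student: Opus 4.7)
The plan is to associate to each closed fork $F' \Fork xyb$ a closed fork $F \Fork xy$ constructed in two steps: delete every vertex of $F'$ carrying label $|xy|+1$, then iteratively prune any adversarial leaves that surface. A routine verification against the fork axioms confirms that $F$ is a closed fork for $xy$, so $\rho(F) \leq \rho(xy)$ and $\mu_x(F) \leq \mu_x(y)$. The arithmetic engine driving both bounds is a reach-comparison lemma: since $b$ is honest it adds no reserve, and axiom~\ref{fork:honest-depth} applied to the new honest vertex at label $|xy|+1$ forces $\height(F') \geq \height(F) + 1$. Accounting carefully for the reserve released by each pruned adversarial vertex (which exactly offsets the length lost), I will obtain the following: for any tine $t \in F'$ whose endpoint does not carry label $|xy|+1$, if $\tilde t \in F$ denotes the tine ending at the deepest surviving ancestor of $t$'s endpoint, then $\reach_F(\tilde t) \geq \reach_{F'}(t) + 1$. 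For a tine $t$ whose endpoint does carry label $|xy|+1$, the reserve vanishes so $\reach_{F'}(t) \leq 0$, and the analogous comparison gives $\reach_F(\tilde t) \geq \reach_{F'}(t)$, with a gain of at least $1$ whenever $\height(F') \geq \height(F) + 2$.

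The reach bound~\eqref{eq:rho-upperbound} follows at once: a maximum-reach tine $t^\star$ either ends at a new vertex---giving $\rho(F') \leq 0$---or does not, giving $\rho(F') \leq \rho(F) - 1 \leq \rho(xy) - 1$; combining these yields $\rho(xyb) \leq \max(0, \rho(xy) - 1)$. For the margin bound~\eqref{eq:mu-upperbound}, fix a $yb$-disjoint witness pair $(t_1, t_2)$ for $\mu_x(F')$. All vertices of $\tilde t_1$ and $\tilde t_2$ carry labels $\leq |xy|$, and $yb$-disjointness forbids any shared label-in-$yb$ endpoint, so $(\tilde t_1, \tilde t_2)$ is $y$-disjoint in $F$. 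I will split into three sub-cases by how many of $t_1, t_2$ end at a new vertex. When neither does, applying the gain-of-$1$ comparison to both tines gives $\mu_x(F') \leq \mu_x(F) - 1 \leq \mu_x(y) - 1$. When both do, which requires $b = \H$ since two $yb$-disjoint tines cannot share the unique new vertex forced by $b = \h$, each reach is $\leq 0$ so $\mu_x(F') \leq 0$. When exactly one does, the new-vertex tine's non-positive reach already gives $\mu_x(F') \leq 0$.

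To match the proposition's finer case distinction, I will observe that whenever $\mu_x(F') < 0$ the new-vertex tines in play must have their endpoints at depth strictly below $\height(F')$; axiom~\ref{fork:honest-depth} then forces $\height(F') \geq \height(F) + 2$ and upgrades the new-vertex reach comparison by $1$, delivering $\mu_x(F') \leq \mu_x(y) - 1$ in the mixed and both-new-vertex sub-cases whenever $\mu_x(F') < 0$. The remaining possibility $\mu_x(F') = 0$ via a new-vertex witness forces structural constraints on $F'$ (both witness tines must be longest and their shared parent, if any, sits at maximum depth of $F$), which propagate through the comparison to yield $\mu_x(F) \geq 0$ and hence $\mu_x(y) \geq 0$; tracking which of the two relaxed branches $\rho(xy) > \mu_x(y) = 0$ and $\rho(xy) = \mu_x(y) = 0$ with $b = \H$ applies in each such configuration then assembles the piecewise conclusion. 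The main obstacle is precisely this bookkeeping: the naive reach comparison for a new-vertex tine affords no automatic gain, so the proof must extract the required $+1$ either from the strict height inequality $\height(F') \geq \height(F) + 2$ or from excess reserve along the pruned sub-path, and verify that this substitute is available in exactly the regimes where the proposition demands the sharper bound $\mu_x(y) - 1$.
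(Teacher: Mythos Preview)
Your approach is essentially the paper's: both pass to the closed sub-fork $F \vdash xy$ (your delete-then-prune construction is exactly what the paper means by ``the unique closed fork such that $F \fprefix F'$'') and both rest on the same reach comparison---a gain of at least $1$ for tines not ending at a new vertex, and $\reach_{F'}(t) \leq 0$ for tines that do. The paper organizes the case split by the values of $(\rho(xy),\mu_x(y))$, matching the proposition's branches directly; you organize it by how many witness tines end at new vertices and then map back. Both organizations work.

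One step needs repair. Your claim that ``whenever $\mu_x(F') < 0$ the new-vertex tines in play must have their endpoints at depth strictly below $\height(F')$'' fails in the mixed sub-case: if the sole new-vertex tine $t_1$ sits at maximum depth (so $\reach_{F'}(t_1)=0$) while the non-new-vertex tine $t_2$ carries $\mu_x(F')=\reach_{F'}(t_2)<0$, there is no height gain beyond $+1$. The conclusion $\mu_x(F') \leq \mu_x(y)-1$ still holds, but the $+1$ must come from the \emph{old}-vertex tine: $\reach_F(\tilde t_2)\geq\reach_{F'}(t_2)+1$ together with $\reach_F(\tilde t_1)\geq 0$ shows the $y$-disjoint pair $(\tilde t_1,\tilde t_2)$ witnesses $\mu_x(F)\geq\mu_x(F')+1$. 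This is exactly the maneuver the paper uses in its Cases~2 and~3. With this correction your plan goes through.
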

\begin{proof}
  Suppose $F'\vdash xyb$ is a closed fork such that 
  $\rho(xyb)=\rho(F')$ and $\mu_x(yb)=\mu_x(F')$. 
  Let $t_\rho, t_x \in F'$ be a pair of $y$-disjoint tines such that $\reach_{F'}(t_\rho)=\rho(F')$ and $\reach_{F'}(t_x)=\mu_x(F')$. 
  (If there are multiple candidates for $t_\rho$ or $t_x$, 
  select the one with the smallest $\leq_\pi$ rank.)
  Let $F\vdash xy$ be the unique closed fork such that $F\fprefix F'$.  
  Note that while $F'$ is obtained from one or more extensions 
  of $F$-tines, 
  these extensions are not necessarily conservative. 
  Recall that $\reach_{F'}(t) \leq 0$ for any tine $t \in F', \ell(t) = |xy| + 1$.

  \paragraph{Proving ~\eqref{eq:rho-upperbound}.} 
  Let $A$ be the set of all $F'$-tines with label $|xy| + 1$.
  Let $\sigma \in A$ be the first tine in the $\leq_\pi$ ordering so that $\reach(\sigma) = \max_{t \in A}\{\reach_{F'}(t)\}$.
  By Fact~\ref{fact:reach-fork-ext}, 
  $\reach_{F'}(\sigma) \leq 0$ and, 
  in addition, for any $t \in F$, 
  $\reach_{F'}(t) \leq \reach_F(t) - 1$.   
  Let $\hat{t}$ be the maximum-reach tine in $F$ 
  with the smallest $\leq_\pi$ rank.

  If $\rho(F) = 0$ then 
  $\reach_{F'}(t) < 0$ for all $t \in F$. 
  Hence $t_\rho = \sigma$ and, consequently, 
  $\rho(xyb) \leq 0$. 
  If $\rho(F) \geq 2$ then $t_\rho \in F$
  % $\reach_{F'}(\hat{t}) \geq 1$ but $\reach_{F'}(\sigma) \leq 0$; 
  % thus $t_\rho = \hat{t}$ 
  and, therefore, 
  $\rho(xyb) = \reach_{F'}(t_\rho) \leq \rho(F) - 1 \leq \rho(xy) - 1$. 
  If $\rho(F) = 1$ and $t_\rho \in F$ then, 
  as before, 
  $\rho(xyb) = \reach_{F'}(\hat{t}) = \reach_{F}(\hat{t}) - 1 = \rho(F) - 1 \leq \rho(xy) - 1$.
  If $\rho(F) = 1$ and $t_\rho \not \in F$ then, as we have seen before, 
  $\rho(xyb) = \reach_{F'}(\sigma) \leq 0 = \rho(F) - 1\leq \rho(xy) - 1$.
  Thus we have proved~\eqref{eq:rho-upperbound}.

  \paragraph{Proving ~\eqref{eq:mu-upperbound}.} 
  If $\ell(t_\rho) = |xy| + 1$ then we are done: 
  by our preceding argument, $\reach_{F'}(t_\rho) \leq 0$. 
  On the other hand, 
  Note that $t_\rho \not \in F$ since, by Fact~\ref{fact:reach-fork-ext}, reach of any $F$ tine can only decrease
  $t_\rho$ must have been an extension of a maximum-reach $F$-tine.

  \paragraph{Case 1: $\rho(xy)>0$ and $\mu_x(y)=0$.} 
    We wish to show that $\mu_x(yb) \leq 0$.
    Suppose (toward a contradiction) that $\mu_x(yb) > 0$. 
    Then neither $t_\rho$ nor $t_x$ is a conservative extension because, as we proved in Claim ~\ref{claim:ex}, conservative extensions have reach zero. This means that $t_\rho$ and $t_x$ existed in $F$, and their $F$-reach was strictly greater than their $F'$-reach (by Claim ~\ref{claim:nex}). 
    Because $t_\rho$ and $t_x$ 
    % have been implicitly 
    are 
    disjoint over $y0$, they must also be disjoint over $y$; therefore, $\mu_x(F)$ must be at least $\min(\reach_F(t_\rho),\reach_F(t_x))$. 
    It follows that 
    $0 
    = \mu_x(y) 
    \geq \min(\reach_F(t_\rho),\reach_F(t_x))
    > \min(\reach_{F'}(t_\rho),\reach_{F'}(t_x))
    = \mu_x(F') = \mu_x(yb)
    $. 
    The last term is strictly positive by assumption and hence, a contradiction ensues.

  % \paragraph{Case 2: $\rho(xy)=0$.}
  \paragraph{Case 2: $\rho(xy)=0$.}
    We wish to show that 
    \begin{enumerate*}[label=(\textit{\roman*})]
      \item $\mu_x(yb) \leq 0$ if $b = \H$ and $\mu_x(y) = 0$, and 
      \item $\mu_x(yb) \leq \mu_x(y) - 1$ otherwise.
    \end{enumerate*}
    First, we claim that $t_\rho$ must arise from an extension. 
    Suppose, toward a contradiction, that $t_\rho$ is not an extension, 
    i.e., $t_\rho \in F$. 
    The fact that $t_\rho$ achieves the maximum reach in $F'$ 
    implies that 
    $t_\rho$ has a non-negative reach 
    since the longest honest tine always achieves reach zero. 
    Furthermore, 
    Claim ~\ref{claim:nex} states that 
    all $F$-tines see their reach decrease. 
    Therefore, $t_\rho \in F$ must have had a strictly positive reach. 
    But this contradicts the central assumption of the case, i.e., 
    that $\rho(xy)=0$. 
    Therefore, we conclude that $t_\rho \in F' \setminus F$.

    Let $s \in F$ be the tine-prefix of $t_\rho \in F'$ so that 
    $t_\rho$ is an extension of $s$. 
    % Since $\reach_{F'}(t_\rho) = \rho(xy0) = 0$ by~\eqref{eq:rho-recursive}, 
    Observe that $\reach_F(s)$ must be non-negative since 
    otherwise, $s$ could not have been extended. 
    In fact, our assumption $\rho(xy)=0$ implies that 
    $\reach_F(s) = 0$. 
    In addition, since $t_x$ and $t_\rho$ are disjoint over $yb$, 
    so are $t_x$ and $s$. 
    \begin{description}[font=\normalfont\itshape\space]
      % \item[If $t_x \in F$,] 
      \item[If $b = \h$,] 
      $t_\rho$ is the only extension in $F'$ and hence 
      $t_x$ must be in $F$. 
      Consequently, 
      $\min( \reach_F(s),\reach_F(t_x) ) \leq \mu_x(y)$. 
      Because $\reach_F(s)=0$ and $\reach_F(t_x) \leq \rho(xy)=0$, it follows that $\reach_F(t_x) \leq \mu_x(y)$. 
      Finally, since $t_x \in F$, 
      Claim ~\ref{claim:nex} tells us that 
      $\reach_{F'}(t_x) < \reach_F(t_x)$. 
      Taken together, these two inequalities show that 
      $\mu_x(yb) = \reach_{F'}(t_x) < \reach_F(t_x) \leq \mu_x(y)$. 
      The last inequality follows since $s$ and $t_x$ are disjoint over $y$ and $\reach_F(s) = 0 = \rho(xy)$. 
      We conclude that $\mu_x(yb) \leq \mu_x(y) - 1$.

      \item[If $b = \H$ and $\mu_x(y) < 0$,] 
      % $F$ cannot contain a single maximum-reach tine and its
      we claim that $t_x \in F$. 
      To see why, note that as $t_x$ is $yb$-disjoint with $t_\rho$, 
      it must extend some $F$-tine $t$ that is $y$-disjoint with $t_\rho$. 
      However, as $\mu_x(y) < 0$, $t$ must have negative reach and hence cannot be extended into $t_x$; this is a contradiction. 
      Therefore, $t_x \in F$ and we can apply the argument in the ``$b = \h$'' case above 
      to conclude that $\mu_x(yb) \leq \mu_x(y) - 1$.

      \item[If $b = \H$ and $\mu_x(y) = 0$,] 
      then there are two alternatives depending on 
      whether $t_x$ is an extension. 
      If $t_x$ is not an extension, we can apply the argument in the ``$b = \h$'' case above and conclude that 
      $\mu_x(yb) \leq \mu_x(y) - 1 = -1$.
      On the other hand, if $t_x \not\in F$, 
      both $t_x$ and $t_\rho$ are extensions and, 
      by Fact~\ref{fact:reach-fork-ext}, 
      $\max(\reach_{F'}(t_x), \reach_{F'}(t_\rho) ) \leq 0$. 
      In addition, Fact~\ref{fact:reach-fork-ext} states that for all $t \in F$, 
      $\reach_{F'}(t) < \reach_{F}(t) \leq \rho(xy) = 0$. 
      We conclude that $\mu_x(yb) \leq 0$.

    \end{description}

  \paragraph{Case 3: $\rho(xy)>0$ and $\mu_x(y)\neq0$.}
    We wish to show that $\mu_x(yb) \leq \mu_x(y) - 1$ 
    or, equivalently, that $\mu_x(yb) < \mu_x(y)$. 
    % Note that by~\ref{eq:rho-recursive}, 
    % $\rho(xy0) = \rho(xy) - 1 \geq 0$.
    We will break this case into two sub-cases. 
    \begin{description}[font=\normalfont\itshape\space]
      \item[If both $t_\rho, t_x \in F$,] 
      then $\mu_x(yb) = \reach_{F'}(t_x) < \reach_{F}(t_x) \leq \mu_x(y)$. 
      Here, the first inequality follows from Fact~\ref{fact:reach-fork-ext} 
      and the second inequality follows from the fact that 
      $t_x, t_\rho$ is $y$-disjoint and 
      $\reach(t_x)$ is at most $\reach(t_\rho)$ by design.

      \item[Otherwise,] 
      at least one of $t_x, t_\rho$ arose from an extension. 
      Since $\reach_{F'}(t_x) \leq \reach_{F'}(t_\rho)$ by design, 
      it follows that $\reach_{F'}(t_x) \leq 0$ 
      as the reach of an extension is at most zero.
      If $\mu_x(y) > 0$ then we are done: $\mu_x(yb) \leq 0 < \mu_x(y)$.  
      On the other hand, suppose $\mu_x(y) < 0$. 
      Recall the tine $s$ mentioned before.
      As $t_x$ is $y$-disjoint with $s$ and 
      $\mu_x(y)$ is negative by assumption, 
      $\reach_F(t_x)$ is at most $\mu_x(y)$. 
      We conclude that 
      $
      \mu_x(yb) = \reach_{F'}(t_x) 
      < \reach_F(t_x) 
      \leq \mu_x(y)
      $ 
      where the inequality follows from Fact~\ref{fact:reach-fork-ext}.

    \end{description}
\end{proof}

\subsection{\texorpdfstring{$\Adversary^*$}{The optimal adversary} simultaneously maximizes all relative margins}

\begin{proposition}\label{prop:mu-lowerbound}
  Let $w \in \{\h, \H, \A\}^*$ 
  and $b \in \{\h, \H, \A\}$. 
  Assume that Theorem~\ref{thm:opt-adversary-canonical} 
  holds for characteristic strings of length $|w|$.
  Let $F'$ be the fork built by $\Adversary^*$ 
  for the characteristic string $wb$. 
  Then 
  \begin{equation}
    \rho(F') \geq \begin{cases}
     \rho(xy) + 1 & \text{if $b = \A$} \,, \\
     0 & \text{if $b \in \{\h, \H\}$ and $\rho(xy) = 0$}\,, \\
     \rho(xy) - 1 & \text{otherwise}\,.
    \end{cases}
    \label{eq:rho-lowerbound}    
  \end{equation}
  Furthermore, for any decomposition $w = xy, |y| \geq 0$, 
  \begin{equation}
    \mu_x(F') \geq \begin{cases}
      \mu_x(y) + 1 & \text{if $b = \A$}\,, \\
      0 & \text{if $b \in \{\h, \H\}$ and $\rho(xy) > \mu_x(y)=0$}\,, \\
      0 & \text{if $b = \H$ and $\rho(xy) = \mu_x(y) = 0$}\,, \\
      \mu_x(y)-1 & \text{otherwise.}
    \end{cases}
    \label{eq:mu-lowerbound}
  \end{equation}
  
\end{proposition}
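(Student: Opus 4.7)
The plan is to analyze the fork $F' = \Adversary^*(wb)$ obtained from $F = \Adversary^*(w)$, exploiting the inductive hypothesis that $F$ is canonical for $w$, i.e., $\rho(F) = \rho(w)$ and $\mu_x(F) = \mu_x(y)$ for every $w = xy$ with $|y| \geq 1$. The case $b = \A$ is immediate: $\Adversary^*$ does nothing, so $F' = F$, and since every tine's reserve rises by one while its gap is unchanged, every reach rises by one. This gives $\rho(F') \geq \rho(w) + 1$, and any $y$-disjoint witness pair for $\mu_x(F)$ remains $y\A$-disjoint in $F'$ (no new edges are introduced) with each reach incremented, so $\mu_x(F') \geq \mu_x(y) + 1$.

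For $b \in \{\h, \H\}$, $F'$ arises from $F$ by one or two conservative extensions. By Fact~\ref{fact:reach-fork-ext}, each extension is a new honest tine with reach $0$ in $F'$, while every $F$-tine's reach drops by exactly one. This immediately delivers~\eqref{eq:rho-lowerbound}: if $\rho(w) = 0$ the extension alone gives a reach-$0$ tine; if $\rho(w) \geq 1$, the decremented reach of a tine in $R$ certifies $\rho(F') \geq \rho(w) - 1$. For the ``otherwise'' clause of~\eqref{eq:mu-lowerbound}, I would invoke a $y$-disjoint witness pair $(t_1, t_2) \in F$ for $\mu_x(F) = \mu_x(y)$, supplied by the inductive hypothesis; both tines survive unchanged in $F'$, remain $yb$-disjoint because neither contains a vertex from slot $|w|+1$, and each of their reaches drops by one, yielding $\mu_x(F') \geq \mu_x(y) - 1$.

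The crux is the two zero sub-cases, both of which hinge on a key lemma: the pair $(r_1, z_1)$ chosen by $\Adversary^*$ is simultaneously $y$-disjoint for every prefix $x \PrefixEq w$ with $\mu_x(F) \geq 0$. Its proof first shows that whenever $\mu_x(F) \geq 0$ some $y$-disjoint witness pair exists in $F$; then the tree ultrametric property (among any three tines $a, b, c$, two of $\ell(a \cap b)$, $\ell(a \cap c)$, $\ell(b \cap c)$ coincide and are bounded above by the third) permits replacing either coordinate of such a witness by an arbitrary tine in $R$ or $Z$ without breaking $y$-disjointness, producing a $y$-disjoint pair in $R \times Z$. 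Since $\Adversary^*$ selects $(r_1, z_1)$ to minimize $\ell(r_1 \cap z_1)$ over $R \times Z$, its pair must be $y$-disjoint as well. Granted the lemma, the sub-case $\rho(w) \geq 1 > \mu_x(y) = 0$ is handled by pairing the extension $\sigma$ of $z_1$ (reach $0$ in $F'$) with $r_1$ (reach $\rho(w) - 1 \geq 0$ in $F'$); the sub-case $b = \H$ with $\rho(w) = \mu_x(y) = 0$ is handled by pairing the two conservative extensions $\sigma_1, \sigma_2$ of $z_1$ and $r_1$, which are $yb$-disjoint because their common edges lie inside $x$ and they terminate at distinct new honest vertices. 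The main obstacle is precisely this uniformity argument: a single pair $(r_1, z_1)$ chosen without reference to $x$ must certify the bound for every relevant prefix, which is why minimizing the intersection depth is the decisive design choice in $\Adversary^*$.
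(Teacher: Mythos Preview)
Your approach mirrors the paper's proof closely: the treatment of $b = \A$, the generic $\mu_x(y) - 1$ lower bound via a witness pair surviving in $F'$ with each reach decremented, and the two zero sub-cases via $(r_1, z_1)$ and its extension(s) are all the same. Your explicit articulation of the key lemma through the tree ultrametric is actually cleaner than the paper's one-line assertion that ``it follows that $z_1$ and $r_1$ must be $y$-disjoint''---the paper leans on the same minimality-of-$\ell(r_1 \Intersect z_1)$ reasoning but leaves it implicit. (Your phrasing ``replacing either coordinate by an arbitrary tine'' slightly overstates what the ultrametric delivers; what it actually gives is that any third tine is $y$-disjoint from at least one member of the pair, which after a short case analysis still lands you a $y$-disjoint pair in $R \times Z$.)

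There is one corner case that your sketch, like the paper's proof, does not cover. In the sub-case $b = \H$, $\rho(w) = 0$, you invoke ``the two conservative extensions $\sigma_1, \sigma_2$ of $z_1$ and $r_1$''---but $\Adversary^*$ makes two extensions only when $|Z| \geq 2$; if $|Z| = 1$ it sets $S = Z$ and makes a single extension. The offending decomposition is $x = w$, $y = \varepsilon$: here $\mu_x(y) = \rho(w) = 0$, so the third case of~\eqref{eq:mu-lowerbound} demands $\mu_x(F') \geq 0$, yet with a single extension $\sigma$ the only nonnegative-reach tine in $F'$ is $\sigma$, and $\sigma \sim_x \sigma$ because its last edge terminates in slot $|w|+1$. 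Concretely, take $w = \h$: the canonical $F$ is the two-vertex chain with $Z = \{v_1\}$, and $\Adversary^*$ outputs a three-vertex chain $F' \vdash \h\H$ with $\mu_\h(F') = -1 \neq 0 = \mu_\h(\H)$. (The paper's proof asserts $|Z| \geq 2$ in this case without justification, so it shares the gap.) For $|y| \geq 1$ the issue cannot arise---$\rho(xy) = 0$ forces $y$ to contain an honest symbol, so any $y$-disjoint pair of zero-reach tines must be distinct and hence $|Z| \geq 2$---so the defect is confined to $|y| = 0$ and is repaired either by amending $\Adversary^*$ to add two children of $z_1$ whenever $|Z| = 1$, $\rho(F) = 0$, and $b = \H$, or by handling the decomposition $x = w$ separately.
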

\begin{proof}  
  Let $w' = wb$.
  Let $F$ and $F'$ be the forks built by $\Adversary^*$ 
  for the characteristic string $w$ and $wb$, respectively, 
  so that $F \fprefix F'$.
  By assumption, $F$ is a canonical fork for $w$; 
  this means $\rho(F) = \rho(w)$ and
  for all $x \Prefix w$, $\mu_x(F) = \mu_x(y)$. 
  It will be helpful for the reader to recall Fact~\ref{fact:reach-fork-ext} before proceeding.

  \paragraph{Proving~\eqref{eq:rho-lowerbound}.} 
  We wish to show that 
  $\rho(F')$ satisfies~\eqref{eq:rho-lowerbound}. 
  If $b = \A$ then, by construction,
  $F' = F$. 
  The symbol $b = \A$ increases the reserve 
  of every tine by one. 
  Thus  
  $
  \rho(F') 
  = \rho(F) + 1
  = \rho(xy)  + 1
  $. 
  % Here, the first equality follows from our choice of $\tau_\rho$. 
  % and the last equality follows from our assumption about $t_\rho$. 
  Now suppose $b \in \{\h, \H\}$. 
  Since all tines $\sigma \in F'$ with label $|xy| + 1$ 
  are conservative extensions, $\reach_{F'}(\sigma) = 0$ 
  and the $F'$-reach of all $F$-tines decreases by one. 
  Let $t$ be a maximum-reach tine in $F$; 
  since $F$ is canonical, $\reach_F(t) = \rho(F) = \rho(xy)$.
  Therefore, 
  $\rho(F') \geq \reach_{F'}(t) = \reach_{F}(t) - 1 = \rho(xy) - 1$. 
  If $\rho(F) = 0$ then this inequality can be tightened, as follows. 
  As all $F$-tines have negative $F'$-reach, 
  any maximum-reach $F'$-tine 
  must be one of the extensions; 
  it follows that $\rho(F') = 0$. 
  Thus we have proved~\eqref{eq:rho-lowerbound}.

  \paragraph{Proving~\eqref{eq:mu-lowerbound}.} 
  Let $w = xy$ be an arbitrary decomposition; 
  this $x$ remains fixed for the remainder of the proof. 
  (Note that $\Adversary^*$ is unaware of this decomposition.) 
  % Let 
  % \begin{align*}
  %   R' &=  \{t \in F' : \reach_{F'}(t) = \rho(F') \}\,\quad\text{and} \\
  %   B_x &= \{t \in F' : \exists t' \in R'\,, \ell(t \Intersect t') \leq |x| \}\,.
  %   \,.
  % \end{align*}
  % % and set $C_x = \{t : t \in B_x, \reach_{F'}(t) = r_x\}$.
  % % Finally, identify the tine-pair 
  % % $(\tau_{\rho x}, \tau_x) = \EarlyDivWitness(R', C_x)$. 
  % Finally, let $\tau_{\rho x} \in R', \tau_x \in B_x$ be two tines 
  % so that 
  % \begin{align*}
  % \reach_{F'}(\tau_x) &= \max\{\reach_{F'}(t) : t \in B_x\}\quad\text{and} \\
  % \ell(\tau_{\rho x} \Intersect \tau_x) &= \min \{ \ell(t_1 \Intersect t_2) : t_1 \in R', t_2 \in B_x\}
  % \,.  
  % \end{align*}
  % % Note that 
  % % $\reach_{F'}(\tau_\rho) = \reach_{F'}(\tau_{\rho x}) = \rho(F')$ 
  % % and that 
  % % $\tau_\rho$ and $\tau_{\rho x}$ may be the same tine.
  % We say that the tines $\tau_x, \tau_{\rho x} \in F'$ \emph{witness} $\mu_x(F')$.

  Let $\tau_x, \tau_{\rho x} \in F'$ be two $yb$-disjoint tines 
  so that 
  $\reach_{F'}(\tau_{\rho x}) = \rho(F')$, 
  $\reach_{F'}(\tau_x) = \mu_x(F')$, 
  and, of all $yb$-disjoint tine pairs in $F'$
  that attain this requirement, 
  these two tines diverge the earliest. 
  We say that the tines $\tau_x, \tau_{\rho x}$ \emph{witness} $\mu_x(F')$.

  Designate the witness tines $t_x, t_{\rho x} \in F$ 
  in the same way as we have designated $\tau_x, \tau_{\rho x} \in F'$; 
  specifically, 
  $w, y$, and $F$ would substitute $w', yb$, and $F'$ 
  in the recipe above. 
  By assumption, $F$ is a canonical fork for $xy$. 
  Therefore, 
  $\rho(F) = \reach_F(t_{\rho x}) = \rho(xy)$, 
  $t_x$ is $y$-disjoint with $t_{\rho x}$, 
  and 
  $\mu_x(F) = \reach_F(t_x) = \mu_x(y)$. 
  % (It is possible that $t_\rho = t_{\rho x}$.)
  We wish to show that 
  $\mu_x(F')$ satisfies~\eqref{eq:mu-lowerbound}. 

  If $b = \A$ then, by construction,
  $F' = F$ and, therefore,
  % $\tau_\rho = 
  % t_\rho$, 
  % $\tau_x = t_x$, and $\tau_{\rho x} = t_{\rho x}$. 
  % Thus 
  % $\tau_x$ and $\tau_{\rho x}$ are $yb$-disjoint.
  % Since the $F'$-reach of every $F$-tine is one plus its $F$-reach, 
  % $
  % \mu_x(F') \geq 
  % \reach_{F'}(\tau_x)
  % = \reach_{F'}(t_x)
  % = \reach_F(t_x) + 1
  % = \mu_x(y) + 1
  % $. 
  $t_x$ and $t_{\rho x}$ are $yb$-disjoint in $F'$.
  Note that the $F'$-reach of every $F$-tine is one plus its $F$-reach. 
  Therefore, 
  $
  \mu_x(F') 
  \geq \min(\reach_{F'}(t_{\rho x}), \reach_{F'}(t_x))
  = \reach_{F'}(t_x)
  = \reach_F(t_x) + 1
  = \mu_x(y) + 1
  $.

  If $b \in \{\h, \H\}$, 
  all tines in $F'$ with label $|w| + 1$ arise from conservative extensions. 
  Since the tines $t_x, t_{\rho x}$ are $yb$-disjoint in $F'$, 
  it follows that $\mu_x(F') \geq \min(\reach_{F'}(t_x), \reach_{F'}(t_{\rho x})) \geq \reach_{F'}(t_x) = \reach_{F}(t_x) - 1 = \mu_x(y) - 1$. 
  Here, the first inequality follows from the definition of relative margin and 
  the second one from the fact that $\reach(t_x) \leq \reach(t_{\rho x})$ by assumption. 
  The first equality follows from Fact~\ref{fact:reach-fork-ext} and 
  the second one follows from our assumption that the tines $t_{\rho x}, t_x\in F$ witness $\mu_x(F) = \mu_x(y)$. 

  However, we can tighten the above inequality when $\mu_x(y)$ is zero, as follows. 
  Recall the sets $Z, S, R$, 
  the zero-reach tine $z_1$, 
  and the maximum-reach tine $r_1$ 
  from Figure~\ref{fig:adv-opt}. 
  Also recall that $z_1$, of all zero-reach tines, 
  diverges earliest from any maximum-reach tine.
  As $\reach_F(z_1) = \mu_x(F) = \mu_x(y) = 0$, 
  it follows that 
  % $z_1 \in Z$. 
  % Thus there must be a tine $t^* \in R$ 
  % $\ell(z_1 \Intersect t^*) = \min\{\ell(z \Intersect t) : z \in Z, t \in R\}$.
  % In particular, 
  $z_1$ and $r_1$ must be $y$-disjoint.
  Let $\sigma_1 \in F'$ be the conservative extension of $z_1$.

  \begin{description}[font=\normalfont\itshape\space]
    \item[If $\rho(xy) \geq 1$ and $\mu_x(y) = 0$]
      then $\sigma_1$ is the only new extension in $F'$ 
      and it has reach zero in $F'$.
      % then $\sigma_1$ must be $yb$-disjoint with $t^*$ in $F'$.
      Note that 
      $\reach_{F'}(r_1) = \reach_F(r_1) - 1 = \rho(F) - 1 \geq 0$ 
      since $\rho(F) = \rho(xy) \geq 1$ by assumption. 
      It follows that $\mu_x(F') \geq \min(\reach_{F'}(\sigma_1), \reach_{F'}(r_1) ) \geq \reach_{F'}(\sigma_1) = 0$.

    \item[If $\rho(xy) = 0$ and $\mu_x(y) = 0$] 
      then $Z = R$ and $|Z| \geq 2$. 
      % % there must be two zero-reach $y$-disjoint tines in $F$.
      % % Specifically, 
      % there must be a zero-reach tine $z \in R$ so that 
      % % and 
      % the pair $z_1, z$ are $y$-disjoint. 
      % %  and, 
      % % in particular,
      % % $\ell(z_1 \Intersect z) = \min_{t_1, t_2 \in R} \ell(t_1 \Intersect t_2)$. 
      If $b = \h$, 
      $\sigma_1$
      is the only tine in $F'$ with the maximum reach, zero. 
      Note that 
      $\reach_{F'}(r_1) = \reach_F(r_1) - 1 = \rho(F) - 1 \geq -1$. 
      Since $\sigma_1$ and $r_1$ are $yb$-disjoint, 
      it follows that 
      $\mu_x(F') \geq \min(\reach_{F'}(\sigma_1), \reach_{F'}(r_1) ) 
      \geq \reach_{F'}(r_1) \geq = -1$.

      On the other hand, if $b = \H$ then 
      $F'$ contains two new conservative extensions, 
      $\sigma_1$ and $\sigma_2$, 
      both with label $|xy| + 1$, 
      where $z_1 \Prefix \sigma_1$ and $r_1 \Prefix \sigma_2$.  
      These extensions, therefore, are $yb$-disjoint 
      and have zero reach.
      It follows that $\mu_x(F') \geq 0$.
  \end{description}
\end{proof}

Note that 
if we want \eqref{eq:mu-lowerbound} 
to hold only for \emph{a given prefix} $x \PrefixEq w$ 
(a scenario pertinent in~\cite{LinearConsistencySODA}), 
the adversary $\Adversary^*$ 
% in Proposition~\ref{prop:mu-lowerbound} 
(which produces a canonical fork) 
would be an overkill. 
Instead, 
we can use a simpler, prefix-aware adversary 
such as the one mentioned 
at the outset of Section~\ref{sec:opt-adversary}; 
let us call this strategy $\Adversary$. 
In addition, 
instead of assuming Theorem~\ref{thm:opt-adversary-canonical}, 
it suffices to assume 
Proposition~\ref{prop:mu-lowerbound} inductively 
for all strings of length $|w|$. 
Let $F$ be the fork 
built by $\Adversary$ for the string $w = xy$.
In conjunction with Proposition~\ref{prop:mu-upperbound}, 
this would imply ``$\rho(F) = \rho(w)$ and 
$\mu_x(F) = \mu_x(y),$'' 
a critical property used inside the above proof. 
We omit further details.

\subsection{Proof of Theorem~\ref{thm:relative-margin} and Theorem~\ref{thm:opt-adversary-canonical}}

\paragraph{Proof of Theorem~\ref{thm:relative-margin}.}
Let $w \in \{\h, \H, \A\}^*$. 
If $w = \varepsilon$ then, by Claim~\ref{claim:rho-mu-A}, 
$\rho(\varepsilon) = 0$. 
If $|w| \geq 1$,~\eqref{eq:rho-recursive} is implied by 
the combination of 
Claim~\ref{claim:rho-mu-A},~\eqref{eq:rho-upperbound} and~\eqref{eq:rho-lowerbound}.

Let $w = xy$ be an arbitrary decomposition. 
We proceed by induction on $|y|$. 
If $|y| = 0$ then 
Claim~\ref{claim:rho-mu-A} implies that $\mu_x(\varepsilon) = \rho(x)$. 
Otherwise,~\eqref{eq:mu-relative-recursive} 
is implied by the combination of 
Claim~\ref{claim:rho-mu-A},~\eqref{eq:mu-upperbound} and~\eqref{eq:mu-lowerbound}.

% The ``additionally'' part of Theorem~\ref{thm:relative-margin} 
% is proven in Theorem~\ref{thm:opt-adversary-canonical}.
\hfill\qed

\paragraph{Proof of Theorem~\ref{thm:opt-adversary-canonical}.}
The proof is by induction on $|w|$. 
If $w$ is the empty string $\varepsilon$, 
the only fork $F \Fork \varepsilon$ is the trivial fork 
containing a single (honest) root vertex. 
By Claim~\ref{claim:rho-mu-A}, 
$F$ satisfies $\rho(\varepsilon) = 0$ 
and 
$\mu_\varepsilon(\varepsilon) = \rho(\varepsilon) = 0$.

Now, let $n$ be a non-negative integer and 
let $w$ be a characteristic string of length $n+1$. 
Assume that Theorem~\ref{thm:opt-adversary-canonical} 
holds for all characteristic strings of length $0, 1, \ldots, n$. 
% In particular, this means that the fork $F_{n}$ 
% built by $\Adversary^*$ for $w_1 \ldots w_{n}$ is canonical, 
% i.e., for every decomposition $w_1 \ldots w_{n} = xy$, 
% $\mu_x(F_{n}) = \mu_x(y)$.
Note that this assumption satisfies 
the premise in Proposition~\ref{prop:mu-lowerbound}. 
A combined application of 
Claim~\ref{claim:rho-mu-A}, Proposition~\ref{prop:mu-upperbound}, 
and Proposition~\ref{prop:mu-lowerbound} 
implies 
Theorem~\ref{thm:opt-adversary-canonical} 
for $|w| = n + 1$.
\hfill\qed

\section{The semi-synchronous setting}\label{sec:async}\label{sec:async-model}
We set the stage by stating the $\Delta$-synchronous model.

% \subsection{Semisynchronous executions}
  \begin{definition}[Semi-synchronous characteristic string]\label{def:semisync-char-string}
    Let $\slot_1, \ldots, \slot_{n}$ be a sequence of slots. 
    A \emph{semi-synchronous characteristic string} $w$ 
    is an element of $\{\h,\H,\A, \perp\}^n$ 
    defined for a particular execution of a blockchain protocol 
    on these slots so that 
    for $t \in [n]$, 
    $w_t = \perp$ if $\slot_{t}$ was assigned to no participants; otherwise, 
    $w_t = \A$ if $\slot_{t}$ was assigned to an adversarial participant; otherwise, 
    $w_t = \h$ if $\slot_{t}$ was assigned to a single honest participant; otherwise 
    $w_t = \H$.
  \end{definition}

  In the $\Delta$-synchronous setting, axiom~\ref{axiom:honest-depth} 
  is replaced by 
  \begin{enumerate}[label={\textbf{A\arabic*}\textsubscript{$\Delta$}}., ref={\textbf{A\arabic*}\textsubscript{$\Delta$}}, start=4]
  \item\label{axiom:honest-depth-delta} 
    In a $\Delta$-synchronous execution, 
    if two honestly generated blocks $B_1$ and $B_2$ are labeled
    with slots $\slot_1$ and $\slot_2$ for which $\slot_1 + \Delta < \slot_2$,
    the length of the unique blockchain terminating at $B_1$ is
    strictly less than the length of the unique blockchain terminating at $B_2$.
  \end{enumerate}

  \begin{definition}[$\Delta$-Fork]\label{def:delta-fork}
    Let $w\in \{\h, \H, \A, \perp\}^n, \Delta \in \{0, 1, 2, \ldots \}$, 
    $P = \{ i : w_i = \h\}$, and $Q = \{ j : w_j = \H\}$. 
    A \emph{$\Delta$-fork} for the semi-synchronous string $w$ consists of a directed and rooted
    tree $F=(V,E)$ with a labeling $\ell:V \to \{0,1,\ldots,n\}$. We insist
    that each edge of $F$ is directed away from the root vertex. 
    We require conditions~\ref{fork:root}--\ref{fork:unique-honest} 
    from Definition~\ref{def:fork} and 
    % further require that
    \begin{enumerate}[label=(F{\arabic*}\textsubscript{$\Delta$}), start=4]
      % \item\label{fork:root-delta} the root vertex $r$ has label $\ell(r)=0$;

      % \item\label{fork:monotone-delta} the labels of vertices along any directed path are strictly increasing;

      % \item\label{fork:unique-honest-delta}\label{fork:multiply-honest-delta} 
      % each index $i\in P$ 
      % is the label of exactly one vertex of $F$ and 
      % each index $j\in Q$ 
      % is the label of at least one vertices of $F$; and

      \item\label{fork:honest-depth-delta} 
      for any indices $i,j\in P \Union Q$, 
      if $i + \Delta < j$ then 
      the depth of a vertex with label $i$ 
      is strictly less than 
      the depth of a vertex with label $j$.
    \end{enumerate}
  \end{definition}
  If $F$ is a $\Delta$-fork for the semi-synchronous characteristic string $w$, we write
  $F\vdash_\Delta w$.  
  A $\Delta$-fork generalizes a synchronous fork in Definition~\ref{def:fork} 
  since the latter is a $\Delta$-fork with $\Delta = 0$. 
  We sometimes emphasize this fact by writing $F' \Fork_0 w'$ 
  where $w'$ is a synchronous characteristic string and $F'$ is a synchronous fork.
  Note that 
  condition~\ref{fork:honest-depth-delta} 
  is a direct analogue of axiom~\ref{axiom:honest-depth-delta}.
  % conditions~\ref{fork:root-delta}--\ref{fork:honest-depth-delta} 
  (We already know that conditions~\ref{fork:root}--\ref{fork:unique-honest} 
  are direct
  analogues of axioms~\ref{axiom:root}--~\ref{axiom:honest}.)

  \begin{definition}[Reduction map]\label{def:reduction-map}
    For $\Delta \in \NN$, 
    we define the function $\Reduce : \{\perp, \h, \H, \A\}^* \rightarrow \{\h, \H, \A\}^*$ 
    inductively as follows: $\Reduce(\varepsilon) = \varepsilon$ and 
    for $w \in \{\perp, \h, \H, \A\}^*$, 
    \begin{align}
      \Reduce(b w) &= 
      \begin{cases}
        \Reduce(w) & \quad\text{if $b = \perp$}\,, \\
        b \Reduce(w) & \quad\text{if $b \in \{\h, \H\}$ and $\{\perp, \A\}^{\Delta} \PrefixEq w$}\,, \\
        \A \Reduce(w) & \quad\text{otherwise}
        \,.
      \end{cases}
    \end{align}    
  \end{definition}
  \noindent
  Note that in the above definition, 
  if $w' = \Reduce(w)$ and 
  $A = \{i : w_i \neq \perp\}$ then $|A| = |w'|$. 
  Also note that 
  the reduction $\Reduce$ implicitly defines, for each $w$, 
  a 
  bijective, increasing function $\pi : A \rightarrow [|w'|]$. 
  Note that 
  $\Reduce$ turns an $\h$ or $\H$ symbol in $w$ 
  into an $\A$ symbol in $w'$ with a constant probability. 
  Therefore, for any slot $t$ in $w$, 
  the reduction map $\Reduce$ 
  amplifies the probability that the slot $\pi(t)$ 
  in $w' = \Reduce(w)$ is adversarial.

  \begin{definition}[$\Delta$-settlement with parameters $s,k \in \NN$]\label{def:settlement-delta}
    Let $n \in \NN$ and let $w \in \{\perp, \h, \H, \A\}^n$. 
    Let $t \in [s + k, n]$ be an integer, $\hat{w} \PrefixEq w, |\hat{w}| = t$, and 
    let $F$ be any $\Delta$-fork for $\hat{w}$. 
    We say that a slot $s$ is \emph{not $(k, \Delta)$-settled in $F$} 
    if $F$ contains 
    two maximum-length tines $\Chain_1, \Chain_2$ so that 
    at least one of these tines contains a vertex with label $s$,
    both tines contain at least $k$ vertices after slot $s$, and 
    the label of their last common vertex is at most $s - 1$. 
    % $\LengthBeyond{\Chain_i}{s} \geq 1 + k$ for $i = 1, 2$. 
    % that ``diverge prior to $s$,'' i.e., they either
    % contain different vertices labeled with $s$, or one contains a
    % vertex labeled with $s$ while the other does not. 
    Otherwise, we say that \emph{slot $s$ is $(k, \Delta)$-settled in $F$}. 
    We say that \emph{slot $s$ is $(k, \Delta)$-settled in $w$} if, 
    for each $t \geq s+k$, 
    it is $(k, \Delta)$-settled in every $\Delta$-fork $F \Fork \hat{w}$ 
    where $\hat{w} \PrefixEq w, |\hat{w}| = t$.
  \end{definition}
  Note that in the above definition, 
  we truncated $k$ trailing blocks from a tine 
  whereas in Definition~\ref{def:settlement}, 
  we truncated from a tine 
  all trailing blocks corresponding to the last $k$ slots. 
  Note that this change of perspective is necessary since 
  $w$ may contain $\perp$ symbols, i.e., empty slots.

    % We define the partial order $\leq$ on $\{\h, \H, \A, \perp\}^T$ 
    % as follows: 
    % $x \leq y$ if and only if $\Reduce(x) \leq \Reduce(y)$

    \begin{theorem}[Main theorem; $\Delta$-synchronous setting]\label{thm:main-async}
      Let $f, \epsilon \in (0,1)$ and $\Delta \in \{0, 1, 2, \ldots\}$. 
      Let $s, k, T \in \NN$ so that $T \geq s + k + \Delta$. 
      Write $p_\perp = 1 - f$ and $\beta = (1 - f)^\Delta$. 
      Let $p_\A \in [0, f)$ so that $p_\A, f, \epsilon$, and $\beta$ 
      satisfy 
      \begin{equation}\label{eq:condition-prob-delta}
        p_\A  \beta/f + (1 - \beta) \leq (1 - \epsilon)/2 
        \,.
      \end{equation}      
      Let $p_\h \in (0, f - p_\A]$ and write $p_\H = f - p_\A - p_\h$. 
      Let $w \in \{\perp, \h, \H, \A\}^T$ be a random variable 
      so that each 
      $w_i, i \in [T]$, is independent and identically distributed as
      follows: $\Pr[w_i = \sigma] = p_\sigma$ for
      $\sigma \in \{\h, \H, \A, \perp\}$. 
      Let $\mathcal{B}$ be the distribution of $w$. 
      % Let $A$ be the event that 
      % slot $s$ is not $(k, \Delta)$-settled in $w$.
      Then 
      $$
        % \Pr_{w \sim \mathcal{B}}[A]  
        \Pr_w[\text{slot $s$ is not $(k, \Delta)$-settled in $w$}]  
          \leq
        % \exp\bigl(-k\cdot \Theta(\min(\epsilon^3, \epsilon^2 p_\h)) + \Delta \bigr)
        \exp\left( 
          % -k\cdot \Omega(\epsilon^2)
          -k\cdot \Omega(\min(\epsilon^3, \epsilon^2 p_\h \beta/f ) ) 
          + 
          \frac{\epsilon(1+\Delta)}{1 - \epsilon} 
        \right)
        \,.
      $$
      (Here, the asymptotic notation hides constants that do not depend on $\epsilon$ or $k$.)
      % Let $\mathcal{W}$ be a distribution on $\{\perp, \h, \H, \A\}^T$ 
      % so that 
      % $\mathcal{W} \DominatedBy \mathcal{B}$. 
      % Then $\Pr_{w \sim \mathcal{W}}[A] \leq \Pr_{w \sim \mathcal{B}}[A]$.
    \end{theorem}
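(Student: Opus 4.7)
The plan is to lift the analysis from the synchronous setting (Theorem~\ref{thm:main}) to the $\Delta$-synchronous setting via the reduction map $\Reduce$ of Definition~\ref{def:reduction-map}, following the strategy introduced in~\cite{Praos}. Given $w \sim \mathcal{B}$, the idea is to consider the reduced synchronous string $w' \triangleq \Reduce(w) \in \{\h,\H,\A\}^*$, argue combinatorially that a $(k,\Delta)$-settlement violation at slot $s$ in $w$ induces an ordinary $k$-settlement violation at slot $\pi(s)$ in $w'$ (where $\pi$ is the bijection implicit in $\Reduce$), bound the distribution of $w'$ by stochastic dominance, and then invoke Theorem~\ref{thm:main}.

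The first step is a combinatorial reduction lemma: any $\Delta$-fork $F \Fork_\Delta \hat{w}$ for a prefix $\hat{w} \PrefixEq w$ naturally induces a synchronous fork $F' \Fork_0 \Reduce(\hat{w})$ obtained by discarding the (necessarily empty) vertices of the $\perp$ slots and re-labeling the remaining vertices via $\pi$. The key observation is that $\Reduce$ downgrades an honest position to $\A$ in $w'$ unless the next $\Delta$ original positions all lie in $\{\perp, \A\}$; consequently, two $w'$-honest positions are separated by more than $\Delta$ in the original timeline, so the $\Delta$-honest-depth constraint~\ref{fork:honest-depth-delta} that holds in $F$ delivers the synchronous constraint~\ref{fork:honest-depth} in $F'$. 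A $(k,\Delta)$-settlement violation in $F$---two maximum-length tines diverging prior to $s$ with at least $k$ vertices after slot $s$ on each tine---then translates directly to a $k$-settlement violation in $F'$, since the reduction preserves vertices and only shortens their labels (so each witness tine still has $k$ vertices after $\pi(s)$, which in turn forces $|\Reduce(\hat{w})| \geq \pi(s) + k$).

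Next, I would show that the distribution of $w'$ is stochastically dominated (in the sense of Definition~\ref{def:dominance}) by an $(\epsilon, p_\h\beta/f)$-Bernoulli distribution on $\{\h,\H,\A\}$. The probability that a given position of $w'$ equals $\A$ is at most $p_\A^* \triangleq p_\A\beta/f + (1-\beta)$, because an $\A$-symbol of $w'$ arises either from an original $\A$ or from an honest position whose successor window of length $\Delta$ contains some non-silent symbol (a window bounded by $1-(1-f)^\Delta = 1-\beta$); hypothesis~\eqref{eq:condition-prob-delta} then ensures $p_\A^* \leq (1-\epsilon)/2$. Similarly, the $\h$-probability in $w'$ is at least $p_\h\beta/f$. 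The symbols of $w'$ are not independent---the reduction of position $t$ depends on $w_{t+1},\ldots,w_{t+\Delta}$---but this dependence is favorable to the honest parties, and a position-by-position coupling establishes the dominance. Theorem~\ref{thm:main} then yields the main term $\exp(-k \cdot \Omega(\min(\epsilon^3, \epsilon^2 p_\h\beta/f)))$.

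The residual factor $\exp(\epsilon(1+\Delta)/(1-\epsilon))$ in the stated bound absorbs standard edge effects---the last up to $\Delta$ non-$\perp$ positions of $w$ cannot enjoy a full window of $\Delta$ successors and so receive slightly pessimistic reductions---together with Chernoff slack incurred while relating block counts in $w$ to block counts in $w'$. The main obstacle is the reduction lemma itself: one must carefully verify that $F'$ is a valid synchronous fork (in particular, that conditions~\ref{fork:unique-honest} and~\ref{fork:honest-depth} survive the deletion and re-labeling) and that the maximum-length witness pair in $F$ maps to a maximum-length witness pair in $F'$, which is where the careful alignment between the $\Delta$-setting's tine-vertex-count settlement (Definition~\ref{def:settlement-delta}) and the synchronous setting's slot-count settlement (Definition~\ref{def:settlement}) becomes crucial.
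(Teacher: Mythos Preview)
Your high-level plan---apply $\Reduce$, push the $\Delta$-fork to a synchronous fork on $w'=\Reduce(w)$, and invoke the synchronous analysis---matches the paper, and your combinatorial reduction is essentially Proposition~\ref{prop:reduction-bijection}. The fork isomorphism works exactly as you describe.

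The stochastic step, however, has a real gap. You claim the distribution of $w'$ is dominated by an $(\epsilon, p_\h\beta/f)$-Bernoulli law and then plan to invoke Theorem~\ref{thm:main}. This dominance is in the wrong direction on the tail: the last $\Delta$ symbols of $w'$ are systematically \emph{more} adversarial than the Bernoulli law (their reduction windows are truncated, as you yourself note), so on those coordinates $w'$ dominates the Bernoulli string rather than being dominated by it. Since settlement violation is monotone increasing in the partial order $\h<\H<\A$, you cannot upper-bound $\Pr[\text{violation in }w']$ by the Bernoulli probability this way. Your ``edge effects'' paragraph identifies the right phenomenon but does not supply a mechanism that points the inequality in the needed direction. (Separately, your statement that the symbols of $w'$ are ``not independent'' is too pessimistic: Proposition~\ref{prop:reduction-indep} shows that after truncating the final $\Delta$ symbols they are genuinely i.i.d., since the segments $e_ib_i$ in $w$ are disjoint.)

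The paper avoids this by \emph{not} invoking Theorem~\ref{thm:main}. It works directly with Catalan slots on the i.i.d.\ prefix $x'y'z'$ of $w'$ via Bound~\ref{bound:unique-honest-catalan}, and adds a second event $G_2$: the associated random walk drops at least $\Delta$ below its level at the Catalan slot $c'$ within the following $k$ steps. When both events hold, the Catalan slot survives even after appending the worst-case tail $a'$ of length $\Delta$, so $c'$ retains UVP in all of $w'$ (Lemma~\ref{lemma:async-catalan-uvp}). The additive term $\epsilon(1+\Delta)/(1-\epsilon)$ in the theorem is precisely $\Pr[\overline{G_2}\mid G_1]$ from Bound~\ref{bound:unique-honest-catalan-Delta}; it is the price of a specific random-walk overshoot event, not generic Chernoff slack from a vertex-versus-slot count conversion.
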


    The main observation for proving the theorem above is that 
    a $\Delta$-settlement violation in $w$, 
    implies a certain combinatorial event (parameterized by $\Delta$) 
    in a prefix of $\Reduce(w)$. 
    Specifically, we can analyze the latter event 
    using techniques developed in proving Theorem~\ref{thm:main}. 

    \paragraph{A comment on consistent chain selection.} 
    Assuming axiom~\ref{axiom:tie-breaking} is satisfied, 
    it is easy to prove an analogue of Theorem~\ref{thm:main-bivalent} 
    in the $\Delta$-synchronous setting; 
    we need only use Bound~\ref{bound:two-catalans} 
    in lieu of Bound~\ref{bound:unique-honest-catalan}. 
    The resulting bound on the probability of a $(k, \Delta)$-settlement violation would be 
    $$
      \exp\left( 
        % -k\cdot \Omega(\epsilon^2)
        -k\cdot \Omega(\epsilon^3) 
        + 
        \frac{\epsilon(1+\Delta)}{1 - \epsilon} 
      \right)
      \,.
    $$
    We omit further details.
    % The proof is presented in Section~\ref{sec:async}.

    % Notice that $\Delta$ can only weakly impact the bound when $k$ is large. 
    % On the other hand, 
    % the bound ceases to be useful 
    % when $k \epsilon^3$ is comparable to $\Delta \ln k$ 
    % or equivalently, 
    % when $\Delta \gtrapprox \epsilon^3 k/\ln k$. 

\paragraph{Road-map for the proof.}
Let $w \in \{\perp, \h, \H, \A\}^*$, 
$w' = \Reduce(w), n = |w|$, and $m = |\Reduce|$. 
Our roadmap forward is as follows:
\begin{enumerate}
  \item 
  Show that there is a bijection between 
  $\Delta$-forks for $w$ and 
  synchronous forks for $w'$. 
  In particular, for each $\Delta$-fork $F \DeltaFork w$ 
  there is an isomorphic synchronous fork $F' \Fork_0 w'$ 
  and a bijective map $\{i \in [n] : w_i \neq \perp\} \rightarrow [m]$. 
  This is shown in Proposition~\ref{prop:reduction-bijection}.

  \item Show that if $w$ violates $\Delta$-settlement 
  then some prefix $b \Prefix \Reduce(w)$ violates 
  a suitably-defined combinatorial event $B_\Delta$.   
  It is important that we can analyze this event 
  using the techniques and results we have already established.
  This is done in Lemma~\ref{lemma:async-catalan-uvp}.

  \item Since the decisions made by $\Reduce$ at each slot 
  depends on the $\Delta$ future slots, 
  the distribution of the last few symbols of $\Reduce(w)$ 
  will be ``distorted'' no matter how $w$ is distributed. 
  Assuming $w$ has i.i.d.\ symbols, we need to 
  show that the symbols 
  in the aforementioned prefix $b \Prefix \Reduce(w)$ 
  are i.i.d.\ as well. 
  This is done in Lemma~\ref{prop:reduction-indep}.

  \item Obtain a bound on $Pr[B_\Delta]$ in Bound~\ref{bound:unique-honest-catalan-Delta} and 
  proceed to prove Theorem~\ref{thm:main-async}.
\end{enumerate}

\subsection{Structural properties of the reduction map}
An important property of the reduction $w' = \Reduce(w)$ is that 
it readily provides a bijection between $\Delta$-forks for $w$ 
and synchronous forks for $w'$.

\begin{proposition}\label{prop:reduction-bijection}
  Let $w \in \{\perp, \h, \H, \A\}^*$ 
  and $w' = \Reduce(w)$. 
  Then, for every $\Delta$-fork $F \Fork w$ there is 
  a synchronous fork $F' \Fork_0 w'$ 
  which is isomorphic to $F$. 
  $F'$ is called the \emph{image of $F$ under $\Reduce$}.
  % Let $m : V(F) \rightarrow V(F')$ be the said isomorphism. 
  % Furthermore, writing $w' = xy$ and taking 
  % any $t \in F$, 
  % let $u \in F'$ be the longest prefix of $m(t), \ell(u) \leq |x|$. 
  % Then $\length(u) \geq \length(t) - |y|$.
\end{proposition}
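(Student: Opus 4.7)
The plan is to construct $F'$ by keeping the underlying tree of $F$ intact and relabeling each vertex through the order-preserving bijection $\pi\colon\{i\in\{0,1,\ldots,n\} : w_i\neq\perp\}\to\{0,1,\ldots,m\}$ induced by $\Reduce$. I will first note that any vertex of $F$ carries a label $i$ with $w_i\neq\perp$---vertices represent issued blocks and empty slots issue none---so setting $\ell_{F'}(v)=\pi(\ell_F(v))$ is well-defined. The edge set is left unchanged, so by construction $F$ and $F'$ are isomorphic as rooted directed trees up to this relabeling.

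Next I will verify that $F'\vdash_0 w'$ by checking the axioms of Definition~\ref{def:fork}. Condition~\ref{fork:root} holds because $\pi(0)=0$. Condition~\ref{fork:monotone} holds because $\pi$ is strictly increasing and therefore preserves strict label ordering along every directed path. For condition~\ref{fork:unique-honest} I will trace through $\Reduce$: an index $i'\in\{1,\ldots,m\}$ is uniquely (resp.\ multiply) honest in $w'$ precisely when $i=\pi^{-1}(i')$ is uniquely (resp.\ multiply) honest in $w$ \emph{and} $w_{i+1}\cdots w_{i+\Delta}\in\{\perp,\A\}^\Delta$; in either case the vertex (or vertices) that $F$ is required to carry with label $i$ becomes a vertex of $F'$ with the honest label $i'$. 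Honest-slot vertices of $F$ whose $w$-labels are demoted to $\A$ by $\Reduce$ simply become adversarial-labeled vertices of $F'$, which is permitted without further constraint.

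The only substantive step is condition~\ref{fork:honest-depth}. Taking two honest indices $i'<j'$ in $w'$ and lifting via $\pi^{-1}$, I obtain honest indices $i<j$ in $w$ with $i$ \emph{clean} in the sense that $w_{i+1},\ldots,w_{i+\Delta}\in\{\perp,\A\}$. The next honest index of $w$ strictly greater than $i$ must therefore exceed $i+\Delta$, forcing $j>i+\Delta$. Applying condition~\ref{fork:honest-depth-delta} of the $\Delta$-fork $F$ then yields $\depth_F(u)<\depth_F(v)$ for any vertices $u,v\in F$ with $\ell_F(u)=i$ and $\ell_F(v)=j$; since $F$ and $F'$ coincide as trees, $\depth_{F'}(u)=\depth_F(u)<\depth_F(v)=\depth_{F'}(v)$, which is exactly what condition~\ref{fork:honest-depth} demands in $F'$.

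The hard part, modest as it is, will be to notice that the window rule built into $\Reduce$---demoting a symbol $b\in\{\h,\H\}$ unless the next $\Delta$ symbols lie in $\{\perp,\A\}$---is tuned exactly so that every pair of honest labels surviving in $w'$ is separated by more than $\Delta$ in $w$. This is the single point at which the definition of $\Reduce$ interacts with the $\Delta$-separation in condition~\ref{fork:honest-depth-delta}, and with it in hand the remainder of the verification is bookkeeping around the relabeling $\pi$.
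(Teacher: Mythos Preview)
Your proposal is correct and follows essentially the same approach as the paper: both construct $F'$ by copying the tree and relabeling via the order-preserving bijection $\pi$, and both identify the key observation that two honest indices surviving in $w'$ must come from honest indices in $w$ separated by more than $\Delta$, so that condition~\ref{fork:honest-depth-delta} yields condition~\ref{fork:honest-depth}. The paper gives only a brief sketch, whereas you spell out the verification of each fork axiom in more detail; in particular your explicit remark that no vertex of $F$ can carry a $\perp$-label (so $\pi$ is defined on all vertex labels) is a useful point the paper leaves implicit.
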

\begin{proof}[Proof sketch.]
  Let $F'$ be a copy of $F$. 
  % Let $m : V(F) \rightarrow V(F')$ be the said isomorphism. 
  Establish the natural bijection $m: V(F) \rightarrow V(F')$ 
  given by the copying proess, 
  i.e., $u \mapsto m(u)$, and 
  relabel the vertices as 
  \begin{equation}\label{eq:pi-reduction}
    \text{$\ell(m(u)) = \pi(\ell(u))$ for each vertex $u \in F$}
    \,.
  \end{equation}
  Set $r(F') = m(r(F))$ and $\ell(r(F')) = 0$.
  It suffices to check that $F' \Fork_0 w'$, 
  i.e., $F'$ is a valid (synchronous) fork for $w'$. 
  Specifically, 
  if there are two honest slots $h_1, h_2$ in $w$ 
  within a distance $\Delta$ of each other, 
  then the former honest slot is mapped to 
  an adversarial slot in $w'$. 
  Therefore, in $F'$, 
  an honest vertex is aware of 
  all honest vertices with smaller labels.
% 
  % The previous discussion implies that $t \in F$ implies  $\length(t) = \length(m(t))$. 
  % The equation in the claim follows since 
  % the trailing $|y|$ slots in $w'$ can add at most $|y|$ vertices to 
  % any tine $m(t) \in F', \ell(m(t)) \leq |x|$.
\end{proof}

Next, we show that 
a $\Delta$-settlement violation in $w$ implies 
a combinatorial event in $\Reduce(w)\TrimSlot{\Delta} \in \{\h, \H, \A\}^*$. 
It follows that we can use our existing stochastic techniques 
to bound $\Delta$-settlement violations on $w$. 

Let $w' \in \{\h, \H, \A\}^*$ be a characteristic string. 
Define $b_i \in \{\pm 1\}$ as $b_i = 1$ iff $w'_i = \A$. 
Let $S = (S_i)_{i =0}^{|w'|}$ be a simple biased walk on $\Z$ 
defined as $S_0 = 0, S_i = S_{i-1} + b_i$.

\begin{lemma}\label{lemma:async-catalan-uvp}
  Let $w \in \{\perp, \h, \H, \A\}^*, \Delta, s, k \in \NN$ 
  so that $|x| = s$ and $x_{s} \neq \perp$. 
  % and $a' = \Reduce(a), |a'| = \Delta$.     
  Let $w' = \Reduce(w)$ and 
  write $w' = x'y'z'a'$ so that $|a'| = \Delta$ and $|y'| \geq 2k$. 
  Recall the simple biased walk $S = (S_i)$ on $w'$ defined above.
  Let $E$ denote the event that 
  a slot $c'$ in $y'$ is Catalan in $x'y'z'$ 
  \emph{and} $S_{c' + k + i} \leq S_{c'} - \Delta$ for all $i \geq 0$.
  % Let $G_y$ be the event that 
  % there is a uniquely honest slot $c$ in $y'_1 \ldots y'_k$ 
  % that is Catalan in $w'$. 
  % Let $B_\Delta$ be the event that 
  % $S_{c + k + i} \geq S_{c} - \Delta$ for some $i \geq 0$. 
  % % Let $E_y$ be the event that 
  % there are at least $1 + \Delta$ 
  % slots in $y'$ that are 
  % Catalan in $x'y'z'$ and, in addition, 
  % that the earliest one of these slots is uniquely honest. 
  If $E$ occurs then 
  $s$ is $(|y'|,\Delta)$-settled in $w$. 
  % Here, 
  % settlement is defined in~\ref{def:settlement-delta} and 
\end{lemma}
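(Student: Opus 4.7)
I would argue by contrapositive. Assume slot $s$ is not $(|y'|, \Delta)$-settled in $w$. By Definition~\ref{def:settlement-delta}, there exist a prefix $\hat w \PrefixEq w$ and a $\Delta$-fork $F \DeltaFork \hat w$ carrying two maximum-length tines $\tau_1, \tau_2$ whose last common vertex has label at most $s-1$, with at least one containing a vertex labeled $s$ and each having at least $|y'|$ vertices with labels exceeding $s$. Because $x_s \neq \perp$, the index $s$ survives the reduction and maps under the natural relabelling $\pi$ of~\eqref{eq:pi-reduction} to position $|x'|$ in $w'$.

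First I would pass from the $\Delta$-synchronous to the synchronous setting via Proposition~\ref{prop:reduction-bijection}: the fork $F$ is isomorphic to a synchronous fork $F' \Fork_0 \hat w'$ with $\hat w' = \Reduce(\hat w) \PrefixEq w'$, and the tines $\tau_1, \tau_2$ are relabelled to maximum-length tines $\tau_1', \tau_2'$ of $F'$ whose last common vertex has label at most $|x'|-1$ and which each carry at least $|y'|$ vertices with labels exceeding $|x'|$. Since $|y'| \geq 2k$, both tines extend to some label strictly past $c'$.

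Next I would verify that $c'$ remains Catalan in the prefix $\hat w'$ of $w'$. Left-Catalan is automatic. For right-Catalan: intervals $[c', r]$ with $r \leq |x'y'z'|$ are $\Hheavy$ by hypothesis, and for $r > |x'y'z'|$, the $\Delta$-slack $S_r \leq S_{c'} - \Delta$ combined with $S_{c'} = S_{c'-1} - 1$ (since $w'_{c'}$ is honest) yields $S_r < S_{c'-1}$, equivalently $\#_\h([c', r]) + \#_\H([c', r]) > \#_\A([c', r])$. A direct argument paralleling the proof of Fact~\ref{fact:almost-cp-implies-catalan} (using Corollary~\ref{coro:interval-honest-vertices} and Fact~\ref{fact:fork-structure}) shows that any honest Catalan slot has the bottleneck property in its string, so both $\tau_1'$ and $\tau_2'$ must contain at least one vertex labeled $c'$. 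When $c'$ is uniquely honest in $w'$, axiom~\ref{fork:unique-honest} provides a unique such vertex, forcing the common prefix of $\tau_1', \tau_2'$ past $c' > |x'|$ and contradicting divergence before $|x'|$; this case is effectively Theorem~\ref{thm:unique-honest}.

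The remaining case, $w'_{c'} = \H$, is the main obstacle: $F'$ may contain distinct vertices $u_1 \neq u_2$ at $c'$, one lying on each tine. The $\Delta$-slack should carry the argument here. Via Fact~\ref{fact:fork-structure-reach}, the persistent $\Delta$-deficit of the walk past $c'$ bounds the reach of any adversarial extension of a vertex preceding $c'$, so the second (non-dominant) $c'$-vertex cannot sustain a tine of maximum length across the $|y'|$-slot window past $c'$. Translated back through the bijection of Proposition~\ref{prop:reduction-bijection}, the $\Delta$-buffer of non-honest slots immediately following the pre-image of $c'$ in $w$ (guaranteed by the reduction) combined with axiom~\ref{fork:honest-depth-delta} should pin $u_1 = u_2$ and complete the contradiction. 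Making this accounting precise, and showing that the quantitative $\Delta$ in the slack hypothesis is exactly what compensates for the multi-honest ambiguity which otherwise would require axiom~\ref{axiom:tie-breaking} (as in Theorem~\ref{thm:multiple-honest}), is the delicate step.
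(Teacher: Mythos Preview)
Your argument for the uniquely honest case is essentially the paper's proof, just run contrapositively: reduce to the synchronous setting via Proposition~\ref{prop:reduction-bijection}, use the $\Delta$-slack to push Catalan-ness of $c'$ from $x'y'z'$ to all of $w'$, then invoke Theorem~\ref{thm:unique-honest} to get UVP and hence settlement. That part is fine.

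Where you go astray is the multiply honest case. You are right that the lemma statement, as written, does not say $c'$ is \emph{uniquely} honest. But the paper's own proof simply asserts ``$y'$ contains a uniquely honest slot $c'$ which is Catalan''; this is the intended hypothesis, and it is consistent with how the lemma is actually invoked in the proof of Theorem~\ref{thm:main-async}, where the relevant good event $G_1$ is ultimately controlled by Bound~\ref{bound:unique-honest-catalan} (which is specifically about uniquely honest Catalan slots). So the lemma statement is imprecise, not your reading of it, and the $\H$ case need not be handled at all.

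Your proposed rescue for the $\H$ case does not work. The role of the $\Delta$ in the condition $S_{c'+k+i} \leq S_{c'} - \Delta$ is exactly what the paper uses it for: the last $\Delta$ symbols $a'$ of $w'$ are ``distorted'' by the reduction, and the slack guarantees that even after those $\Delta$ potential upticks the walk never climbs back to $S_{c'}$, so $c'$ stays right-Catalan in the full string $w'$. It has nothing to do with collapsing multiple honest vertices at $c'$. Likewise, the $\Delta$-buffer of $\{\perp,\A\}$ slots following the pre-image $c$ in $w$ (forced by the definition of $\Reduce$) does not pin $u_1 = u_2$: if $w_c = \H$ then the $\Delta$-fork can legitimately carry several honest vertices at slot $c$, all at the same depth, and these survive the isomorphism to $F'$. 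Absent the tie-breaking axiom~\ref{axiom:tie-breaking}, nothing in the hypotheses forces a unique vertex at a multiply honest Catalan slot; that is precisely why Theorem~\ref{thm:multiple-honest} needs two consecutive Catalan slots rather than one.
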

% We insist that the event $E_y$ in the lemma is sufficient, but not necessary, 
% for the conclusion to hold.\footnote{ 
%   In fact, there are other combinatorial events that lead to tighter tail bounds; 
%   we defer the relevant analysis for a future version of this manuscript.
% }
\begin{proof}
  % \subsection{Proof of Lemma~\ref{lemma:async-catalan-uvp}}
  % Let $k' = k$ 
  % and observe that any sequence of $k'$ vertices on a tine 
  % must span at least $k'$ contiguous slots. 
  % Let $r' = \pi(s) + k$ 
  % and let $I' = [\pi(s), r']$.
  Let $\pi$ be the bijection described after Definition~\ref{def:reduction-map}.
  Note that $|x'| = \pi(s)$. 
  Assume that $E$ occurs. 
  % Let the slots $c'_i, i \in [1+\Delta]$ be Catalan in $x'y'z'$ 
  % so that 
  %  $|x'| < c_1 < \cdots < c_{1 + \Delta} \leq |x'y'|$.
  Thus $y'$ contains a uniquely honest slot $c'$ which is Catalan in $x'y'z'$. 
  Note that $S_{|w'|} \leq S_{|x'y'z'|} + \Delta \leq (S_{c} - \Delta) + \Delta \leq S_{c'}$ 
  where the second inequality follows from the assumption that $E$ occurs. 
  It follows that $c'$ is Catalan in $w'$ as well. 
  Therefore, 
  by Theorem~\ref{thm:unique-honest}, 
  $c'$ has the UVP in $w'$.     
  Let $c$ be the integer satisfying $c' = \pi(c)$. 

  Let $b \PrefixEq xyz, |b| \geq |xy|$ and $b' = \Reduce(b) \PrefixEq x'y'z'$. 
  (Necessarily, $|b'| \geq |x'y'|$.)
  Since the reduction map gives an isomorphism between every 
  $\Delta$-fork for $b$ and 
  its unique image (which is a synchronous fork for $b'$) 
  under the reduction $\Reduce$, 
  it follows that $c$ has the UVP in $w$. 

  For any $\Delta$-fork $F \DeltaFork b$, 
  let $u \in F, \ell(u) = c$ be 
  the unique vertex contained by every tine 
  $t \in F$ viable at the onset of any slot after $c$. 
  Consider all tines $\tau \in F$ so that 
  % $\LengthBeyond{\tau}{s} \geq 1 + |y'|$ 
  $\tau$ has at least $|y'|$ vertices with label at least $s + 1$.
  and $\tau$ is viable at the onset of slot $\ell(\tau) + 1$. 
  Since $\ell(\tau) \geq |xy| \geq c$, 
  it follows that $u \PrefixEq \tau$. 
  Thus all these tines $\tau$ 
  agree about slot $s$ since $s < c = \ell(u)$. 
  In particular, if $F$ contains two maximum-length tines $\tau_1, \tau_2$, 
  each with at least $|y'|$ vertices after slot $s$, 
  then they would agree about slot $s$. 
  In fact, $\ell(\tau_1 \Intersect \tau_2) \geq c > s$. 
  Hence $s$ must be $(|y'|, \Delta)$-settled in $F$ and, 
  since $F$ was arbitrary, 
  $s$ must be $(|y'|, \Delta)$-settled in $w$. 
  % \hfill$\qed$
  % 
\end{proof}

\subsection{Stochastic properties of the reduction map}
It turns out that 
if the bits in $w$ are i.i.d.\ then 
so are the bits in a suitable prefix of $\Reduce(w)$ 
albeit with a slightly different distribution 
(which accounts for the absence of the empty slots). 
Specifically, for any string $x = x_1 x_2 \ldots $ on any alphabet and any $k \in \NN$, 
define $x\TrimSlot{k} \triangleq x_1 \ldots x_{|x| - k}$.

\begin{proposition}\label{prop:reduction-indep}
  Let $T \in \NN, w = w_1 \ldots w_T \in \{\perp, \h, \H, \A\}^T$ 
  be a sequence of i.i.d.\ symbols, 
  and define $p_\sigma \triangleq \Pr[w_1 = \sigma]$ for each $\sigma \in \{\perp, \h, \H, \A\}$.
  Let $x = \Reduce(w)$ and let $\ell = |x|$. 
  Write $f = 1 - p_\perp$ and $\alpha = (1 - f)^\Delta$. 
  Then the symbols in the string $x\TrimSlot{\Delta}$ are i.i.d.\ 
  with 
  \begin{align}\label{eq:reduction-dist}
    \begin{matrix*}[l]
      \Pr[x_i = \h] &=& p_\h \cdot \alpha / f\,, \\
      \Pr[x_i = \H] &=& p_\H \cdot \alpha / f\,, \quad \text{and}\\
      \Pr[x_i = \A] &=& 1 - \alpha + p_\A \cdot \alpha / f\,
    \end{matrix*}
  \end{align}
  for each $i \in [\ell - \Delta]$.
\end{proposition}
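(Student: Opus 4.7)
The plan is to verify the stated marginals and to establish independence of the emissions via a renewal-style argument on the scanning process defined by $\Reduce$. First I would observe that each symbol of $x\TrimSlot{\Delta}$ corresponds to a \emph{safe} position $j$ in $w$, meaning $j + \Delta \leq T$: the tail $w_{T-\Delta+1..T}$ contributes at most $\Delta$ non-$\perp$ positions, hence at most $\Delta$ emissions, all of which are absorbed by truncating the last $\Delta$ symbols of $x$. Thus $x\TrimSlot{\Delta}$ contains only emissions whose complete $(\Delta+1)$-window lies inside $w$.

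Next I would compute the marginals. At a safe non-$\perp$ position $j$, the emission $x_{\pi(j)}$ is a deterministic function of the window $(w_j, w_{j+1}, \ldots, w_{j+\Delta})$. Conditioning on $w_j \neq \perp$ and unwinding Definition~\ref{def:reduction-map} yields $\Pr[x_{\pi(j)} = \sigma \mid w_j \neq \perp] = q_\sigma$ with $q_\h = p_\h\alpha/f$, $q_\H = p_\H\alpha/f$, and $q_\A = 1 - \alpha + p_\A\alpha/f$, matching~\eqref{eq:reduction-dist}.

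The core step is independence. Let $E_1, E_2, \ldots$ denote the successive emissions and $P_1 < P_2 < \cdots$ their $w$-positions. I would prove by induction on $n$ that, for every history $\mathcal{H}_{n-1} = ((E_1, P_1), \ldots, (E_{n-1}, P_{n-1}))$ and every $\sigma \in \{\h, \H, \A\}$,
\[
\Pr[E_n = \sigma \mid \mathcal{H}_{n-1},\, P_n \leq T-\Delta] = q_\sigma,
\]
by case analysis on $E_{n-1}$: (A) if $E_{n-1} \in \{\h, \H\}$, the reduction clause forces $w_{P_{n-1}+1..P_{n-1}+\Delta}$ into the honest-triggering configuration, after which $w_{P_{n-1}+\Delta+1..T}$ is fresh i.i.d.\ and independent of $\mathcal{H}_{n-1}$; (B) if $E_{n-1} = \A$ with $w_{P_{n-1}} = \A$, nothing constrains $w_{P_{n-1}+1..T}$; (C) if $E_{n-1} = \A$ with $w_{P_{n-1}} \in \{\h, \H\}$ but the honest trigger fails, I would sub-condition on the exact position $P_n = P_{n-1}+j_0$ of the first non-$\perp$ symbol in $[P_{n-1}+1,P_{n-1}+\Delta]$, which witnesses the ``trigger-fails'' event and leaves $w_{P_n+1..T}$ unconditioned. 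In each case a fresh i.i.d.\ tail rooted at $P_n$ lets the previous paragraph's marginal computation apply.

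Combining the marginals with this conditional independence yields, by induction, that $(x_1, \ldots, x_{|x|-\Delta})$ has joint distribution $\prod_i q_{x_i}$, completing the proof. The main obstacle is Case~(C): after a failed honest emission the guard window is conditionally correlated (only its ``not all inert'' indicator has been revealed), and the argument hinges on verifying that sub-conditioning on the location of the first non-$\perp$ symbol fully absorbs this correlation, so that $w_{P_n+1..T}$ genuinely restarts in distribution.
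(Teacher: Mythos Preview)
Your approach is correct, but the paper takes a shorter route that sidesteps the overlapping-window difficulty entirely. Rather than proving conditional independence of successive emissions by case analysis, the paper first passes to the infinite i.i.d.\ string $w_1 w_2 \cdots$ and observes that it decomposes into i.i.d.\ blocks $e_i b_i$, where $e_i$ is the $i$th non-$\perp$ symbol and $b_i$ is the (possibly empty) maximal run of $\perp$'s immediately following it. The $i$th output symbol $z_i$ is a deterministic function of the block $(e_i, b_i)$ alone: $z_i = e_i$ if $e_i \in \{\h, \H\}$ and $|b_i| \geq \Delta$, and $z_i = \A$ otherwise. Since the blocks are disjoint and i.i.d., the outputs are immediately i.i.d.---no induction, no case analysis. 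Finiteness of $T$ is handled by noting that only the last $\Delta$ output symbols can be distorted, and $x\TrimSlot{\Delta}$ is a prefix of the infinite output sequence $z_1 z_2 \cdots$.

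Your renewal argument reaches the same conclusion and is sound; the case split on $E_{n-1}$ together with the sub-conditioning in Case~(C) is exactly what is needed to untangle windows that \emph{appear} to overlap. The paper's observation is that, once each non-$\perp$ symbol is grouped with its trailing $\perp$-run, there is no overlap at all: the event ``the next $\Delta$ symbols are all $\perp$'' is the same as ``$|b_i| \geq \Delta$,'' which lives entirely inside block $i$. What your approach buys is robustness---it would adapt to a reduction rule whose guard genuinely depended on future non-$\perp$ symbols in a way that could not be absorbed into a block. For the rule at hand, the block decomposition is the cleaner argument.
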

\begin{proof}
  % \subsection{Proof of Proposition~\ref{prop:reduction-indep}}
  First let us pretend for a moment that $T = \infty$; 
  then $\ell = \infty$ as well. 
  Let us write the infinite sequence $w$ as a concatenation of segments 
  of $\perp$s punctuated by a single non-$\perp$ symbol. 
  That is, write $w = b_0 e_1 b_1 e_2 b_2 \ldots$ 
  where, for $i = 0, 1, \ldots$, $b_i = \perp^*$ and $e_i \in \{\h, \H, \A\}$. 
  The reduction map $\Reduce$ translates a segment $e_i b_i$ into a symbol $z_i$ 
  as follows:
  \begin{align*}
    z_i &= \begin{cases}
      \A &\quad \text{if $e_i = \A$ or $|b_i| \leq \Delta - 1$}\, \\
      e_i &\quad \text{if $e_i \in \{\h, \H\}$ and $|b_i| \geq \Delta$}\,.
    \end{cases}
  \end{align*}
  In particular, the segments $e_i b_i$ as well as 
  the events that determine the value of an $z_i$ are disjoint. 
  Therefore, the symbols in the infinite sequence 
  $z_1 z_2 \ldots = \Reduce(w_1 w_w \ldots)$ are 
  independent and identically distributed.

  If $T$ is finite, however, the last $\Delta$ symbols 
  of $x = \Reduce(w)$ are ``distorted'' 
  in that the translated symbols in this region will be more favored to be $\A$s. 
  However, since the last $\Delta$ symbols of $x$ must correspond to 
  at least $\Delta$ trailing symbols of $w$, 
  it follows that $x_1 \ldots x_{\ell - \Delta}$ 
  is a prefix of $z_1 z_2 \ldots\ $. 

 %  The probabilities in~\eqref{eq:reduction-dist} 
  % are a straightforward interpretation of the translation rule above.
  It remains to compute the probabilities. 
  Let $q_\sigma = \Pr[z_i = \sigma]$ for any $i$ and $\sigma \in \{\h, \H, \A\}$. 
  Then 
  $
    q_\h = p_\h/(1 - p_\perp) p_\perp^{\Delta} = p_\h \alpha/f,
    q_\H = p_\H \alpha/f
  $, 
  and 
  $
  q_\A 
    = 1 - (q_\h + q_\H) 
    = 1 - (p_\h + p_\H)\alpha/f 
    = 1 - (f - p_\A)\alpha/f 
    = 1 - \alpha + p_\A \alpha/f
  $.
  % \hfill$\qed$    
\end{proof}

The final ingredient to proving Theorem~\ref{thm:main-async} 
is a tail bound for (the complement of) the event $E$ 
in Lemma~\ref{lemma:async-catalan-uvp}.

\begin{bound}\label{bound:unique-honest-catalan-Delta}
  % Assume the premise of Bound~\ref{bound:unique-honest-catalan}. 
  Let $T, s, k \in \NN, T \geq s + 2k + \Delta$ and 
  $\epsilon, q_\h \in (0, 1)$ so that 
  the characteristic string $w' \in \{\h, \H, \A\}^T$ 
  satisfies the $(\epsilon, q_\h)$-Bernoulli condition. 
  Write $w' = x'y'z'$ so that $y' = w_s \ldots w_{s + 2k - 1}$. 
  Let $G$ denote the event that 
  $w'$ has a Catalan slot $c$ 
  which belongs to $y'_1 \ldots y'_k$. 
  Condition on $G$. 
  Let $\Delta \in \NN$ and 
  recall the simple biased random walk $S = (S_i)$ on $w'$ 
  defined above Lemma~\ref{lemma:async-catalan-uvp}. 
  Let $B_\Delta$ be the event that 
  $S_{c + k + i} \geq S_{c} - \Delta$ for some $i \geq 0$. 
  Then for large $k$, 
  \begin{align}\label{eq:prob-Delta-after-catalan}
    \Pr_w[B_\Delta \mid G] \leq 
      \exp\left( 
        -k\cdot \Omega(\epsilon^2)
        % -k \left(\Omega(\min(\epsilon^3, \epsilon^2 q_\h)) + \epsilon^2 /2 \right) 
        + 
        \frac{\epsilon(1+\Delta)}{1 - \epsilon} 
    \right)
    \,.
  \end{align}
  % where $b(k, \epsilon)$ is the right-hand side 
  % in the probability in Bound~\ref{bound:unique-honest-catalan}.
\end{bound}
% \subsection{Proof sketch of Bound~\ref{bound:unique-honest-catalan-Delta}}\label{sec:catalan-Delta-estimates}
\begin{proof}~
  For simplicity, write $p = q_\A$, and $q = q_\h + q_\H$. 
  Conditioned on $G$, $S_c \geq S_{c + i}$ for all $i \geq 1$. 
  Let $y = y'[c + 1 : c + k]$ so that $|y| = k$.
  Moreover, $\#_\A(y) \leq \#_\h(y) + \#_\H(y)$. 
  Let $f_i(k), i = 0, 1, \ldots$ be the probability that 
  $S_{c + k} = S_c - i$. 
  Thus we wish to upper-bound $f(\Delta, k) \triangleq \sum_{i = 0}^\Delta f_j(k)$.

% \sum_{a = 0}^{\lceil k/2 \rceil - 1} 
  Write $a = E_\A(y)$ and $h = k - a$ 
  and suppose $h - a = j$ for some $j = 0, 1, 2, \ldots$\ .
  Hence, for a fixed $j$, we have $h = (k+j)/2$ and $a = (k-j)/2$. 
  In addition, $k$ and $j$ has the same parity.
  Thus, 
  \begin{align*}
    f_j(k) 
    &= {k\choose (k+j)/2} p^{(k-j)/2} q^{(k+j)/2} 
    = {k\choose (k+j)/2} (pq)^{k/2} (q/p)^{j/2} 
    \leq {k\choose k/2} (pq)^{k/2} (q/p)^{j/2} \\
    &= O(1)\cdot \frac{2^k}{\sqrt{\pi k}} \cdot (1 - \epsilon^2)^k 2^{-k} \cdot (q/p)^{j/2} 
    = O(1)\cdot  \frac{(1 - \epsilon^2)^{k/2}}{\sqrt{k}} \cdot (q/p)^{j/2}
    % \,.
  \end{align*}
  since $p = (1 - \epsilon)/2$ and $q = (1+\epsilon)/2$. 
  It follows that 
  \begin{align*}
    f(\Delta, k) &=\sum_{j = 0}^\Delta f_j(k) 
    \leq \frac{O(1)}{\sqrt{k}}\cdot  (1 - \epsilon^2)^{k/2} \sum_{j = 0}^\Delta (q/p)^{j/2}
    \leq \frac{O(1)}{\sqrt{k}}\cdot  \exp(- k \epsilon^2/2) \cdot (1 + \Delta) (q/p)^{\Delta/2}
    \,.
  \end{align*}
  Since 
  \begin{align*}
    (q/p)^{1/2}
    &= \left(\frac{1+\epsilon}{1 - \epsilon}\right)^{1/2}
    = \left(1 + \frac{2\epsilon}{1 - \epsilon}\right)^{1/2} 
    \leq \exp(\epsilon/(1 - \epsilon)) 
    \,,
  \end{align*}
  we have 
  \begin{align*}
    % \Pr[B_\Delta \mid G_c]
    f(\Delta, k)
    &\leq \frac{O(1 + \Delta)}{\sqrt{k}}\cdot \exp\bigl(-k \epsilon^2/2 + (1 + \Delta) \epsilon/(1 - \epsilon) \bigr)
    \,.
  \end{align*}
  Note that for fixed $\epsilon$ and $\Delta$, $f(\Delta, k)$ decreases geometrically in $k$. 
  Thus $\Pr[B_\Delta \mid G_c] = \sum_{t \geq k} f(\Delta, t)$ 
  is no more than the quantity in~\eqref{eq:prob-Delta-after-catalan}.
\end{proof}

\subsection{Proof of Theorem~\ref{thm:main-async}}
  The symbols in $w$ are independent and identically distributed.
  Write 
  % $w = xa$ where $|x| \geq s, |a| = \Delta$ and 
  $w' = \Reduce(w), w' = x'y'z'a', |a'| = \Delta$ and $|y'| \geq 1+\Delta$. 
  Let $k$ be an integer so that $|y'| = 2k$. 
  Recall the random walk $S = (S_i)$ on $w'$ 
  defined above Lemma~\ref{lemma:async-catalan-uvp}. 
  Let $G_1$ denote the (good) event that 
  a slot $c'$ in $y'$ is Catalan in $x'y'z'$. 
  Let $G_2$ denote the (good) event that 
  $S_{c' + k + i} \leq S_{c'} - \Delta$ for all $i \geq 0$. 
  By Lemma~\ref{lemma:async-catalan-uvp}, 
  $G_1 \Intersect G_2$ implies $\overline{A}$. 
  (Here, $\overline{\cdot}$ denotes the complement.) 
  The contrapositive of the above statement gives us 
  \begin{equation}\label{eq:G1G2}
    \Pr[A] \leq \Pr[\overline{G_1}] + \Pr[\overline{G_2} \mid G_1]
    \,.
  \end{equation}

  The terms on the right-hand side can be bounded from above 
  using Bounds~\ref{bound:unique-honest-catalan} 
  and~\ref{bound:unique-honest-catalan-Delta}, respectively, 
  provided the symbols in $x'y'z'$ are i.i.d.\ with 
  $\Pr[x'_1 = \A] = (1 - \epsilon)/2$. 
  Let us check whether this condition holds. 
  % It remains to check whether the symbols in $x'y'z'$ are i.i.d.\ with 
  % $\Pr[x'_i = \A] = (1 - \epsilon)/2$. 
  We have $f = 1 - p_\perp$ and $\alpha = (1-f)^\Delta$. 
  Proposition~\ref{prop:reduction-indep} 
  states that 
  the symbols of $x'y'z'$ are i.i.d.\ 
  with distribution given by~\eqref{eq:reduction-dist}. 
  For each $\sigma \in \{\h, \H, \A\}$ we write 
  $p'_\sigma = \Pr[x'_1 = \sigma]$. 

  The condition~\eqref{eq:condition-prob-delta} 
  can be equivalently stated as 
  $1 - (1 - p_\A/f)\alpha = (1 - \epsilon)/2$. 
  We check that 
  $p'_\A 
  = 1 - (p'_\h + p'_\H) 
  = 1 - (p_\h + p_\H)\alpha/f
  = 1 - (f - p_\A)\alpha/f
  = 1 - (1 - p_\A/f)\alpha
  = (1 - \epsilon)/2
  $ 
  and, consequently, $p'_\h + p'_\H = (1 + \epsilon)/2$. 

  Hence we can directly apply 
  Bound~\ref{bound:unique-honest-catalan-Delta} 
  on the terms in the right-hand side of~\eqref{eq:G1G2} 
  to conclude that 
  \[
    \Pr[A] \leq \exp\left( 
      % -k\cdot \Omega(\epsilon^2)
      -k \left(\Omega(\min(\epsilon^3, \epsilon^2 q_\h)) \right) 
      + 
      \frac{\epsilon(1+\Delta)}{1 - \epsilon} 
    \right)
    \,.
  \]

  The claim involving the distribution $\mathcal{W}$ 
  follows from the analogous claim in Theorem~\ref{thm:main}. 
  % This completes the proof of Theorem~\ref{thm:main-async}. 
  \hfill $\qed$

\section{The common prefix property}\label{sec:cp}\label{sec:cp-model}
% \subsection{The Common Prefix  (CP) property}\label{sec:cp-model} 
  For the sake of simplicity, 
  assume the synchronous communication model from Section~\ref{sec:game}; 
  the $\Delta$-synchronous setting can be handled in the same way 
  as delineated in Sections~\ref{sec:async-model} and~\ref{sec:async}.

  The common prefix property with parameter $k$ asserts
  that, for any slot index $s$, if an honest observer at slot $s + k$
  adopts a blockchain $\Chain$, the prefix $\Chain[0 : s]$ will be
  present in every honestly-held blockchain at or after slot $s + k$.
  (Here, $\Chain[0 : s]$ denotes the prefix of the blockchain $\Chain$
  containing only the blocks issued from slots $0, 1, \ldots, s$.)

  We translate this property into the framework of forks.  Consider a
  tine $t$ of a fork $F \vdash w$.  The \emph{trimmed} tine
  $t\TrimSlot{k}$ is defined as the portion of $t$ labeled with slots
  $\{ 0, \ldots, \ell(t) - k\}$. For two tines, we use the notation
  $t_1 \PrefixEq t_2$ to indicate that the tine $t_1$ is a
  prefix of tine $t_2$.

  \begin{definition}[Common Prefix Property with parameter $k \in \NN$]\label{def:cp-slot}
    Let $w$ be a characteristic string. A fork $F \vdash w$ satisfies
    $\kSlotCP$ if, for all pairs $(t_1, t_2)$ of viable tines $F$ for
    which $\ell(t_1) \leq \ell(t_2)$, we have $t_1\TrimSlot{k} \PrefixEq t_2$. 
    Otherwise, we say that the tine-pair $(t_1, t_2)$ is a witness to a $\kSlotCP$ violation.
    Finally, \emph{$w$ satisfies $\kSlotCP$} if every fork $F \vdash w$ satisfies $\kSlotCP$.
    %We
    %denote this property by $\kSlotCP$.
    % Let $\Chain_1$ and $\Chain_2$ be two blockchains adopted by 
    % two (not necessarily distinct) honest players 
    % at the onset of slots $r_1$ and $r_2$, respectively, with $r_1 \leq r_2$. 
    % Then $\Chain_1\TrimSlot{k} \PrefixEq \Chain_2$. 
    % We denote this property by $\kSlotCP$.
  \end{definition} 
  If a string $w$ does not possess the $\kSlotCP$ property, 
  we say that \emph{$w$ violates $\kSlotCP$}.
  Observe that traditionally
  (cf. \cite{GKL17}), 
  the truncated chain 
  % $\Chain\TrimSlot{k}$ 
  is defined in terms of
  deleting a suffix of (block-)length $k$ from $\Chain$. 
  We denote this traditional version of the common prefix property as the
  $\kCP$ property. Note, however, that a $\kCP$ violation immediately
  implies a $\kSlotCP$ violation; hence, bounding the probability of a
  $\kSlotCP$ violation is sufficient to rule out both events.

  \paragraph{Connection with the UVP.}
  Note that 
  if $w$ admits a $\kSlotCP$ violation, 
  then there must be a fork $F$ containing 
  two distinct viable tines $t_1, t_2, \ell(t_1) \leq \ell(t_2)$ 
  so that $\ell(t_1) - \ell(t_1 \Intersect t_2) \geq k + 1$. 
  Then $t_1$ must contain a vertex $v, \ell(t_1 \Intersect t_2) < \ell(v) \leq \ell(t_1) - k$ 
  so that $v$ does not belong to $t_2$. 
  % Recall the UVP from Section~\ref{sec:catalan}.   
  If every substring $x$ of $w$ with $|x| \geq k$, contained a slot with the UVP then 
  we would never have a $\kSlotCP$ violation. 
  Therefore, 
  \begin{align}\label{eq:cp-uvp}
    \parbox{20mm}{\centering $w$ violates $\kSlotCP$}
    &\implies
    \parbox{45mm}{\centering
      $w$ has a substring $y, |y| \geq k$ so that 
      no slot indexed by $y$ has the UVP in $w$.
    }
  \end{align}

  Recall that a uniquely honest Catalan slot has the UVP. 
  This fact allows us to bound 
  the probability of common prefix violations by 
  reasoning only about Catalan slots.\footnote{ 
  One can also prove Theorem~\ref{thm:main-CP} 
  by 
  directly showing---as is done in~\cite{LinearConsistency}---that 
  a $\kSlotCP$ violation implies a $k$-settlement violation 
  and then appealing to Theorem~\ref{thm:main}. 
  The proof of the implication 
  turns out to be quite long and complicated 
  compared to the short proof above; 
  see Appendix~\ref{sec:cp-forks}.
  A positive side of this this alternate proof, however,  
  is that it shows how arguments in~\cite{LinearConsistency} 
  can be adapted to our generalized fork framework.
  }

  \begin{theorem}[Main theorem; CP version] \label{thm:main-CP} Let
    $\epsilon, p_\h \in (0,1)$ and $T, k \in \NN, T \geq k$. 
    Let $w$ be a length-$T$ characteristic string satisfying the $(\epsilon, p_\h)$-Bernoulli condition. 
    % be a random variable satisfying the $\epsilon$-martingale condition.
    Then
    $$
      \Pr_{w}[\text{$w$ violates $\kCP$}] 
        \leq 
      \Pr_{w}[\text{$w$ violates $\kSlotCP$}] 
        \leq T \cdot 
        \exp\bigl(-k\cdot \Omega(\min(\epsilon^3, \epsilon^2 p_\h))\bigr)
        \,.
    $$
    
    Next, suppose that axiom~\ref{axiom:tie-breaking} is satisfied. 
    If $w$ is a length-$T$ bivalent characteristic string satisfying the $(\epsilon, 0)$-Bernoulli condition 
    then
    $$
      \Pr_{w}[\text{$w$ violates $\kCP$}] 
        \leq 
      \Pr_{w}[\text{$w$ violates $\kSlotCP$}] 
        \leq T \cdot 
        \exp\bigl(-k \cdot \Omega(\epsilon^3 (1 + O(\epsilon)))\bigr)
        \,.
    $$
  \end{theorem}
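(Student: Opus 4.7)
The plan is to combine the implication \eqref{eq:cp-uvp} with the Catalan-slot characterizations of the UVP (Theorems~\ref{thm:unique-honest} and~\ref{thm:multiple-honest}) and the tail estimates of Bounds~\ref{bound:unique-honest-catalan} and~\ref{bound:two-catalans}, closing the argument with a union bound over length-$k$ windows inside $w$. The first inequality ($\kCP$ violation implies $\kSlotCP$ violation) is stated in the text just above the theorem, so I will concentrate on bounding the probability of a $\kSlotCP$ violation.

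For the first statement, I would first observe that, by \eqref{eq:cp-uvp}, a $\kSlotCP$ violation forces the existence of a substring of $w$ of length at least $k$ in which no slot has the UVP in $w$; restricting to any length-$k$ sub-window, this yields an index $s\in [T-k+1]$ such that the window $y_s=w_s\ldots w_{s+k-1}$ contains no slot with the UVP in $w$. Theorem~\ref{thm:unique-honest} then implies that $y_s$ contains no uniquely honest Catalan slot of $w$. Bound~\ref{bound:unique-honest-catalan} controls the probability of this event for a fixed $s$ by $\exp(-k\cdot\Omega(\min(\epsilon^3,\epsilon^2 p_\h)))$, and a union bound over the at most $T$ choices of $s$ yields the stated estimate.

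The second statement is handled identically, replacing Theorem~\ref{thm:unique-honest} by Theorem~\ref{thm:multiple-honest} and Bound~\ref{bound:unique-honest-catalan} by Bound~\ref{bound:two-catalans}. If two consecutive Catalan slots $c,c+1$ of $w$ both lie inside $y_s$, then applying Theorem~\ref{thm:multiple-honest} with the index $c+1$ in the role of ``$s$'' shows that slot $c$ has the UVP in $w$; hence the absence of a UVP slot in $y_s$ forces the absence of two consecutive Catalan slots in $y_s$. The union bound over the at most $T$ starting positions then completes the proof.

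The argument is essentially bookkeeping once the structural results of Section~\ref{sec:definitions} and the stochastic bounds are in hand, so there is no substantial technical obstacle. The only mild subtlety will be confirming the boundary case of Theorem~\ref{thm:multiple-honest} in which $c+1=T$; in that event the theorem still delivers UVP for $T-1=c$, which is exactly what the reduction needs, so no special treatment is required.
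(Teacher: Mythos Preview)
Your proposal is correct and follows essentially the same route as the paper: reduce via \eqref{eq:cp-uvp} to the absence of a UVP slot in a window, convert this via Theorem~\ref{thm:unique-honest} (resp.\ Theorem~\ref{thm:multiple-honest}) into the absence of a uniquely honest Catalan slot (resp.\ two consecutive Catalan slots), apply Bound~\ref{bound:unique-honest-catalan} (resp.\ Bound~\ref{bound:two-catalans}), and finish with a union bound over starting positions. The only cosmetic difference is that the paper union-bounds over all substrings of length $\geq k$ and then collapses the resulting geometric sum, whereas you pass directly to length-$k$ sub-windows; your version is marginally cleaner but equivalent up to the constant absorbed in the $\Omega(\cdot)$.
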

  \begin{proof}
% \paragraph{Proof of Theorem~\ref{thm:main-CP}}
    \newcommand{\EpsCat}{\mathsf{\varepsilon}}
    (The first claim.) 
    % Let $w \in \{\h, \H, \A\}^T$ be a (trivalent) characteristic string 
    % sampled from a distribution satisfying the $\epsilon$-martingale condition.
    Let $s \in [T - k]$.
    Let $\EpsCat_{k}$ be 
    the probability that $y = w_s \ldots w_{s+k-1}$ contains no slot with the UVP in $w$. 
    Then, recalling~\eqref{eq:cp-uvp}, we can apply a union bound 
    over all substrings of $w$ of length at least $k$ to get 
    $
      \Pr[\text{$w$ violates $\kSlotCP$}] 
        \leq T\, \sum_{r \geq  k} \EpsCat_{r}
    $
    where the factor $T$ represents a summation over all $s \in [T - k + 1]$. 
    By Theorem~\ref{thm:unique-honest}, 
    if a substring $y$ of $w$ does not contain a slot with the unique vertex property in $w$, 
    $y$ cannot contain a uniquely honest slot that is Catalan in $w$.
    Therefore, $\EpsCat_k$ is no more than the error probability from Bound~\ref{bound:unique-honest-catalan}. 
    Since $\EpsCat_k$ decreases exponentially in $k$, 
    we can write 
    \begin{align*}
      \Pr[\text{$w$ violates $\kSlotCP$}] 
        &\leq T\cdot O(1) \cdot \EpsCat_{k}
        % \\&\leq T 
        %   \begin{cases}
        %     \exp\left(-k\cdot \Theta(\min(\epsilon^3, \epsilon^2 q_\h) \right) & \text{if $q_\h \rightarrow 0$}\,,\\
        %     \exp\left(-k\cdot \Theta( \min(\epsilon^3, \epsilon q_\h/q_\H) \right) & \text{otherwise}
        %     \,.
        %   \end{cases}
        \,.
    \end{align*}
    This proves the second inequality. 
    The first inequality follows since, 
    in a given characteristic string, 
    a $\kCP$ violation implies a $\kSlotCP$ violation.

    (The second claim.) 
    The proof in this case 
    is identical to the preceding argument except that 
    we need to refer to Theorem~\ref{thm:multiple-honest} in lieu of Theorem~\ref{thm:unique-honest}
    and Bound~\ref{bound:two-catalans} in lieu of Bound~\ref{bound:unique-honest-catalan}.
    % \hfill $\qed$  
  \end{proof}
  \paragraph{The $\Delta$-synchronous setting.} 
  A $\kCP$ violation in a $\Delta$-fork for a string $w \in \{\perp, \h, \H, \A\}^*$ 
  would imply 
  a $\kCP$ violation in the corresponding synchronous fork 
  in the string $\Reduce(w) \in \{\h, \H, \A\}^*$ 
  and, consequently, a $\kSlotCP$ violation in $\Reduce(w)$. 
  % Hence the probability of the former event 
  % is no more than that in Theorem~\ref{thm:main-async} 
  % applied on the string $\Reduce(w)$. 
  We omit further details.

\section*{Acknowledgments}
We thank Peter Ga\v{z}i (IOHK) for finding a bug in Fact~\ref{fact:fork-structure} in a previous version of this paper.

\bibliography{forks,abbrev0,crypto_crossref}

\appendix

% \section{Catalan slots and relative margin}
% \label{sec:fork-margin-catalan}\label{sec:catalan-margin}
% \input{fork-margin-catalan}

\break
\newpage
\section{CP violations and balanced forks with concurrent honest leaders}
\label{sec:cp-forks}
Balanced forks played a critical role 
in the analysis of~\cite{LinearConsistency}. 
Specifically, a balanced fork was equivalent to a settlement violation in their setting 
and a CP violation would also imply a balanced fork.
In the current analysis, 
we have analyzed settlement and CP violations through 
their connections with the UVP and Catalan slots; 
thus balanced forks are not necessary in our analysis. 
However, it is instructive to see 
whether the statement ``a CP violation implies a balanced fork'' 
still holds in our model 
and, importantly, 
how the existing proof needs to be modified. 

Thus the the goal of this section is to prove 
Theorem~\ref{thm:divergence-settlement} below which 
would yield an alternative proof of Theorem~\ref{thm:main-CP} 
without using the Catalan slots.
However, the simplicity of the proof of Theorem~\ref{thm:main-CP} 
in Section~\ref{sec:cp} 
demonstrates the expressive power of the UVP and Catalan slots 
compared to relative margin and balanced forks.

\paragraph{A $\kSlotCP$ violation implies a $k$-settlement violation.}
Let $w$ be a characteristic string, written $w = xy$, 
and let $F$ be a fork for $w$. 
Recall that a slot $s = |x| + 1$ is not $k$-settled 
if and only if $F$ contains 
two maximum-length tines that diverge prior to $s$, 
i.e., $F$ is $x$-balanced (see Definition~\ref{def:balanced-fork}).

\begin{definition}[Slot divergence]\label{def:slot-divergence}
  Let $w \in \{\h, \H, \A\}^*$ and let $F$ be a fork for $w$. 
  Define the \emph{slot divergence} of 
  two tines $t_1, t_2 \in F$ 
  as 
  \begin{equation}\label{eq:slot-divergence-tines}
    \SlotDivergence(t_1, t_2) \defeq \ell(t_1) - \ell(t_1 \Intersect t_2)
    \quad\text{where $\ell(t_1) \leq \ell(t_2)$}
    \,.
  \end{equation}
  We can generalize this notion for forks and characteristic strings as follows: 
  $\SlotDivergence(F) \triangleq \max_{t_1, t_2 \in F} \SlotDivergence(t_1, t_2)$ and 
  $\SlotDivergence(w) \triangleq \max_{F \Fork w} \SlotDivergence(F)$. 
\end{definition}

By definition, a $\kSlotCP$ violation 
implies the existence of a fork with a slot divergence at least $k + 1$. 
Theorem~\ref{thm:divergence-settlement} below 
shows that a if a fork has a slot divergence at least $k+1$ then 
there is a balanced fork for a prefix of the same characteristic string so that 
two maximum-length tine diverge prior to last $k$ slots. 
Therefore, a $\kSlotCP$ violation implies an $(s,k)$-settlement violation 
for some slot $s$.

\begin{theorem}\label{thm:divergence-settlement}
  Let $k, T \in \NN$.  
  Let $w \in \{\h, \H, \A\}^T$ be a characteristic string 
  % which violates $\kSlotCP$. 
  so that $\SlotDivergence(w) \geq k + 1$.
  Then 
  there is a decomposition $w = xyz$ and a fork $\hat{F} \Fork xy$, 
  % where $|y| \geq k + 1$, 
  where $|y| \geq k$, 
  so that 
  $\hat{F}$ is $x$-balanced.
\end{theorem}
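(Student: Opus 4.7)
The plan is to promote the two divergent tines into the two $y$-disjoint maximum-length tines of the desired balanced fork. Extract a fork $F \Fork w$ and tines $t_1, t_2 \in F$ witnessing $\SlotDivergence(w) \geq k+1$; without loss of generality assume $\ell(t_1) \leq \ell(t_2)$ and let $s - 1 = \ell(t_1 \Intersect t_2)$, so that $\ell(t_1) \geq s + k$. Setting $x = w_1 \ldots w_{s-1}$, the two tines are automatically $y$-disjoint for every suffix $y$ beginning at slot $s$; what remains is to produce a decomposition $w = xyz$ with $|y| \geq k$ and a fork $\hat F \Fork xy$ in which (images of) $t_1$ and $t_2$ both attain the height of $\hat F$.

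By Fact~\ref{fact:margin-balance} it is equivalent to exhibit $\mu_x(y) \geq 0$ for some $y$ with $|y| \geq k$. I would obtain this by truncating $F$ to vertices with label at most $m$ for an appropriately chosen $m \in [\ell(t_1), T]$, closing the truncation by iteratively trimming adversarial leaves, and then using the adversarial indices available in $y = w_s \ldots w_m$ beyond $\ell(t_1)$ and $\ell(t_2)$ to pad both candidate tines up to the fork's height. Locating the right $m$ reduces to a monotonicity argument: as $m$ grows, the reserve of $t_1$ increases by one with every $\A$ encountered in $w$, whereas the bound that axiom~\ref{fork:honest-depth} forces on the height of the closed truncation grows strictly only across honest slots. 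These two quantities must therefore cross, yielding a value of $m$ at which both truncated tines have non-negative reach simultaneously, i.e., $\mu_x(w_s\ldots w_m) \geq 0$.

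The principal obstacle is accommodating multiply-honest slots in $y$: a single slot labeled $\H$ may contribute several honest vertices to $F^m$, each potentially inflating $\height(F^m)$, so naive counts of honest labels fail. To handle this I would work throughout with the honest-depth function $\hdepth(\cdot)$ from the fork definition, which tracks the maximum depth of honest vertices per slot, and invoke Fact~\ref{fact:fork-structure-reach} in place of the uniquely-honest arguments of~\cite{LinearConsistency}. This translates the reserve-versus-gap comparison into an $\Aheavy$/$\Hheavy$ dichotomy on subintervals of $y$; the divergence hypothesis $\ell(t_1) - (s-1) \geq k+1$ then supplies enough room to keep $|y| \geq k$ throughout the sweep, and Fact~\ref{fact:margin-balance} converts the resulting margin bound $\mu_x(y) \geq 0$ into the desired $x$-balanced fork $\hat F \Fork xy$.
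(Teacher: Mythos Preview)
Your proposal has a real gap in the ``monotonicity/crossing'' step. You want to find a cutoff $m \in [\ell(t_1), T]$ at which \emph{both} truncated tines have non-negative reach in the closed fork for $w_1\ldots w_m$, and you argue that $\reserve(t_1)$ (growing with each $\A$) and the closed-fork height (growing with each honest slot) must ``cross''. But this is not the right comparison, and no crossing is guaranteed. At $m = \ell(t_1)$, viability of $t_1$ already gives $\reach(t_1)\ge 0$; the difficulty is that the truncation of $t_2$ to labels $\le \ell(t_1)$ may be very short (possibly just the path to $u$) and hence have negative reach. Conversely, at $m = \ell(t_2)$, viability of $t_2$ gives $\reach(t_2)\ge 0$, but if there is an honest index $h$ with $\ell(t_1) < h \le \ell(t_2)$ and $\hdepth(h) > \length(t_1) + \#_\A\bigl((\ell(t_1),\ell(t_2)]\bigr)$, then $t_1$ has negative reach. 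Nothing in your argument rules out honest slots between $\ell(t_1)$ and $\ell(t_2)$, and nothing forces the two reach functions to be simultaneously non-negative at any single $m$. The sweep over $m$ does not help: as $m$ increases past each honest slot, the height of the closed truncation can jump by more than one (because of $\H$ slots or deep adversarial extensions now capped by the next honest depth), while $\reserve(t_1)$ stays fixed, so $\reach(t_1)$ can drop below zero and remain there.

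This is exactly the obstruction the paper's proof removes by choosing an \emph{extremal} witnessing pair: maximal slot divergence, then minimal $|\ell(t_2)-\ell(t_1)|$, then maximal $\length(t_1)$. Those extremality conditions force $\alpha = \ell(t_1\cap t_2)$ to be uniquely honest, force there to be \emph{no} honest index strictly between $\ell(t_1)$ and $\ell(t_2)$, and force $\hdepth(h) \le \min(\length(t_1),\length(t_2))$ for every honest $h < \beta$ (the first honest index $\ge \ell(t_2)$). With those facts in hand the paper takes $y = w_{\alpha+1}\ldots w_{\beta-1}$, pinches at $u$, and trims adversarial tails to balance. Your arbitrary-witness approach simply does not deliver these structural facts, and invoking Fact~\ref{fact:fork-structure-reach} or the $\Aheavy$/$\Hheavy$ dichotomy cannot substitute for them: those tools translate reach conditions into interval conditions but do not create the needed interval structure out of nothing. (A smaller issue: the definition of $\SlotDivergence$ in the paper does not require viability, so your first sentence should explicitly take a \emph{viable} witnessing pair, as the paper's set $A_F$ does; otherwise even the base case $\reach(t_1)\ge 0$ at $m=\ell(t_1)$ fails.)
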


\newcommand{\Final}[1]{\tilde{#1}}

% \subsection{Proof of Theorem~\ref{thm:cp-fork}}
% \begin{proof}
  
  Recall that $\ell(t)$ is the slot index of the last vertex of tine
  $t$.  
  Define $A \triangleq \bigcup_{F \Fork w} A_F$ where, for a
  given fork $F \Fork w$, define
  \[
    A_F \triangleq \left\{
      (\tau_1, \tau_2) \SuchThat \parbox{60mm}{       
      $\tau_1, \tau_2$ are two viable tines in the fork $F$, 
      $\ell(\tau_1) \leq \ell(\tau_2)$, and 
      % the pair $(\tau_1, \tau_2)$ is a witness to a $\kSlotCP$ violation
      $\SlotDivergence(\tau_1, \tau_2) \geq k + 1$
      }
     \right\}
     \,.
  \]
  % Define the \emph{slot divergence} of two tines as 
  % $\SlotDivergence(\tau_1, \tau_2) \defeq \ell(\tau_1) - \ell(\tau_1 \Intersect \tau_2)$ 
  % where $\tau_1 \Intersect \tau_2$ denotes the common prefix 
  % of the tines $\tau_1$ and $\tau_2$. 
  % Recalling the definition of a $\kSlotCP$ violation, it is clear that 
  % \begin{equation}\label{eq:divergence}
  %     \SlotDivergence(\tau_1, \tau_2) \geq k + 1 \quad \text{for all } (\tau_1, \tau_2) \in A
  %     \,.
  % \end{equation}
  % For concreteness and simplicity, we assume that the nodes in $F$ 
  % have been suitably labeled so that the following two conditions are met:
  Notice that there must be a tine-pair $(t_1, t_2) \in A$ which satisfies the following two conditions: 
    \begin{equation}\label{eq:tines}
      \SlotDivergence(t_1, t_2) 
      % = \SlotDivergence(w) 
      % = \max_{(\tau_1, \tau_2) \in T} \SlotDivergence(\tau_1, \tau_2) 
      \text{ is maximal over $A$\,,}
      % \text{and}
    \end{equation}
  % and
  \begin{equation}\label{eq:minimality}
    \parbox{0.85\columnwidth}{\centering
    $| \ell(t_2) - \ell(t_1) |$ 
      is minimal among all tine-pairs in $A$ 
      for which~\eqref{eq:tines} holds\,, 
      }
  \end{equation}
  and
  \begin{equation}\label{eq:length-multihonest}
    \parbox{0.85 \columnwidth}{\centering
    For a fixed $t_2$, 
    the tine $t_1$ has the maximum length 
    over all tines $t_1', \ell(t_1') = \ell(t_1)$ \\
    such that $(t_1', t_2)$ 
    satisfies~\eqref{eq:tines} and~\eqref{eq:minimality}\,. 
    }
    % \text{
    % if $(t_1, t_2), (t_1', t_2') \in A$, 
    % satisfy~\eqref{eq:tines} and~\eqref{eq:minimality} 
    % so that $\ell(t_1) = \ell(t_2)$, 
    % then $\length(t_1) \geq \length(t_2')$  
    % }
    % \,.
  \end{equation}
  (Note that $t_1, t_2$ are not uniquely identified.)
  The tines $t_1, t_2$ will play a special role in our proof; 
  let $F$ be a fork containing these tines. 

  Recall given a characteristic string $w \in \{\h, \H, \A\}^*$, 
  a uniquely honest slot contains the symbol $\h$, 
  a multiply honest slot contains the symbol $\H$, 
  and an adversarial slot contains the symbol $\A$.
  We call a slot honest if it contains either an $\h$ or an $\H$; 
  otherwise, we call it an adversarial slot. 

  \paragraph{The prefix $x$, fork $F_x$, and vertex $u$.} 
  Let $u$ denote the last vertex on the tine
  $t_1 \cap t_2$, as shown in the diagram below, and let
  $\alpha \triangleq \ell(u) = \ell(t_1 \cap t_2)$. 
  Let $x \triangleq w_1, \ldots, w_\alpha$ 
  and let $F_x$ be the fork-prefix of $F$ supported on $x$. 
  We will argue that $\alpha$ must be a uniquely honest slot and, 
  in addition, that 
  $F_x$ must contain a unique longest tine $t_u$ terminating 
  at the vertex $u$. 
  We will also identify a substring 
  % $y, |y| \geq k + 1$ 
  $y, |y| \geq k$ 
  such that $w$ can be written as $w = xyz$. 
  Then we will construct a balanced fork $\tilde{F}_y \Fork y$ by 
  modifying the subgraph of $F$ supported on $y$. 
  We will finish the proof by constructing an $x$-balanced fork by 
  suitably appending $\tilde{F}_y$ to $F_x$.
  % and then appealing to Fact~\ref{fact:margin-balance}.
    
  \begin{center}
      \begin{tikzpicture}[>=stealth', auto, semithick,
        unknown/.style={circle,draw=black,thick,font=\small},
        honest/.style={circle,draw=black,thick,double,font=\small},
        malicious/.style={fill=gray!10,circle,draw=black,thick,font=\small}]
        \node[honest] at (0,0) (u) {$u$};
        \node[malicious] at (3,.5)  (z1) {};
        \node[malicious] at (5,-.5)   (z2) {};
        \path (z1) ++(.4,.4) node {$t_1$};
        \path (z2) ++(.4,.4) node {$t_2$};
        \draw[thick,<-] (u) to (-1,0);
        \draw[thick,<-,gray] (z1) to[out=180,in=20] (u);
        \draw[thick,<-,gray] (z2) to[out=180,in=-20] (u);
      \end{tikzpicture}
    \end{center}
  %  Let $\beta$ denote the smallest honest index of $w$ for which
  %  $\beta \geq \ell(t_2) = \max(\ell(t_1), \ell(t_2))$, with the convention that
  %  $\beta = n+1$ if there is no such index.

    \paragraph{$\alpha$ must be a uniquely honest slot.}
    We observe, first of all, that the slot $\alpha$ can neither be adversarial nor multiply honest:
    otherwise it is easy to construct a fork
    $F^\prime \Fork w$ and a pair of tines in $F^\prime$ that violate~\eqref{eq:tines}. 
    Specifically, construct $F^\prime$ from $F$ by
    adding a new vertex $u^\prime$ to $F$ for which
    $\ell(u^\prime) = \ell(u)$, adding an edge to $u^\prime$ from the
    vertex preceding $u$, and replacing the edge of $t_1$ following $u$
    with one from $u^\prime$; then the other relevant properties of the
    fork are maintained, but the slot divergence of the resulting tines has
    increased by at least one. (See the diagram below.)
    \begin{center}
      \begin{tikzpicture}[>=stealth', auto, semithick,
        unknown/.style={circle,draw=black,thick,font=\small},
        honest/.style={circle,draw=black,thick,double,font=\small},
        malicious/.style={fill=gray!10,circle,draw=black,thick,font=\small}]
        \node[malicious] at (2,0) (v) {$u$};
        \node[malicious,dotted] at (2,1) (u) {$u^\prime$};
        \node[unknown] at (4,-.5)  (b1) {};
        \node[unknown] at (4,.5)  (a1) {};
        \node[unknown] at (0,0) (base) {};
        \node at (7,.5) (t1) {$t_1$};
        \node at (7,-.5) (t2) {$t_2$};
        % \node[state,honest] at (3,-1) (bottom) {};
        % \node[state,honest] at (7,1) (top) {$H$};
        \draw[thick,->] (base) -- (v);
        \draw[thick,->] (v) -- (a1);
        \draw[thick,->] (v) -- (b1);
        \draw[thick,->,dotted] (u) -- (a1);
        \draw[thick,->,dotted] (base) -- (u);
        \draw[thick,<-,gray] (t1) to[in=20,out=200] (a1);
        \draw[thick,<-,gray] (t2) to[in=20,out=200] (b1);
        \draw[thick,<-,gray] (base) to (-1,0);
        % \draw[thick,<->] (3,0) -- (7,0) node[pos=.5] {$\gap(f)$};
      \end{tikzpicture}
    \end{center}
    
    \paragraph{$F_x$ has a unique, longest (and honest) tine $t_u$.}
    A similar argument implies that the fork
    $F_x$ has a unique vertex of depth $\depth(u)$: namely, $u$ itself. In
    the presence of another vertex $u^\prime$ (of $F_x$) with depth
    $\depth(u)$, ``redirecting'' $t_1$ through $u^\prime$ (as in the
    argument above) would likewise result in a fork with 
    a larger slot divergence. 
    To see this, notice that $\ell(u^\prime)$ must be strictly less than $\ell(u)$ 
    since $\ell(u)$ is an honest slot (which means $u$ is the only vertex at that slot).
    Thus $\ell(\cdot)$ would indeed be increasing along
    this new tine (resulting from redirecting $t_1$).
    As $\alpha$ is the last index of the string $x$, this additionally
    implies that $F_x$ has no vertices of depth exceeding $\depth(u)$. 
    Let $t_u \in F_x$ be the tine with $\ell(t_u) = \alpha$. 
    \begin{equation}\label{eq:tu}
        \text{The honest tine $t_u$ is the unique longest tine in $F_x$}
        \,.
    \end{equation}

    \paragraph{Identifying $y$.}
    Let $\beta$ denote the smallest honest index of $w$ for which 
    $\beta \geq \ell(t_2)$, with the convention that if there is no such
    index we define $\beta = T + 1$. 
    % If $\ell(t_2)$ is an honest slot 
    % then $\beta = \ell(t_2) \geq \ell(t_1)$.
    % % since slot $\ell(t_2)$ can be associated with a single tine. 
    % % On the other hand, 
    % Otherwise, 
    % % if $\ell(t_2)$ is not an honest slot then 
    % $\beta > \ell(t_2) \geq \ell(t_1)$. 
    % In any case, we conclude that $\beta - 1 \geq \ell(t_1)$. 
    Thus $\beta \geq \ell(t_2) \geq \ell(t_1)$.
    These indices, $\alpha$ and $\beta$, distinguish the
    substrings $y = w_{\alpha+1} \ldots w_{\beta-1}$ and 
    $z = w_{\beta} \ldots w_T$; 
    we will focus on $y$ in the remainder of the proof. 
    Since the function
    $\ell(\cdot)$ is strictly increasing along any tine, observe that
    \begin{align*}
        |y| 
        &= (\beta - 1) - (\alpha + 1) + 1 
        = \beta - \alpha - 1 
        \geq (\ell(t_1) - \ell(u)) - 1 
        \geq (k + 1) - 1 
        = k
        \,.
    \end{align*}
    Hence $y$ has the desired length and it suffices to establish that it is forkable.\footnote{
      In~\citet{LinearConsistency}, $|y|$ was at least $k + 1$. 
      The difference is due to the fact that 
      in their analysis, a slot with multiple vertices 
      was necessarily adversarial. 
    }
    % We can extract from $F$ a balanced fork for $y$ in
    % two steps: First, we subject the fork $F$ to some minor
    % restructuring to ensure that all ``long'' tines pass through $u$. 
    % Next, we construct a fork $\tilde{F}_y$ for $y$ by treating the vertex $u$ as the
    % root of a portion of the subtree of $F$ labeled with the indices of
    % $y$. 
    % The segments of the two tines $t_1$ and $t_2$ in $\tilde{F}_y$ 
    % will yield two maximally long disjoint tines; 
    % thus, $\tilde{F}_y$ will be balanced.

    \paragraph{Honest indices in $xy$ have small depths.}
    The minimality assumption~\eqref{eq:minimality} implies that any honest
    index $h$ for which $h < \beta$ has depth no more than
    $\min(\length(t_1),\length(t_2))$: specifically, we claim that 
    \begin{equation}\label{eq:honest-depth}
      h < \beta \quad\Longrightarrow \quad \hdepth(h) \leq \min(\length(t_1), \length(t_2))\,.
    \end{equation}
    To see this, consider an honest index $h,h < \beta$ and a tine $t_h$
    for which $\ell(t_h) = h$. 
    % Recall that $t_1$ and $t_2$ are viable and 
    If $\ell(t_2)$ is honest then $h < \beta = \ell(t_2)$. 
    Otherwise, $h < \ell(t_2) < \beta$ since $\ell(t_2)$ is adversarial. 
    In any case, $h < \ell(t_2)$ and, 
    since $t_2$ is viable, it follows immediately that
    $\hdepth(h) \leq \length(t_2)$. 
    Similarly, if $h < \ell(t_1)$ 
    then $\hdepth(h) \leq \length(t_1)$ since $t_1$ is viable as well. 
    
    Now consider the case $h = \ell(t_1)$. 
    % \paragraph{If $\ell(t_1) < \ell(t_2)$, 
    % then $t_1$ is maximally long among all tines with the same label.} 
    We claim that 
    \begin{equation}\label{eq:hdepth-t1}
      \text{
        If $h = \ell(t_1) < \beta$ then $\hdepth(h) = \length(t_1)$ 
        }
      \,.
    \end{equation}
    We can rule out the case $h = \ell(t_1) = \ell(t_2)$ 
    since if this happens, 
    $\ell(t_2)$ is honest and $\beta = \ell(t_2)$, 
    contradicting our assumption that $h < \beta$. 
    Thus, it must be the case that $h = \ell(t_1) < \ell(t_2)$.    
    In this case, the claim follows trivially 
    if $\ell(t_1)$ is a uniquely honest slot. 
    Otherwise, let $t$ be a tine 
    with the maximum length among all tines 
    labeled with the multiply honest slot $h = \ell(t_1) < \ell(t_2)$. 
    We wish to show that $\length(t_1) = \length(t)$. 
    There are four contingencies to consider; 
    the first three of these lead to contradictions 
    and for the last one, we get $\length(t_1) = \hdepth(h) = \length(t)$.
    \begin{itemize}

      \item If $(t, t_2) \not \in A$, 
      $\SlotDivergence(t, t_2)$ is at most $k$.
      Since $\SlotDivergence(t_1, t_2)$ is at least $k + 1$, 
      $t$ must share a vertex with $t_2$ after slot $\ell(u)$. 
      But this means $\ell(t \Intersect t_1) = \ell(u)$ 
      and $\SlotDivergence(t, t_1) = \SlotDivergence(t_1, t_2) \geq k + 1$. 
      As a result, $(t, t_1) \in A$. 
      However, this violates~\eqref{eq:minimality} 
      since $|\ell(t) - \ell(t_1)| = 0 < |\ell(t_2) - \ell(t_1)|$ by assumption. 

      \item 
      If $(t, t_2)$ is in $A$ and 
      $\ell(t \Intersect t_1) < \ell(u)$, 
      then $\SlotDivergence(t, t_1) > \SlotDivergence(t_1, t_2)$, 
      violating~\eqref{eq:tines}. 

      \item 
      If $(t, t_2)$ is in $A$ and 
      $\ell(t \Intersect t_1) = \ell(u)$, 
      this means $t$ is disjoint with $t_1$ after $\ell(u)$. 
      Then~\eqref{eq:minimality} is violated 
      since $\SlotDivergence(t, t_1) = \SlotDivergence(t_1, t_2)$ but 
      $|\ell(t) - \ell(t_1)| = 0 < |\ell(t_2) - \ell(t_1)|$ by assumption. 

      \item 
      If $(t, t_2)$ is in $A$ and 
      $\ell(t \Intersect t_1) > \ell(u)$, 
      this means $t$ shares a vertex with $t_1$ after $\ell(u)$. 
      Then $\SlotDivergence(t, t_2) = \SlotDivergence(t_1, t_2)$ 
      and $|\ell(t_2) - \ell(t_1)| = |\ell(t_2) - \ell(t)|$. 
      By~\eqref{eq:length-multihonest}, 
      $\length(t_1) \geq \length(t)$; 
      hence $\length(t_1) = \length(t)$ since by assumption, 
      $t$ has the maximum length among all tines with label $\ell(t_1)$. 
      Hence $\length(t_1) = \hdepth(h)$.

    \end{itemize}
    The remaining case for proving~\eqref{eq:honest-depth}, 
    i.e., when $\ell(t_1) < h < \ell(t_2)$, 
    can be ruled out by the argument below.

    \paragraph{There is no honest index between $\ell(t_1)$ and $\ell(t_2)$.}
    We claim that 
    \begin{equation}\label{eq:no-honest-index}
        \text{There is no honest index $h$ satisfying $\ell(t_1) < h < \ell(t_2)$}
        \,.
    \end{equation}
    The claim above is trivially true if $\ell(t_1) = \ell(t_2)$.
    Otherwise, suppose (toward a contradiction) 
    that $h$ is an honest index satisfying $\ell(t_1) < h < \ell(t_2)$. 
    Let $t_h$ be an honest tine at slot $h$. 
    The tine-pair $(t_1, t_h)$ may or may not be in $A$. 
    We will show that both cases lead to contradictions.
    \begin{itemize}
      \item If $(t_1, t_h)$ is in $A$ and $\ell(t_1 \Intersect t_h) \leq \ell(u)$, 
      $\SlotDivergence(t_1, t_h)$ is at least $\SlotDivergence(t_1, t_2)$. 
      In fact, due to~\eqref{eq:tines}, this inequality must be an equality. 
      However, the assumption $\ell(t_1) < h < \ell(t_2)$ contradicts~\eqref{eq:minimality}. 

      \item If $(t_1, t_h)$ is in $A$ and $\ell(t_1 \Intersect t_h) > \ell(u)$, 
      it follows that $\SlotDivergence(t_h, t_2) > \SlotDivergence(t_1, t_2)$. 
      As the latter quantity is at least $k + 1$, $(t_h, t_2)$ must be in $A$. 
      The preceding inequality, however, contradicts~\eqref{eq:tines}.

      \item If $(t_1, t_h) \not \in A$, 
      $\SlotDivergence(t_1, t_h)$ is at most $k$.
      As $\SlotDivergence(t_1, t_2)$ is at least $k + 1$, 
      % it follows that $\ell(t_1) - \ell(t_1 \Intersect t_h) > \ell(u)$.
      $t_h$ and $t_1$ must share a vertex after slot $\ell(u)$. 
      Since $\ell(t_1) < h < \ell(t_2)$ by assumption, 
      $\SlotDivergence(t_h, t_2) > \SlotDivergence(t_1, t_2) \geq k + 1$ 
      and, as a result, $(t_h, t_2) \in A$. 
      However, the strict inequality above violates~\eqref{eq:tines}. 
    \end{itemize}
    We conclude that~\eqref{eq:no-honest-index}---and thus~\eqref{eq:honest-depth}---is true. 
    (Note that in the above argument, all we needed was that $t_h$ is a viable tine 
    since in all cases, $t_h$ appears in a tine-pair in $A$. 
    Thus~\eqref{eq:no-honest-index} can be generalized as saying 
    ``there is no fork for $w$ with a viable tine $t$ so that $\ell(t_1) < \ell(t) < \ell(t_2)$.'')

  \paragraph{A fork $\pinch{u}{F}$ where all long tines go through $u$.}
    In light of the remarks above, we observe that the fork $F$ may be
    ``pinched'' at $u$ to yield an essentially identical fork
    $\pinch{u}{F} \vdash w$ with the exception that all tines of length
    exceeding $\depth(u)$ pass through the vertex $u$. Specifically, the
    fork $\pinch{u}{F} \vdash w$ is defined to be the graph obtained
    from $F$ by changing every edge of $F$ directed towards a vertex of
    depth $\depth(u) + 1$ so that it originates from $u$. To see that
    the resulting tree is a well-defined fork, it suffices to check that
    $\ell(\cdot)$ is still increasing along all tines of
    $\pinch{u}{F}$. For this purpose, consider the effect of this
    pinching on an individual tine $t$ terminating at a particular
    vertex $v$---it is replaced with a tine $\pinch{u}{t}$ defined so
    that:
    \begin{itemize}
    \item If $\length(t) \leq \depth(u)$, the tine $t$ is unchanged:
      $\pinch{u}{t} = t$.
    \item Otherwise, $\length(t) > \depth(u)$ and $t$ has a vertex $v$
      of depth $\depth(u) + 1$; note that $\ell(v) > \ell(u)$ because
      $F_x$ contains no vertices of depth exceeding $\depth(u)$. Then
      $\pinch{u}{t}$ is defined to be the path given by the tine
      terminating at $u$, a (new) edge from $u$ to $v$, and the suffix
      of $t$ beginning at $z$. (As $\ell(v) > \ell(u)$ this has the
      increasing label property.)
    \end{itemize}
    Thus the tree $\pinch{u}{F}$ is a legal fork on the same vertex set;
    note that the depths of vertices in $F$ and $\pinch{u}{F}$ are
    identical.
    
    \paragraph{Constructing a fork $F_y \Fork y$ containing two long tines.}
    By excising the tree rooted at $u$ from this pinched fork
    $\pinch{u}{F}$, we may extract a fork for the string
    $w_{\alpha+1} \dots w_T$. Specifically, consider the induced
    subgraph $\cut{u}{F}$ of $\pinch{u}{F}$ given by the vertices
    $\{u\} \cup \{ v \SuchThat \depth(v) > \depth(u)\}$. By treating $u$ as a
    root vertex and suitably defining the labels $\cut{u}{\ell}$ of
    $\cut{u}{F}$ so that $\cut{u}{\ell}(v) = \ell(v) - \ell(u)$, this
    subgraph has the defining properties of a fork for
    $w_{\alpha+1} \ldots w_T$. In particular, considering that
    $\alpha$ is honest, it follows that each honest index $h > \alpha$
    has depth $\hdepth(h) > \length(u)$ and hence any vertex with label $h$ 
    is also present in $\cut{u}{F}$. 
    For a tine $t$ of $\pinch{u}{F}$, we let $\cut{u}{t}$
    denote the suffix of this tine beginning at $u$, which forms a tine
    in $\cut{u}{F}$. (If $\length(t) \leq \depth(u)$, we define
    $\cut{u}{t}$ to consist solely of the vertex $u$.)  
    Considering $\cut{u}{t_1}$ and $\cut{u}{t_2}$, 
    let $\check{t}_i, i \in \{1, 2\}$ be the longest prefix of $\cut{u}{t_i}$ 
    so that $\check{t}_i$ is labeled by a slot in $y$.
    Since the tines $\cut{u}{t_1}, \cut{u}{t_2}$ are disjoint in $\cut{u}{F}$, 
    so are $\check{t}_1,\check{t}_2$. 
    % Moreover, 
    % since all labels of $\cut{u}{t_1}$ are drawn from
    % $y$, it follows that $\check{t}_1 = \cut{u}{t_1}$. 
    
    Recall that that $y$ is as a prefix of $w_{\alpha+1} \ldots w_T$.
    Let $h^*$ be the largest honest index in $y$. 
    Let $F_y$ denote the subtree of $\cut{u}{F}$, with the same root as $\cut{u}{F}$, 
    containing the following tines: 
    $\check{t}_1, \check{t}_2$, and 
    all tines $\cut{u}{t} \in \cut{u}{F} \setminus \{\check{t}_1, \check{t}_2\}$ so that 
    $\ell(\cut{u}{t})$ is drawn from $y$ and 
    \begin{equation}\label{eq:tines-Fy}
      % \length(\cut{u}{t}) \leq \max_{\substack{h \leq |y|\\ \text{$h$ honest} } } \hdepth(h)
      \length(\cut{u}{t}) \leq \hdepth(h^*)
      \,.
    \end{equation}
    Note that the length of every honest tine 
    labeled by $y$ is at most $\hdepth(h^*)$; 
    hence, thanks to~\eqref{eq:honest-depth}, 
    $F_y$ contains all honest tines from $\cut{u}{F}$ 
    that have labels in $y$. 
    Note, in addition, that the tines $\check{t}_1$ and $\check{t}_2$ 
    are consistently labeled in $F_y$. 
    Thus $F_y$ satisfies all properties of a legal fork. 
    
    Having defined $F_y$, we claim that 
    \begin{equation}\label{eq:two-long-tines}
        \min\left(\length(\check{t}_1), \length(\check{t}_2) \right) \geq \hdepth(h^*)
        % \text{$\check{t}_1$ and $\check{t}_2$ are 
        % maximally long in $F_y \Fork y$}
        \,.
    \end{equation}
    % Considering~\eqref{eq:tines-Fy}, it suffices to show that the length of 
    % $\check{t}_i, i \in \{1,2\}$ is at least $\hdepth(h^*)$. 
    % To see this, first consider $\check{t}_1$. 
    Let $i \in \{1,2\}$.
    If $\ell(t_i) < \beta$ then $\check{t}_i = \cut{u}{t_i}$ and,
    by~\eqref{eq:honest-depth}, $\length(\check{t}_i) = \length(\cut{u}{t_i}) \geq \hdepth(h^*)$. 
    Othereise, we have $\ell(t_i) = \beta$ which means 
    $\ell(t_i)$ is an honest slot. 
    Thus $\cut{u}{t_i}$ must be an honest tine, 
    building directly on top of the viable tine $\check{t}_i$. 
    Therefore, we have $\length(\check{t}_i) \geq \hdepth(h^*)$.

    % As for $\check{t}_2$,
    % observe that if $\ell(t_2)$ is not honest then $\beta > \ell(t_2)$
    % so that, as with $\check{t}_1$, the tine $\check{t}_2$ is labeled by
    % $y$; a similar argument, relying
    % on~\eqref{eq:honest-depth}, ensures that $\length(\check{t}_2)$ 
    % is at least $\hdepth(h^*)$. 
    % If $\ell(t_2)$ is
    % honest then $\beta = \ell(t_2)$ and the terminal vertex of
    % $\cut{u}{t_2}$ does not appear in $F_y$ since $\ell(\cut{u}{t_2})$ falls outside 
    % $y$. In this case, however,
    % $\length(\cut{u}{t_2}) > \hdepth(h^*)$ for any honest index $h$ of
    % $y$. 
    % It follows that
    % $\length(\check{t}_2)$, which equals $\length(\cut{u}{t_2}) - 1$, 
    % is at least $\hdepth(h^*)$, as desired. 

    \paragraph{Constructing a balanced fork $\tilde{F}_y \Fork y$.}    
    If $\length(\check{t}_1) = \length(\check{t}_2)$, set $\tilde{F}_y = F_y$ 
    and, due to~\eqref{eq:tines-Fy} and~\eqref{eq:two-long-tines}, 
    the fork $\tilde{F}_y \Fork y$ must be balanced. 
    Otherwise, 
    let $a, b \in \{1, 2\}, a \neq b$ be two integers so that 
    $\length(\check{t}_a) > \length(\check{t}_b)$. 
    We modify $F_y$ by deleting some trailing nodes from $\check{t}_a$ 
    so that the surviving prefix---let it be denoted by $\Final{t}_a$---has the same length as $\check{t}_b$. 
    That is, we achieve 
    \[
      \length(\Final{t}_a) = \length(\check{t}_b) = \min\left(\length(\check{t}_1), \length(\check{t}_2) \right)
      \,. 
    \]
    Let $\tilde{F}_y$ be the resulting fork. 
    Equations~\eqref{eq:tines-Fy} and~\eqref{eq:two-long-tines} imply that 
    $\tilde{F}_y$ has at least two maximum-length tines (i.e., $\Final{t}_a$ and $\check{t}_b$) 
    and therefore, it is balanced.
    It remains to show that the longer tine, $\check{t}_a$, 
    has sufficiently many trailing adversarial vertices so that after deleting them, 
    we obtain 
    $\length(\Final{t}_a) = \length(\check{t}_b)$. 
    (If we had to delete an honest vertex in this process, 
    $\tilde{F}_y$ may have violated 
    property~\ref{fork:unique-honest} in the definition of a fork.)    
    Let $h_a$ be the label of the last honest vertex 
    on $\check{t}_a$. 
    Thanks to~\eqref{eq:two-long-tines}, 
    we have 
    $\length(\check{t}_a) > \length(\check{t}_b) \geq \hdepth(h^*) \geq \hdepth(h_a)$. 
    % and importantly, by~\eqref{eq:no-honest-index}, 
    Hence all vertices in $\check{t}_a$ 
    with labels in $[h_a + 1, \ell(\check{t}_a)]$ 
    must be adversarial; 
    we can safely delete $|\length(\check{t}_a) - \length(\check{t}_b)|$ 
    of these adversarial vertices.

    \paragraph{An $x$-balanced fork $\hat{F} \ForkPrefix F$.} 
    Let us identify the root of the fork $\tilde{F}_y$ with the vertex $u$ of $F_x$ and 
    let $\hat{F}$ be the resulting graph (after ``gluing'' the root of $\tilde{F}_y$ to $u$). 
    By~\eqref{eq:tu}, it is easy to see that the fork 
    $\hat{F} \ForkPrefix F$ 
    is indeed a valid fork on the string $x y$. 
    Moreover, $\hat{F}$ is $x$-balanced since $\tilde{F}_y$ is balanced. 
    The claim in Theorem~\ref{thm:divergence-settlement} follows immediately since $|y| \geq k$.
  
    \hfill$\qed$

  % \end{proof}

% \section{Figures}\label{sec:figures}
% \input{figures}

% \section{IEEE}
% \input{ieee_blob}

\end{document}